\let\ifarxiv\iftrue
\let\ifappendix\iftrue
\let\csname ifdraft\expandafter\endcsname
\let\csname ifappendix\expandafter\endcsname
\let\csname ifsub\expandafter\endcsname
\let\csname ifarxiv\expandafter\endcsname
\begin{document}

\title{Visibility Reasoning for Concurrent Snapshot Algorithms}
\ifarxiv
\subtitle{Extended Version}
\fi

\author{Joakim Öhman}
\orcid{0000-0002-9284-3886}
\email{joakim.ohman@imdea.org}
\affiliation{%
  \institution{IMDEA Software Institute}
  \city{Madrid}
  \country{Spain}
}
\affiliation{%
  \institution{Universidad Politécnica de Madrid}
  \city{Pozuelo de Alarcón}
  \country{Spain}
}
\author{Aleksandar Nanevski}
\email{aleks.nanevski@imdea.org}
\affiliation{%
  \institution{IMDEA Software Institute}
  \city{Madrid}
  \country{Spain}
}

\begin{abstract}
  Visibility relations have been proposed by Henzinger et al. as an
abstraction for proving linearizability of concurrent algorithms that
obtains modular and reusable proofs. This is in contrast to the
customary approach based on exhibiting the algorithm's linearization
points.
In this paper we apply visibility relations to develop modular proofs
for three elegant concurrent snapshot algorithms of Jayanti.  The
proofs are divided by signatures into components of increasing level
of abstraction;
the components at higher abstraction levels are
shared, i.e., they apply to all three algorithms
simultaneously. Importantly, the interface properties mathematically
capture Jayanti's original intuitions that have previously been given
only informally.

\end{abstract}

\begin{CCSXML}
<ccs2012>
<concept>
<concept_id>10003752.10010124.10010138.10010142</concept_id>
<concept_desc>Theory of computation~Program verification</concept_desc>
<concept_significance>500</concept_significance>
</concept>
<concept>
<concept_id>10003752.10003809.10011778</concept_id>
<concept_desc>Theory of computation~Concurrent algorithms</concept_desc>
<concept_significance>500</concept_significance>
</concept>
</ccs2012>
\end{CCSXML}

\ccsdesc[500]{Theory of computation~Program verification}
\ccsdesc[500]{Theory of computation~Concurrent algorithms}

\maketitle

\section{Introduction}\label{sec:intro}

Linearizability~\cite{herlihy:90} is a standard correctness condition
for concurrent data structures. It requires that the operations in any
execution over the data structure may be ordered sequentially, without
incurring changes to the observed results. In other words, the methods
of a linearizable concurrent structure exhibit the same external
behavior as the sequential equivalent. Programmers can use the
concurrent variant to efficiently utilize modern systems' multi-core
setup, and rely on the sequential variant for understanding and formal
reasoning.

Many methods exist for proving linearizability of a data
structure. The standard idea shared by most of them involves finding,
for each method of the structure, a point in time in the concurrent
execution when the method can be considered as \emph{logically}
occurring. In other words, a method may execute over a period of time,
admitting interference from other concurrent threads; nevertheless,
for all reasoning intents and purposes, the execution is
indistinguishable from one where the method executes atomically at a
single point in time, without any interference.
This point in time is referred to as the~\emph{linearization point}.
Where the linearization point lies for a given method may depend on
the run-time behavior and interleaving of other methods executing
concurrently, sometimes even including behavior that occurs after the
original method has already terminated (the latter are often termed
\emph{far future} linearization points). Although significant progress
has been made recently in the verification of concurrent structures
and algorithms, and in particular by the introduction of so-called
\emph{prophecy
  variables}~\cite{aba+lam:91,LynchV+IC95,prophecy18,prophecy20} to
model the dependence of linearization points on run-time behavior
(including in the far future), establishing linearizability by
explicitly exhibiting the linearization points of an algorithm remains
a highly complex task in general.

A different approach, which avoids explicit reasoning about
linearization points, has been proposed
by~\citet{henzinger:concur13}.
In this approach, one first specifies a set of properties and proves
that the properties hold of every execution history over the data
structure. One then constructs a proof of linearizability using only
the specified properties as axioms, thereby abstracting from the
underlying executions. In other words, the low-level reasoning about
the concrete placement of linearization points is replaced by
higher-level reasoning out of data structure axioms, in turn
facilitating proof decomposition, modularity, abstraction, and reuse.

The properties used in the axiomatization are defined under relations
tying an event in the execution to other events that depend on it,
i.e., that \emph{observe} it.
For example, the event of reading of a pointer will be related to the
write event that was responsible for mutating the pointer. Once the
pointer is mutated by another write, the subsequent reads will observe
(i.e., be related to) the new write, or possibly another later write.
Similar ideas of reasoning about event observations have been used in
the axiomatizations of weak memory models~\cite{raad:popl19} and in
distributed systems~\cite{viotti:acmsurv16} where they have been
captured by means of so-called \emph{visibility} relations.
Reasoning by visibility has also been applied in different ways to
prove linearizability of concrete data structures, such as
queues~\cite{henzinger:concur13} and stacks~\cite{dodds:popl15},
including reasoning automation~\cite{enea:cav17}, and reasoning about
relaxed notions of linearizability~\cite{emm+ene:popl19,enea:esop20}.

In this paper, we demonstrate the further applicability of reasoning
by visibility, by applying it in a novel way to even more nuanced
algorithms and proofs. In particular, we show how visibility can be
used to express and axiomatize the important \emph{internal} properties
shared by several \emph{snapshot} algorithms.  Similarly to the
approach of~\citet{henzinger:concur13} to queues, the axiomatization
enables a modularization of the linearizability proof: a significant
portion of the proof is carried out once, and then reused for each
snapshot algorithm.

More specifically, we verify the three snapshot algorithms
by~\citet{jayanti:stoc05}. A snapshot algorithm scans a memory array
and returns the values read, so that the obtained values reflect the
state of the array at one point in time.  In a sequential setting this
is trivial to achieve since the array remains unchanged during the
scan. However, Jayanti's algorithms are concurrent, allowing
interfering threads to modify the array while a scan is in progress.
Jayanti's algorithms are of increasing efficiency and generality. The
simplest is the single-writer/single-scanner algorithm, which assumes
that no two scanners run concurrently, and that no two writers
concurrently modify the same array element. Jayanti's second
algorithm generalizes to a multi-writer/single-scanner setting, and
the third is the most general and ultimately desirable
multi-writer/multi-scanner version.
Jayanti describes the linearization points only for the
single-writer/single-scanner algorithm, but already this description
is quite challenging to transform into a fully formal correctness
proof~\cite{delbianco:ecoop17} because the algorithm exhibits far
future linearization points.

Of interest to us in the current paper is that each of the three
algorithms builds on the previous one by relaxing some part of the
previous algorithm's implementation, while preserving the essential
invariants that Jayanti calls \emph{forwarding principles}. Jayanti
credits the forwarding principles as the key idea behind his design,
because the principles are shared by the three algorithms, and
abstractly govern how a concurrent write into the array should be
``forwarded'' to a scanner that is in progress, but has already read
past the written element.

In this paper, we show how the forwarding principles can be
axiomatized mathematically in terms of visibility, which we present in
\cref{sec:overview} (Jayanti states the principles in English, much
less formally).  In \cref{sec:proofsketches}, we develop the
linearizability proof out of the axioms alone, so that it applies to
all three algorithms simultaneously.
We then establish that the axioms hold for each of the three
algorithms: in \cref{sec:proofsketches} for the first algorithm and
\cref{sec:multiwriter} for the second and third. In \cref{sec:afek} we
apply our method to verify another snapshot algorithm, that of
\citet{afek:acm93}.

By employing visibility, we sidestep the difficulties inherent in
reasoning about linearization points in general, and far future
linearization points in particular.
Our development substantiates that visibility is a natural abstraction
to use in specifications and proofs, as it enables formally capturing
the intuition---that of forwarding principles---that motivated the
design of Jayanti's algorithms in the first place.
In summary, our contributions are as follows.
\begin{itemize}
\item This is the first formal proof of all of Jayanti's three
  snapshot algorithms. Moreover, it efficiently reuses proofs between
  algorithms. \citet{delbianco:ecoop17} employed a
  variation of the linearization point approach to prove only
  Jayanti's first algorithm, with no direct way of extending the proof
  to the other two algorithms, which are significantly more involved.
That proof was mechanized in Coq. Based on \emph{loc.~cit.},
\citet{jacobs:jayanti} developed a mechanized proof of Jayanti's first
algorithm in VeriFast, using prophecy variables.  

\item We have axiomatized the forwarding principles, which were
  Jayanti's motivating insight, and key properties of his snapshot
  algorithms, but have so far been out of reach of formal
  mathematics. The axiomatization is non-trivial, and required
  generalizing from Jayanti's English description in order to apply to
  all three algorithms. It also enabled us to prove, for the first
  time, that forwarding principles formally imply linearizability.
  While visibility relations have been used before to axiomatize
  concurrent structures, this shows that they can also usefully
  capture important \emph{internal} properties.

\end{itemize}

\section{Overview}\label{sec:overview}

\subsection{Jayanti's First Snapshot Algorithm}
\label{subsec:overview-jayanti}

The $n$-snapshot data-structure is an array of length $n$, which
consists of two kinds of operations: $\scanProc$ which reads the
memory and returns a list of length $n$ reflecting the state of the
memory at a point in time; and $\writeProc(i,v)$ which writes the
value $v$ into memory cell $i$. 
We use $\abs{s}$ to range over instances of $\scanProc$, $\abs{w_i}$
to range over instances of $\writeProc(i,v)$ for some $v$, and
$\abs{e}$ to range over all operations in general. In line with
standard terminology of linearizability, we call these operations
\emph{abstract events} or \emph{abs events} and color them
\absColorName{}.

A snapshot (or any other) algorithm is linearizable if for every
concurrent execution, there exists some way of sequentially ordering
the \emph{overlapping} (abs) events so that the value returned by each
of the events remains unchanged compared to the concurrent execution,
and moreover, matches the intended semantics of the data structure.
Intuitively, linearizability implies that we can re-run the
computation sequentially to obtain the same results as in the original
concurrent run; however, the internal state of the algorithm during
and after the runs need not match.

Implementing an efficient and correct concurrent snapshot algorithm is
more challenging than it may seem. To highlight this point, consider a
na\"ive implementation, where writers simply write to a shared array
$\resA$ and scanners simply iterate over the array $\resA$ to obtain a
snapshot. To allow this implementation to operate efficiently, we
allow writes and scans to run concurrently.  For this simple
implementation, consider a snapshot array of length $2$, where we have a
scan $\abs{s}$ running concurrently with writes $\abs{w_0}$ writing
$2$ and $\abs{w_1}$ writing $3$. Let $0$ be the initial value of each
array cell, and consider the following execution:
\begin{itemize}
\item Scan $\abs{s}$ starts and reads $0$ from $\resA[0]$, after which
  the scheduler interrupts $\abs{s}$.
\item Write $\abs{w_0}$ starts, writing $2$ to $\resA[0]$, and after
  $\abs{w_0}$ finishes, write $\abs{w_1}$ writes $3$ to $\resA[1]$.
\item Scan $\abs{s}$ resumes, reading $3$ from $\resA[1]$. It then
  returns the snapshot $(0,3)$.
\end{itemize}
The snapshot $(0,3)$ should indicate that there exists a point in time
when the array consisted of that pair, however that is not the case.
The array started as $(0,0)$, followed by $(2,0)$ after write
$\abs{w_0}$, and $(2,3)$ after write $\abs{w_1}$, but none of these
states are reflected in the result. In a sense, the scan $\abs{s}$
missed the write $\abs{w_0}$, yet it caught the write $\abs{w_1}$,
which occurred after $\abs{w_0}$. Jayanti's snapshot algorithms ensure
that writes are not missed by the scanner, as we explain next.

\newcommand{\AComment}[1]{\Comment{\makebox[8mm][l]{\abs{#1}}}}
\newcommand{\RComment}[1]{\Comment{\makebox[8mm][l]{\rep{#1}}}}

\begin{algorithm}[t]
  \setlength\multicolsep{0pt}
  \begin{multicols}{2}
  \begin{algorithmic}[1]
    \Resource $\resA : \arrayType{n}{\valType}$
    \Resource $\resB : \arrayType{n}{\valType \cup \left\{\bot\right\}}$
    \Resource $\resX : \boolType \set \false$
    \Statex

    \Procedure{\writeProc}{$i : \natType, v : \valType$}{} \AComment{$w_i$}
    \miniskip
      \State $\resA[i] \set v$ \RComment{$\wra{w_i}$}
      \State $x \gets \resX$ \RComment{$\wrx{w_i}$}
      \If{$x$}
        $\resB[i] \set v$ \RComment{$\wrb{w_i}$}
      \EndIf
    \EndProcedure
    \columnbreak
    \Procedure{\scanProc}{}{$\arrayType{n}{\valType}$} \AComment{$s$}
    \miniskip
      \State $\resX \set \true$ \RComment{$\scon{s}$}
      \For{$i \in \left\{ 0 \dots n-1 \right\}$}
        \State $\resB[i] \set \bot$ \RComment{$\scr{s}{i}$}
      \EndFor
      \For{$i \in \left\{ 0 \dots n-1 \right\}$}
        \State $a \gets \resA[i]$ \RComment{$\sca{s}{i}$}
        \State $V[i] \set a$
      \EndFor
      \State $\resX \set \false$ \RComment{$\scoff{s}$}
      \For{$i \in \left\{ 0 \dots n-1 \right\}$}
        \State $b \gets \resB[i]$ \RComment{$\scb{s}{i}$}
        \If{$b \neq \bot$}
          $V[i] \set b$
        \EndIf
      \EndFor
      \State \textbf{return} $V$
    \EndProcedure
  \end{algorithmic}
  \end{multicols}\vspace{-2mm}
  \caption{\label{alg:jay1} Jayanti's single-writer, single-scanner
    snapshot algorithm over a memory of length $n$. Mnemonics on the
    right identify the corresponding commands.\vspace{-4mm}
  }
\end{algorithm}

\cref{alg:jay1} is the first and simplest of Jayanti's snapshot
algorithms. The idea is for a scan to make two passes over the memory,
first over the main array $\resA$, and then over the auxiliary array
$\resB$. A writer updates the array $\resB$ if it detects a concurrent
scan via the boolean flag $\resX$, to \emph{forward} its value. That
is, in case the scanner missed the writer's value when scanning
$\resA$, it will have a chance to catch the value when scanning
$\resB$.  We thus refer to $\resB$ as the \emph{forwarding array}.
\cref{alg:jay1} is a single-writer/single-scanner algorithm, meaning
that for it to behave correctly, two scans must not run concurrently
and two writers must not concurrently mutate the same array cell. In
an implementation, this can be enforced by explicit locking, which we
elide from~\cref{alg:jay1} following Jayanti's original presentation.

Describing the algorithm in more detail, $\scanProc$ works by first
setting $\resX$ to $\true$ by event $\rep{\scon{s}}$, signaling that a
scan is running the first pass. This is followed by clearing the
forwarding array $\resB$ by setting all its cells to $\bot$ via the
$\rep{\scr{s}{i}}$ events, for each $i$. The clearing ensures that the
current scan cannot consider the forwards left over from the previous
scans.
Next, the scanner creates a na\"ive snapshot by copying the main array
$\resA$ by the events $\rep{\sca{s}{i}}$ for each $i$ into the local
array $V$. However, as we argued before, this is insufficient for a
correct snapshot. This is where the second pass comes in, which starts
after $\resX$ is set to $\false$ by event $\rep{\scoff{s}}$.
In the second pass, the procedure repairs the na\"ive snapshot by
stepping through the forwarding array $\resB$ by the events
$\rep{\scb{s}{i}}$ for each $i$. If a non-$\bot$ (i.e., forwarded)
value is found, it overwrites the original value in $V$, thus
repairing the snapshot and preventing missing writes. Finally, the
scanner returns $V$, which contains the complete snapshot.

For $\writeProc(i,v)$, it starts by writing its value to $\resA$ by
event $\rep{\wra{w_i}}$, followed by a check for a concurrently
running scanner performing the na\"ive pass of the scan by event
$\rep{\wrx{w_i}}$. If such a scan is detected, the writer forwards its
value to $\resB$ by event $\rep{\wrb{w_i}}$.

Events marked in \repColorName{} are called
\emph{representation} or \emph{rep} events, and are used internally in
the implementation of abs (i.e., \absColorName{}) events. The
distinction between abs and rep events is standard in the theory of
linearizability~\cite{herlihy:90}.
We will use the naming convention whereby abs events and rep events
with the same priming belong together, e.g.\ we assume
$\rep{\sca{s}{i}}$ belongs to $\abs{s}$ and $\rep{\wra{w'_i}}$ belongs
to $\abs{w'_i}$.

\begin{figure}
  \centering
  \begin{tikzpicture}
    \draw[|-|,draw=abs] (2.7,-1.4) -- node (S) [label={[yshift=4]below:$\abs{s}$}] {} (14.6,-1.4);
    \draw[|-|,draw=rep] (3,-1.2) -- node (SON) [label={[yshift=-6]$\rep{\scon{s}}$}] {} (3.7,-1.2);
    \draw[|-|,draw=rep] (3.8,-1.2) -- node (SR0) [label={[yshift=-6]$\rep{\scr{s}{0}}$}] {} (4.5,-1.2);
    \draw[|-|,draw=rep] (4.6,-1.2) -- node (SR1) [label={[yshift=-6]$\rep{\scr{s}{1}}$}] {} (5.3,-1.2);
    \draw[|-|,draw=rep] (5.4,-1.2) -- node (SA0) [label={[yshift=-6]$\rep{\sca{s}{0}}$}] {} (6.1,-1.2);
    \draw[|-|,draw=rep] (11.2,-1.2) -- node (SA1) [label={[yshift=-6]$\rep{\sca{s}{1}}$}] {} (11.9,-1.2);
    \draw[|-|,draw=rep] (12,-1.2) -- node (SOFF) [label={[yshift=-6]$\rep{\scoff{s}}$}] {} (12.7,-1.2);
    \draw[|-|,draw=rep] (12.8,-1.2) -- node (SB0) [label={[yshift=-6]$\rep{\scb{s}{0}}$}] {} (13.5,-1.2);
    \draw[|-|,draw=rep] (13.6,-1.2) -- node (SB1) [label={[yshift=-6]$\rep{\scb{s}{1}}$}] {} (14.3,-1.2);
    
    \draw[|-|,draw=abs] (6.2,0) node [left] {} -- node (W0) [label={[yshift=-6]$\abs{w_0}$}] {} (8.7,0);
    \draw[|-|,draw=rep] (6.3,-0.2) -- node (WA0) [label={[yshift=4]below:$\rep{\wra{w_0}}$}] {} (7,-0.2);
    \draw[|-|,draw=rep] (7.1,-0.2) -- node (WX0) [label={[yshift=4]below:$\rep{\wrx{w_0}}$}] {} (7.8,-0.2);
    \draw[|-|,draw=rep] (7.9,-0.2) -- node (WB0) [label={[yshift=4]below:$\rep{\wrb{w_0}}$}] {} (8.6,-0.2);
    
    \draw[|-,draw=abs] (9,0) node [left] {} -- node (W0') [label={[yshift=-6]$\abs{w'_0}$}] {} (10.2,0);
    \draw[dashed,draw=abs] (10.2,0) -- (11,0);
    \draw[|-|,draw=rep] (9.1,-0.2) -- node (WA0') [label={[yshift=4]below:$\rep{\wra{w'_0}}$}] {} (9.8,-0.2);

    \draw[|-,draw=abs] (10.2,-0.5) node [left] {} -- node (W1) [label={[yshift=-6]$\abs{w_1}$}] {} (11.4,-0.5);
    \draw[dashed,draw=abs] (11.4,-0.5) -- (12.2,-0.5);
    \draw[|-|,draw=rep] (10.3,-0.7) -- node (WA1) [label={[yshift=4]below:$\rep{\wra{w_1}}$}] {} (11,-0.7);
  \end{tikzpicture}\vspace{-4mm}
  \caption{Diagram illustrating an execution scenario for
    \cref{alg:jay1} with an array length of two.\vspace{-4mm}}
  \label{fig:jay1scenario}
\end{figure}

Now, consider the scenario illustrated in \cref{fig:jay1scenario},
where we have \cref{alg:jay1} being executed over an array of length
two, with each value initially set to $0$, with write events
$\abs{w_0}$, $\abs{w'_0}$ and $\abs{w_1}$ writing 2, 3 and 4
respectively, and a scan $\abs{s}$. When $\abs{s}$ reads $\resA[0]$
with $\rep{\sca{s}{0}}$ for its first pass, it reads 0, while when it
reads $\resA[1]$ with $\rep{\sca{s}{1}}$ it reads 4 written by
$\abs{w_1}$ with $\rep{\wra{w_1}}$.
Between the two passes of $\abs{s}$, $\rep{\wra{w_0}}$ is missed,
however since $\resX$ is set to $\true$ by $\rep{\scon{s}}$,
$\abs{w_0}$ will forward the value 2 with $\rep{\wrb{w_0}}$.
At the second pass, $\abs{s}$ reads $\resB[0]$ with
$\rep{\scb{s}{0}}$, reading 2 written by $\rep{\wrb{w_0}}$, and when
it reads $\resB[1]$ with $\rep{\scb{s}{1}}$, it reads $\bot$ written
by $\rep{\scr{s}{1}}$, meaning $\abs{s}$ will use the original value 4
for its final snapshot, thus returning (2,4).

In contrast to $(0, 3)$ before, linearizability admits $(2, 4)$ as a
correct snapshot even though $\resA$ never contained $(2,4)$ during
the execution. Indeed, $\resA$ only contained $(0,0)$, $(2,0)$,
$(3,0)$ and $(3,4)$. This is actually fine, because we can reorder the
concurrent events with the order
$\abs{w_0} \to \abs{w_1} \to \abs{s} \to \abs{w'_0}$, which, when
executed sequentially, result in $\abs{s}$ having snapshot $(2, 4)$.
In the physical execution, the event $\abs{w_0}$ returned before
$\abs{w'_0}$ and $\abs{w_1}$ started. The reordering respects this by
listing $\abs{w_0}$ before $\abs{w'_0}$ and $\abs{w_1}$. In other
words, the reordering affects only events that physically overlapped,
as required by linearizability.

Jayanti sketches the correctness proof of~\cref{alg:jay1} by
describing its linearization points.
The linearization point for a scan $\abs{s}$ is always when the scan
performs $\rep{\scoff{s}}$. However, the linearization point of a write
$\abs{w_i}$ varies. If there is no scan concurrent to $\abs{w_i}$, or
there is a concurrent scan, but it reads the value of $\abs{w_i}$
either from $\resA$ or from $\resB$, then the writer's linearization
point is at $\rep{\wra{w_i}}$. If there is a concurrent scan $\abs{s}$
that misses $\abs{w_i}$, which can occur if the scanner misses
$\abs{w_i}$ in its $\resA$ pass, and $\abs{w_i}$ either does not write
into $\resB[i]$ due to $\resX$ being set to $\false$ before the writer
could forward, or $\abs{w_i}$ writing into $\resB[i]$ too late,
then $\abs{w_i}$ must be logically considered as occurring after
$\abs{s}$. Thus, the linearization point of $\abs{w_i}$ is immediately
after $\rep{\scoff{s}}$, making the linearization point of
$\writeProc$ \emph{external}, as its position is given in terms of
another procedure, in this case $\scanProc$.
The observation that a scan missed a write $\abs{w_i}$, which occurs
when the scan reads $\resB[i]$ with $\rep{\scb{s}{i}}$, can be made
after the writer has already terminated, making $\writeProc$ exhibit a
far future linearization point.

We proceed to show how to organize the linearizability proof
of~\cref{alg:jay1}, and the other two Jayanti algorithms, by
axiomatizing forwarding via visibility, without using linearization
points. %

\subsection{Basic Abstractions of Visibility Reasoning}
\label{subsec:overview-observation}

\subsubsection{Events and Their Structure}\label{subsub:eventstruct}

An event is an object consisting of fields $\evStart$, $\evEnd$,
$\evOp$, $\evIn$, and $\evOut$, describing the following aspects of
the execution of some operation of the data structure: $\evStart$ is
the event's beginning time, $\evEnd$ is the ending time, $\evOp$ is
the operation name (e.g., $\scanProc$ or $\writeProc$), $\evIn$ is
the operation's input, and $\evOut$ is the output. We refer to the
elements of an event $e$ by projection, e.g. $e.\evStart$ and
$e.\evOp$.
The fields $e.\evStart$ and $e.\evEnd$ are natural numbers, and
$e.\evEnd$ may be $\infty$ (infinity) to represent that $e$ has not
terminated yet, i.e., $e$ is terminated iff $e.\evEnd \neq \infty$,
which we denote with $\term(e)$.
For every $e$, $e.\evEnd > e.\evStart$. The types of $e.\evIn$ and
$e.\evOut$ depend on $e.\evOp$, and $e.\evOut$ is undefined iff
$e.\evEnd = \infty$.

Additionally, each event $e$ contains the optional field $\evParent$,
corresponding to the event that invoked $e$, if any. For example, if
$\rep{e}$ is a \rep{rep} event, then $\rep{e}.\evParent$ is the
\abs{abs} event that contains $\rep{e}$. Note that $e.\evParent$ and
$e$ need not have the same $\evOp$, $\evIn$ and $\evOut$ fields; e.g.,
a write \rep{rep} event can be invoked both by \abs{abs} writer and
\abs{abs} scanner.

Finally, we require that each \abs{abs} event is single-threaded, and
thus cannot fork children threads.

We denote the set of all events of a given execution history by
$\events$.
If $E \subset \events$, then
$\mathit{op}(E) = \left\{ e \in E \mid e.\evOp = op \right\}$ selects
the events with operation $\mathit{op}$, and we overload $\term$ over
sets with $\term(E) = \left\{ e \in E \mid \term(e) \right\}$ to
select the terminated events.

\axiomset{RB}

As linearization order can only affect physically overlapping events,
proving linearizability requires reasoning about non-overlapping
events, which is captured by the \emph{returns-before} relation
\[
  e \rb e' \wideDefeq e.\evEnd < e'.\evStart
\]
denoting that $e$ terminated before $e'$ started. Two events are
overlapping if they are unrelated by $\rb$. As customary, we write
$\rbeq$ for the reflexive closure of $\rb$.
The relation $\rb$ is irreflexive (i.e. \emph{acyclic}) partial order,
and moreover, an \emph{interval} order~\cite{Felsner:92}, as it
satisfies the following property
\begin{equation}
  e_1 \rb e_2 \land e'_1 \rb e'_2 \implies e_1 \rb e'_2 \lor e'_1 \rb e_2
  \label[prop]{ax:interval}
\end{equation}

\begin{figure}[t]
  \centering

    \centering
    \begin{tikzpicture}
      \draw[|-|] (0,0) node[above] {$\scriptscriptstyle \evStart$} -- (2,0) node [midway, above] {$e_1$} node[above] {$\scriptscriptstyle \evEnd$};
      \draw[|-|] (2.5,0) node[above] {$\scriptscriptstyle \evStart$} -- (6.5,0) node [midway, above] {$e_2$} node[above] {$\scriptscriptstyle \evEnd$};
      
      \draw[|-|] (-1,-0.5) node[below] {$\scriptscriptstyle \evStart$} -- (3,-0.5) node [midway, below] {$e'_1$} node[below] {$\scriptscriptstyle \evEnd$};
      \draw[|-|] (3.5,-0.5) node[below] {$\scriptscriptstyle \evStart$} -- (5.5,-0.5) node [midway, below] {$e'_2$} node[below] {$\scriptscriptstyle \evEnd$};
    \end{tikzpicture}\vspace{-2mm}

    \caption{Visual representation of events highlighting the interval
      and subevent \cref{ax:interval,ax:rb-absrep}.\vspace{-4mm}}
  \label{fig:rb-prop}
\end{figure}

\cref{fig:rb-prop} illustrates why \cref{ax:interval} must hold. The
figure shows events $e_1, e_2, e'_1, e'_2$ such that $e_1 \rb e_2$,
and $e'_1 \rb e'_2$, and \cref{ax:interval} holds because also
$e_1 \rb e'_2$. We can try to invalidate the latter by shifting $e'_1$
and $e'_2$ to the left so that $e'_2.\evStart < e_1.\evEnd$ while
maintaining $e'_1.\evEnd < e'_2.\evStart$. But then we're forced to
have $e'_1.\evEnd < e_2.\evStart$, i.e. $e'_1 \rb e_2$ which
re-establishes \cref{ax:interval}.

We also say that $e$ is a \emph{subevent} of $e'$ (alternatively, $e'$
\emph{contains} $e$) if
\[
  e \subev e' \wideDefeq e'.\evStart \leq e.\evStart \land
  e.\evEnd \leq e'.\evEnd
\]
For any \rep{rep} event $\rep{e}$, we require that $\rep{e}.\evParent$
must be an \abs{abs} event such that $\rep{e} \subev \rep{e}.\evParent$.
Additionally, the following property holds for
subevents and returns-before.
\begin{equation}
  e_1 \subev e'_1 \land e_2 \subev e'_2 \land e'_1 \rb e'_2
  \implies e_1 \rb e_2
  \label[prop]{ax:rb-absrep}
\end{equation}
That is, if $e'_1$ returned before $e'_2$ then all subevents of $e'_1$
must return before any subevent of $e'_2$. For example, in
\cref{fig:rb-prop}, $e_1 \subev e'_1$ and $e'_2 \subev e_2$, and
$e_1 \rb e'_2$. If we shift $e'_1$ to the left so that $e'_1 \rb e_2$,
then we just increase the distance between $e_1$ and $e'_2$,
maintaining $e_1 \rb e'_2$.

\subsubsection{Visibility Relations}\label{sec:vis}
\axiomset{V}
If the event $e'$ depends on the result of $e$, we say that $e$ is
\emph{visible} to $e'$, or alternatively that $e$ is \emph{observed} by
$e'$. We denote the relationship as
\[
e \obs e'
\]
Depending on the data structure being verified, we will often require
several different observation relations to differentiate how the
observation came about. For example, in the case of Jayanti, we will
use $\robs$ for a ``reads-from observation'' (a reader observes a
writer by reading what was written), and $\fobs$ for a ``forwarding
observation'' (a scanner observes a writer by having the written value
forwarded). 
The former relation is tied to the physical act of reading a pointer,
and will be the same in all three algorithms. The latter relation
differs for different algorithms and is typically provided by the
human verifier, similarly to how loop invariants must often be
provided.

We will typically obtain the visibility relation $\obs$ by unioning
all the different observation subrelations. We then require the
following property of $\obs$
\begin{align}
e \obs^+ e' \implies e' \not\rbeq e \label[prop]{ax:wfobs}
\end{align}
where $\obs^+$ is the transitive closure of $\obs$. Read
contrapositively, the property says that if $e'$ terminated before $e$
started, then $e'$ cannot end a non-empty sequence of observations
starting from $e$. In particular, as a special case, $e'$ cannot
observe $e$ or be equal to $e$, meaning $\obs$ has to be irreflexive.
We write $\obseq$ and $\robseq$ for the reflexive closure of $\obs$
and $\robs$ respectively.

\subsubsection{Happens-Before Relation}
Given two events $e$ and $e'$, if $e \rb e'$ or $e \obs e'$, then
clearly, in the ultimate linearization order we want to construct, $e$
must appear before $e'$. To capture this intuition, we define
the \emph{happens-before} relation as the transitive closure
\[
  \mathord{\hb} \wideDefeq (\mathord{\rb}\ \cup \mathord{\obs})^+
\]
We also name the \emph{single-step happens-before} relation
$\mathord{\hb_1}\,{=}\,(\mathord{\rb}\ \cup\ \mathord{\obs})$, so that
$\mathord{\hb} = \mathord{\hb_1 \hbeq}\ =\ \mathord{\hbeq \hb_1} =
\mathord{\hb_1^+}$, where $\mathord{\hbeq}$ is the
reflexive-transitive closure $\mathord{\hb_1^*}$.
The $\rb$, $\obs$ and $\hb$ relations are all standard in the
literature~\cite{viotti:acmsurv16}.
An important property is that $\hb$ is an irreflexive (i.e.,
\emph{acyclic}) partial order, which is ensured by $\obs$
satisfying~\cref{ax:wfobs}.
\begin{lemma}\label{lem:hb-irrefl}
  If visibility relation $\obs$ satisfies \cref{ax:wfobs} then $\hb$
  is irreflexive.
\end{lemma}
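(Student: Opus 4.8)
The plan is to argue by contradiction: assume some event lies on a $\hb$-cycle, and reduce the cycle to a short canonical shape that directly contradicts \cref{ax:wfobs} or the acyclicity of $\rb$. So suppose $e \hb e$ for some $e$, i.e.\ there is a cyclic sequence of $\hb_1 = {\rb} \cup {\obs}$ steps from $e$ back to $e$. I would first \emph{normalize} this cycle. Since $\rb$ is transitive, any run of consecutive $\rb$-steps collapses to a single $\rb$-step; grouping the remaining maximal runs of $\obs$-steps into $\obs^+$-segments and rotating the cyclic word so that it splits cleanly, the cycle becomes a strictly alternating cyclic sequence consisting of $k$ edges $c_i \rb d_i$ and $k$ segments $d_i \obs^+ c_{i+1}$ (indices mod $k$), for some $k \ge 0$.

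Next I would dispatch the small values of $k$. If $k = 0$, the cycle is either a pure $\rb$-cycle, impossible since $\rb$ is an irreflexive transitive relation and hence acyclic, or it has the form $e \obs^+ e$, which contradicts \cref{ax:wfobs} instantiated with $e' \defeq e$ (since $e \rbeq e$ always holds). If $k = 1$, the cycle is $c \rb d \obs^+ c$; applying \cref{ax:wfobs} to $d \obs^+ c$ gives $c \not\rbeq d$, contradicting $c \rb d$.

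Finally, for $k \ge 2$ I would shorten the cycle and induct on $k$. Pick two consecutive $\rb$-edges $c_1 \rb d_1$ and $c_2 \rb d_2$ with connecting segment $d_1 \obs^+ c_2$. The interval-order property \cref{ax:interval} gives $c_1 \rb d_2$ or $c_2 \rb d_1$; the latter is excluded because \cref{ax:wfobs} applied to $d_1 \obs^+ c_2$ yields $c_2 \not\rbeq d_1$. Hence $c_1 \rb d_2$, so the stretch $c_1 \rb d_1 \obs^+ c_2 \rb d_2$ can be replaced by the single edge $c_1 \rb d_2$, producing a cycle of the same alternating shape with $k-1$ blocks. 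Iterating down to $k \le 1$ reaches one of the contradictions above, completing the proof.

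I expect the main obstacle to be the combinatorial bookkeeping of the normalization step — correctly rotating the cyclic word so it decomposes into alternating $\rb$/$\obs^+$ blocks, and being careful that $\obs$ itself need not be transitive, so \cref{ax:wfobs} has to be invoked in its $\obs^+$ form — rather than any deep mathematics; the interval-order shortcut is the one genuinely load-bearing step and it is short.
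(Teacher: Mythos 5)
Your proposal is correct and follows essentially the same route as the paper's proof: both reduce a putative $\hb$-cycle to one with at most one $\rb$-step by repeatedly applying the interval-order property \cref{ax:interval} to two consecutive $\rb$-edges (ruling out the $c_2 \rb d_1$ branch via \cref{ax:wfobs} on the connecting $\obs^+$-segment), and then dispatch the base cases with \cref{ax:wfobs} directly. Your explicit normalization into alternating $\rb$/$\obs^+$ blocks is just a cleaner packaging of the paper's ``shorten until one or zero occurrences of $\rb$'' step.
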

\begin{proof}
  We assume that $\hb$ is not irreflexive (i.e., there exists $e$ such
  that $e \hb e$), and derive contradiction. The relation $e \hb e$ is
  a cyclic chain of $\hb_1$, each of which is either $\rb$ or
  $\obs$. We are justified in considering chains with only one or zero
  occurrences of $\rb$, as chains with more occurrences of $\rb$ can
  be shortened, thus we can recursively shorten a chain until it
  consists of one or zero $\rb$. Indeed, if the chain has
  more than one $\rb$, it has the form
  \[e \hb_1 \cdots \hb_1 e_1 \rb e_2 \obs \cdots \obs e_3 \rb e_4
    \hb_1 \cdots \hb_1 e\ ,\] 
  where $e_2$ and $e_3$ are related by a chain of zero or more
  $\obs$'s (i.e., $e_2 = e_3$ or $e_2 \obs^+ e_3$). But then, we can
  remove one $\rb$ as follows. If $e_2 = e_3$, by transitivity of
  $\rb$, we can shorten the chain to $e \hbeq e_1 \rb e_4 \hbeq e$. If
  $e_2 \obs^+ e_3$, by \cref{ax:interval}, it must be either
  $e_1 \rb e_4$ or $e_3 \rb e_2$. If $e_1 \rb e_4$, we again shorten
  to $e \hbeq e_1 \rb e_4 \hbeq e$. Otherwise, $e_3 \rb e_2$ and
  $e_2 \obs^+ e_3$ contradict \cref{ax:wfobs}.

  On the other hand, if there is exactly one $\rb$ in the chain, i.e.,
  $e \obs^* e_1 \rb e_2 \obs^* e$, then $e_2 \obs^* e \obs^* e_1$,
  i.e., $e_2 \obs^* e_1$ which, along with $e_1 \rb e_2$, contradicts
  \cref{ax:wfobs}. Finally, if there is no $\rb$ in the chain, i.e.,
  $e \obs^+ e$, then \cref{ax:wfobs} directly implies the
  contradiction $e \not\rbeq e$.
\end{proof}

We will usually color the relations with the same colors as the events
they are relating (e.g., $\abs{e \obs e'}$ for abs events $\abs{e}$
and $\abs{e'}$ and $\rep{e_r \hb e'_r}$ for rep events $\rep{e_r}$ and
$\rep{e'_r}$).

\subsubsection{Memory Model}\label{sec:memorymodel}

\begin{figure}
  \centering
  \renewcommand{\arraystretch}{\sigSpacing}%

\axiomset{M}
\signatureHeader{$\aregSig$}{Atomic Register}

\begin{letbox}{l}
  & $\rep{W}$ & Set of all writes of the register of the history\\
  & $\rep{R}$ & Set of all reads of the register of the history\\
\end{letbox}

\begin{tabular*}{0.9\textwidth}{l@{\hskip 1em}A@{\extracolsep{\fill}}r}
  $\bigSum$
  & \rep{\obs} &\subseteq (\rep{W} \cup \rep{R})^2 & Visibility relation\\
  & \rep{\robs} &\subseteq \mathord{\rep{\obs}} \cap \rep{W} \times \rep{R} & Reads-from visibility
\end{tabular*}\vspace{\sepAfterVar}

\begin{letbox}{A}
  & \rep{\hb} &\wideDefeq (\rep{\rb} \cup \rep{\obs})^+ & Happens-before order\\
\end{letbox}

\begin{tabular*}{0.9\textwidth}{l@{\hskip 2em}A@{\extracolsep{\fill}}r}
  $\forall \rep{e}, \rep{e'}.$%
  &\rep{e \obs^+ e'}
  &\implies \rep{e' \not\rbeq e}
  &\eqref{ax:wfobs}\\
  $\forall \rep{r} \in \term(\rep{R}).$
  &\multicolumn{2}{c}{$\exists \rep{w} \in \rep{W}.\ \rep{w \robs r} \land \rep{w}.\evIn = \rep{r}.\evOut$}
  &\insertEq\label[prop]{ax:mem-io}\\
  $\forall \rep{w} \in \rep{W}, \rep{r}\in \rep{R} .$
  &\rep{w \robs r}
  &\implies \nexists \rep{w'}.\ \rep{w \hb w' \hb r}
  &\insertEq\label[prop]{ax:mem-nowrbetween}\\
  $\forall \rep{w}, \rep{w'} \in \rep{W}, \rep{r} \in \rep{R}.$
  &\rep{w \robs r} \land \rep{w' \robs r} &\implies \rep{w = w'}
  &\insertEq\label[prop]{ax:mem-robsuniq}\\
  $\forall \rep{w}, \rep{w'} \in \rep{W}.$
  &\rep{w \neq w'} &\implies \rep{w \hb w'} \lor \rep{w' \hb w}
  &\insertEq\label[prop]{ax:mem-wrtotal}\\[\sepLastProp]
  \bottomrule
  \end{tabular*}
  \caption{Signature representation of the properties of an atomic
    register.}
  \label{fig:memory}
\end{figure}

With the above relations we can now state as axioms the properties
that we expect of the underlying memory model. We start with a simple
axiomatization of memory that is sufficient for \cref{alg:jay1}. We
will extend this axiomatization in \cref{sec:multiwriter} to account
for the $\LLins$ (load-link) and $\SCins$ (store-conditional) memory
operations, required by~\cref{alg:jay2,alg:jay3}.

The memory model only considers operations over individual memory
cells, aka.\ atomic registers~\cite{Herlihy-Shavit:08}. Their
axiomatization in terms of visibility relations is given in
\cref{fig:memory}, in the form of a signature we assume the memory to
satisfy. We use \repColorName{} in this figure, since we will only use
events of atomic registers as rep events.
The sets $\rep{W}$ and $\rep{R}$ are defined by the history as the
sets of writes and reads respectively. The quantifier $\Sigma$
signifies that the relations $\rep{\obs}$ and $\rep{\robs}$ are
abstract
components of the specification; the clients do not know anything
about these components outside of the listed axioms. Given an instance
$X$ of the signature, the clients can refer to the relations and to
the let definitions by projection as $X.\rep{\obs}$, $X.\rep{\robs}$,
$X.\rep{W}$ and $X.\rep{R}$. 
We will present similar signatures for snapshot data-structures and Jayanti-style forwarding.

Looking at \cref{fig:memory},
an atomic register is mathematically represented by two sets of
events: $\rep{W}$, corresponding to all the writes into the register's
memory cell, and $\rep{R}$, corresponding to all the reads. These are
related by $\rep{w \robs r}$, stating that read $\rep{r}$ \emph{reads
from} (i.e., observes) write $\rep{w}$. These events may also be
related by some other relation part of the total set of observations
$\rep{\obs}$, which includes $\rep{\robs}$; \cref{ax:wfobs} must hold
of $\rep{\obs}$ to ensure that $\rep{\hb}$ is irreflexive (thus, a
partial order) as per \cref{lem:hb-irrefl}.
\cref{ax:mem-io} states that for any terminated read $\rep{r}$ there
exists an observed write $\rep{w}$ with its input being the same as
the output of $\rep{r}$. \cref{ax:mem-nowrbetween} states that the
read $\rep{r}$ can observe only the latest write $\rep{w}$ into the
given memory cell. \cref{ax:mem-robsuniq} states that a read may
observe at most one write. \cref{ax:mem-wrtotal} states that the
writes into the given cell are totally ordered. These are standard
properties of sequentially consistent memory
\cite{Herlihy-Shavit:08}.

We want to reason about $\rep{\hb}$ order arbitrarily between rep
events, even if they belong to distinct register objects. As an
analogue to the locality property of linearizability, which says that
the union of multiple linearizable objects is itself linearizable, we
establish that the union of objects satisfying \cref{ax:wfobs} itself
satisfies \cref{ax:wfobs}. By \cref{lem:hb-irrefl}, this implies that
$\rep{\hb}$ order between any rep event is a partial order.
More formally, let $\rep{R_1} \dots \rep{R_m}$ be
all objects satisfying $\aregSig$ and representing our memory. We
define the top-level $\rep{\obs}$ (and top-level $\rep{\hb}$) relation
over rep events as the union of visibility of all register objects:
\begin{align*}
  \rep{\mathord{\obs}} &\wideDefeq (\rep{R_1.\mathord{\obs}})\ \cup \dots \cup\ (\rep{R_m.\mathord{\obs}})\\
  \rep{\mathord{\hb}} &\wideDefeq (\rep{\mathord{\rb}}\ \cup\ \rep{\mathord{\obs}})^+
\end{align*}
Since each register satisfies \cref{ax:wfobs} and each individual
$\rep{\obs}$ only relates events from the same register, it follows
that the combined $\rep{\obs}$ also satisfies \cref{ax:wfobs}. By
\cref{lem:hb-irrefl}, the top-level $\rep{\hb}$ is then a partial
order. The combination similarly satisfies
\cref{ax:mem-io,ax:mem-nowrbetween,ax:mem-robsuniq,ax:mem-wrtotal}.
We will use these top-level definition when relating
rep events of distinct objects.

\subsubsection{Linearizability}\label{sec:linearizability}
We now define linearizability formally in terms of visibility relations.

\begin{definition}\label{def:lin}
  History $E$ is linearizable with respect to data structure
  $\mathcal{D}$ if there exists a visibility relation $\obs$ and a total order $<$ on the set
  of events
  $E_c = \{ e \mid \exists e' \in \term(E).\ e \obs^* e' \}$, such
  that: (1) $\mathord{\hb} = (\rb \cup \obs)^+ \subseteq \mathord{<}$
  on $E_c$, and (2) executing the events in $E_c$ in the order of $<$
  is a legal sequential behavior of $\mathcal{D}$. The order $<$ is
  the \emph{linearization order} (or \emph{linearization}, for short)
  of $E$. Algorithm (or structure) $\mathcal{A}$ is linearizable
  wrt.~$\mathcal{D}$ if every history of $\mathcal{A}$ is linearizable
  wrt.~$\mathcal{D}$.
\end{definition}

Definition~\ref{def:lin} differs somewhat from the original
one~\cite{herlihy:90} in that we use the visibility relation $\obs$ to
complete $\term(E)$ into $E_c$ with events that have been observed,
but have not yet terminated. The original definition existentially
abstracts over the set of completing events; formally, it does not
organize them into a visibility relation, but in practice these events
are always added because they have been observed by some terminated
event, and are necessary to ensure legal sequential behavior.
The requirement $\mathord{\hb} \subseteq \mathord{<}$ means that $<$
must respect $\rb$ (in addition to $\obs$), and in particular that $<$
can reorder only overlapping events.
In this paper, $\mathcal{D}$ is the snapshot data structure that has
the following sequential behavior over its state (the array $\resA$):
\begin{itemize}
\item $w_i \in \writeProc_i$ writes $w_i.\evIn$ to $\resA[i]$ and
  returns nothing, i.e., $w_i.\evOut = ()$.
\item $s \in \scanProc$ does not modify $\resA$ and returns a copy of
  $\resA$, i.e., $s.\evOut = \resA$.
\end{itemize}

\subsection{Hierarchical Structure of the Proof}

\begin{figure}
  \centering

\begin{tikzpicture}[auto=left,scale=0.89, every node/.style={scale=0.89}]
  \node[draw] (Jay1) at (0,0) {\cref{alg:jay1}};
  \node[draw, align=center] (Fwd) at (4.5,0) {$\fwdSig$~sig.\\(\cref{fig:forward})};
  \node[draw, align=center] (Snap) at (9,0) {$\snapshotSig$~sig.\\(\cref{fig:snapshot})};
  \node[draw, align=center] (Lin) at (13.5,0) {Linearizability\\(Def. \ref{def:lin})};

  \node[draw] (Jay2) at (0,-1.3) {\cref{alg:jay2}};
  \node[draw] (Jay3) at (0,-2.6) {\cref{alg:jay3}};
  \node[draw, align=center] (Fwd2) at (4.5,-1.95) {$\mwFwdSig$~sig.\\(\cref{fig:forward2})};

  \draw[->] (Jay1) -- node {Step (1)} node[swap] {\cref{lem:jay1-fwd}} (Fwd);
  \draw[->] (Fwd) -- node {Step (2)} node[swap] {\cref{lem:fwd-ss}} (Snap);
  \draw[->] (Snap) -- node {Step (3)} node[swap] {\cref{lem:ss-lin}} (Lin);

  \draw[->] (Jay2.east) -- node[sloped] {Step (1a)} node[sloped,swap] {\cref{lem:jay2-fwd2}} ([yshift=-3]Fwd2.north west);
  \draw[->] (Jay3.east) -- node[sloped] {Step (1a)} node[sloped,swap] {\cref{lem:jay3-fwd2}} ([yshift=3]Fwd2.south west);
  \draw[->] (Fwd2) -- node[align=center] {Step\\(2a)} node[swap,align=center] {Lemma\\\ref{lem:fwd2-fwd}} (Fwd);

\end{tikzpicture} 

\caption{%
  Overview of the structure of the linearizability proof for each
  of Jayanti's snapshot algorithms.\vspace{-2mm}
}
\label{fig:overview}
\end{figure}

Ultimately, our goal is to prove that each of Jayanti's three snapshot
algorithms is linearizable. To accomplish this in a way that maximizes
proof reuse, we divide our proofs into multiple segments, which we
illustrate in \cref{fig:overview}. The proof of Jayanti's first
algorithm is split into the following steps:
\begin{enumerate}
\item Jayanti's single-writer/single-scanner algorithm
  (\cref{alg:jay1}) consists of the rep events described in
  \cref{fig:struct} and satisfies the forwarding snapshot
  signature that we give in \cref{fig:forward}.
\item Any algorithm consisting of the rep events described in
  \cref{fig:struct} and satisfying the forwarding snapshot signature
  from \cref{fig:forward}, also satisfies the general snapshot
  signature in \cref{fig:snapshot}.
\item Any algorithms satisfying the general snapshot signature in
  \cref{fig:snapshot} is linearizable.
\end{enumerate}
For Jayanti's remaining algorithms, we incorporate the following two
additional steps.
\begin{enumerate}
\item[(1a)] Jayanti's two multi-writer algorithms
  (\cref{alg:jay2,alg:jay3}) consist of the rep events described in
  \cref{fig:struct2} and satisfy the multi-writer forwarding snapshot
  signature in \cref{fig:forward2}.
\item[(2a)] Any algorithm consisting of the rep events described in
  \cref{fig:struct2} and satisfying the multi-writer forwarding
  snapshot signature in \cref{fig:forward2}, also satisfies the
  forwarding snapshot signature in \cref{fig:forward}. The
  linearizability then follows by steps (2) and (3) above, which are
  thus reused and shared by all three algorithms.
\end{enumerate}

We proceed to describe the intuition behind the steps (1) to (3),
together with the forwarding and snapshot signatures. The detailed
proofs of (1) to (3) are in \cref{sec:proofsketches}. The description
and proofs of (1a) and (2a) are in \cref{sec:multiwriter}.

\subsection{Snapshot Signature}

\begin{figure}[t]
  \centering
  \renewcommand{\arraystretch}{\sigSpacing}%

\axiomset{S}
\signatureHeader{$\snapshotSig$}{Snapshot data-structure}

\begin{letbox}{l}
  & $\abs{W_i}$ & \WrSetDesc{}\\
  & $\abs{S}$ & \ScSetDesc{}\\
\end{letbox}

\begin{tabular*}{0.9\textwidth}{l@{\hskip 1em}A@{\extracolsep{\fill}}r}
  $\bigSum$
  & \abs{\WrEff_i} &\subseteq \abs{W_i} & Set of effectful writes of index $i$\\
  & \abs{\obs} &\subseteq (\bigcup_i \abs{\WrEff_i} \cup \abs{S})^2 & Visibility relation\\
  & \abs{\robs} &\subseteq \mathord{\abs{\obs}} \cap \bigcup_i \abs{\WrEff_i} \times \abs{S} & Reads-from visibility
\end{tabular*}\vspace{\sepAfterVar}

\begin{letbox}{A}
  & \abs{\hb} &\wideDefeq (\abs{\rb} \cup \abs{\obs})^+ & Happens-before order\\
\end{letbox}

\begin{tabular*}{0.9\textwidth}{l@{}A@{\extracolsep{\fill}}r}
  $\forall \abs{e}, \abs{e'}.$
  &\abs{e \obs^+ e'}
  &\implies \abs{e' \not\rbeq e}
  &\eqref{ax:wfobs}\\
  $\forall i, \abs{s} \in \term(\abs{S}).$
  &\multicolumn{2}{c}{$\exists \abs{w_i}.\ \abs{w_i \robs s} \land \abs{w_i}.\evIn = \abs{s}.\evOut[i]$}
  &\insertEq\label[prop]{ax:ss-io}\\
  $\forall \abs{w_i}, \abs{s}.$
  &\abs{w_i \robs s}
  &\implies \nexists \abs{w'_i}.\ \abs{w_i \hb w'_i \hb s}
  &\insertEq\label[prop]{ax:ss-nowrbetween}\\
  $\forall \abs{w_i}, \abs{w'_i}, \abs{s}.$
  &\abs{w_i \robs s} \land \abs{w'_i \robs s} &\implies \abs{w_i = w'_i}
  &\insertEq\label[prop]{ax:ss-robsuniq}\\
  $\forall \abs{w_i}, \abs{w'_i} \in \abs{\WrEff_i}.$
  &\abs{w_i \neq w'_i} &\implies \abs{w_i \hb w'_i} \lor \abs{w'_i \hb w_i}
  &\insertEq\label[prop]{ax:ss-wrtotal}\\
  &\multicolumn{2}{c}{\hspace{-22mm}$\term(\abs{W_i}) \subseteq \abs{\WrEff_i}$}
  &\insertEq\label[prop]{ax:ss-wrterm}\\
  $\forall \abs{w_i}, \abs{w'_i}, \abs{w_j}, \abs{w'_j}, \abs{s}, \abs{s'}.$
  &\multicolumn{2}{c}{$\abs{w_i,w_j \robs s} \land \abs{w'_i,w'_j \robs s'} \land \abs{w_i \hb w'_i}
  \implies \abs{w'_j \not\hb w_j}$}
  &\insertEq\label[prop]{ax:ss-mono}\\[\sepLastProp]
  \bottomrule
  \end{tabular*}

  \caption{ %
    Signature representation of the properties of a
    snapshot data-structure.\vspace{-2mm}
  }
  \label{fig:snapshot}
\end{figure}

Going backwards, we start with the general snapshot signature
(\cref{fig:snapshot}). It describes the axioms for reasoning about the
snapshot data structure as a whole, and will interface the
linearizablity proofs by the following lemma, which we will prove in
\cref{sec:proofsketches}.
\begin{lemma}\label{lem:ss-lin}
  Histories satisfying the $\snapshotSig$~signature
  (\cref{fig:snapshot}) are linearizable.
\end{lemma}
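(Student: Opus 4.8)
The goal is to show that any history $E$ satisfying the $\snapshotSig$ signature is linearizable per Definition~\ref{def:lin}, which means exhibiting a visibility relation $\abs{\obs}$ (we already have one from the signature) and a total order $<$ on $E_c = \{\abs{e} \mid \exists \abs{e'} \in \term(\abs{E}).\ \abs{e} \obs^* \abs{e'}\}$ such that (1) $\abs{\hb} \subseteq\, <$ on $E_c$, and (2) executing the events in $<$-order is legal sequential behavior of the snapshot data structure. The plan is to take $<$ to be any linear extension of $\abs{\hb}$ restricted to $E_c$; such an extension exists because $\abs{\hb}$ is an irreflexive partial order by \cref{lem:hb-irrefl} (applicable since \cref{ax:wfobs} holds of $\abs{\obs}$). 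Condition (1) is then immediate by construction. All the work is in condition (2).

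\textbf{Legality of the sequential execution.} Running the events of $E_c$ in $<$-order, each write $\abs{w_i}$ stores $\abs{w_i}.\evIn$ into cell $i$, and each scan $\abs{s}$ must return, in each coordinate $i$, the value written by the $<$-last write to cell $i$ that precedes $\abs{s}$. So I must show: for every terminated scan $\abs{s} \in \term(\abs{S})$ and every index $i$, the write $\abs{w_i}$ with $\abs{w_i \robs s}$ (which exists and satisfies $\abs{w_i}.\evIn = \abs{s}.\evOut[i]$ by \cref{ax:ss-io}) is exactly the $<$-maximal effectful write of index $i$ that is $<$-below $\abs{s}$. First, $\abs{w_i} < \abs{s}$ holds because $\abs{w_i \robs s}$ implies $\abs{w_i \obs s}$, hence $\abs{w_i \hb s}$, hence $\abs{w_i} < \abs{s}$. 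Second, suppose some other effectful write $\abs{w'_i} \in \abs{\WrEff_i}$ satisfies $\abs{w_i} < \abs{w'_i} < \abs{s}$; I need a contradiction. The two writes are $\abs{\hb}$-comparable by \cref{ax:ss-wrtotal}, and since $<$ extends $\abs{\hb}$, we cannot have $\abs{w'_i \hb w_i}$, so $\abs{w_i \hb w'_i}$. Now I want to derive $\abs{w'_i \hb s}$ to contradict \cref{ax:ss-nowrbetween} (no write strictly $\abs{\hb}$-between $\abs{w_i}$ and $\abs{s}$). This is the delicate point: I only know $\abs{w'_i} < \abs{s}$, not $\abs{w'_i \hb s}$. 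I expect to close this gap using \cref{ax:ss-mono}: apply it with the scan $\abs{s}$ reading $\abs{w_i}$ at coordinate $i$ and reading some $\abs{w_j}$ at another coordinate, and a second scan --- here I would instantiate the monotonicity axiom cleverly, possibly taking $\abs{s'} = \abs{s}$ and $j = i$ to get $\abs{w'_i \not\hb w_i}$, but more likely the real use is: if $\abs{w_i \hb w'_i}$ and $\abs{w'_i}$ were not $\abs{\hb}$-below $\abs{s}$, the monotonicity across coordinates forces $\abs{s}$ to read a write $\abs{\hb}$-above $\abs{w_i}$ at coordinate $i$, contradicting $\abs{w_i \robs s}$ via \cref{ax:ss-nowrbetween,ax:ss-robsuniq}.

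\textbf{Handling non-terminated and completing events.} I also need that every scan in $E_c$ either is terminated (so \cref{ax:ss-io} applies) or does not constrain legality --- but Definition~\ref{def:lin}'s legality requirement only demands that the \emph{outputs} match, and non-terminated events have undefined output, so they impose no output constraint; still, a non-terminated scan sitting in the middle of $<$ doesn't disturb later writes since scans don't modify $\resA$. I should also confirm $E_c$ is well-defined: every completing write of index $i$ in $E_c$ is in $\abs{\WrEff_i}$ --- terminated ones by \cref{ax:ss-wrterm}, and any non-terminated write pulled into $E_c$ by being observed must be effectful since observation is through $\abs{\robs} \subseteq \bigcup_i \abs{\WrEff_i} \times \abs{S}$.

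\textbf{Main obstacle.} The crux is precisely showing that the reads-from write $\abs{w_i}$ is $<$-maximal below $\abs{s}$ among index-$i$ writes --- i.e., ruling out an interposed write $\abs{w'_i}$ that is $<$-below $\abs{s}$ but not $\abs{\hb}$-below $\abs{s}$. This is where \cref{ax:ss-mono} (the monotonicity/coherence-across-coordinates axiom) must do real work, and getting its instantiation right --- coordinating the two scans and two coordinate pairs it quantifies over so that its conclusion $\abs{w'_j \not\hb w_j}$ yields the needed contradiction with \cref{ax:ss-nowrbetween} --- is the step I expect to require the most care.
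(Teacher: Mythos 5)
There is a genuine gap, and it is exactly at the point you flag as the ``main obstacle'': taking $<$ to be an \emph{arbitrary} linear extension of $\abs{\hb}$ on $E_c$ does not work, and \cref{ax:ss-mono} cannot repair it. The failure mode is concrete: let $\abs{w_i \robs s}$ and let $\abs{w'_i}$ be an effectful write with $\abs{w_i \hb w'_i}$ that overlaps $\abs{s}$ and is neither observed by $\abs{s}$ nor related to it by any $\abs{\hb}$-chain. Then $\abs{w'_i}$ and $\abs{s}$ are $\abs{\hb}$-incomparable, so an arbitrary linear extension is free to place $\abs{w_i} < \abs{w'_i} < \abs{s}$, making the sequential execution illegal ($\abs{s}$ returns the stale value at $i$). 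The axioms only forbid $\abs{w_i \hb w'_i \hb s}$ (\cref{ax:ss-nowrbetween}); they say nothing about interposition in an order you chose freely. \cref{ax:ss-mono} relates \emph{two} scans across \emph{two} coordinates and concludes only a negative $\abs{\hb}$-fact ($\abs{w'_j \not\hb w_j}$); it has no power to convert ``$\abs{w'_i} < \abs{s}$'' into ``$\abs{w'_i \hb s}$'', which is what your argument would need. No instantiation of it closes the gap.

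The paper's proof instead \emph{constructs} the linearization rather than picking an arbitrary extension. It first enriches $\abs{\hb}$ on writes with an auxiliary relation $\abs{\whbI}$ (if $\abs{w_i, w_j \robs s}$ and $\abs{w_j \hb w'_j}$ with $i \neq j$, then $\abs{w_i}$ is forced before $\abs{w'_j}$), proves $\abs{\whb} = (\abs{\hb} \cup \abs{\whbI})^+$ acyclic using \cref{ax:ss-mono} (this is where that axiom does its real work), extends it to a total order $\abs{\wlt}$ on writes only, and then builds the linearization greedily from the top by repeatedly removing a ``latest-in-time'' event: either the $\abs{\wlt}$-greatest write if unobserved, or an $\abs{\hb}$-maximal scan all of whose observed writes are $\abs{\wlt}$-maximal in their cells. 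Scans are thus placed immediately after the writes they observe by construction, which is precisely the property your arbitrary extension cannot guarantee. Your setup of $E_c$, the use of \cref{ax:ss-wrterm} and the typing of $\abs{\robs}$ to justify effectfulness of completing writes, and the identification of where the difficulty lies are all correct; but the proof as proposed does not go through without replacing the ``any linear extension'' step by such a construction.
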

The $\snapshotSig$ signature is almost identical to that for atomic
registers (\cref{fig:memory}); the main distinction is that we have
multiple sets of writes into multiple pointers, scans (observing
multiple writes) instead of reads, the sets of \emph{effectful} writes
$\abs{\WrEff_i}$, and the extra \cref{ax:ss-wrterm,ax:ss-mono}.

Not all writes are immediately observable once they have started,
therefore we introduce the set of effectful writes $\abs{\WrEff_i}$
for each $i$. These are the writes that are available for scans to
observe (by definition of $\abs{\robs}$) and which we can order by
\cref{ax:ss-wrtotal}. \cref{ax:ss-wrterm} encodes that each terminated
write must have been effectful.

\cref{ax:ss-mono} imposes a form of monotonicity on the ordering of
the writes observed by scans over multiple memory cells. More
specifically, if we have two writes $\abs{w_i}$, $\abs{w_j}$ observed
by scan $\abs{s}$, and two writes $\abs{w'_i}$, $\abs{w'_j}$ observed
by scan $\abs{s'}$, then the writes into $i$ and $j$ cannot be ordered
in the opposite way.
The property ensures that the scans can be totally ordered in an
eventual linearization order. Indeed, if $\abs{w_i \hb w'_i}$ and
$\abs{w'_j \hb w_j}$, and we ordered $\abs{s}$ before $\abs{s'}$, then
we reach a contradiction by \cref{ax:ss-nowrbetween}, because the
event $\abs{w_j}$ occurs between $\abs{w'_j}$ and the scan $\abs{s'}$
that observes it. Similar argument applies if we try to order the
scans the other way around.

\subsection{Forwarding Signature}

The $\snapshotSig$ signature in \cref{fig:snapshot} captures the key
properties of snapshot algorithms, and we shall prove linearizability
solely out of the axioms of this signature. However, Jayanti's
algorithms have more in common than merely being snapshot algorithms,
as they share the same design principle of forwarding. We can thus
further modularize the proofs by axiomatizing forwarding itself.

To see what the axiomatization should accomplish, we must first
discuss the high-level differences between Jayanti's algorithms that
the axiomatization must abstract over. We present
\cref{alg:jay2,alg:jay3} in detail in \cref{sec:multiwriter},
but for now it suffices to know that in order to support the
multi-writer functionality of \cref{alg:jay2,alg:jay3}, we need to
allow for different kinds of forwarding that go beyond simply writing
into the auxiliary array $\resB$ that \cref{alg:jay1} does.

Additionally, to support multi-scanner functionality of
\cref{alg:jay3}, we need to allow for an abstract notion of
\emph{virtual} scan. While in a multi-scanner setting several physical
scans may run concurrently, the key idea of Jayanti for \cref{alg:jay3}
is that these scans collaborate to create a virtual scan (and a
corresponding snapshot), of which at most one may exist at any given
moment. Thus, even though \cref{alg:jay3} is physically a
multi-scanner, in \cref{sec:jay3} we will still be able to
conceptually see it as a single-scanner one.

While rep and abs events originate from execution histories, the
virtual scans are artificial events that the human verifier creates
themselves for purposes of verification. In analogy with the concept
of ``ghost state'' that is frequently used in verification of
concurrent programs, we can say that virtual scans are ``ghost''
events.
We will further require a mapping that sends each abs scan to a
virtual scan to which the abs scan contributed. Thus, in a
multi-scanner algorithm, a number of abs scans may be mapped to the
same virtual scan. The mapping may be partial because some abs scan
may not have yet reached a point for which the virtual scan
representing it has been determined. For single-scanner algorithms,
this map is a (total) bijection that identifies virtual and abs scans.
In particular, the intuition about abs scans from \cref{alg:jay1} will
suffice to understand virtual scans in our axiomatization of forwarding
principles.
In the rest of the text, we use the color \vrtColorName{} to visually
mark the virtual scans.

\subsubsection{Event Signatures}

\begin{figure}
  \centering
  \renewcommand{\arraystretch}{\sigSpacing}%

  \begin{tabular*}{0.9\textwidth}{l@{\extracolsep{\fill}}r}
    \toprule
    \textbf{resource}\ \ $\rep{\resA} \wideColon \arrayType{n}{\aregSig}$ & Main memory array\\
    \bottomrule
  \end{tabular*}\vspace{\sepEvSig}

\begin{tabular*}{0.9\textwidth}{@{\hskip 2em}A@{\extracolsep{\fill}}r}
  \toprule
  \multicolumn{2}{l}{\evsignature{$\writeProc_i$}} & \textbf{Write into cell $i$}\\
  \midrule
  \rep{a} &\wideColon \rep{\resA[i].W} & Rep event of writer writing value into $\resA[i]$\\
  \bottomrule
\end{tabular*}\vspace{\sepEvSig}

\begin{tabular*}{0.9\textwidth}{@{\hskip 2em}A@{\extracolsep{\fill}}r}
  \toprule
  \multicolumn{2}{c}{\evsignature{$\vscanProc$}} & \textbf{Virtual scan}\\
  \midrule
  \rep{\resB} &\wideColon \arrayType{n}{\aregSig} & Forwarding array of this virtual scan\\
  \rep{r_i} &\wideColon \rep{\resB[i].W} & Rep event of scanner resetting $\resB[i]$ by writing $\bot$\\
  \rep{a_i} &\wideColon \rep{\resA[i].R} & Rep event of scanner reading from $\resA[i]$\\
  \rep{b_i} &\wideColon \rep{\resB[i].R} & Rep event of scanner reading from $\resB[i]$\\
  \bottomrule
\end{tabular*}

\caption{Resources and event signatures for abs writes and
  virtual scans corresponding to $\fwdSig$ signature.\vspace{-4mm}
}
\label{fig:struct}
\end{figure}

For the $\fwdSig$ signature, it is not enough to just consider
properties over the abs events, as for the $\snapshotSig$ signature,
but we also need to consider their rep events. For this reason, we
introduce a special kind of signature, called \emph{event signature},
to encode which rep events an abs or virtual event consists of. In
particular, for an \abs{abs} event $\abs{e}$, we write $\abs{e} \in
\textsc{sig}$, if $\abs{e}$'s rep events are enumerated by the fields
of the signature $\textsc{sig}$.  If $\abs{e} \in \textsc{sig}$, where
$\textsc{sig}$ contains the field $a$, we write $\rep{\mathit{ea}}$
for the projection $\abs{e}.\rep{a}$. Similarly, if $\vrt{e}$ is a
\vrt{virtual} event.
We shall further assume the following properties of structures
satisfying event signatures:
\begin{enumerate}[label=(ES.\arabic*)]
\item Let $e$ be an event in a signature with field $a$. If $\abs{e}$
  is an \abs{abs} event, then $\rep{\mathit{ea}}.\evParent = \abs{e}$,
  and thus also $\rep{\mathit{ea}} \subev \abs{e}$, by our assumption
  on event structure in Section~\ref{subsub:eventstruct}. If $\vrt{e}$
  is a \vrt{virtual} event, we do not insist on
  $\rep{\mathit{ea}}.\evParent = \vrt{e}$. Virtual event $\vrt{e}$ is
  a custom collection of rep events, whose parents remain the abs
  events that invoked them, not the collection that is
  $\vrt{e}$. Nevertheless, we still require $\rep{\mathit{ea}} \subev
  \vrt{e}$.\label[prop]{ax:evsig-subev}

\item Any event in an event signature instance is unique to the
  instance, i.e., if $e, e' \in \textsc{sig}$, and $\textsc{sig}$
  contains the field $a$, and
  $\rep{\mathit{ea}} = \rep{\mathit{e'a}}$, then $e = e'$. 
  \label[prop]{ax:evsig-uniq}
\item Instances of event signatures may have some (or all) of their
  subevents undefined, unless the instance is a terminated event, in
  which case all its subevents must be defined. For example, if
  $\abs{e}$ is non-terminated, $\rep{\mathit{ea}}$ may be undefined.
  This corresponds to ongoing abs events having only a subset of its
  rep events executed so far. \label[prop]{ax:evsig-partial}
\end{enumerate}

We now present \cref{fig:struct} which consists of the resources and
event signatures necessary for the $\fwdSig$ signature.
First, we have the array $\resA$ which corresponds to the main memory
that a forwarding snapshot algorithm operates over. \cref{alg:jay1}
clearly has such an array.
Next, we have that every abs write $\abs{w_i}$ has a unique rep event
$\rep{\wra{w_i}}$ signifying the physical act of writing into
$\resA$. This is clearly true of \cref{alg:jay1}, as we have seen. 
For a virtual scan $\vrt{\vsc}$, the figure postulates an array
$\resB$, as well as three rep events $\rep{\scr{\vsc}{i}}$,
$\rep{\sca{\vsc}{i}}$, and $\rep{\scb{\vsc}{i}}$, corresponding to
resetting $\resB[i]$, reading from $\resA[i]$, and reading from
$\resB[i]$, respectively. In \cref{alg:jay1}, we have already
described these events under the names $\rep{\scr{s}{i}}$,
$\rep{\sca{s}{i}}$, and $\rep{\scb{s}{i}}$, and similar events will
exist for \cref{alg:jay2,alg:jay3} as well. The notation implies that
the array $\resB$ and the subevents are local fields of
$\vrt{\vsc}$. We will make use of the locality of $\resB$ in the
multi-scanner case, where different virtual scans have different
forwarding arrays. In the case of single-scanner algorithms, the
virtual and abs scans coincide, and the forwarding array of each
virtual scan is instantiated with the global forwarding array.

\subsubsection{Forwarding Signature Formally}

\newcommand{\fwdRobsDef}[1]{%
  \abs{w_i \robs \vrt{\vsc}} #1%
  \wideDefeq (\rep{\wra{w_i} \robs \sca{\vsc}{i}} \land%
  \rep{\scr{\vsc}{i} \robs \scb{\vsc}{i}}) \lor%
  \abs{w_i \fobs \vrt{\vsc}}%
}
\newcommand{\fwdWobsDef}[1]{%
  \abs{w_i \wobs w'_i} #1%
  \wideDefeq \rep{\wra{w_i} \hb \wra{w'_i}}%
}

\newcommand{\fwdAxVrtInScan}[1]{%
  \ensuremath{\forall \abs{s} \in \mathsf{dom}(\SMap{[-]}).} #1%
  {\ensuremath{\SMap{\abs{s}} \subev \abs{s}}}%
}
\newcommand{\fwdAxIO}[1]{%
  \ensuremath{\forall i, \abs{s} \in \term(\abs{S}).} #1%
  {\ensuremath{%
    \exists \abs{w_i}.\ \abs{w_i \robs \SMap{\abs{s}}}%
    \land \abs{w_i}.\evIn = \abs{s}.\evOut[i]}}%
}
\newcommand{\fwdAxWraUniq}[1]{%
  \ensuremath{\forall \rep{e_r} \in \rep{\resA[i].W}.} #1%
  {\ensuremath{\exists \abs{w_i}.\ \rep{\wra{w_i}} = \rep{e_r}}}%
}
\newcommand{\fwdAxScrUniq}[3]{%
  \ensuremath{\forall \vrt{\vsc}, \rep{e_r} \in \rep{\vrt{\vsc}.\resB[i].#1}.} #2%
  \ensuremath{\rep{e_r}.\evIn = \bot} #3%
  \ensuremath{\iff \exists \vrt{\vsc'}.\ \rep{\scr{\vsc'}{i}} = \rep{e_r}}%
}
\newcommand{\fwdAxScTotal}[2]{%
  \ensuremath{\forall \vrt{\vsc}, \vrt{\vsc'} \in \vrt{\Vsc}.} #1%
  \ensuremath{\vrt{\vsc \neq \vsc'}} #2%
  \ensuremath{\implies \vrt{\vsc \rb \vsc'} \lor \vrt{\vsc' \rb \vsc}}%
}
\newcommand{\fwdAxScStruct}[1]{%
  \ensuremath{\forall \vrt{\vsc} \in \vrt{\Vsc}.} #1%
  {\ensuremath{\rep{\scr{\vsc}{i} \rb \sca{\vsc}{i} \rb \scb{\vsc}{i}}}}%
}
\newcommand{\fwdAxFobsUniq}[2]{%
  \ensuremath{\forall \abs{w_i}, \abs{w'_i}, \vrt{\vsc}.} #1%
  \ensuremath{\abs{w_i \fobs \vrt{\vsc}} \land \abs{w'_i \fobs \vrt{\vsc}}} #2%
  \ensuremath{\implies \abs{w_i = w'_i}}%
}
\newcommand{\fwdAxFobsBot}[2]{%
  \ensuremath{\forall \vrt{\vsc}.} #1%
  \ensuremath{\term(\rep{\scb{\vsc}{i}}) \land \neg(\rep{\scr{\vsc}{i} \robs \scb{\vsc}{i}})} #2%
  \ensuremath{\implies \exists \abs{w_i}.\ \abs{w_i \fobs \vrt{\vsc}}}%
}
\newcommand{\fwdAxFobsHb}[2]{%
  \ensuremath{\forall \abs{w_i}, \vrt{\vsc}.} #1%
  \ensuremath{\abs{w_i \fobs \vrt{\vsc}}} #2%
  \ensuremath{\implies \rep{\wra{w_i} \hb \scb{\vsc}{i}} \land \neg(\rep{\scr{\vsc}{i} \robs \scb{\vsc}{i}})}
}
\newcommand{\fwdAxFwdA}[2]{%
  \ensuremath{\forall \abs{w_i}, \vrt{\vsc}.} #1%
  \ensuremath{\rep{\scr{\vsc}{i} \robs \scb{\vsc}{i}} \land \abs{w_i \rbhbeqrobs \vrt{\vsc}}} #2%
  \ensuremath{\implies \rep{\wra{w_i} \hb \sca{\vsc}{i}}}
}
\newcommand{\fwdAxFwdBa}[2]{%
  \ensuremath{\forall \abs{w_i}, \abs{w'_i}, \vrt{\vsc}.} #1%
  \ensuremath{\abs{w_i \fobs \vrt{\vsc}} \land \rep{\wra{w'_i} \hb \scr{\vsc}{i}}} #2%
  \ensuremath{\implies \abs{w_i \not\hb w'_i}}%
}
\newcommand{\fwdAxFwdBb}[2]{%
  \ensuremath{\forall \abs{w_i}, \abs{w'_i}, \vrt{\vsc}.} #1%
  \ensuremath{\abs{w_i \fobs \vrt{\vsc}} \land \abs{w'_i \rbhbeqrobs \vrt{\vsc}}} #2%
  \ensuremath{\implies \abs{w_i \not\hb w'_i}}
}

\begin{figure}
  \centering
  \renewcommand{\arraystretch}{\sigSpacing}%

\axiomset{F}
\signatureHeader{$\fwdSig$}{Snapshot data-structure with forwarding}

\begin{letbox}{l}
  & $\abs{W_i}$ & Set of all writes of cell $i$ of the snapshot
  in history with each $\abs{w_i} \in \writeProc_i$\\
  & $\abs{S}$ & Set of all scans of the snapshot in history\\
\end{letbox}

\begin{tabular*}{0.9\textwidth}{l@{\hskip 1em}A@{\extracolsep{\fill}}r}
  $\bigSum$
  & \vrt{\Vsc} && Set of virtual scans with each $\vrt{\vsc} \in \vscanProc$\\
  & \SMap{[-]} &\wideColon \abs{S} \parfun \vrt{\Vsc} & \ScVscDesc{}\\
  & \abs{\fobs} &\subseteq \bigcup_i \abs{W_i} \times \vrt{\Vsc} & Forwarding visibility
\end{tabular*}\vspace{\sepAfterVar}

\begin{letbox}{A}
  & \fwdRobsDef{&} & Reads-from visibility\\
  & \fwdWobsDef{&} & Writing visibility\\
  & \abs{\obs} &\wideDefeq \abs{\robs} \cup \abs{\wobs} & Visibility relation\\
  & \abs{\hb}  &\wideDefeq (\abs{\rb} \cup \abs{\obs})^+ & Happens-before order\\
\end{letbox}

\begin{tabular*}{0.9\textwidth}{l@{\hskip 0em}A@{\extracolsep{\fill}}r}
  \fwdAxVrtInScan{&\multicolumn{2}{c}}
  &\insertEq\label[prop]{ax:fwd-vrtinscan}\\
  \fwdAxIO{&\multicolumn{2}{c}}
  &\insertEq\label[prop]{ax:fwd-io}\\
  \newsubeqblock
  \fwdAxWraUniq{&\multicolumn{2}{c}}
  &\insertSubeq\label[prop]{ax:fwd-wrauniq}\\
  \fwdAxScrUniq{W}{&}{&}
  &\insertSubeq\label[prop]{ax:fwd-scruniq}\\
  \newsubeqblock
  \fwdAxScTotal{&}{&\ \,}
  &\insertSubeq\label[prop]{ax:fwd-sctotal}\\
  \fwdAxScStruct{&\multicolumn{2}{c}}
  &\insertSubeq\label[prop]{ax:fwd-scstruct}\\[\sepInnerProp]
  \newsubeqblock
  \fwdAxFobsUniq{&}{&\ \,}
  &\insertSubeq\label[prop]{ax:fobsuniq}\\
  \fwdAxFobsBot{&}{&\ \,}
  &\insertSubeq\label[prop]{ax:fobsbot}\\
  \fwdAxFobsHb{&}{&\ \,}
  &\insertSubeq\label[prop]{ax:fobshb}\\[\sepInnerProp]
  \fwdAxFwdA{&}{&\ \,}
  &\insertEq\label[prop]{ax:forward1}\\
  \newsubeqblock
  \fwdAxFwdBa{&}{&\ \,}
  &\insertSubeq\label[prop]{ax:forward2a}\\
  \fwdAxFwdBb{&}{&\ \,}
  &\insertSubeq\label[prop]{ax:forward2b}\\[1mm]
  \bottomrule\\[-0.5em]
\end{tabular*}\vspace{-4mm}

\caption{ %
  Signature representation of the properties of a snapshot
  data-structure that uses forwarding.\vspace{-2mm}
  }
  \label{fig:forward}
\end{figure}

Our axiomatization of forwarding principles is given in
\cref{fig:forward}.
The signature, which we refer to as $\fwdSig$, declares
as inputs the sets of write events $\abs{W_i}$ (for each $i$), and the
set of scan events $\abs{S}$. It then postulates the existence of a
set of virtual scans $\vrt{\Vsc}$, with a mapping $\SMap{[-]}$ from
abs to virtual scans, and a relation $\abs{\fobs}$ for visibility by
forwarding that captures abstractly when a virtual scan observes a
writer by forwarding. As before, these are abstract concepts, known to
satisfy only the properties listed in the scope of $\Sigma$.
We will shortly provide the intuition for the axioms
in~\cref{fig:forward}, but first, let us enumerate how the
axiomatization abstracts from \cref{alg:jay1}, so that the same
signature applies to all three algorithms.
\begin{itemize}
\item The axiomatization does not assume that we have rep write events
  for forwarding ($\rep{\wrb{w_i}}$). Instead, it makes forwarding
  abstract by tying it to the relation $\abs{\fobs}$, allowing for
  multiple different forwarding methods.
\item The axiomatization links virtual scans to abs scans by the
  function $\SMap{[-]}$, to support multiple scanners. Physical scans
  can overlap in time, as long as the virtual scans do not.
\item The axiomatization does not assume any rep events operating with
  memory cell $\resX$, or for that matter, that such a cell even
  exists. Since $\resX$ is solely used to communicate to writers if and
  when to forward, we instead encode the conditions for forwarding in
  the axioms. This is necessary because the three algorithms decide
  differently on whether to forward or not.
\end{itemize}

\subsubsection{Intuition Behind the Forwarding Signature}
\label{subsec:overview-principles}

Out of the forwarding visibility relation $\abs{\fobs}$, we define
two additional visibility relations over abstract and virtual
events: $\abs{\robs}$ and $\abs{\wobs}$ as shown in
\cref{fig:forward}. These relations define what we will use as an
abstract notion of a virtual scan observing a writer directly or by
forwarding ($\abs{\robs}$), and a writer observing a prior write
($\abs{\wobs}$), respectively. In each algorithm, the $\abs{\fobs}$
relation will be defined differently, but the definitions of
$\abs{\robs}$ and $\abs{\wobs}$ are constant in terms
of $\abs{\fobs}$. We next explain each of them.
\[
  \fwdRobsDef{}
\]
Read visibility ($\abs{\robs}$) captures that a virtual scan
$\vrt{\vsc}$ either reads the value of $\abs{w_i}$ by forwarding
(disjunct $\abs{w_i \fobs \vrt{\vsc}}$) or directly. The direct read
requires that the scanner finds the value in $\resA[i]$ (conjunct
$\rep{wra_i \robs \sca{\vsc}{i}}$) and that the scanner finds $\bot$
for the forwarded value in $\resB[i]$ (conjunct
$\rep{\scr{\vsc}{i} \robs \scb{\vsc}{i}}$).
\[
  \fwdWobsDef{}
\]
Write visibility ($\abs{\wobs}$) orders the abs writers of a common
memory cell, and is defined by the ordering of underlying rep writers
$\rep{\wra{w_i}}$. Ordering abs writers this way works because the rep
write is the unique point where the abs writer communicates its value
to the data structure. This holds for all three algorithms including
the multi-writer variants.

We next explain the forwarding properties. These are divided into
three groups, as shown in \cref{fig:forward}. The first group consists
of the structural properties related to writers and scanners.
\[
  \renewcommand{\arraystretch}{\eqSpacing}%
  \begin{tabular*}{\textwidth}{@{\hskip 2em}l@{\hskip 5em}c@{\extracolsep{\fill}}r}
    \fwdAxVrtInScan{&} &(\ref{ax:fwd-vrtinscan} revisited)\\
    \fwdAxIO{&} &(\ref{ax:fwd-io} revisited)
  \end{tabular*}
\]
\cref{ax:fwd-vrtinscan} states that each virtual scan must be inside
the interval of the abs scan it represents.  \cref{ax:fwd-io} states
that a terminated abs scan $\abs{s}$ must have a virtual scan
representative that, for each index $i$, observed some write
$\abs{w_i}$ such that the value written by $\abs{w_i}$ is the one the
scan $\abs{s}$ returned for $i$.  \cref{ax:fwd-io} is similar to
\cref{ax:ss-io}, and essentially says that the observing scan correctly
reads the array. On the other hand, \cref{ax:fwd-vrtinscan} is similar
to a property of linearization points, whereby a linearization point
must reside within the interval of the considered event. Here instead,
we have a whole virtual scan $\SMap{\abs{s}}$ representing when
$\abs{s}$ logically occurred.
\[
  \renewcommand{\arraystretch}{\eqSpacing}%
  \begin{tabular*}{\textwidth}{@{\hskip 2em}l@{\hskip 9em}A@{\extracolsep{\fill}}r}
    \fwdAxWraUniq{&\multicolumn{2}{c}} &(\ref{ax:fwd-wrauniq} revisited)\\
  \end{tabular*}
\]
\cref{ax:fwd-wrauniq} ensures that the only rep event that can write
into $\resA[i]$ is $\rep{\wra{w_i}}$ of some abs write $\abs{w_i}$.
This means that only abs writes into cell $i$ can have a rep event for
writing into $\resA[i]$; scans and writes into a cell different from
$i$ cannot, since if they did, they would have to be equal to some
$\abs{w_i}$ as per Property~\ref{ax:evsig-uniq} of event signatures.
Also, an abs write can write into $\resA[i]$ at most once, since any
other write must equal $\rep{\wra{w_i}}$.
\[
  \renewcommand{\arraystretch}{\eqSpacing}%
  \begin{tabular*}{\textwidth}{@{\hskip 2em}l@{\hskip 5em}A@{\extracolsep{\fill}}r}
    \fwdAxScrUniq{W}{&}{&} &(\ref{ax:fwd-scruniq} revisited) 
  \end{tabular*}
\]
The easiest way to explain~\cref{ax:fwd-scruniq} is to consider its
instantiation for \cref{alg:jay1}, where virtual scanners are abs
scanners, which, moreover, all share the same array $\resB$.
\mytag{F.3b'}%
\begin{equation}
  \renewcommand{\arraystretch}{\eqSpacing}%
  \hfill%
  \begin{tabular*}{\textwidth}{@{\hskip 2em}l@{\hskip 7em}A@{\extracolsep{\fill}}r}
    $\forall \rep{e_r} \in \rep{\resB[i].W}.$
    & \rep{e_r}.\evIn = \bot
    & \iff \exists \abs{\vsc'}.\ \rep{\scr{\vsc'}{i}} = \rep{e_r}
    &\insertEq\label[prop]{ax:fwd-scruniq2}
  \end{tabular*}\notag
\end{equation}
The simplified~\cref{ax:fwd-scruniq2} says that the only rep event
that can write $\bot$ into $\resB[i]$ is $\rep{\scr{\vsc'}{i}}$ of
some \abs{abs} scan $\abs{\vsc'}$. Thus, no event except scanners can
write $\bot$ into $\resB[i]$, and the writing can be done at most
once. Additionally, the rep event $\rep{\scr{\vsc}{i}}$ can
\emph{only} write $\bot$ into $\resB[i]$.  \cref{ax:fwd-scruniq}
generalizes \cref{ax:fwd-scruniq2} by allowing for virtual scanners,
each of which has their own $\resB$ array.  
\[
\begin{tabular*}{\textwidth}{@{\hskip 2em}l@{\hskip 7em}A@{\extracolsep{\fill}}r}
  \fwdAxScTotal{&}{&} &(\ref{ax:fwd-sctotal} revisited)
\end{tabular*}
\]
\cref{ax:fwd-sctotal} captures that virtual scans never overlap, and
are thus totally ordered. This property holds for single-scanner
algorithms by assumption, but must be proved in the multi-scanner
case.
\[
  \begin{tabular*}{\textwidth}{@{\hskip 2em}l@{\hskip 11em}c@{\extracolsep{\fill}}r}
    \fwdAxScStruct{&} &(\ref{ax:fwd-scstruct} revisited)
  \end{tabular*}
\]
\cref{ax:fwd-scstruct} simply reflects that the rep events of initializing
the scan for element $i$, reading $\resA[i]$, and then closing the
scanning for element $i$ by reading $\resB[i]$, are invoked
sequentially in the code of $\scanProc$. This is readily visible
in~\cref{alg:jay1} and remains true in \cref{alg:jay2,alg:jay3}.

The second group are the structural properties of the forwarding
visibility.
\[
  \renewcommand{\arraystretch}{\eqSpacing}%
  \begin{tabular*}{\textwidth}{@{\hskip 2em}l@{\hskip 2em}A@{\extracolsep{\fill}}r}
    \fwdAxFobsUniq{&}{&} &(\ref{ax:fobsuniq} revisited)\\
    \fwdAxFobsBot{&}{&} &(\ref{ax:fobsbot} revisited)\\
    \fwdAxFobsHb{&}{&} &(\ref{ax:fobshb} revisited)
  \end{tabular*}
\]
\cref{ax:fobsuniq} captures that a scan can observe at most
one forwarded write for a given index $i$.
\cref{ax:fobsbot} says that a forwarding of some write of cell $i$
will reach the virtual scan $\vrt{\vsc}$ if $\rep{\scb{\vsc}{i}}$ did
not observe the writing of $\bot$ in $\resB[i]$ by
$\rep{\scr{\vsc}{i}}$. Contrapositively, 
$\rep{\scr{\vsc}{i} \robs \scb{\vsc}{i}}$ holds if no write of cell $i$ was
forwarded to $\vrt{\vsc}$.
\cref{ax:fobshb} states the dual implication direction of
\cref{ax:fobsbot} and that a forwarded write $\abs{w_i}$ must have
written into $\resA[i]$ by $\rep{\wra{w_i}}$ before the scanner
$\vrt{\vsc}$ read the forwarded value by $\rep{\scb{\vsc}{i}}$. This
property captures that writers must first write their value directly
before attempting to forward.

The third group are the properties that capture the forwarding
principles of Jayanti. We have originally discovered these properties
by extracting the common patterns from our proofs of the snapshot
signature for the three algorithms, and only afterwards discovered
that they actually correspond quite closely to Jayanti's forwarding
principles. To describe how our axioms capture the forwarding
principles, we state the principles below in English, verbatim as Jayanti does,
but using our notation. We also use virtual scans instead of abstract
scans to make the connection to multi-scanner algorithm direct;
Jayanti only stated the principles in terms of \cref{alg:jay1}. As we
shall see, our axiomatization modifies the principles slightly to
encompass \cref{alg:jay2,alg:jay3}.
We also highlight and number subsentences so that we can relate them
to our axioms in discussion.
\begin{enumerate}
\item Suppose that \texttag{1}{a scan operation $\vrt{\vsc}$ misses a
    write operation $\abs{w_i}$ writing $v$ because $\vrt{\vsc}$ reads
    $\resA[i]$ before $\abs{w_i}$ writes in $\resA[i]$}. If
    \texttag{2}{$\abs{w_i}$ completes before $\vrt{\vsc}$ performs
    $\rep{\scoff{s}}$}, then \texttag{3}{$\abs{w_i}$ will have surely
    informed $\vrt{\vsc}$ of $v$ by writing $v$ in $\resB[i]$}.

\item Suppose that \texttag{4}{a scan operation $\vrt{\vsc}$ reads at
    $\rep{\scb{\vsc}{i}}$ a non-$\bot$ value in $\resB[i]$ by some
    write operation $\abs{w_i}$}. Then (a) \texttag{5}{$\abs{w_i}$ is
    concurrent with $\vrt{\vsc}$}, and (b) If \texttag{6}{$\abs{w'_i}$
    is any write operation that is executed after $\abs{w_i}$}, then
    \texttag{7}{$\abs{w'_i}$ completes only after $\vrt{\vsc}$ performs
    $\rep{\scoff{s}}$}.
\end{enumerate}

We start with \cref{ax:forward1}, which relates to Principle~(1) as
follows.
\[
  \begin{tabular*}{\textwidth}{@{\hskip 2em}l@{\hskip 5em}A@{\extracolsep{\fill}}r}
    \fwdAxFwdA{&}{&} &(\ref{ax:forward1} revisited)
  \end{tabular*}
\]
Here, $\rep{\scr{\vsc}{i} \robs \scb{\vsc}{i}}$ corresponds to the
negation of Statement~(\rom{3}), $\abs{w_i \rbhbeqrobs \vrt{\vsc}}$
corresponds to Statement~(\rom{2}), and $\rep{\wra{w_i} \hb
  \sca{\vsc}{i}}$ corresponds to the negation of Statement~(\rom{1}),
altogether combining into an equivalent, by contraposition, of
Principle~(1). We now explain each correspondence individually.
\begin{itemize}
\item $\rep{\scr{\vsc}{i} \robs \scb{\vsc}{i}}$ states that nothing
  was forwarded to $\vrt{\vsc}$,
 because the event $\rep{\scb{\vsc}{i}}$ of reading $\resB[i]$
 observes the event $\rep{\scr{\vsc}{i}}$ of clearing $\resB[i]$.
 This is a slightly stronger statement than the negation of
 Statement~(\rom{3}), namely, that the write $\abs{w_i}$ was not
 forwarded to $\vrt{\vsc}$. The difference between these statements
 will be covered by \cref{ax:forward2b}, which will handle the
 scenario where another write, not $\abs{w_i}$, is forwarded to
 $\vrt{\vsc}$.
\item $\abs{w_i \rbhbeqrobs \vrt{\vsc}}$, where
  $\abs{\rb \hbeq \robs}$ is the relational composition of $\abs{\rb}$
  and $\abs{\hbeq}$ and $\abs{\robs}$, roughly translates to
  ``$\abs{w_i}$ terminates before some other writer (possibly writing
  to a different memory cell) that in turn was observed by
  $\vrt{\vsc}$''. We know such other writer exists, since the domain
  of $\abs{\robs}$, the last relation in the composition, consists
  only of writers. 
  While this looks nothing like Statement~(\rom{2}), both statements
  imply that $\abs{w_i}$ must have been forwarded if the writing to
  $\resA[i]$ was missed.
  \ifappendix%
  We prove this with different proofs for each algorithm, with
  Lemma~\apndxLemJayILin{} for \cref{alg:jay1} and
  Lemma~\apndxLemJayIILin{} for \cref{alg:jay2,alg:jay3}.
  \else%
  We prove this with different proofs for each algorithm, which we
  give in the appendix \cite{extended}, with Lemma~\apndxLemJayILin{}
  for \cref{alg:jay1} and Lemma~\apndxLemJayIILin{} for
  \cref{alg:jay2,alg:jay3}.
  \fi%
  We use the statement $\abs{w_i \rbhbeqrobs \vrt{\vsc}}$, instead
  of potentially another statement involving $\rep{\scoff{s}}$ (as
  Jayanti's original statement does), because it makes the forwarding
  signature export fewer rep events, thus making it a bit more
  parsimonious.
\item $\rep{\wra{w_i} \hb \sca{\vsc}{i}}$ states that
  $\rep{\wra{w_i}}$ wrote to $\resA[i]$ before $\rep{\sca{\vsc}{i}}$
  read from it, which directly corresponds to the negation of
  Statement~(\rom{1}).
\end{itemize}

For Principle~(2a), we need to further deviate from Jayanti's original
formulation, since the latter is too specific to \cref{alg:jay1}, and
does not scale as-is to the multi-writer algorithms. The point of this
principle is to ensure that a writer may only forward values that are,
intuitively, ``current''. 
The principle itself ensures this property, but only in the
single-writer case of \cref{alg:jay1}. Indeed, in \cref{alg:jay1}, a
writer only forwards its own value (and other writes to the same
pointer are prohibited by assumption); therefore, if a write is
concurrent to the scan, then it can only forward a value that is
current. But in a multi-writer case, as we shall see
in~\cref{sec:multiwriter}, we want to allow a writer to forward values
of other writers. In particular, a writer that writes and then stalls
indefinitely could still be forwarded by another write to some
concurrent scan. But we still need to ensure that the forwarded write
is not arbitrarily old. We do so by insisting that a write $\abs{w_i}$
forwarded to $\vrt{\vsc}$ cannot occur before a write $\abs{w'_i}$
which wrote to $\resA[i]$ with $\rep{\wra{w_i}}$ before $\vrt{\vsc}$
started, i.e., executed $\rep{\scr{\vsc}{i}}$. In other words, we
require that the forwarded write $\abs{w_i}$ does not ``happen
before'' any write $\abs{w'_i}$ that started before the start of the
scan. A write $\abs{w'_i}$ has a \emph{newer} value, so its existence
should prevent $\abs{w_i}$ from being forwarded. We formalize this as
\cref{ax:forward2a}.
\[
  \begin{tabular*}{\textwidth}{@{\hskip 2em}l@{\hskip 5em}A@{\extracolsep{\fill}}r}
    \fwdAxFwdBa{&}{&} &(\ref{ax:forward2a} revisited)
  \end{tabular*}
\]
This captures that writers must forward to scan $\vrt{\vsc}$ either
the latest write that wrote to $\resA[i]$ before
$\rep{\scr{\vsc}{i}}$, or some write that occurred after
$\rep{\scr{\vsc}{i}}$, thus ensuring that the forwarded write is
current.

Lastly, \cref{ax:forward2b} relates to the contrapositive of
Principle~(2b).
\[
  \begin{tabular*}{\textwidth}{@{\hskip 2em}l@{\hskip 5em}A@{\extracolsep{\fill}}r}
    \fwdAxFwdBb{&}{&} &(\ref{ax:forward2b} revisited)
  \end{tabular*}
\]
Again, $\abs{w_i \fobs \vrt{\vsc}}$ directly corresponds to
Statement~(\rom{4}). Formal statement $\abs{w'_i \rbhbeqrobs
  \vrt{\vsc}}$ corresponds to the negation of Statement~(\rom{7}), for
the same reason as in \cref{ax:forward1}, since (\rom{2}) is the
negation of (\rom{7}). Finally, $\abs{w_i \not\hb w'_i}$ directly
corresponds to the negation of~(\rom{6}), with the exception that we
use $\hb$ instead of Jayanti's ``executed after'' to relate the writes
$\abs{w_i}$ and $\abs{w'_i}$. The two correspond in the single-writer
case, but our statement extends to the multi-writer case as well.

\cref{alg:jay1} satisfies this set of properties of the $\fwdSig$
signature, and they are sufficient to prove that an algorithm is a
snapshot algorithm according to the $\snapshotSig$ signature, which is
in turn sufficient to prove linearizability. In
\cref{sec:proofsketches} we sketch why these claims holds, and in
\cref{sec:multiwriter} we sketch how \cref{alg:jay2,alg:jay3} satisfy
the $\fwdSig$ signature.
\ifappendix
The complete proofs are available in the appendix.
\else
The complete proofs are available in the appendix of the extended
version \cite{extended}.
\fi

\section{Proof Sketches for Algorithm \ref{alg:jay1}}\label{sec:proofsketches}

We now sketch the proofs of our signatures implying linearizability,
to illustrate the gist of reasoning by visibility.
We begin by showing how histories that satisfy the $\snapshotSig$
signature are linearizable. This proof is based on a similar proof by
\citet{henzinger:lmcs15} for queues.

\begin{genthm}{\cref{lem:ss-lin}}
  Histories satisfying $\snapshotSig$~signature
  (\cref{fig:snapshot}) are linearizable.
\end{genthm}
\begin{proof}[Proof sketch]
  Following \cref{def:lin} of linearizability, we start by choosing a
  visibility relation $\abs{\obs}$ that constructs the set $\abs{E_c}
  = \{ \abs{e} \mid \exists \abs{e'} \in \term(\abs{E}).\ \abs{e
    \obs^* e'} \}$, and then extend it to a linearization
  $\abs{<}$. Let $\abs{\obs}$ be the visibility relation obtained by
  restricting the visibility relation postulated by $\snapshotSig$ to
  $\term(\abs{S}) \cup \bigcup_{i} \abs{\WrEff_i}$. Events outside of
  the restriction are non-terminated scans or writes that have not
  executed their effect yet; they do not modify the abstract state,
  and are hence not necessary for linearization.
Additionally, by \cref{ax:ss-wrterm} we know that every terminated
write is effectful. Thus going forward, we consider $\abs{E_c} =
\term(\abs{S}) \cup \bigcup_{i} \abs{\WrEff_i}$.

  We next extend the happens-before order $\abs{\hb}$ on the selected
  set of events as follows. The idea is to keep extending this order
  with more relations between the events of the selected set until we
  reach a total order, which will be the desired linearization order.
  We first induce a helper order
  \[
    \abs{\whb} \wideDefeq (\abs{\hb}\ \cup\ \abs{\whbI})^+
  \]
  which ranges over writes (into different cells), where
  $\abs{w_i \whbI w'_j}$ holds if $i \neq j$ and there exists a write
  $\abs{w_j}$ and a scan $\abs{s}$ such that $\abs{w_i,w_j \robs s}$
  and $\abs{w_j \hb w'_j}$. Or, in English, we order $\abs{w_i}$
  before $\abs{w'_j}$, if $\abs{w_i}$ is observed by some scan along
  with $\abs{w_j}$, and $\abs{w_j}$ is ordered before $\abs{w'_j}$.
  The $\abs{\whb}$ order is a (strict) partial order by
  \cref{ax:ss-io,ax:ss-nowrbetween,ax:ss-robsuniq,ax:ss-wrtotal,ax:ss-mono}
  (see Lemma~\apndxLemWhbIrrefl{} \ifNotAppendix{from~\cite{extended}}
  for the complete proof).

  Next, we select an arbitrary total order $\abs{\wlt}$ over writes
  that extends $\abs{\whb}$. Since the set of selected events is
  finite, such a total order always exists by Zorn's Lemma. We use the
  order $\abs{\wlt}$ to determine which event we will consider as
  being last in the eventually constructed linearization order. We say
  that a write is latest-in-time if it is the greatest in
  $\abs{\wlt}$, while a scan is latest-in-time if it is maximal in
  $\abs{\hb}$ and all the writes that it observes are the greatest in
  $\abs{<_w}$ for their respective memory cell. As long as the set of
  events is non-empty, there will exist a latest-in-time event under
  this definition by
  \cref{ax:ss-io,ax:ss-nowrbetween,ax:ss-robsuniq,ax:ss-wrtotal} (see
  Lemma~\apndxLemMaxcand{} \ifNotAppendix{from~\cite{extended}} for
  the complete proof). By inductively selecting the latest-in-time
  event, we construct a total order of events that is consistent with
  $\abs{\hb}$ and snapshot semantics, and is thus a linearization
  order. More concretely, we add the latest-in-time event to the end
  of the linearization order, forming a chain with previously added
  events, and repeat this step with the set of events excluding the
  last selected.
\end{proof}

Next, we show that histories satisfying the $\fwdSig$ signature also
satisfy the $\snapshotSig$ signature.
Because some $\fwdSig$ properties use rep events, to be able to
efficiently use them, we first need a lemma that turns a $\abs{\hb}$
relation between virtual or abs events into a $\rep{\hb}$ relation
between rep events. The lemma will also be useful in showing that
\cref{alg:jay1} (and \cref{alg:jay2,alg:jay3} in
\cref{sec:multiwriter}) satisfy the $\fwdSig$ signature, once
\cref{ax:fwd-scstruct,ax:fobshb} have been proven.

\begin{lemma}[Happens-before of subevents]\label{lem:fwd-hbabsrep}
  Assume \cref{ax:fwd-scstruct,ax:fobshb}. For events
  $\abs{e}, \abs{e'} \in \bigcup_i \abs{W_i} \cup \vrt{\Vsc}$ of the
  $\fwdSig$ signature (\cref{fig:forward}) where $\abs{e'}$ is
  populated with at least one rep event from its event structure
  (\cref{fig:struct}), if $\abs{e \hb e'}$ holds (where $\abs{\hb}$
  follows the definition from \cref{fig:forward}) then there exists
  $\rep{e_r}$ and $\rep{e'_r}$ where $\rep{e_r \hb e'_r}$ holds, with
  $\rep{e_r}$ and $\rep{e'_r}$ belonging to $\abs{e}$ and $\abs{e'}$
  respectively. 
\end{lemma}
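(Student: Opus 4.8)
The statement is about lifting a happens-before relation $\abs{e \hb e'}$ between abs-writes or virtual-scans down to a happens-before relation $\rep{e_r \hb e'_r}$ between rep events belonging to $\abs{e}$ and $\abs{e'}$. The natural approach is induction on the structure of the derivation of $\abs{e \hb e'}$, i.e.\ on the length of the $\abs{\hb_1}$-chain witnessing it, where each single step is either $\abs{\rb}$ or $\abs{\obs}$, and $\abs{\obs} = \abs{\robs} \cup \abs{\wobs}$ unfolds by the definitions in \cref{fig:forward}.

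First I would handle the single-step cases, since these are the crux. For a step $\abs{e \rb e'}$: every rep event of $\abs{e}$ is a subevent of $\abs{e}$ and every rep event of $\abs{e'}$ is a subevent of $\abs{e'}$ (by \cref{ax:evsig-subev}; for abs events this is immediate, for virtual scans it is the explicit requirement $\rep{\mathit{ea}} \subev \vrt{e}$), so \cref{ax:rb-absrep} gives $\rep{e_r \rb e'_r}$ hence $\rep{e_r \hb e'_r}$ for any chosen rep events — using that $\abs{e'}$ is populated with at least one rep event to pick $\rep{e'_r}$, and that $\abs{e}$, being the source of a $\rb$, is terminated and hence (by \cref{ax:evsig-partial}) fully populated so $\rep{e_r}$ exists. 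For a step $\abs{w_i \wobs w'_i}$: this unfolds by definition to $\rep{\wra{w_i} \hb \wra{w'_i}}$, which is exactly a $\rep{\hb}$ between the designated rep events. For a step $\abs{w_i \robs \vrt{\vsc}}$: unfold the definition of $\abs{\robs}$. In the direct-read disjunct we get $\rep{\wra{w_i} \robs \sca{\vsc}{i}}$, which is itself an $\rep{\obs}$ hence $\rep{\hb}$ between rep events of $\abs{w_i}$ and $\vrt{\vsc}$. In the forwarding disjunct $\abs{w_i \fobs \vrt{\vsc}}$ we invoke \cref{ax:fobshb} to obtain $\rep{\wra{w_i} \hb \scb{\vsc}{i}}$, again a $\rep{\hb}$ between the required rep events.

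Then I would stitch the chain together. Given $\abs{e \hb_1 e_1 \hb_1 \cdots \hb_1 e_n \hb_1 e'}$, every intermediate event $e_k$ lies in $\bigcup_i \abs{W_i} \cup \vrt{\Vsc}$ and, being both the target of an incoming step and the source of an outgoing step, is fully populated: as a target of $\abs{\obs}$ it is observed, and as a source of either $\abs{\rb}$ (hence terminated, hence populated by \cref{ax:evsig-partial}) or $\abs{\obs}$ (where the relevant rep event is named in the definition of the relation) it supplies a rep event — more carefully, for each intermediate $e_k$ the single-step analysis above produces a rep event $\rep{e_k^{\text{out}}}$ of $e_k$ at the source end of the $k$-th lifted step and a rep event $\rep{e_k^{\text{in}}}$ of $e_k$ at the target end of the $(k{-}1)$-th lifted step, and I connect them using \cref{ax:fwd-scstruct} (for virtual scans, the rep events $\rep{r_i}, \rep{a_i}, \rep{b_i}$ are $\rb$-ordered, and all three $\rb$-follow/precede $\rep{\wra{\cdot}}$-type events via \cref{ax:rb-absrep} and the fact that they are all subevents of $e_k$) so that $\rep{e_k^{\text{in}} \hbeq e_k^{\text{out}}}$. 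Transitivity of $\rep{\hb}$ then composes all the lifted single steps with these internal connectors into one $\rep{e_r \hb e'_r}$.

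**Main obstacle.** The delicate point is not the single steps but the bookkeeping of \emph{which} rep events get produced at the two ends of each lifted step and showing an internal $\rep{\hb}$ (or $\rep{\rbeq}$) link between the ``incoming'' and ``outgoing'' rep event of each intermediate node, especially when that node is a virtual scan and the two rep events are different among $\{\rep{r_i},\rep{a_i},\rep{b_i}\}$ for possibly different indices $i$. This is exactly why the lemma assumes \cref{ax:fwd-scstruct,ax:fobshb}: \cref{ax:fwd-scstruct} orders $\rep{\scr{\vsc}{i}}, \rep{\sca{\vsc}{i}}, \rep{\scb{\vsc}{i}}$ within a single $i$, and combined with \cref{ax:rb-absrep} and the subevent containment of all of a virtual scan's rep events inside the scan's interval, one gets that \emph{any} rep event of $\vrt{\vsc}$ is $\rep{\rbeq}$ or $\rep{\hbeq}$-before any later one needed; and \cref{ax:fobshb} is what lets the forwarding case terminate at a genuine rep event $\rep{\scb{\vsc}{i}}$ rather than an opaque $\abs{\fobs}$. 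Once the internal-link lemma for virtual scans is isolated and proved, the rest is a routine induction, so I would state and dispatch that internal-link fact first and then run the chain induction.
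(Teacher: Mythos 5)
Your proposal is correct and follows essentially the same route as the paper's proof: induction on the length of the $\abs{\hb_1}$-chain, with the single-step cases dispatched by \cref{ax:rb-absrep} for $\abs{\rb}$, the definition of $\abs{\wobs}$, and the two disjuncts of $\abs{\robs}$ (invoking \cref{ax:fobshb} for the forwarding case), and with populatedness of intermediate events argued from their outgoing/incoming relations. Your discussion of joining the incoming and outgoing rep events at intermediate nodes via \cref{ax:fwd-scstruct} and transitivity of $\rep{\hb}$ is in fact more explicit than the paper's one-line treatment of the inductive step.
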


\begin{proof}%
  We can view $\abs{e \hb e'}$ as a chain of $\abs{\hb_1}$ relations
  connecting $\abs{e}$ and $\abs{e'}$; that is,
  $\abs{e \hb_1 \cdots \hb_1 e'}$, where by definition
  $\abs{\hb_1} = (\abs{\rb} \cup \abs{\obs}) = (\abs{\rb} \cup
  \abs{\wobs} \cup \abs{\robs})$. It is easy to see that each abs
  event in this chain is populated with a rep event. Indeed, if an abs
  event's relation $\abs{\hb_1}$ to its successor is realized by
  $\abs{\rb}$, then the event is terminated and all of its rep events
  are executed (and by our axioms, each abs event must have at least
  one rep event). Alternatively, if the abs event's relation
  $\abs{\hb_1}$ to its successor is realized by $\abs{\obs}$, then the
  event must be populated as per the definitions of reads-from and
  writing visibility in \cref{fig:forward}, and in the $\abs{\fobs}$
  case by \cref{ax:fobshb}. This leaves out $\abs{e'}$, which has no
  successor, but $\abs{e'}$ is populated by assumption. 
  
  Now the proof is by induction on the length of the above chain. The
  base case is when $\abs{e \hb_1 e'}$.
  If $\abs{e \rb e'}$, the proof is by \cref{ax:rb-absrep} giving
  us $\rep{e_r \rb e'_r}$ for arbitrary $\rep{e_r}$ and $\rep{e'_r}$
  belonging to $\abs{e}$ and $\abs{e'}$ respectively. If
  $\abs{e \wobs e'}$, the proof is by the definition of
  $\abs{\wobs}$ in \cref{fig:forward}, giving us
  $\rep{\wra{w_i} \hb \wra{w'_i}}$ with $\abs{e} = \abs{w_i}$ and
  $\abs{e'} = \abs{w'_i}$. For the last case, $\abs{e \robs e'}$, we
  have $\abs{e } = \abs{w_i}$ and $\vrt{\vsc'} = \abs{e'}$ with either
  $\rep{\wra{w_i} \robs \sca{\vsc'}{i}}$ or
  $\abs{w_i \fobs \vrt{\vsc'}}$. The former case directly exhibits rep
  events that satisfy our goal. The latter does so as well, since by
  \cref{ax:fobshb}, it must be $\rep{\wra{w_i} \hb \scb{\vsc'}{i}}$.
  The inductive step is similar, with the addition that we need
  transitivity of $\rep{\hb}$ and \cref{ax:fwd-scstruct} to join
  events. 
\end{proof}

\begin{lemma}\label{lem:fwd-ss}
  Histories satisfying $\fwdSig$ signature (\cref{fig:forward})
  also satisfy $\snapshotSig$ signature. %
\end{lemma}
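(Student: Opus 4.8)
The plan is to instantiate the abstract components of the $\snapshotSig$ signature out of those of $\fwdSig$ and then check the seven axioms one by one. For the effectful writes I would take $\abs{\WrEff_i}$ to be the set of abs writes $\abs{w_i}$ whose rep event $\rep{\wra{w_i}}$ is defined; for the snapshot‑level reads‑from I would set $\abs{w_i \robs s} \wideDefeq s \in \mathsf{dom}(\SMap{[-]}) \land \abs{w_i \robs \SMap{s}}$, where the relation on the right is the $\abs{\robs}$ of \cref{fig:forward}, and then $\abs{\obs} \wideDefeq \abs{\robs} \cup \abs{\wobs}$. The typing side conditions are routine: a write read‑from by $\SMap{s}$ is effectful because the direct disjunct of $\abs{\robs}$ puts $\rep{\wra{w_i}}$ into $\rep{\robs}$, while the $\abs{\fobs}$ disjunct forces $\rep{\wra{w_i} \hb \scb{\vsc}{i}}$ via \cref{ax:fobshb}; and $\abs{\wobs}$ is by definition a relation between writes with defined $\rep{\wra{\cdot}}$. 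Four axioms then follow quickly: \cref{ax:ss-io} is immediate from \cref{ax:fwd-io}; \cref{ax:ss-wrterm} is \cref{ax:evsig-partial}; \cref{ax:ss-wrtotal} is \cref{ax:mem-wrtotal} on $\rep{\resA[i]}$ plus \cref{ax:evsig-uniq} and the definition of $\abs{\wobs}$; and \cref{ax:ss-robsuniq} follows by unfolding both reads‑from against $\SMap{s}$ and splitting on direct/forwarded — \cref{ax:mem-robsuniq} with \cref{ax:evsig-uniq} for direct/direct, \cref{ax:fobsuniq} for forwarded/forwarded, and \cref{ax:fobshb} to rule out the mixed case.

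The remaining three axioms — \cref{ax:wfobs}, \cref{ax:ss-nowrbetween}, \cref{ax:ss-mono} — are about the new $\abs{\hb} = (\abs{\rb} \cup \abs{\obs})^+$, and the device I would use for all of them is a \emph{bridging lemma}: any $\abs{\hb}$‑chain among abs writes and abs scans can be transported into the $\fwdSig$ world by replacing every abs scan $s$ by $\SMap{s}$. This is sound because $\SMap{s} \subev s$ by \cref{ax:fwd-vrtinscan} (so $\abs{\rb}$‑edges incident to a scan are preserved in both directions), because a snapshot $\abs{\robs}$‑edge into $s$ is by construction a $\fwdSig$ $\abs{\robs}$‑edge into $\SMap{s}$, and because no $\abs{\obs}$‑edge starts at a scan, so a scan appearing strictly inside a chain is immediately followed by an $\abs{\rb}$‑edge and, if it was also entered by $\abs{\rb}$, can be spliced out by transitivity of $\abs{\rb}$ — the surviving scans are those entered by $\abs{\robs}$, hence automatically in $\mathsf{dom}(\SMap{[-]})$. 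Composing this with \cref{lem:fwd-hbabsrep} lets me push any transported relation down to a $\rep{\hb}$‑relation between rep events and appeal to the memory axioms. In particular, \cref{ax:wfobs} for the new $\abs{\obs}$ reduces to acyclicity of $\fwdSig$'s happens‑before (\cref{lem:hb-irrefl}): an $\abs{\obs}^+$‑cycle among writes transports to a $\rep{\wra{\cdot}} \hb \rep{\wra{\cdot}}$‑cycle, while an $\abs{\obs}^+$‑path from $e$ to a scan $s$ with $s \rbeq e$ transports to $\abs{e \hb \SMap{s}}$ together with $\SMap{s} \rb e$, both impossible.

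The heart of the argument is \cref{ax:ss-nowrbetween}, and this is the step I expect to be the main obstacle, because the forwarding principles \cref{ax:forward1,ax:forward2a,ax:forward2b} are phrased only for paths of the rigid shape $\abs{w'_i \rbhbeqrobs \vrt{\vsc}}$ (an $\abs{\rb}$‑edge, then $\abs{\hbeq}$, then $\abs{\robs}$ into the virtual scan), so a general transported chain $\abs{w'_i \hb \vrt{\vsc}}$ has to be reshaped into that form. Given $\abs{w_i \robs \vrt{\vsc}}$ (with $\vrt{\vsc} = \SMap{s}$) and a hypothetical $\abs{w'_i} \in \abs{W_i}$ with $\abs{w_i \hb w'_i \hb s}$, I would transport to $\abs{w_i \hb w'_i}$ and $\abs{w'_i \hb \vrt{\vsc}}$ and extract $\rep{\wra{w_i} \hb \wra{w'_i}}$ via \cref{lem:fwd-hbabsrep}, then split on whether $\vrt{\vsc}$ reads $\abs{w_i}$ directly or by forwarding. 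In the direct case ($\rep{\wra{w_i} \robs \sca{\vsc}{i}}$ and $\rep{\scr{\vsc}{i} \robs \scb{\vsc}{i}}$) I would peel a $\abs{\wobs}$‑prefix off $\abs{w'_i \hb \vrt{\vsc}}$, staying inside $\abs{W_i}$, until the chain either ends in an $\abs{\rb}$‑edge into $\vrt{\vsc}$ (so \cref{ax:rb-absrep} applies) or attains the $\rbhbeqrobs$ shape (so \cref{ax:forward1} applies); either way I obtain a rep write of $\resA[i]$ lying strictly $\rep{\hb}$‑between $\rep{\wra{w_i}}$ and $\sca{\vsc}{i}$, contradicting \cref{ax:mem-nowrbetween}. (The leftover possibility, an all‑$\abs{\wobs}$ chain ending in a direct read, collapses to $\abs{w_i} = \abs{w'_i}$ by \cref{ax:ss-robsuniq}, hence to a $\rep{\hb}$‑cycle.) In the forwarding case ($\abs{w_i \fobs \vrt{\vsc}}$, which gives $\neg(\rep{\scr{\vsc}{i} \robs \scb{\vsc}{i}})$ by \cref{ax:fobshb}) the same peeling yields either a contradiction with \cref{ax:fobsuniq} or \cref{ax:fobshb}, or a witness $\abs{w''_i}$ with $\abs{w_i \hb w''_i}$ and $\abs{w''_i \rbhbeqrobs \vrt{\vsc}}$, directly contradicting \cref{ax:forward2b}.

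Finally, \cref{ax:ss-mono} would be proved in the same style: given the four observed writes and $\abs{w_i \hb w'_i}$, assume $\abs{w'_j \hb w_j}$; the virtual scans $\vrt{\vsc} = \SMap{s}$ and $\vrt{\vsc'} = \SMap{s'}$ are distinct (otherwise \cref{ax:ss-robsuniq} forces $\abs{w_i} = \abs{w'_i}$), hence $\abs{\rb}$‑comparable by \cref{ax:fwd-sctotal}; whichever way they are ordered, combining \cref{ax:rb-absrep} with \cref{lem:fwd-hbabsrep} produces a $\rep{\hb}$‑chain from the $\rep{\wra{\cdot}}$ of the ``older'' of the two relevant writes, through the $\rep{\wra{\cdot}}$ of the ``newer'' one, into the read‑of‑$\resA$ event of the later virtual scan, contradicting \cref{ax:mem-nowrbetween} when that write is read directly and \cref{ax:forward2a} — after checking the side condition $\rep{\wra{\cdot} \hb \scr{\vsc}{\cdot}}$, again by \cref{ax:rb-absrep} — when it is forwarded. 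The delicate part throughout is the bookkeeping in these last two proofs: locating the decisive $\abs{\rb}$‑edge, massaging chains into $\rbhbeqrobs$ form, and keeping the direct/forwarded cases and the two scan orderings straight.
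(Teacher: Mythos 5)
Your proposal is correct and follows essentially the same route as the paper: the same instantiation of $\abs{\WrEff_i}$ and $\abs{\robs}$ via $\SMap{[-]}$, the same bridging device (the paper's \cref{lem:fwd-hbabsrep} together with transporting $\abs{w'_i \hb s}$ to $\abs{w'_i \hb \SMap{\abs{s}}}$ via \cref{ax:fwd-vrtinscan}), and the same reshaping of a general chain $\abs{w'_i \hb \vrt{\vsc}}$ into the two terminal forms $\abs{w''_i \rbhbeqrobs \vrt{\vsc}}$ and $\rep{\wra{w''_i} \hb \scr{\vsc}{i}}$, closed by \cref{ax:forward1,ax:forward2a,ax:forward2b,ax:mem-nowrbetween}. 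Two cosmetic remarks: the paper additionally puts $\abs{\sobs}$ (ordering of scans by $\vrt{\rb}$ of their virtual scans) into $\abs{\obs}$, which your instantiation omits harmlessly since no $\snapshotSig$ axiom forces scans to be $\abs{\hb}$-related; and in the forwarding branch of \cref{ax:ss-nowrbetween} the peeling can also terminate in $\rep{\wra{w''_i} \hb \scr{\vsc}{i}}$, which is the case closed by \cref{ax:forward2a} (you invoke it only in \cref{ax:ss-mono}, but it is needed here too, exactly as in the paper's four-way case split).
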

\begin{proof}[Proof sketch]
  We start by determining the instantiations of $\abs{\WrEff_i}$,
  $\abs{\robs}$, and $\abs{\obs}$ for the $\snapshotSig$ signature.
  Since the effect of $\abs{w_i}$ is observable only after the
  execution of $\rep{\wra{w_i}}$, we let
  $\abs{w_i} \in \abs{\WrEff_i}$ iff $\rep{\wra{w_i}}$ is defined. For
  the other notions, we first define a helper visibility over scans
  $\abs{\sobs}$, which we use to define $\abs{\obs}$. The highlighted
  relations $\abs{\hlrobs}$ and $\abs{\hlwobs}$ originate from the
  $\fwdSig$ signature.
  \begin{align*}
    \abs{w_i \robs s} &\wideDefeq \abs{w_i \hlrobs \SMap{\abs{s}}} &
    \abs{s \sobs s'} &\wideDefeq \vrt{\SMap{\abs{s}} \rb \SMap{\abs{s'}}} &
    \abs{\obs} &\wideDefeq \abs{\robs} \cup \abs{\hlwobs} \cup \abs{\sobs}
  \end{align*}

  \cref{ax:wfobs} holds since if $\abs{e \obs^+ e'}$ and
  $\abs{e' \rbeq e}$, then we can construct a $\rep{\hb}$-cycle of rep
  events with \cref{lem:fwd-hbabsrep}, contradicting irreflexivity of
  $\rep{\hb}$ (\cref{lem:hb-irrefl}). \cref{ax:ss-io} holds directly
  by \cref{ax:fwd-io}. \cref{ax:ss-robsuniq} holds because by the
  definition of $\abs{w_i \hlrobs \vrt{\vsc}}$ in $\fwdSig$ signature,
  it is either $\rep{\wra{w_i} \robs \sca{\vsc}{i}}$ for which
  \cref{ax:mem-robsuniq} ensures uniqueness, or
  $\abs{w_i \fobs \vrt{\vsc}}$ where \cref{ax:fobsuniq} ensures
  uniqueness. \cref{ax:ss-wrtotal} of total ordering on effectful
  writes follows because in $\fwdSig$ signature,
  $\abs{w_i \hlwobs w'_i}$ is defined as
  $\rep{\wra{w_i} \hb \wra{w'_i}}$, and the rep writes are totally
  ordered by \cref{ax:mem-wrtotal}. \cref{ax:ss-wrterm} that all
  terminated writes are effectful holds because for every terminated
  write $\abs{w_i}$ we must have $\rep{\wra{w_i}}$ defined.

  We next prove \cref{ax:ss-nowrbetween}. We sketch the proof by
  assuming there is a write $\abs{w'_i}$ such that
  $\abs{w_i \hb w'_i \hb s}$ where $\abs{w_i \robs s}$, and then
  deriving a contradiction. Unfolding the definition of $\abs{\robs}$
  in $\abs{w_i \robs s}$, we get $\abs{w_i \hlrobs \vrt{\vsc}}$ where
  $\vrt{\vsc} = \SMap{\abs{s}}$. Also, from $\abs{w'_i \hb s}$, using
  \cref{ax:fwd-vrtinscan}, it turns out that
  $\abs{w'_i \hb \vrt{\vsc}}$, and moreover, the latter splits into
  two possible cases: $\abs{w'_i \rbhbeqrobs \vrt{\vsc}}$ and
  $\rep{\wra{w'_i} \hb \scr{\vsc}{i}}$. The full proof of how the
  cases follow from the premise and that they exhaust the
  possibilities is given in Appendix~\apndxFwd{}\ifNotAppendix{
    from~\cite{extended}}. Note that the proposition
  $\abs{w'_i \rbhbeqrobs \vrt{\vsc}}$ is the key part of
  \cref{ax:forward1,ax:forward2b}, while
  $\rep{\wra{w'_i} \hb \scr{\vsc}{i}}$ is the key part of
  \cref{ax:forward2a}; in fact, this exhaustion of cases of
  $\abs{w'_i \hb \vrt{\vsc}}$ is what let us rediscover and
  mathematically formulate the forwarding principles.
  The proof now proceeds by analyzing $\abs{w_i \hlrobs \vrt{\vsc}}$.
  From the definition of reads-from visibility in \cref{fig:forward},
  we also get two cases:
  $(\rep{\wra{w_i} \robs \sca{\vsc}{i}} \land \rep{\scr{\vsc}{i} \robs
    \scb{\vsc}{i}})$ and $\abs{w_i \fobs \vrt{\vsc}}$, for a total of
  four cases when combined with the above. In each case, contradiction
  follows easily, relying in different cases on a different subset of
  \cref{ax:forward1,ax:forward2a,ax:forward2b,ax:mem-nowrbetween}.

   Finally, we prove \cref{ax:ss-mono};
   that is, $\abs{w_i, w_j \robs s}$ and $\abs{w'_i, w'_j \robs s'}$
   with $\abs{w_i \hb w'_i}$ and $\abs{w'_j \hb w_j}$ lead to
   contradiction. As before, we first transform the assumptions
   $\abs{w_i, w_j \robs s}$ and $\abs{w'_i, w'_j \robs s'}$ into
   $\abs{w_i, w_j \robs \vrt{\vsc}}$ and
   $\abs{w'_i, w'_j \robs \vrt{\vsc'}}$, where
   $\vrt{\vsc} = \SMap{\abs{s}}$ and $\vrt{\vsc'} = \SMap{\abs{s'}}$.
   Next, by property \cref{ax:fwd-sctotal}, the virtual scans are
   totally ordered by $\vrt{\rb}$, thus either $\vrt{\vsc \rb \vsc'}$
   or $\vrt{\vsc' \rb \vsc}$ (the case when $\vrt{\vsc} = \vrt{\vsc'}$
   contradicts reads-from uniqueness \cref{ax:ss-robsuniq}). In the first
   case (the second is symmetric), we have
   $\abs{w'_j \hb w_j \robs \vrt{\vsc \rb \vsc'}}$, and thus also
   $\abs{w'_j \hb w_j \hb \vrt{\vsc'}}$. In other words, we have a
   scan $\vrt{\vsc'}$ observing a write $\abs{w'_j}$ with another
   intervening write $\abs{w_j}$ showing up in between. But this
   contradicts the argument we carried out for
   \cref{ax:ss-nowrbetween} above.
\end{proof}

\begin{lemma}\label{lem:jay1-fwd}
  Every execution of \cref{alg:jay1} satisfies the $\fwdSig$ signature
  (\cref{fig:forward}).
\end{lemma}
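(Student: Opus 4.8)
The plan is to first pin down the abstract components that the $\fwdSig$ signature (\cref{fig:forward}) postulates existentially, and then verify the axioms in the three groups of \cref{fig:forward}, saving the forwarding principles for last. Since \cref{alg:jay1} is single-scanner, I would take $\vrt{\Vsc} \defeq \abs{S}$ with $\SMap{[-]}$ the identity, instantiate the per-scan array $\vrt{\vsc}.\resB$ with the single global array $\resB$, and let $\rep{\scr{\vsc}{i}}, \rep{\sca{\vsc}{i}}, \rep{\scb{\vsc}{i}}$ be the rep events $\rep{\scr{s}{i}}, \rep{\sca{s}{i}}, \rep{\scb{s}{i}}$ of line~3, the $\resA$-loop, and the $\resB$-loop of $\scanProc$; $\rep{\wra{w_i}}$ is the write on line~2 of $\writeProc(i,-)$. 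Forwarding visibility is read off the code: $\abs{w_i \fobs s} \defeq \rep{\wrb{w_i} \robs \scb{s}{i}}$, i.e.\ the scan catches $\abs{w_i}$'s forwarded value in its $\resB$-pass. The three underlying cells $\resA$, $\resB$, $\resX$ are $\aregSig$ instances, so \cref{ax:mem-io,ax:mem-nowrbetween,ax:mem-robsuniq,ax:mem-wrtotal} are available throughout, as is \cref{lem:fwd-hbabsrep} once \cref{ax:fwd-scstruct,ax:fobshb} are in hand.

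The structural and semantic axioms are then largely routine. \cref{ax:fwd-vrtinscan} is immediate ($\SMap{[-]}$ is the identity), \cref{ax:fwd-scstruct} is program order inside $\scanProc$, and \cref{ax:fwd-sctotal} is exactly the single-scanner assumption. \cref{ax:fwd-wrauniq} and \cref{ax:fwd-scruniq} hold because the only command writing $\resA[i]$ is line~2 of $\writeProc(i,-)$ and the only command writing $\bot$ into $\resB[i]$ is line~3 of $\scanProc$ — and that command writes nothing but $\bot$, whereas the forwarding command $\rep{\wrb{w_i}}$ always writes a genuine value — the stated uniqueness following by event-signature uniqueness \cref{ax:evsig-uniq}. \cref{ax:fobsuniq} and \cref{ax:fobshb} follow from \cref{ax:mem-robsuniq} together with the program order $\rep{\wra{w_i} \rb \wrx{w_i} \rb \wrb{w_i}}$ and the fact that $\rep{\wrb{w_i}}$ writes a non-$\bot$ value. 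For \cref{ax:fwd-io} and \cref{ax:fobsbot} I would first determine what $\rep{\scb{s}{i}}$ is allowed to read: by \cref{ax:mem-nowrbetween} it reads the latest write into $\resB[i]$ preceding it, and since $\rep{\scr{s}{i} \rb \scb{s}{i}}$ and — using single-scanner — no other scan's reset can be interposed, the only candidates are $\rep{\scr{s}{i}}$ (value $\bot$) or some $\rep{\wrb{w_i}}$ (value $\neq\bot$). This dichotomy, combined with \cref{ax:mem-io} applied to $\rep{\sca{s}{i}}$, gives \cref{ax:fwd-io} (first branch: the scan keeps the $\resA$-value and $\rep{\scr{s}{i} \robs \scb{s}{i}}$ supplies the direct disjunct of $\abs{\robs}$; second branch: $\abs{w_i \fobs s}$), and its contrapositive reading is exactly \cref{ax:fobsbot}.

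The main obstacle is the third group, the forwarding principles. Of these, \cref{ax:forward2a} should still be short in the single-writer case: writes to a common cell have disjoint intervals, so $\abs{w_i \hb w'_i}$ forces $\abs{w_i \rb w'_i}$ (via \cref{lem:fwd-hbabsrep}, then \cref{lem:hb-irrefl} and \cref{ax:rb-absrep}), hence $\rep{\wrb{w_i} \rb \wra{w'_i}}$, which with the hypothesis $\rep{\wra{w'_i} \hb \scr{\vsc}{i}}$ and \cref{ax:fwd-scstruct} interposes the reset $\rep{\scr{\vsc}{i}}$ strictly between $\rep{\wrb{w_i}}$ and $\rep{\scb{\vsc}{i}}$, contradicting \cref{ax:mem-nowrbetween}, so $\neg\,\abs{w_i \fobs \vrt{\vsc}}$. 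The genuinely hard cases are \cref{ax:forward1} and \cref{ax:forward2b}, whose hypotheses carry $\abs{w_i \rbhbeqrobs \vrt{\vsc}}$ — the abstract replacement for Jayanti's ``$\abs{w_i}$ completes before $\rep{\scoff{s}}$''. Here I would isolate two auxiliary lemmas. First, an $\resX$-observation lemma: since $\resX$ is written only by the $\rep{\scon{s}}$ and $\rep{\scoff{s}}$ events of scans, which for distinct scans lie in disjoint time intervals by single-scanner, $\rep{\wrx{w_i}}$ reads $\true$ — equivalently $\abs{w_i}$ executes $\rep{\wrb{w_i}}$ — exactly when $\rep{\wrx{w_i}}$ lies $\hb$-strictly between $\rep{\scon{s}}$ and $\rep{\scoff{s}}$ of the unique such scan $s$, and then $\rep{\wrb{w_i}}$ falls after $\rep{\scr{s}{i}}$ and before $\rep{\scb{s}{i}}$, so $\abs{w_i \fobs s}$. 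Second, a ``$\rb$-linearization-point'' lemma: $\abs{w_i \rbhbeqrobs \vrt{\vsc}}$ entails, through \cref{lem:fwd-hbabsrep} and the internal $\rb$-order of the scan, that $\abs{w_i}$ ends $\hb$-before $\rep{\scoff{\vsc}}$.

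Granting these, \cref{ax:forward1} follows by contraposition: if $\rep{\wra{w_i} \not\hb \sca{\vsc}{i}}$ then $\rep{\sca{\vsc}{i}}$ read some other write to $\resA[i]$, so the scan missed $\abs{w_i}$; but $\abs{w_i \rbhbeqrobs \vrt{\vsc}}$ puts $\rep{\wrx{w_i}}$ inside the scan's window, hence $\abs{w_i}$ forwards and $\abs{w_i \fobs \vrt{\vsc}}$, contradicting $\rep{\scr{\vsc}{i} \robs \scb{\vsc}{i}}$ by \cref{ax:mem-robsuniq}. And \cref{ax:forward2b} follows because $\abs{w_i \fobs \vrt{\vsc}}$ gives $\rep{\scr{\vsc}{i} \hb \wrb{w_i}}$ (otherwise the reset overwrites, contradicting \cref{ax:mem-nowrbetween}), which with $\abs{w_i \hb w'_i}$ (hence $\abs{w_i \rb w'_i}$ by single-writer, so $\rep{\wrb{w_i} \rb \wra{w'_i}}$) pushes all of $\abs{w'_i}$ past $\rep{\scoff{\vsc}}$, contradicting the $\rb$-linearization-point lemma applied to $\abs{w'_i \rbhbeqrobs \vrt{\vsc}}$. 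I expect these last two lemmas — especially correctly formulating and discharging the $\rb$-linearization-point claim, which is where Jayanti's far-future linearization point hides — to be where essentially all the work lies; the remaining axioms are bookkeeping over the memory model and the event signatures.
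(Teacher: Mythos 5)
Your setup and the first two groups of axioms coincide with the paper's proof: the same instantiation of virtual scans, $\SMap{[-]}$ as the identity, the single global $\resB$, and $\abs{w_i \fobs s} \defeq \rep{\wrb{w_i} \robs \scb{s}{i}}$; the same two auxiliary lemmas in spirit (the paper's Lemmas \apndxLemJayIObsX{} and \apndxLemJayILin{}); and your \cref{ax:forward2a} argument is exactly the paper's. The problems are all in the forwarding-principles group, which you rightly identify as the crux, and they share one root cause: you repeatedly treat $\rep{\hb}$ as if it were total, silently converting ``$\rep{e \not\hb e'}$'' into ``$\rep{e'}$ happens before $\rep{e}$''.

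Concretely: (i) the second half of your $\resX$-observation lemma---that whenever $\rep{\wrx{w_i}}$ reads $\true$, the event $\rep{\wrb{w_i}}$ lands between $\rep{\scr{s}{i}}$ and $\rep{\scb{s}{i}}$ and hence $\abs{w_i \fobs s}$---is false. A writer may read $\resX = \true$ and then stall, performing $\rep{\wrb{w_i}}$ only after $\rep{\scb{s}{i}}$; this ``forwarded too late'' scenario is precisely why the write has a far-future linearization point. The paper's Lemma \apndxLemJayIObsX{} claims only the characterization of $\rep{\scon{s} \robs \wrx{w_i}}$ (and its upper bound is the negative $\rep{\scoff{s} \not\hb \wrx{w_i}}$, not $\rep{\wrx{w_i} \hb \scoff{s}}$); the placement of $\rep{\wrb{w_i}}$ before $\rep{\scb{s}{i}}$ must instead be extracted from the hypothesis $\abs{w_i \rbhbeqrobs s}$. (ii) For that reason your $\rb$-linearization-point lemma cannot be stated as ``$\abs{w_i}$ ends $\hb$-before $\rep{\scoff{s}}$'': when $\abs{w_i \rbhbeqrobs s}$ is realized through a forwarding observation on another cell $j$, one obtains only $\rep{\scoff{s} \not\hb \wrx{w_i}}$ together with $\rep{\wrb{w_i} \rb \scb{s}{i}}$ (the latter via \cref{ax:interval}); nothing orders $\abs{w_i}$ before $\rep{\scoff{s}}$. (iii) Consequently your \cref{ax:forward2b} derivation fails: from $\rep{\scr{s}{i} \hb \wrb{w_i}}$ and $\rep{\wrb{w_i} \rb \wra{w'_i}}$ nothing ``pushes $\abs{w'_i}$ past $\rep{\scoff{s}}$''---$\abs{w'_i}$ can sit entirely before $\rep{\scoff{s}}$. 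The working argument is different: having excluded $\rep{\wra{w'_i} \hb \scr{s}{i}}$ (that is the \cref{ax:forward2a} case), apply \cref{ax:interval} to force $\rep{\scon{s} \rb \wrx{w'_i}}$, combine with $\rep{\scoff{s} \not\hb \wrx{w'_i}}$ to get $\rep{\scon{s} \robs \wrx{w'_i}}$, conclude that $\abs{w'_i}$ forwards, and contradict \cref{ax:mem-nowrbetween} via $\rep{\wrb{w_i} \hb \wrb{w'_i} \hb \scb{s}{i}}$ against $\rep{\wrb{w_i} \robs \scb{s}{i}}$. A smaller instance of the same slip appears in your \cref{ax:forward1} step: what you can derive is that some $\rep{\wrb{w_i}}$ lies strictly between $\rep{\scr{s}{i}}$ and $\rep{\scb{s}{i}}$, which contradicts \cref{ax:mem-nowrbetween}; concluding $\abs{w_i \fobs s}$ and invoking \cref{ax:mem-robsuniq} additionally requires $\rep{\wrb{w_i}}$ to be the write actually observed, which you have not shown.
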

\begin{proof}[Proof sketch]
  We instantiate the set of virtual scans to be the same as the set of
  abs scans (since \cref{alg:jay1} is single-scanner), and define
  $\SMap{[-]}$ by $\SMap{\abs{s}} = \abs{s}$. \cref{ax:fwd-vrtinscan},
  requiring $\SMap{\abs{s}} \in \abs{s}$, follows immediately because
  every event $\abs{e}$ satisfies $\abs{e} \subev \abs{e}$. Since
  \cref{alg:jay1} is a single-scanner algorithm, the total order of
  virtual scans \cref{ax:fwd-sctotal} naturally follows.
  
  We instantiate forwarding visibility
  $\abs{w_i \fobs s} \defeq \rep{\wrb{w_i} \robs \scb{s}{i}}$, to
  directly capture how forwarding is defined for \cref{alg:jay1}.
  Properties~\eqref{ax:fwd-io}, \eqref{ax:fwd-wrauniq},
  \eqref{ax:fwd-scruniq} and \eqref{ax:fwd-scstruct}
  hold by the structure of the algorithm, while
  \cref{ax:fobsuniq,ax:fobsbot,ax:fobshb} follows from the
  $\abs{\fobs}$ instantiation (proved in
  Appendix~\apndxJayI{}\ifNotAppendix{ of \cite{extended}}).
  For the remaining Properties~\eqref{ax:forward1} to
  \eqref{ax:forward2b}, we need two helper lemmas (also proved in
  Appendix~\apndxJayI{}):
  \begin{itemize}
  \item Lemma~\apndxLemJayIObsX{}: $\rep{\scon{s} \robs \wrx{w_i}}$ iff we
    have $\rep{\scon{s} \hb \wrx{w_i}}$ and
    $\rep{\scoff{s} \not\hb \wrx{w_i}}$.
  \item Lemma~\apndxLemJayILin{}: If $\abs{w'_i \rbhbeqrobs s}$
    then $\rep{\scoff{s} \not\hb \wrx{w'_i}}$ and if
    $\rep{\wrb{w'_i}}$ was executed, then
    $\rep{\wrb{w'_i} \hb \scb{s}{i}}$.
  \end{itemize}
  We focus here on proving \cref{ax:forward1} which involves showing if
  $\abs{w'_i \rbhbeqrobs s}$ and $\rep{\scr{s}{i} \robs \scb{s}{i}}$
  then $\rep{\wra{w'_i} \hb \sca{s}{i}}$.
By Lemma~\apndxLemJayILin{}, we have
  $\rep{\scoff{s} \not\hb \wrb{w'_i}}$ and
  $\rep{\wrb{w'_i} \hb \scb{s}{i}}$ if $\rep{\wrb{w'_i}}$ was
  executed. Consider if we have $\rep{\scon{s} \hb \wrx{w'_i}}$, then
  by Lemma~\apndxLemJayIObsX{} we have $\rep{\scon{s} \robs \wrx{w'_i}}$
  which in turn implies that $\rep{\wrb{w'_i}}$ must have executed,
  however that contradicts \cref{ax:mem-nowrbetween} by
  $\rep{\wrb{w'_i}}$ occurring in between
  $\rep{\scr{s}{i} \robs \scb{s}{i}}$, thus we must have
  $\rep{\scon{s} \not\hb \wrx{w'_i}}$. By \cref{ax:interval} over
  $\rep{\wra{w'_i} \rb \wrx{w'_i}}$ and
  $\rep{\scon{s} \rb \sca{s}{i}}$, since
  $\rep{\scon{s} \not\hb \wrx{w'_i}}$ contradicts
  $\rep{\scon{s} \rb \wrx{w'_i}}$, we must have
  $\rep{\wra{w'_i} \rb \sca{s}{i}}$, which is our goal.
  \end{proof}

\section{Multi-Writer Algorithms}\label{sec:multiwriter}

To allow for correct non-blocking algorithms with multiple concurrent
writers, as well as for multiple concurrent scanners in the case of
\cref{alg:jay3}, Jayanti uses $\LLins$ (load-link), $\SCins$
(store-conditional) and $\VLins$ (validate) operations
\cite{jensen:1987} in addition to simple memory mutation and
dereference. We thus need to extend the register signature to account
for the new operations. Additionally, \cref{alg:jay2,alg:jay3}, both
being multi-writer algorithms share more structure than what we
capture in the $\fwdSig$ signature of \cref{fig:forward}. We thus
introduce a new $\mwFwdSig$ signature to capture the commonality of
the multi-writer algorithms. We then show that histories satisfying
$\mwFwdSig$ also satisfy $\fwdSig$ and that \cref{alg:jay2,alg:jay3}
satisfy $\mwFwdSig$.

\subsection{LL/SC Registers}

For a memory cell $\resX$, $\LLins(\resX)$, or load-link, reads from,
and returns the value of $\resX$. It also records the time of the
read, for use in future $\SCins$ and $\VLins$ operations of the same
thread. 
$\SCins(\resX, v)$, or store-conditional, writes $v$ into $\resX$ and
returns $\true$ if no other write into $\resX$ occurred since the most
recent $\LLins(\resX)$ from the same thread. Otherwise, $\SCins$
returns $\false$, keeping $\resX$ unchanged. Finally, $\VLins(\resX)$,
or validate, returns truth values identically to $\SCins$, but does
not mutate $\resX$.

\begin{figure}
  \centering
  \renewcommand{\arraystretch}{\sigSpacing}%

\axiomset{M\textsuperscript{+}\hspace{-2pt}}
\signatureHeader{$\llregSig$}{$\LLins$/$\SCins$ Register}

\begin{letbox}{A}
  & \multicolumn{2}{l}{$\rep{W}, \rep{R}, \rep{\LL}, \rep{\SC}, \rep{\VL}$} & Set of all
  writes, reads, $\LLop$, $\SCop$, and $\VLop$ operations\\
  & \rep{\sucSet} &\subseteq \rep{\SC} \cup \rep{\VL} & Set of all
  successful $\SCop$ and $\VLop$ operations\\
  & \rep{W_c} &\wideEq \rep{W} \cup (\rep{\SC} \cap \rep{\sucSet}) &
  Set of all write-like events of the register\\
  & \rep{R_c} &\wideEq \rep{R} \cup \rep{\LL} \cup \rep{\SC} \cup
  \rep{\VL} & Set of all read-like events of the register\\
\end{letbox}

\begin{tabular*}{0.9\textwidth}{l@{\hskip 1em}A@{\extracolsep{\fill}}r}
  $\bigSum$
     & \rep{\obs} &\subseteq (\rep{W} \cup \rep{R_c})^2 & Visibility relation\\
     & \rep{\robs} &\subseteq \mathord{\rep{\obs}} \cap \rep{W_c} \times \rep{R_c} & Reads-from visibility\\
     & \rep{\llobs} &\subseteq \mathord{\rep{\obs}} \cap \rep{\LL} \times (\rep{\SC} \cup \rep{\VL}) & $\LLop$ visibility
\end{tabular*}\vspace{\sepAfterVar}

\begin{letbox}{A}
  & \rep{\hb} &\wideDefeq (\rep{\rb} \cup \rep{\obs})^+ & Happens-before order\\
\end{letbox}

\begin{tabular*}{0.9\textwidth}{l@{\hskip 0em}A@{\extracolsep{\fill}}r}
  $\forall \rep{e}, \rep{e'}.$
  &\rep{e \obs^+ e'}
  &\implies \rep{e' \not\rbeq e}
  &\eqref{ax:wfobs}\\
  \newsubeqblock
  $\forall \rep{r} \in \term(\rep{R_c}).$
  &\multicolumn{2}{c}{$\exists \rep{w} \in \rep{W_c}.\ \rep{w \robs r}$}
  &\insertSubeq\label[prop]{ax:mem2-robspop}\\
  $\forall \rep{w} \in \rep{W_c}, \rep{r} \in \term(\rep{R} \cup \rep{\LL}).$
  &\rep{w \robs r}
  &\implies \rep{w}.\evIn = \rep{r}.\evOut
  &\insertSubeq\label[prop]{ax:mem2-io}\\
  $\forall \rep{c} \in \term(\rep{\SC} \cup \rep{\VL}).$
  &\multicolumn{2}{c}{$\rep{c}.\evOut = \mathsf{booleanOf}(\rep{c} \in \rep{\sucSet})$}
  &\insertSubeq\label[prop]{ax:vl-io}\\
  $\forall \rep{w} \in \rep{W_c}, \rep{r} \in \rep{R_c}.$
  &\rep{w \robs r}
  &\implies \nexists \rep{w'} \in \rep{W_c}.\ \rep{w \hb w' \hb r}
  &\insertEq\label[prop]{ax:mem2-nowrbetween}\\
  $\forall \rep{w}, \rep{w'}, \rep{r}.$
  &\rep{w \robs r} \land \rep{w' \robs r} &\implies \rep{w = w'}
  &\insertEq\label[prop]{ax:mem2-robsuniq}\\
  $\forall \rep{w}, \rep{w'} \in \rep{W_c}.$
  &\rep{w \neq w'} &\implies \rep{w \hb w'} \lor \rep{w' \hb w}
  &\insertEq\label[prop]{ax:mem2-wrtotal}\\
  $\forall \rep{c} \in \rep{\SC} \cup \rep{\VL}.$
  &\multicolumn{2}{c}{$\exists \rep{l} \in \term(\rep{\LL}).\ \rep{l \llobs c}$}
  &\insertEq\label[prop]{ax:mem2-llobspop}\\
  $\forall \rep{l}, \rep{c}.$
  &\multicolumn{2}{c}{\hspace{-1cm}$\rep{l \llobs c}
  \implies \rep{l}.\evParent = \rep{c}.\evParent \land \nexists \rep{l'} \in \rep{\LL}.\ \rep{l \hb l' \hb c}$}
  &\insertEq\label[prop]{ax:mem2-llobsparent}\\
  $\forall \rep{w}, \rep{w'}, \rep{l}, \rep{c}.$
  &\hspace{-0.9cm}\rep{l \llobs c} \land \rep{w \robs l} \land \rep{w' \robs c}
  &\implies (\rep{w = w'}  \iff \rep{c} \in \rep{\sucSet})
  &\insertEq\label[prop]{ax:llsc-success}\\[\sepLastProp]
  \bottomrule
  \end{tabular*}\vspace{-2mm}

  \caption{Signature for atomic
    register with $\LLins$/$\SCins$/$\VLins$ operations.\vspace{-2mm}}
  \label{fig:llsc-memory}
\end{figure}

To capture the properties of memory with the $\LLins$, $\SCins$ and
$\VLins$ operations, we introduce a new signature in
\cref{fig:llsc-memory}, which we call $\LLins$/$\SCins$ registers, or
$\llregSig$ for short. It extends $\aregSig$ from \cref{fig:memory}.
We will treat each execution of $\LLins$, $\SCins$ and $\VLins$ as a
read-like event, i.e., they will observe some write by
$\rep{\robs}$. Additionally, we denote the $\LLop$ visibility
$\rep{\llobs}$ to link $\LLins$ events with their corresponding
$\SCins$/$\VLins$ events. To capture the concept of $\SCins$/$\VLins$
operations being successful, we introduce the set of successful events
$\rep{\sucSet}$. If an $\SCins$ event is in $\rep{\sucSet}$, it was
successful and must have written to memory, therefore we treat it as a
write-like event. These sets are formally defined in the top part of
\cref{fig:llsc-memory}.

Further, referring to \cref{fig:llsc-memory}, the
Properties~\eqref{ax:mem2-robspop} to \eqref{ax:mem2-wrtotal} are
almost identical to \eqref{ax:mem-io} to \eqref{ax:mem-wrtotal} from
$\aregSig$ from \cref{fig:memory}. The main difference is that we
substitute quantification over writes and reads with quantification
over write-like events ($\rep{W_c} = \rep{W} \cup (\rep{\SC} \cap
\rep{\sucSet})$) and read-like events ($\rep{R_c} = \rep{R} \cup
\rep{\LL} \cup \rep{\SC} \cup \rep{\VL}$), respectively. We also split
\cref{ax:mem-io} into three: \eqref{ax:mem2-robspop} says that each
terminated read-like event has some observed write-like event;
\eqref{ax:mem2-io} says that the output of reads and $\LLins$ is the
same as the input of the observed write-like event, and
\eqref{ax:vl-io} says that the output of $\SCins$ and $\VLins$ depends
on its success.
Additionally, \cref{ax:mem2-llobspop} states that each $\SCins$ and
$\VLins$ event has a terminated $\LLins$ event it is related to in
$\rep{\llobs}$. \cref{ax:mem2-llobsparent} says that each
$\rep{\llobs}$ pair has the same abs parent, capturing that the two
events of the pair were executed in the same thread (as per our
assumption in Section~\ref{subsub:eventstruct} that each event is
single-threaded), and that no other $\LLins$ event occurs in between
the pair.  
Finally,
\cref{ax:llsc-success} captures that successful events observe the
same write-like event as their $\rep{\llobs}$-related $\LLins$ event,
meaning no other modifications occurred between the $\LLins$ event and
the successful event.

\begin{figure}
  \centering
  \begin{subfigure}[t]{0.3\textwidth}
    \centering
    \begin{tikzpicture}[baseline]
      \draw[|-|] (0,0) node[below] {$\LLins$} -- (1.4,0) node[below] {$\SCins$} node[above] {$\sucSet$};
      \draw[|-|] (1.9,0) node[below] {$\LLins$} -- (3.3,0) node[below] {$\SCins$} node[above] {$\sucSet$};
    \end{tikzpicture}
    \caption{\cref{lem:llsc-seq}: Successful $\LLins$/$\SCins$ pairs
      do not overlap.}
    \label{fig:llsc-propA}
  \end{subfigure}
  \hfill
  \begin{subfigure}[t]{0.3\textwidth}
    \centering
    \begin{tikzpicture}[baseline]
      \draw[fill=yellow,yellow] (1.0,0.5) rectangle ++(0.7,0.4);
      \draw[|-|] (0,0) node[below] {$\LLins$} -- (2.9,0) node[below] {$\SCins$};
      \draw[decorate, decoration={brace, amplitude=5pt}] ([yshift=0.2cm]0.1,0)-- node[above=0.25cm,xshift=-3pt]
      {$\exists w$}([yshift=0.2cm]2.8,0);
    \end{tikzpicture}
    \caption{\cref{lem:llsc-interference}: There will always exist
      some write-like event in between an $\LLins$/$\SCins$
      pair.}
    \label{fig:llsc-propB}
  \end{subfigure}
  \hfill
  \begin{subfigure}[t]{0.3\textwidth}
    \centering
    \begin{tikzpicture}[baseline]
      \draw[fill=yellow,yellow] (0.3,0.45) rectangle ++(2.5,0.6);
      \draw[|-|] (0,0) node[below] {$\LLins$} -- (1.4,0) node[below] {$\SCins$};
      \draw[|-|] (1.9,0) node[below] {$\LLins$} -- (3.3,0) node[below] {$\SCins$};
      \draw[decorate, decoration={brace, amplitude=5pt}] ([yshift=0.2cm]0.1,0)-- ([yshift=0.2cm]3.2,0);
      \draw[|-|] (0.95,0.6) node[left,xshift=-6pt,yshift=2pt] {$\exists$} node[above] {$\LLins$} -- (2.35,0.6) node[above] {$\SCins$} node[right] {$\sucSet$};
    \end{tikzpicture}
    \caption{\cref{lem:llsc-double}: If $\SCins$ is the only kind of
      write, then there exists a writing $\LLins$/$\SCins$ pair in the
      interval of two consecutive $\LLins$/$\SCins$ pairs.}
    \label{fig:llsc-propC}
  \end{subfigure}\vspace{-2mm}

  \caption{Illustration of key properties of $\LLins$/$\SCins$.\vspace{-2mm}}
  \label{fig:llsc-prop}
\end{figure}

There are several important properties of $\LLins/\SCins/\VLins$ that
the algorithms relies upon for synchronization. We codify them in the
following three lemmas and illustrate in \cref{fig:llsc-prop}. For
each of the lemmas, the events $l$, $c$, $w$, and variants, operate
over the same memory cell $\resX : \llregSig$.

\begin{lemma}\label{lem:llsc-seq}
  Let $\rep{l \llobs c}$ and $\rep{l' \llobs c'}$ with $\rep{c},
  \rep{c'} \in \term(\rep{\SC} \cap \rep{\sucSet})$. If $\rep{c \hb
    c'}$ then $\rep{c \hb l'}$.
\end{lemma}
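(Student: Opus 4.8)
The plan is to identify the write-like event that $\rep{c'}$ reads from, to show that both $\rep{c}$ and $\rep{l'}$ sit no later than it in happens-before order, and then to conclude by transitivity. The hypothesis $\rep{l \llobs c}$ will not actually be needed.

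First I would show that $\rep{l'}$ and $\rep{c'}$ read from a common write-like event. Since $\rep{c'} \in \term(\rep{\SC}\cap\rep{\sucSet}) \subseteq \term(\rep{R_c})$, \cref{ax:mem2-robspop} gives some $\rep{w^*} \in \rep{W_c}$ with $\rep{w^* \robs c'}$. To apply \cref{ax:llsc-success} I also need $\rep{l'}$ to read from a write-like event, for which it suffices to know $\rep{l'}$ is terminated. I would argue this as follows: by \cref{ax:mem2-llobspop} there is a terminated $\LLins$ event $\rep{\hat l}$ with $\rep{\hat l \llobs c'}$, and \cref{ax:mem2-llobsparent} tells us that both $\rep{l'}$ and $\rep{\hat l}$ have parent $\rep{c'}.\evParent$ with no $\LLins$ event strictly $\rep{\hb}$-between themselves and $\rep{c'}$; since that parent is single-threaded, $\rep{l'}$ and $\rep{\hat l}$ are $\rep{\rb}$-comparable or equal, and either strict ordering would place one of them as an $\LLins$ event strictly $\rep{\hb}$-between the other and $\rep{c'}$ (using $\rep{l' \obs c'}$ and $\rep{\hat l \obs c'}$), a contradiction --- hence $\rep{l'} = \rep{\hat l}$ is terminated. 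Then \cref{ax:mem2-robspop} gives some $\rep{w} \in \rep{W_c}$ with $\rep{w \robs l'}$, and since $\rep{l' \llobs c'}$ and $\rep{c'} \in \rep{\sucSet}$, \cref{ax:llsc-success} forces $\rep{w} = \rep{w^*}$; so $\rep{w^* \robs l'}$ as well.

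Next I would place $\rep{c}$ relative to $\rep{w^*}$. Both are write-like events, and $\rep{c} \neq \rep{c'}$ since $\rep{\hb}$ is irreflexive (\cref{lem:hb-irrefl}). Were $\rep{w^* \hb c}$, then from $\rep{c \hb c'}$ we would get $\rep{w^* \hb c \hb c'}$ with $\rep{c} \in \rep{W_c}$, contradicting \cref{ax:mem2-nowrbetween} for $\rep{w^* \robs c'}$; so by totality of write-like events (\cref{ax:mem2-wrtotal}) we get $\rep{c = w^*}$ or $\rep{c \hb w^*}$, i.e.\ $\rep{c \hbeq w^*}$. Finally $\rep{w^* \robs l'}$ gives $\rep{w^* \obs l'}$, hence $\rep{w^* \hb l'}$, and chaining with $\rep{c \hbeq w^*}$ yields $\rep{c \hb l'}$.

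I expect the main obstacle to be the termination of $\rep{l'}$ --- equivalently, uniqueness of the $\rep{\llobs}$-partner of a given $\SCins$/$\VLins$ event --- which is where single-threadedness and \cref{ax:mem2-llobspop,ax:mem2-llobsparent} come in; the remainder is a short chase through \cref{ax:mem2-nowrbetween}, \cref{ax:mem2-wrtotal}, \cref{ax:llsc-success}, and the inclusions $\rep{\robs} \subseteq \rep{\obs} \subseteq \rep{\hb}$.
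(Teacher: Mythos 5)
Your proof is correct and follows essentially the same route as the paper's: identify the common write-like event $\rep{w^*}$ read by $\rep{l'}$ and $\rep{c'}$ via \cref{ax:llsc-success}, rule out $\rep{w^* \hb c}$ with \cref{ax:mem2-nowrbetween}, and conclude $\rep{c \hbeq w^* \robs l'}$. Your explicit uniqueness argument for the $\rep{\llobs}$-partner (to get termination of $\rep{l'}$) is a point the paper glosses over, and your observation that $\rep{l \llobs c}$ is unused is accurate.
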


\begin{proof}
  By \cref{ax:mem2-llobspop}, we know that $\rep{l}$ and $\rep{l'}$
  are terminated, and by \cref{ax:mem2-robspop}, we have that each of
  $\rep{l}$, $\rep{l'}$, $\rep{c}$ and $\rep{c'}$ observes some
  write-like event. By \cref{ax:llsc-success} with
  $\rep{c}, \rep{c'} \in \rep{\sucSet}$, then $\rep{l}$ and $\rep{c}$
  observe the same write-like event, and dually for $\rep{l'}$ and
  $\rep{c'}$, i.e., there exists $\rep{w}$ and $\rep{w'}$ such that
  $\rep{w \robs l}$ and $\rep{w \robs c}$ and $\rep{w' \robs l'}$ and
  $\rep{w' \robs c'}$. By \cref{ax:mem-wrtotal}, we either have
  $\rep{w' \hb c}$ or $\rep{c \hbeq w'}$. In the former case, we have
  $\rep{w' \hb c \hb c'}$, which contradicts \cref{ax:mem-nowrbetween}
  by there being a $\rep{c}$ in between $\rep{w' \robs c'}$. Thus, we
  must have the latter case, giving us $\rep{c \hbeq w'}$, letting us
  construct $\rep{c \hbeq w' \robs l'}$ which implies our goal
  $\rep{c \hb l'}$.
\end{proof}

\begin{lemma}\label{lem:llsc-interference}
  Let $\rep{l \llobs c}$ with either $\rep{c} \in \term(\rep{\SC})$ or
  $\rep{c} \in \term(\rep{\VL} \setminus \rep{\sucSet})$. Then there
  exists some $\rep{w} \in \rep{W_c}$ such that $\rep{w \not\hb l}$
  and $\rep{w \hbeq c}$.
\end{lemma}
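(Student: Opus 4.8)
The plan is to split on whether $\rep{c} \in \rep{\sucSet}$. The easy case is when $\rep{c}$ is a successful $\SCop$, i.e.\ $\rep{c} \in \rep{\SC} \cap \rep{\sucSet}$. Then $\rep{c} \in \rep{W_c}$ by the definition of write-like events, so I would take $\rep{w} \defeq \rep{c}$. The condition $\rep{w \hbeq c}$ holds trivially, and $\rep{w \not\hb l}$ (that is, $\rep{c \not\hb l}$) follows because $\rep{l \llobs c}$ gives $\rep{l \obs c}$ and hence $\rep{l \hb c}$, so $\rep{c \hb l}$ would yield the cycle $\rep{l \hb l}$, contradicting irreflexivity of $\rep{\hb}$ (\cref{lem:hb-irrefl}, applicable since the register satisfies \cref{ax:wfobs}). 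Note that a successful $\VLop$ is deliberately excluded from the hypotheses precisely because this witness is unavailable: a successful $\VLop$ reads the same write as its $\LLins$, and that write does happen before $\rep{l}$.

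The main case is $\rep{c} \notin \rep{\sucSet}$, covering an unsuccessful $\SCins$ or an unsuccessful $\VLins$. Here $\rep{l}$ is terminated by \cref{ax:mem2-llobspop}, and both $\rep{l}$ and $\rep{c}$ are read-like, so \cref{ax:mem2-robspop} supplies $\rep{w},\rep{w'} \in \rep{W_c}$ with $\rep{w \robs l}$ and $\rep{w' \robs c}$. Applying \cref{ax:llsc-success} to $\rep{l \llobs c}$ together with $\rep{w \robs l}$ and $\rep{w' \robs c}$, and using $\rep{c} \notin \rep{\sucSet}$, gives $\rep{w \neq w'}$. I claim that $\rep{w'}$ is then the desired witness.

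To verify the two conditions for $\rep{w'}$: first, $\rep{w' \hbeq c}$ is immediate from $\rep{w' \robs c}$. Second, for $\rep{w' \not\hb l}$ I use the total order on write-like events (\cref{ax:mem2-wrtotal}): either $\rep{w' \hb w}$ or $\rep{w \hb w'}$. If $\rep{w' \hb w}$, then from $\rep{w \robs l}$ and $\rep{l \hb c}$ (the latter from $\rep{l \llobs c}$) I get $\rep{w \hb c}$, so $\rep{w}$ is a write-like event with $\rep{w' \hb w \hb c}$, contradicting \cref{ax:mem2-nowrbetween} for $\rep{w' \robs c}$; hence $\rep{w \hb w'}$. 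Now if additionally $\rep{w' \hb l}$, then $\rep{w}$ with $\rep{w \hb w' \hb l}$ and $\rep{w'} \in \rep{W_c}$ contradicts \cref{ax:mem2-nowrbetween} for $\rep{w \robs l}$. Therefore $\rep{w' \not\hb l}$, which completes the argument.

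There is no deep obstacle here; the only thing requiring care is the case analysis itself — recognizing that for a successful $\SCop$ the witness must be $\rep{c}$ rather than the write it reads from (which would fail, since that write happens before $\rep{l}$), and in the remaining cases combining the freshness axiom \cref{ax:mem2-nowrbetween} in both directions with the totality of $\rep{\hb}$ on $\rep{W_c}$.
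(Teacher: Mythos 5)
Your proposal is correct and follows essentially the same route as the paper's proof: the same case split on $\rep{c} \in \rep{\sucSet}$, the same witness $\rep{c}$ itself in the successful-$\SCop$ case, and in the unsuccessful case the same witness $\rep{w'}$ with the same two applications of \cref{ax:mem2-nowrbetween} combined with \cref{ax:mem2-wrtotal} (you merely reorder the case analysis, ruling out $\rep{w' \hb w}$ unconditionally before assuming $\rep{w' \hb l}$, whereas the paper assumes $\rep{w' \hb l}$ first). Your aside explaining why a successful $\VLop$ must be excluded is a correct and worthwhile observation not spelled out in the paper.
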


\begin{proof}
  If $\rep{c} \in \term(\rep{\SC} \cap \rep{\sucSet})$, then $\rep{c}$
  must have successfully written; thus $\rep{c} = \rep{w}$ trivially
  satisfies $\rep{w \hbeq c}$ and $\rep{w \not\hb l}$. The latter
  holds, because otherwise we get $\rep{c \hb l}$ and
  $\rep{l \hbeq c}$ and thus $\rep{c \hb c}$ which contradicts
  irreflexivity of $\rep{\hb}$. Next, consider
  $\rep{c} \notin \rep{\sucSet}$, where
  $\rep{c} \in \term(\rep{\SC}\cup \rep{\VL})$.

  By \cref{ax:mem2-llobspop}, $\rep{l}$ is terminated, and by
  \cref{ax:mem2-robspop}, for $\rep{l}$ and $\rep{c}$ there exist
  write-like events $\rep{w}, \rep{w'} \in \rep{W_c}$ observed by
  $\rep{l}$ and $\rep{c}$, respectively. By \cref{ax:llsc-success},
  these write-like events must be distinct, because otherwise
  $\rep{c}$ would have been successful. In other words,
  $\rep{w \robs l}$ and $\rep{w' \robs c}$ where
  $\rep{w} \neq \rep{w'}$. Taking $\rep{w'}$ to be the required write
  of the lemma, we trivially have $\rep{w' \hbeq c}$. To show that
  also $\rep{w' \not\hb l}$, we assume $\rep{w' \hb l}$ and derive a
  contradiction. By \cref{ax:mem-wrtotal}, we either have
  $\rep{w \hb w'}$ or $\rep{w' \hb w}$. In the first case, we
  contradict \cref{ax:mem2-nowrbetween} for $\rep{w \robs l}$ by
  $\rep{w \hb w' \hb l}$. In the second case, we contradict
  \cref{ax:mem2-nowrbetween} for $\rep{w' \robs c}$ by
  $\rep{w' \hb w \robs l \llobs c}$.
\end{proof}

\begin{lemma}\label{lem:llsc-double}
  Let $\rep{l \llobs c \rb l' \llobs c'}$ where $\rep{c}, \rep{c'} \in
  \term(\rep{\SC})$. If within the time frame of the events $\rep{l}$,
  $\rep{l'}$, $\rep{c}$, and $\rep{c'}$ there are no mutations to
  register $\resX$ except by $\SCins$, then there exist some
  $\rep{l''}$ and $\rep{c''}$ (over $\resX$) with $\rep{l'' \llobs
    c''}$ such that $\rep{l'' \not\hb l}$ and $\rep{c'' \hbeq c'}$ and
  $\rep{c''} \in \rep{\SC} \cap \rep{\sucSet}$.
\end{lemma}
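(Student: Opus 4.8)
The plan is to split on whether either of $\rep{c},\rep{c'}$ is already successful (the two easy cases), and otherwise to extract a successful $\SCins$ from the interference that forced $\rep{c'}$ to fail. As a preliminary, observe $\rep{l \rb c}$ and $\rep{l' \rb c'}$: from $\rep{l \llobs c}$ we get $\rep{l \obs c}$, hence $\rep{l \hb c}$, hence $\rep{c \not\rbeq l}$ by \cref{ax:wfobs}; since $\rep{l}$ and $\rep{c}$ share a parent by \cref{ax:mem2-llobsparent} and events are single-threaded, $\rep{l \rb c}$ follows. Together with the hypothesis $\rep{c \rb l'}$ and transitivity of $\rb$ this gives the chain $\rep{l \rb c \rb l' \rb c'}$; in particular $\rep{l \hb l'}$ and $\rep{c \rb c'}$.

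If $\rep{c} \in \rep{\sucSet}$, take $\rep{l'' := l}$, $\rep{c'' := c}$: then $\rep{l'' \llobs c''}$ is the hypothesis, $\rep{l'' \not\hb l}$ is irreflexivity of $\rep{\hb}$ (\cref{lem:hb-irrefl}), $\rep{c'' \hbeq c'}$ follows from $\rep{c \rb c'}$, and $\rep{c''} \in \rep{\SC} \cap \rep{\sucSet}$ by assumption. If $\rep{c'} \in \rep{\sucSet}$, take $\rep{l'' := l'}$, $\rep{c'' := c'}$: here $\rep{l'' \not\hb l}$ holds because $\rep{l \hb l'}$ would otherwise yield $\rep{l \hb l}$, the remaining conjuncts being immediate. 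So assume henceforth $\rep{c},\rep{c'} \notin \rep{\sucSet}$. Apply \cref{lem:llsc-interference} to $\rep{l' \llobs c'}$ (legitimate since $\rep{c'} \in \term(\rep{\SC})$) to obtain a write-like event $\rep{w}$ of $\resX$ with $\rep{w \not\hb l'}$ and $\rep{w \hbeq c'}$; concretely $\rep{w}$ is the write-like event read by $\rep{c'}$ (\cref{ax:mem2-robspop}), which differs from the one read by $\rep{l'}$ by \cref{ax:llsc-success} (using $\rep{c'}\notin\rep{\sucSet}$), and $\rep{w \neq c'}$ as $\rep{c'}$ is not write-like, so $\rep{w \hb c'}$. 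I would then check that $\rep{w}$ lies within the time frame of $\rep{l},\rep{l'},\rep{c},\rep{c'}$: $\rep{w \hb c'}$ precludes $\rep{c' \rb w}$ by irreflexivity of $\rep{\hb}$, while $\rep{w \rb l}$ is impossible because it would give $\rep{w \rb l \rb c \rb l'}$, i.e.\ $\rep{w \hb l'}$, contradicting $\rep{w \not\hb l'}$. By the non-interference hypothesis $\rep{w}$ is then a $\SCins$, and being write-like it lies in $\rep{\SC} \cap \rep{\sucSet}$. Put $\rep{c'' := w}$ and pick $\rep{l''}$ with $\rep{l'' \llobs c''}$ by \cref{ax:mem2-llobspop}; then $\rep{c''} \in \rep{\SC} \cap \rep{\sucSet}$ and $\rep{c'' \hbeq c'}$ hold at once.

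The main obstacle is the remaining conjunct $\rep{l'' \not\hb l}$, which I would prove by contradiction. Assume $\rep{l'' \hb l}$. Since $\rep{c''}$ is a successful $\SCins$, \cref{ax:llsc-success} makes $\rep{l''}$ and $\rep{c''}$ observe a common write-like event $\rep{w_1}$, with no write-like event $\rep{\hb}$-strictly-between $\rep{w_1}$ and $\rep{c''}$ (\cref{ax:mem2-nowrbetween}); also $\rep{w_1 \robs l'' \hb l}$ gives $\rep{w_1 \hb l}$ and, via $\rep{l \hb l'}$, $\rep{w_1 \hb l'}$. Writing $\rep{w_l},\rep{w_{l'}},\rep{w_c}$ for the write-like events read by $\rep{l},\rep{l'},\rep{c}$, one application of \cref{ax:mem2-wrtotal} and then of \cref{ax:mem2-nowrbetween} (to $\rep{w_l \robs l}$ and $\rep{w_{l'} \robs l'}$) forces $\rep{w_1 \hbeq w_l}$ and $\rep{w_1 \hbeq w_{l'}}$. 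From here a finite case analysis over the $\rep{\hb}$-positions of $\rep{w_l}$, $\rep{w_{l'}}$, and $\rep{w_c}$ relative to $\rep{w_1}$ and $\rep{c''}$ closes every branch, repeatedly invoking \cref{ax:mem2-nowrbetween} along $\rep{l \rb c \rb l' \rb c'}$ together with the facts that $\rep{c''}$ is exactly the write read by $\rep{c'}$ (so $\rep{c'' \neq w_{l'}}$, and no write-like event is $\rep{\hb}$-between $\rep{c''}$ and $\rep{c'}$) and that $\rep{w_1 \hb w_c}$ because $\rep{c}$ fails (so $\rep{w_c \neq w_l}$ is the interfering write for the first pair). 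The subtlest branch is the one where $\rep{l},\rep{l'},\rep{l''},\rep{c''}$ all read $\rep{w_1}$: there \cref{ax:mem2-llobsparent} yields nothing linking $\rep{l''}$ to $\rep{l}$ (they turn out concurrent), and the contradiction comes instead from forcing $\rep{w_c}$ to sit $\rep{\hb}$-between $\rep{w_1}$ and $\rep{c''}$, or between $\rep{c''}$ and $\rep{c'}$, which \cref{ax:mem2-nowrbetween} forbids. I expect this branching argument to be the bulk of the work; the rest is bookkeeping with the interval order \cref{ax:interval} and transitivity of $\rb$ and $\rep{\hb}$.
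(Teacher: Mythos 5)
Your proposal lands on the same witness as the paper — in the generic case your $\rep{c''}$ is exactly the write-like event that \cref{lem:llsc-interference} extracts from the pair $\rep{l' \llobs c'}$ — and the first half (that $\rep{c''}$ is a successful $\SCins$ inside the time frame with $\rep{c'' \hbeq c'}$) matches the paper's argument. Where you genuinely diverge is the hard conjunct $\rep{l'' \not\hb l}$. The paper applies \cref{lem:llsc-interference} a \emph{second} time, to the first pair $\rep{l \llobs c}$, obtaining a successful $\SCins$ $\rep{c_0}$ with $\rep{c_0 \not\hb l}$ and $\rep{c_0 \hbeq c}$; assuming $\rep{l'' \hb l}$, three steps then finish: $\rep{c_0 \hb c''}$ is forced (otherwise $\rep{c'' \hbeq c_0 \hbeq c \rb l'}$ contradicts $\rep{c'' \not\hb l'}$), the non-overlap of successful pairs (the argument of \cref{lem:llsc-seq}) turns this into $\rep{c_0 \hb l''}$, and $\rep{c_0 \hb l'' \hb l}$ contradicts $\rep{c_0 \not\hb l}$. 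You instead reason directly about the reads-from writes, and that route does close — in fact with much less branching than you fear. The three facts you already identified suffice when chained: $\rep{w_1 \hbeq w_l}$ (from $\rep{w_1 \robs l'' \hb l}$ and \cref{ax:mem2-nowrbetween} on $\rep{w_l \robs l}$), $\rep{w_l \hb w_c}$ (from $\rep{c} \notin \rep{\sucSet}$, \cref{ax:llsc-success}, and \cref{ax:mem2-nowrbetween} on $\rep{w_c \robs c}$), and $\rep{w_c \hb c''}$ (from $\rep{w_c \hb l'}$, $\rep{c'' \not\hb l'}$, and \cref{ax:mem2-wrtotal}) give $\rep{w_1 \hb w_c \hb c''}$, contradicting \cref{ax:mem2-nowrbetween} on $\rep{w_1 \robs c''}$ — no case tree over $\rep{w_{l'}}$ needed. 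What the paper's double-interference trick buys is precisely that the interfering write of the first pair packages the $\rep{w_l \hb w_c}$ reasoning once and for all, after which the total order on the two successful $\SCins$'s does the rest; your version re-derives those ordering facts by hand. Your preliminary case split on $\rep{c} \in \rep{\sucSet}$ or $\rep{c'} \in \rep{\sucSet}$ is sound but unnecessary, since \cref{lem:llsc-interference} already returns $\rep{c}$ itself in the successful case and the paper's argument is uniform.
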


\begin{proof}
  From the assumption, by \cref{lem:llsc-interference}, there exists
  some $\rep{c_0}$ and $\rep{c'_0}$ such that $\rep{c_0 \not\hb l}$
  and $\rep{c_0 \hbeq c}$ and $\rep{c'_0 \not\hb l'}$ and $\rep{c'_0
    \hbeq c'}$. Since there are no other writes, we can only have
  $\rep{c_0}, \rep{c'_0} \in \rep{\SC} \cap \rep{\sucSet}$, by
  \cref{ax:mem2-llobspop} there has to be some $\rep{l'_0}$ such that
  $\rep{l'_0 \llobs c'_0}$. Let $\rep{l'_0}$ and $\rep{c'_0}$ be the
  existentials, we trivially have $\rep{c'_0 \hbeq c'}$ and
  $\rep{c'_0} \in \rep{\SC} \cap \rep{\sucSet}$. For $\rep{l'_0
    \not\hb l}$, assume we have $\rep{l'_0 \hb l}$, we derive a
  contradiction. We must have $\rep{c_0 \hb c'_0}$, since we otherwise
  we contradict $\rep{c'_0 \not\hb l'}$ by $\rep{c'_0 \hb c_0 \hbeq c
    \rb l'}$, meaning we also must have $\rep{c_0 \hb l'_0}$ for
  $\rep{c'_0}$ to be successful, however this contradicts $\rep{c_0
    \not\hb l}$ by $\rep{c_0 \hb l'_0 \hb l}$.
\end{proof}

\cref{lem:llsc-seq} says that if we have two successful
$\LLins$/$\SCins$ pairs, then the intervals of the two pairs do not
overlap. To see the intuition behind the other two lemmas, consider
the special case when the order $\rep{\hb}$ is total (e.g., it is a
linearization order), so that $\rep{\not\hb}$ is equivalent to
$\rep{\haeq}$.
In that case, \cref{lem:llsc-interference} says that an
$\LLins$/$\SCins$ interval, or an $\LLins$/$\VLins$ interval with a
failing $\VLins$, must contain some succesful write-like event.
Similarly, \cref{lem:llsc-double} says there exists a successful
$\LLins/\SCins$ interval contained within the interval defined by two
sequential $\LLins/\SCins$ intervals, even if the latter two intervals
themselves correspond to failing $\SCins$'s.
The last property will be used by multi-writer algorithms as follows.
Referring to \cref{alg:jay2}, a write procedure will try to execute
forwarding by invoking two sequential intervals of $\LLins/\SCins$
pairs into the cell $\resB[i]$. Even if both $\SCins$'s fail,
\cref{lem:llsc-double} guarantees that another concurrent write will
have invoked a successful $\LLins/\SCins$ interval, thus forwarding
the same value on behalf of the original write.

\subsection{Multi-Writer Forwarding Signature}

\renewcommand{\AComment}[1]{\Comment{\makebox[8mm][l]{\abs{#1}}}}
\renewcommand{\RComment}[1]{\Comment{\makebox[8mm][l]{\rep{#1}}}}

\begin{algorithm}[t]
  \setlength\multicolsep{0pt}
  \begin{multicols}{2}
  \begin{algorithmic}[1]
    \Resource $\resA : \arrayType{n}{\valType}$
    \Resource $\resB : \arrayType{n}{\valType \cup \left\{\bot\right\}}$
    \Resource $\resX : \boolType \set \false$
    \Statex

    \Procedure{\writeProc}{$i : \natType, v : \valType$}{} \AComment{$w_i$}
    \miniskip
      \State $\resA[i] \set v$ \RComment{$\wra{w_i}$}
      \State $x \gets \LLins(\resX)$ \RComment{$\wrx{w_i}$}
      \If{$x$}
        \State \Call{\forwardProc}{$i,x$} \RComment{$\vrt{\wrfA{w_i}}$}
        \State \Call{\forwardProc}{$i,x$} \RComment{$\vrt{\wrfB{w_i}}$}
      \EndIf
    \EndProcedure
    \Statex

    \vspace{-2mm}
    \Procedure{\forwardProc}{$i : \natType$}{} \AComment{$\vrt{f}$}
    \miniskip
      \State $\LLins(\resB[i])$ \RComment{$\wrfbA{f}$}
      \State $v \gets \resA[i]$ \RComment{$\wrfa{f}$}
      \State $x' \gets \VLins(\resX)$ \RComment{$\wrfxB{f}$}
      \If{$x'$}
        $\SCins(\resB[i],v)$ \RComment{$\wrfbB{f}$}
      \EndIf
    \EndProcedure
    \columnbreak

    \Procedure{\scanProc}{}{$\arrayType{n}{\valType}$} \AComment{$s$}
    \miniskip
      \For{$i \in \left\{ 0 \dots n-1 \right\}$}
        \State $\resB[i] \set \bot$ \RComment{$\scr{s}{i}$}
      \EndFor
      \State $\resX \set \true$ \RComment{$\scon{s}$}
      \For{$i \in \left\{ 0 \dots n-1 \right\}$}
        \State $a \gets \resA[i]$ \RComment{$\sca{s}{i}$}
        \State $V[i] \set a$
      \EndFor
      \State $\resX \set \false$ \RComment{$\scoff{s}$}
      \For{$i \in \left\{ 0 \dots n-1 \right\}$}
        \State $b \gets \resB[i]$ \RComment{$\scb{s}{i}$}
        \If{$b \neq \bot$} $V[i] \set b$ \EndIf
      \EndFor
      \State \textbf{return} $V$
    \EndProcedure
  \end{algorithmic}
  \end{multicols}
  \caption{\label{alg:jay2} Jayanti's multi-writer, single-scanner snapshot algorithm.\vspace{-4mm}}
\end{algorithm}

\begin{figure}
  \centering \renewcommand{\arraystretch}{\sigSpacing}%
  \setlength\multicolsep{0pt}
  \begin{multicols}{2}
    \begin{tabular*}{0.4\textwidth}{l@{\extracolsep{\fill}}r}
      \toprule
      \textbf{resource}\ \ $\rep{\resA} \wideColon \arrayType{n}{\aregSig}$\\
      \textbf{resource}\ \ $\rep{\resX} \wideColon \llregSig$\\
      \bottomrule
    \end{tabular*}
    
    \medskip

    \begin{tabular*}{0.4\textwidth}{@{\hskip 2em}A@{\extracolsep{\fill}}r}
      \toprule
      \multicolumn{2}{l}{\rlap{\evsignature{$\vscanProc$+}}}\\
      \midrule
      \rep{\resB} &\wideColon $\rlap{$\arrayType{n}{\llregSig}$}$\\
      \rep{r_i} &\wideColon \rep{\resB[i].W}\\
      \rep{a_i} &\wideColon \rep{\resA[i].R}\\
      \rep{b_i} &\wideColon \rep{\resB[i].R}\\
      \rep{\mathit{on}} &\wideColon \rep{\resX.W_c}\\
      \rep{\underline{\mathit{on}}} &\wideColon \rep{\resX.W_c} \,\mathrm{or}\, \rep{\resX.R_c}\\
      \rep{\mathit{off}} &\wideColon \rep{\resX.W_c}\\
      \rep{\underline{\mathit{off}}} &\wideColon \rep{\resX.W_c} \,\mathrm{or}\, \rep{\resX.R_c}\\
      \bottomrule
    \end{tabular*}

    \begin{tabular*}{0.4\textwidth}{@{\hskip 2em}A@{\extracolsep{\fill}}r}
      \toprule
      \multicolumn{2}{l}{\evsignature{$\writeProc_i$+}}\\
      \midrule
      \rep{a} &\wideColon \rep{\resA[i].W}\\
      \rep{x} &\wideColon \rep{\resX.\LL}\\
      \vrt{f_1}, \vrt{f_2} &\in \forwardProc_i\\
      \bottomrule
    \end{tabular*} 

    \medskip

    \begin{tabular*}{0.4\textwidth}{@{\hskip 2em}A@{\extracolsep{\fill}}r}
      \toprule
      \multicolumn{2}{l}{\rlap{\evsignature{$\forwardProc_i$}}}\\
      \midrule
      \rep{\resBcell} &\wideColon \llregSig\\
      \rep{b} &\wideColon \rep{\resBcell.\LL}\\
      \rep{\underline{a}} &\wideColon \rep{\resA[i].R}\\
      \rep{\underline{x}} &\wideColon \rep{\resX.\VL}\\
      \rep{\underline{b}} &\wideColon \rep{\resBcell.\SC}\\
      \bottomrule
    \end{tabular*}
  \end{multicols}
  \caption{%
    Resources and event signatures for abs writes, virtual
    scans and virtual forwarding corresponding to $\mwFwdSig$. The rep events correspond to
    the rep events with the same suffix in \cref{alg:jay2}, with the
    exception of $\rep{\underline{\mathit{on}}}$ and
    $\rep{\underline{\mathit{off}}}$ which are new and correspond to a
    potential observer of $\rep{\mathit{on}}$ and $\rep{\mathit{off}}$
    respectively.\vspace{-3mm}
  }
\label{fig:struct2}
\end{figure}

\begin{figure}
  \centering
  \renewcommand{\arraystretch}{\sigSpacing}%

\axiomset{F\textsuperscript{+}\hspace{-2pt}}
\signatureHeader{$\mwFwdSig$}{Multi-Writer snapshot with forwarding}

\begin{letbox}{l}
  & $\abs{W_i}$ & Set of all writes of cell $i$ of the snapshot
  in history with each $\abs{w_i} \in \writeProc_i$+\\
  & $\abs{S}$ & Set of all scans of the snapshot in history\\
  & $\vrt{F_i}$ & Set of all forwarding events in history with each $\vrt{f} \in \forwardProc_i$\\
\end{letbox}

\begin{tabular*}{0.9\textwidth}{l@{\hskip 1em}A@{\extracolsep{\fill}}r}
  $\bigSum$
  & \vrt{\Vsc} && Set of virtual scans with each $\vrt{\vsc} \in \vscanProc$+\\
  & \SMap{[-]} &\wideColon \abs{S} \parfun \vrt{\Vsc} & \ScVscDesc{}\\
  & \abs{[\vrt{-}]_w} &\wideColon \vrt{F_i} \to \abs{W_i} & Write which executed the forwarding event\\
  & \vrt{[-]_\vsc} &\wideColon \vrt{F_i} \to \vrt{\Vsc} & Virtual scan observed by forwarding event
\end{tabular*}\vspace{\sepAfterVar}

\begin{letbox}{A}
  &\abs{w_i \fobs \vrt{\vsc}}
  &\wideDefeq \exists \vrt{f} \in \vrt{F_i} .\ \rep{\wra{w_i} \robs \wrfa{f}}
    \land \rep{\wrfbB{f} \robs \scb{\vsc}{i}}
  & Forwarding visibility\\
  &\fwdRobsDef{&} & Reading visibility\\
  &\fwdWobsDef{&} & Writing visibility\\
  &\abs{\obs} &\wideDefeq \abs{\robs} \cup \abs{\wobs} & Visibility relation\\
  &\abs{\hb}  &\wideDefeq (\abs{\rb} \cup \abs{\obs})^+ & Happens-before order\\
\end{letbox}

\begin{tabular*}{0.9\textwidth}{l@{\hskip -2em}A@{\extracolsep{\fill}}r}
  \fwdAxVrtInScan{&\multicolumn{2}{c}}
  &\stepcounter{equation}\eqref{ax:fwd-vrtinscan}\\
  \fwdAxIO{&\multicolumn{2}{c}}
  &\stepcounter{equation}\eqref{ax:fwd-io}\\
  \newsubeqblock
  \fwdAxWraUniq{&\multicolumn{2}{c}}
  &\stepcounter{subeq}\eqref{ax:fwd-wrauniq}\\
  \fwdAxScrUniq{W_c}{&}{&}
  &\stepcounter{subeq}\eqref{ax:fwd-scruniq}\\
  $\forall \vrt{\vsc}, \rep{e_r} \in \rep{\vrt{\vsc}.\resB[i].W_c}.$
  &\rep{e_r}.\evIn \neq \bot
  &\iff \exists \vrt{f}.\ \rep{\wrfbB{f}} = \rep{e_r}
  &\insertSubeq\label[prop]{ax:fwd2-fbBuniq}\\
  $\forall \vrt{\vsc}, \rep{e_r} \in \rep{\resX.R_c}.$
  &\rep{\scon{\vsc} \robs e_r}
  &\iff \rep{\scon{\vsc} \hb e_r} \land \rep{\scoff{\vsc} \not\hb e_r}
  &\insertSubeq\label[prop]{ax:fwd2-sconuniq}\\[\sepInnerProp]
  \newsubeqblock
  \fwdAxScTotal{&}{&\ \,}
  &\stepcounter{subeq}\eqref{ax:fwd-sctotal}\\
  $\forall \vrt{\vsc} \in \vrt{\Vsc}.$
  &\multicolumn{2}{c}{\hspace{-1em}$\rep{\scr{\vsc}{i} \rb \scon{\vsc} \robseq \sconobs{\vsc} \rb \sca{\vsc}{i} \rb \scoff{\vsc} \robseq \scoffobs{\vsc} \rb \scb{\vsc}{i}}$}
  &\insertSubeq\label[prop]{ax:fwd2-scstruct}\\
  $\forall \abs{w_i} \in \abs{W_i}.$
  &\multicolumn{2}{c}{\hspace{-4em}$\rep{\wra{w_i} \rb \wrx{w_i} \rb \vrt{\wrfA{w_i} \rb \wrfB{w_i}}}$}
  &\insertSubeq\label[prop]{ax:fwd2-wrstruct}\\
  $\forall \vrt{f} \in \vrt{F_i}.$
  &\multicolumn{2}{c}{\hspace{-4em}$\rep{\wrfbA{f} \rb \wrfa{f} \rb \wrfxB{f} \rb \wrfbB{f}} \land \rep{\wrfbA{f} \llobs \wrfbB{f}}$}
  &\insertSubeq\label[prop]{ax:fwd2-fwdstruct}\\[\sepInnerProp]
  \newsubeqblock
  $\forall \vrt{f}, \abs{w_i}, \vrt{\vsc}.$
  &\abs{w_i} = \abs{[ \vrt{f} ]_w} \land \vrt{\vsc} = \vrt{[ f ]_\vsc}
  &\ \,\implies \rep{\scon{\vsc} \robs \wrx{w_i}} \rb \vrt{f} \land \vrt{f}.\resBcell = \vrt{\vsc}.\resB[i]
  &\insertSubeq\label[prop]{ax:fwd2-fwdprecond}\\
  $\forall \vrt{f} \in \vrt{F_i}.$
  &\mathsf{def}(\rep{\wrfbB{f}})
  &\ \,\implies \exists \vrt{\vsc}.\ \vrt{\vsc} = \vrt{[ f ]_\vsc} \land \rep{\scon{\vsc} \robs \wrfxB{f}}
  &\insertSubeq\label[prop]{ax:fwd2-fwdsccond}\\[\sepLastProp]
  \bottomrule
\end{tabular*}\vspace{-2mm}

\caption{Signatures for multi-writer
  snapshot data-structure with $\LLins$/$\SCins$ forwarding
  (\cref{alg:jay2,alg:jay3}).\vspace{-3mm}}
  \label{fig:forward2}
\end{figure}

\subsubsection{Description of Algorithm \ref{alg:jay2}}
We first discuss \cref{alg:jay2}, which will motivate the definition
of $\mwFwdSig$ signature. 
The common aspect with \cref{alg:jay1} is that the writer communicates
its value to the data structure by writing into $\resA[i]$. Whereas
\cref{alg:jay1} may try to communicate the same value \emph{again} by
forwarding to $\resB[i]$, in \cref{alg:jay2}, the forwarding procedure
reads whichever value is currently present in $\resA[i]$ and attempts
to forward it to $\resB[i]$ by means of $\LLins$ and $\SCins$. Because
multiple writes may be racing on $\resA[i]$, the forwarding procedure
may read and forward a different value from $\resA[i]$ than the one
the writer initially wrote. The forwarding procedure may even fail to
forward anything. Nevertheless, \cref{lem:llsc-double} provides a
guarantee that there will exist \emph{some} successful forwarding
among the concurrent processes, forwarding a value that is current
(i.e., written by another overlapping write). This is the key property
facilitating linearizability.

Other than that, the $\scanProc$ procedure is mostly the same in
\cref{alg:jay2} compared to \cref{alg:jay1}; the main difference is
that the order between $\rep{\scon{s}}$ and each $\rep{\scr{s}{i}}$ is
swapped, which is done in preparation for \cref{alg:jay3}.
In \cref{alg:jay1} this swap would result in a bug, causing
interference between forwarding and initialization of $\resB$; the
writes may fail to forward their value because the forwarding flag is
turned on too late, or scans may erase forwarded values by the
initializations. But this is safe to do in \cref{alg:jay2} because the
use of $\LLins$, $\SCins$ and $\VLins$ over $\resX$ and $\resB$
ensures that no forwarding (intended for a scan $\abs{s'}$ prior to
$\abs{s}$) can succeed in-between $\rep{\scr{s}{i}}$ and
$\rep{\scon{s}}$ of $\abs{s}$. This is substantiated by the following
lemma.

\begin{lemma}\label{lem:jay2nofwd-at-reset}
  Let $\abs{s'}$ be a scan prior to $\abs{s}$, i.e., $\abs{s' \rb
    s}$. If $\rep{\wrfxB{f}}$ succeeds observing the same
  $\rep{\scon{s'}}$ that the $\rep{\wrx{w_i}}$ preceding
  $\rep{\wrfxB{f}}$ observed, i.e., $\rep{\scon{s'} \robs \wrfxB{f}}$
  and $\rep{\scon{s'} \robs \wrx{w_i}}$, and $\rep{\scr{s}{i} \hb
    \wrfbB{f}}$, then $\rep{\wrfbB{f}}$ must fail to write.
\end{lemma}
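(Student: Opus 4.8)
The plan is to argue by contradiction: I will assume that $\rep{\wrfbB{f}}$ \emph{succeeds}, i.e.\ $\rep{\wrfbB{f}} \in \rep{\SC}\cap\rep{\sucSet}$ and is therefore a (terminated) write-like event on the register $\resB[i]$ --- the common cell over which $\rep{\wrfbA{f}}$, $\rep{\wrfbB{f}}$ and $\rep{\scr{s}{i}}$ all operate in \cref{alg:jay2} --- and derive a contradiction. The key idea is that a successful $\rep{\wrfbB{f}}$ forces the reset $\rep{\scr{s}{i}}$ to happen-before the matching load $\rep{\wrfbA{f}}$, and this happens-before then propagates backward from $\rep{\scoff{s'}}$ (which returns-before $\rep{\scr{s}{i}}$ because $\abs{s'\rb s}$) all the way to the validation $\rep{\wrfxB{f}}$, contradicting the hypothesis that $\rep{\wrfxB{f}}$ still observes $\abs{s'}$'s activation $\rep{\scon{s'}}$.

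The first step is to establish $\rep{\scr{s}{i}\hb\wrfbA{f}}$. From the structure of $\forwardProc$ (\cref{ax:fwd2-fwdstruct}) we have $\rep{\wrfbA{f}\llobs\wrfbB{f}}$; since $\rep{\wrfbA{f}}$ is a terminated $\LLop$ (\cref{ax:mem2-llobspop}) and $\rep{\wrfbB{f}}$ a successful $\SCop$, \cref{ax:mem2-robspop} supplies write-like events observed by each, and \cref{ax:llsc-success} forces them to be a single common write-like event $\rep{w}$ on $\resB[i]$, so $\rep{w\robs\wrfbA{f}}$ and $\rep{w\robs\wrfbB{f}}$. Now $\rep{\scr{s}{i}}$ is also write-like on $\resB[i]$, and $\rep{\scr{s}{i}\hb\wrfbB{f}}$ by hypothesis; were $\rep{w\hb\scr{s}{i}}$, the chain $\rep{w\hb\scr{s}{i}\hb\wrfbB{f}}$ would violate \cref{ax:mem2-nowrbetween} for $\rep{w\robs\wrfbB{f}}$, so $\rep{w\not\hb\scr{s}{i}}$; by the total order on writes of $\resB[i]$ (\cref{ax:mem2-wrtotal}) this gives $\rep{\scr{s}{i}\hbeq w}$, and composing with $\rep{w\robs\wrfbA{f}}$ yields $\rep{\scr{s}{i}\hb\wrfbA{f}}$.

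The second step propagates this relation and closes the argument. By \cref{ax:fwd2-fwdstruct} we have $\rep{\wrfbA{f}\rb\wrfa{f}\rb\wrfxB{f}}$, hence $\rep{\scr{s}{i}\hb\wrfxB{f}}$. Since $\abs{s'\rb s}$, $\rep{\scoff{s'}}\subev\abs{s'}$ and $\rep{\scr{s}{i}}\subev\abs{s}$ (the subevent $\rep{\scr{s}{i}}$ is defined by hypothesis, and $\rep{\scoff{s'}}$ because $\abs{s'}$ is terminated, \cref{ax:evsig-partial}), \cref{ax:rb-absrep} gives $\rep{\scoff{s'}\rb\scr{s}{i}}$, so $\rep{\scoff{s'}\hb\scr{s}{i}\hb\wrfxB{f}}$ and thus $\rep{\scoff{s'}\hb\wrfxB{f}}$. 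But $\rep{\scoff{s'}}$ is a write-like event on $\resX$ with $\rep{\scon{s'}\hb\scoff{s'}}$ (program order inside $\scanProc$, \cref{ax:fwd2-scstruct}), so $\rep{\scon{s'}\hb\scoff{s'}\hb\wrfxB{f}}$ contradicts \cref{ax:mem2-nowrbetween} applied to the hypothesis $\rep{\scon{s'}\robs\wrfxB{f}}$. This contradiction shows $\rep{\wrfbB{f}}$ cannot write. The remaining hypotheses --- that $\rep{\wrfxB{f}}$ succeeds and that $\rep{\scon{s'}\robs\wrx{w_i}}$ --- serve to make $\rep{\wrfbB{f}}$ reachable and to tie the forwarding $\vrt{f}$ to $\abs{s'}$ (via \cref{ax:fwd2-fwdsccond,ax:fwd2-fwdprecond} one also re-derives $\vrt{[f]_\vsc}=\abs{s'}$ and $\vrt{f}.\resBcell=\resB[i]$, which is what justifies treating all the $\resB[i]$ events above as living on the same register).

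I expect the main obstacle to be the first step: one must rule out not merely a write lying strictly between $\rep{w}$ and $\rep{\wrfbB{f}}$ but also the degenerate possibility that $\rep{\wrfbA{f}}$ reads precisely the $\bot$ deposited by $\rep{\scr{s}{i}}$, and both are handled uniformly only by routing the argument through $\rep{\scr{s}{i}\hbeq w}$ together with the total order and the ``no write in between'' axiom for the cell $\resB[i]$. The rest is routine happens-before bookkeeping with \cref{ax:rb-absrep} and the register axioms.
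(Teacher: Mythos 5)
Your proof is correct and follows essentially the same route as the paper's: assume $\rep{\wrfbB{f}}$ succeeds, use \cref{ax:llsc-success} to get a common observed write $\rep{w}$ for the $\rep{\wrfbA{f}}/\rep{\wrfbB{f}}$ pair, derive $\rep{\scr{s}{i}\hbeq w}$ via \cref{ax:mem2-wrtotal,ax:mem2-nowrbetween}, and then chain $\rep{\scon{s'}\hb\scoff{s'}\hb\wrfxB{f}}$ to contradict $\rep{\scon{s'}\robs\wrfxB{f}}$. The only cosmetic difference is that you invoke \cref{ax:rb-absrep} on the hypothesis $\abs{s'\rb s}$ where the paper appeals to the single-scanner assumption, and your explicit handling of the degenerate case $\rep{w}=\rep{\scr{s}{i}}$ and of the common-register bookkeeping is a welcome bit of extra care.
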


\begin{proof}
  We assume that $\rep{\wrfbB{f}}$ is successful, and derive
  contradiction. By \cref{ax:mem2-io}, there exists some write-like
  event $\rep{w}$ such that $\rep{w \robs \wrfbA{f}}$, and by
  $\rep{\wrfbB{f}}$ being successful, \cref{ax:llsc-success} implies
  that $\rep{\wrfbB{f}}$ observes the same write-like event as
  $\rep{\wrfbA{f}}$, i.e.\ $\rep{w \robs \wrfbB{f}}$. By
  \cref{ax:mem2-wrtotal}, we either have $\rep{\scr{s}{i} \hb w}$ or
  $\rep{w \hbeq \scr{s}{i}}$, in the latter case we have
  $\rep{w \hb \scr{s}{i} \hb \wrfbB{f}}$, contradicting
  \cref{ax:mem2-nowrbetween} by $\rep{\scr{s}{i}}$ being in between
  $\rep{w \robs \wrfbB{f}}$, thus we can only have
  $\rep{\scr{s}{i} \hbeq w}$. Since \cref{alg:jay2} is single-scanner,
  it follows that $\abs{s}$ only starts after $\abs{s'}$ has finished,
  thus we can derive the following:
  \[
    \rep{\scon{s'} \rb \scoff{s'} \rb \scr{s}{i} \hbeq w \hb \wrfbA{f} \hb \wrfxB{f}}
  \]
  which implies that we have $\rep{\scoff{s'}}$ in between
  $\rep{\scon{s'} \robs \wrfxB{f}}$, contradicting \cref{ax:mem2-nowrbetween}.
\end{proof}

\subsubsection{The $\mwFwdSig$ Signature}

The rep-event signatures for multi-writer algorithms in
\cref{fig:struct2} correspond closely to the rep events of
\cref{alg:jay2}.
\cref{fig:struct2} extends \cref{fig:struct}, adding the
$\LLins/\SCins$ register $\resX$ and extending the event signatures.
$\writeProc_i$+ extends $\writeProc_i$ by events $\rep{\wrx{w_i}}$,
$\rep{\wrfA{w_i}}$, and $\rep{\wrfB{w_i}}$ from \cref{alg:jay2}, along
with an event signature $\forwardProc_i$ corresponding to the
procedure $\forwardProc$ from \cref{alg:jay2}.
$\vscanProc$+ extends $\vscanProc$ by updating $\resB$ to be an array
of $\LLins/\SCins$ registers, and adding four additional rep events:
$\rep{\scon{\vsc}}$ and $\rep{\scoff{\vsc}}$ corresponding to
$\rep{\scon{s}}$ and $\rep{\scoff{s}}$ from \cref{alg:jay2}, and
$\rep{\sconobs{\vsc}}$ and $\rep{\scoffobs{\vsc}}$ which are providing
support for multi-scanner \cref{alg:jay3}.
As we shall see in Section~\ref{sec:jay3}, in \cref{alg:jay3},
$\rep{\scon{\vsc}}$ might not return before $\rep{\sca{\vsc}{i}}$, and
similarly for $\rep{\scoff{\vsc}}$ and $\rep{\scb{\vsc}{i}}$. Thus, we
need events that observe $\rep{\scon{\vsc}}$ and $\rep{\scoff{\vsc}}$
returning before $\rep{\sca{\vsc}{i}}$ and $\rep{\scb{\vsc}{i}}$
respectively. For \cref{alg:jay2}, we simply take
$\rep{\sconobs{\vsc}} = \rep{\scon{\vsc}}$ and $\rep{\scoffobs{\vsc}}
= \rep{\scoff{\vsc}}$.  

Signature $\mwFwdSig$ in \cref{fig:forward2} captures the common
structure of \cref{alg:jay2,alg:jay3}. Some aspects are preserved from
\cref{alg:jay1}; for instance
\cref{ax:fwd-vrtinscan,ax:fwd-io,ax:fwd-wrauniq,ax:fwd-scruniq,ax:fwd-sctotal}
are the same as in $\fwdSig$ signature of \cref{fig:forward}.
One difference is that we now have sets of forwarding events
$\vrt{F_i}$ that are instances of the forwarding procedure. The new
signature introduces operations over such forwardings:
$\abs{[\vrt{-}]_w}$ maps a forwarding to the write that invoked it,
and $\vrt{[\vrt{-}]_{\vsc}}$ maps a forwarding to the virtual scan it
is forwarding to. Forwarding visibility $\abs{\fobs}$ is also updated:
it now says that a forwarding reads a value of a write from
$\rep{\wra{w_i}}$ and relays this value by $\rep{\wrfbB{f_i}}$.

Among the new properties, \eqref{ax:fwd2-fbBuniq} encodes that only
$\rep{\wrfbB{f}}$ performs forwarding writes into
$\vrt{\vsc}.\resB[i]$. \cref{ax:fwd2-sconuniq} ensures that nothing
wrote to $\resX$ other than $\rep{\scoff{\vsc}}$ directly after
$\rep{\scon{\vsc}}$.
\cref{ax:fwd2-scstruct,ax:fwd2-wrstruct,ax:fwd2-fwdstruct} corresponds
to the rep event order of scans, writes and forwarding respectively.
Lastly, \cref{ax:fwd2-fwdprecond} ensures that every forwarding has a
virtual scan to forward to, while \cref{ax:fwd2-fwdsccond} states that
$\rep{\wrfbB{f}}$ can only be performed if $\rep{\wrfxB{f}}$ observed
$\rep{\scon{\vsc}}$.

\begin{lemma}\label{lem:fwd2-fwd}
  Histories satisfying $\mwFwdSig$ signature also satisfy $\fwdSig$
  signature.
\end{lemma}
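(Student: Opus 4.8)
The plan is to verify that a history satisfying $\mwFwdSig$ provides everything demanded by $\fwdSig$. The components of $\fwdSig$ (the set $\vrt{\Vsc}$ of virtual scans, the map $\SMap{[-]}$, and the forwarding relation $\abs{\fobs}$) are all either directly present in $\mwFwdSig$ or definable from it: $\vrt{\Vsc}$ and $\SMap{[-]}$ carry over verbatim, and $\abs{\fobs}$ is given by the explicit definition in the \textbf{Let} block of \cref{fig:forward2}, namely $\abs{w_i \fobs \vrt{\vsc}}$ iff there exists $\vrt{f} \in \vrt{F_i}$ with $\rep{\wra{w_i} \robs \wrfa{f}}$ and $\rep{\wrfbB{f} \robs \scb{\vsc}{i}}$. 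Since the derived relations $\abs{\robs}$, $\abs{\wobs}$, $\abs{\obs}$, $\abs{\hb}$ are defined identically (in terms of $\abs{\fobs}$) in both signatures, it suffices to discharge the eleven axioms \eqref{ax:fwd-vrtinscan}--\eqref{ax:forward2b} of $\fwdSig$.

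First I would dispatch the axioms that are literally shared: \cref{ax:fwd-vrtinscan,ax:fwd-io,ax:fwd-wrauniq,ax:fwd-scruniq,ax:fwd-sctotal} appear verbatim in $\mwFwdSig$, so nothing is needed. Next, \cref{ax:fwd-scstruct} ($\rep{\scr{\vsc}{i} \rb \sca{\vsc}{i} \rb \scb{\vsc}{i}}$) follows by transitivity from the longer chain in \cref{ax:fwd2-scstruct}, using that $\robseq$ implies $\hbeq$ hence can be composed on the left/right of $\rb$ to still give $\rb$ via \cref{ax:rb-absrep}-style reasoning or simply the interval property; more carefully, $\rep{\scr{\vsc}{i} \rb \scon{\vsc}}$, $\rep{\scon{\vsc}\robseq\sconobs{\vsc}}$, $\rep{\sconobs{\vsc}\rb\sca{\vsc}{i}}$ chain (an $\robs$ step absorbed into adjacent $\rb$) to $\rep{\scr{\vsc}{i}\rb\sca{\vsc}{i}}$, and symmetrically on the right. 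Then I would handle the $\abs{\fobs}$-structural group: \cref{ax:fobshb} ($\abs{w_i \fobs \vrt{\vsc}} \implies \rep{\wra{w_i}\hb\scb{\vsc}{i}} \land \neg(\rep{\scr{\vsc}{i}\robs\scb{\vsc}{i}})$). Unfolding $\abs{\fobs}$ gives a witness $\vrt{f}$; the first conjunct follows by chaining $\rep{\wra{w_i}\robs\wrfa{f}}$, then the forwarding internal order from \cref{ax:fwd2-fwdstruct} ($\rep{\wrfa{f}\rb\wrfxB{f}\rb\wrfbB{f}}$), then $\rep{\wrfbB{f}\robs\scb{\vsc}{i}}$. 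For the second conjunct, $\rep{\wrfbB{f}}$ is a successful $\SCins$ writing a non-$\bot$ value into $\vrt{\vsc}.\resB[i]$ (it must be successful and have $\vrt{[f]_\vsc} = \vrt{\vsc}$, which requires unfolding: since $\rep{\wrfbB{f}\robs\scb{\vsc}{i}}$ and the register is $\vrt{\vsc}.\resB[i]$, use \cref{ax:fwd2-fwdprecond}'s conclusion $\vrt{f}.\resBcell = \vrt{[f]_\vsc}.\resB[i]$ to pin down $\vrt{[f]_\vsc}=\vrt{\vsc}$); then $\rep{\scr{\vsc}{i}}$ writes $\bot$, so $\rep{\scr{\vsc}{i}}\neq\rep{\wrfbB{f}}$, and if $\rep{\scr{\vsc}{i}\robs\scb{\vsc}{i}}$ held we would contradict \cref{ax:mem2-robsuniq} (both $\rep{\scr{\vsc}{i}}$ and $\rep{\wrfbB{f}}$ observed by $\rep{\scb{\vsc}{i}}$) — or rather \cref{ax:mem2-nowrbetween} once we order them. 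Axioms \cref{ax:fobsbot} (converse: no $\bot$ observed $\implies$ some forwarding reached $\vrt{\vsc}$) and \cref{ax:fobsuniq} (uniqueness of forwarded write) follow from \cref{ax:fwd2-fbBuniq} and register uniqueness \cref{ax:mem2-robsuniq}: if $\rep{\scb{\vsc}{i}}$ terminated and did not read $\bot$, then by \cref{ax:fwd2-fbBuniq} the write-like event it reads is some $\rep{\wrfbB{f}}$, and then $\abs{[f]_w}\fobs\vrt{\vsc}$; uniqueness because $\rep{\scb{\vsc}{i}}$ reads a unique write-like event and $\rep{\wra{-}}$ is injective by \cref{ax:fwd2-fwdprecond}/\cref{ax:llsc-success} tying $\rep{\wrfa{f}}$ to $\abs{[f]_w}$.

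The main obstacle is the third group — the forwarding principles \cref{ax:forward1,ax:forward2a,ax:forward2b} — because here $\mwFwdSig$ does not hand them over directly; they must be \emph{derived} from the $\LLins/\SCins$ machinery, and this is exactly where \cref{lem:llsc-double,lem:llsc-interference,lem:jay2nofwd-at-reset} do the heavy lifting. For \cref{ax:forward1}, given $\rep{\scr{\vsc}{i}\robs\scb{\vsc}{i}}$ (nothing forwarded) and $\abs{w_i \rbhbeqrobs \vrt{\vsc}}$, I must show $\rep{\wra{w_i}\hb\sca{\vsc}{i}}$; the argument parallels the $\fwdSig$-for-\cref{alg:jay1} proof but replaces the ad hoc flag reasoning with \cref{ax:fwd2-sconuniq} (relating $\rep{\scon{\vsc}}$-observation to $\hb$) and uses that a successful forwarding would have to exist by \cref{lem:llsc-double} if $\rep{\wra{w_i}}$ came late while $\resX$ was on, contradicting $\rep{\scr{\vsc}{i}\robs\scb{\vsc}{i}}$ via \cref{ax:mem2-nowrbetween}. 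For \cref{ax:forward2a} and \cref{ax:forward2b} I expect to need: from $\abs{w_i \fobs \vrt{\vsc}}$ extract $\vrt{f}$ with $\vrt{[f]_\vsc}=\vrt{\vsc}$, use \cref{ax:fwd2-fwdsccond} to get $\rep{\scon{\vsc}\robs\wrfxB{f}}$ and \cref{ax:fwd2-fwdprecond} to get $\rep{\scon{\vsc}\robs\wrx{[f]_w}}\rb\vrt{f}$, so the forwarded value was read from $\resA[i]$ while $\resX$ was on (between $\rep{\scon{\vsc}}$ and $\rep{\scoff{\vsc}}$ by \cref{ax:fwd2-sconuniq}); any $\abs{w'_i}$ that wrote to $\resA[i]$ before $\rep{\scr{\vsc}{i}}$ is then ordered before $\rep{\wrfa{f}}$ is impossible-to-be-after, giving $\abs{w_i \not\hb w'_i}$ for \cref{ax:forward2a}, and a parallel monotonicity argument using $\abs{w'_i \rbhbeqrobs \vrt{\vsc}}$ plus \cref{ax:mem2-nowrbetween} for \cref{ax:forward2b}. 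I anticipate the bookkeeping of which register each event lives on (distinguishing $\vrt{f}.\resBcell$, $\vrt{\vsc}.\resB[i]$, and confirming they coincide) and correctly propagating $\robseq$ steps through interval reasoning to be the fiddly parts; the conceptual content is just that $\LLins/\SCins$ forwarding behaves, at the level of $\hb$, like the direct $\resB$-write of \cref{alg:jay1}, which is the whole point of introducing $\mwFwdSig$ as an intermediate layer.
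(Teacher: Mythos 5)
Your overall decomposition matches the paper's: carry over $\vrt{\Vsc}$, $\SMap{[-]}$ and the shared axioms, derive the $\abs{\fobs}$-structural group from \cref{ax:fwd2-fbBuniq} and register uniqueness, and reduce the forwarding principles to the $\LLins/\SCins$ lemmas. Two points need attention, one minor and one substantive. Minor: for \cref{ax:fwd-scstruct}, an $\robseq$ step cannot be ``absorbed'' into adjacent $\rb$'s by transitivity --- $\rep{\scr{\vsc}{i} \rb \scon{\vsc} \robseq \sconobs{\vsc} \rb \sca{\vsc}{i}}$ does not chain, because $\sconobs{\vsc}$ may start before $\scon{\vsc}$ ends. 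The correct argument (which you mention only as an alternative) is \cref{ax:interval} applied to $\rep{\scr{\vsc}{i} \rb \scon{\vsc}}$ and $\rep{\sconobs{\vsc} \rb \sca{\vsc}{i}}$: the disjunct $\rep{\sconobs{\vsc} \rb \scon{\vsc}}$ contradicts \cref{ax:wfobs}, leaving $\rep{\scr{\vsc}{i} \rb \sca{\vsc}{i}}$.

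The substantive gap is \cref{ax:forward2b}. A ``parallel monotonicity argument using $\abs{w'_i \rbhbeqrobs \vrt{\vsc}}$ plus \cref{ax:mem2-nowrbetween}'' does not go through, because in the multi-writer setting $\abs{w'_i}$ need not forward its own value, so nothing about $\abs{w'_i}$ directly intervenes between $\rep{\wrfbB{f}}$ and $\rep{\scb{\vsc}{i}}$. The essential missing step --- and the whole reason the algorithms invoke $\forwardProc$ twice --- is to apply \cref{lem:llsc-double} to $\abs{w'_i}$'s \emph{own} two forwarding attempts $\vrt{f'} = \vrt{\wrfA{w'_i}}$ and $\vrt{f''} = \vrt{\wrfB{w'_i}}$ (whose execution you must first establish via $\abs{w'_i \rbhbeqrobs \vrt{\vsc}}$, the analogue of Lemma~\apndxLemJayIILin{}, and \cref{ax:fwd2-sconuniq}), obtaining a \emph{successful} forwarding $\vrt{f'''}$ nested in their span. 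Then \cref{ax:mem2-wrtotal} splits into two cases: either $\rep{\wrfbB{f'''}}$ lands between $\rep{\wrfbB{f}}$ and $\rep{\scb{\vsc}{i}}$, contradicting \cref{ax:mem2-nowrbetween}; or $\rep{\wrfbB{f'''} \hbeq \wrfbB{f}}$, in which case \cref{lem:llsc-seq} and \cref{ax:interval} place $\rep{\wra{w'_i}}$ between $\rep{\wra{w_i}}$ and $\rep{\wrfa{f}}$, contradicting \cref{ax:mem-nowrbetween}. You cite \cref{lem:llsc-double} only for \cref{ax:forward1}; without its use here, \cref{ax:forward2b} --- the hardest and most characteristic axiom --- remains unproved.
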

\begin{proof}[Proof sketch]
  We instantiate $\vrt{\Vsc}$ and $\SMap{[-]}$ of $\fwdSig$ to be the
  same as the corresponding instantiations of $\mwFwdSig$.
  \cref{ax:fwd-vrtinscan,ax:fwd-io,ax:fwd-wrauniq,ax:fwd-scruniq,ax:fwd-sctotal}
  are shared between the signatures, therefore they trivially hold.
  
  We next focus on proving \cref{ax:forward2b},
  \ifappendix%
  leaving the remaining properties for Appendix~\apndxMWFwd{}.
  \else%
  we prove the remaining properties in Appendix~\apndxMWFwd{}
  of~\cite{extended}.
  \fi%
  Proving \cref{ax:forward2b} requires showing that
  $\abs{w_i \fobs \vrt{\vsc}}$ and $\abs{w'_i \rbhbeqrobs \vrt{\vsc}}$
  and $\abs{w_i \hb w'_i}$ derive a contradiction. To show this, we
  need a helper lemma, proved in Appendix~\apndxMWFwd{}\ifNotAppendix{
    of~\emph{loc.~cit.}} (Lemma~\apndxLemJayIILin{}):
  If $\abs{w_i \rbhbeqrobs \vrt{\vsc}}$ then surely $\abs{w_i}$ will
  be able to forward to $\vrt{\vsc}$ in time, i.e., we have
  $\rep{\scoff{\vsc} \not\hb \wrx{w_i}}$ and
  $\rep{\scoff{\vsc} \not\hb \wrfxB{f}}$ and
  $\rep{\wrfbB{f} \hb \scb{\vsc}{i}}$ for every $\vrt{f}$ executed by
  $\abs{w_i}$.
  From $\abs{w_i \fobs \vrt{\vsc}}$, by the definition of
  $\abs{\fobs}$, we have $\rep{\wra{w_i} \robs \wrfa{f}}$ and
  $\rep{\wrfbB{f} \robs \scb{\vsc}{i}}$ for some $\vrt{f}$. By
  \cref{lem:fwd-hbabsrep}, $\abs{w_i \hb w'_i}$ implies
  $\rep{\wra{w_i} \hb \wra{w'_i}}$.
  We can also infer (full proof in
  Appendix~\apndxMWFwd{}\ifNotAppendix{ of~\emph{loc.~cit.}}) that
  $\rep{\scon{\vsc} \hb \wrx{w'_i}}$.

  By Lemma~\apndxLemJayIILin{}, we have
  $\rep{\scoff{\vsc} \not\hb \wrx{w_i}}$, thus by
  \cref{ax:fwd2-sconuniq} we derive
  $\rep{\scon{\vsc} \robs \wrx{w_i}}$, implying $\vrt{\wrfA{w'_i}}$
  and $\vrt{\wrfB{w'_i}}$ will be executed, let
  $\vrt{f'} = \vrt{\wrfA{w'_i}}$ and $\vrt{f''} = \vrt{\wrfB{w'_i}}$.
  By Lemma~\apndxLemJayIILin{}, we have
  $\rep{\scoff{\vsc} \not\hb \wrfxB{f'}}$ and
  $\rep{\scoff{\vsc} \not\hb \wrfxB{f''}}$ and
  $\rep{\wrfbB{f''} \hb \scb{\vsc}{i}}$, similarly to above with
  \cref{ax:fwd2-sconuniq} we derive
  $\rep{\scon{\vsc} \robs \wrfxB{f'}}$ and
  $\rep{\scon{\vsc} \robs \wrfxB{f''}}$, meaning $\rep{\wrfbB{f'}}$
  and $\rep{\wrfbB{f''}}$ will be executed, meaning we have two
  consecutive $\LLins$/$\SCins$ pairs of $\vrt{f'}$ and $\vrt{f''}$.
  Since we have $\rep{\scon{\vsc} \hb \wrx{w'_i}}$ and
  $\rep{\wrfbB{f''} \hb \scb{\vsc}{i}}$, it is not possible for any
  $\rep{\scr{\vsc'}{i}}$ to occur in the intervals of the
  $\LLins$/$\SCins$ pairs, therefore we can apply
  \cref{lem:llsc-double}, thus there exists some $\vrt{f'''}$ such
  that $\rep{\wrfbA{f'''} \not\hb \wrfbA{f'}}$ and
  $\rep{\wrfbB{f'''} \hbeq \wrfbB{f''}}$.
   
  By \cref{ax:mem2-wrtotal}, we either have
  $\rep{\wrfbB{f} \hb \wrfbB{f'''}}$ or
  $\rep{\wrfbB{f'''} \hbeq \wrfbB{f}}$. In the first case, we have
  \[
    \rep{\wrfbB{f} \hb \wrfbB{f'''} \hbeq \wrfbB{f''} \hb
      \scb{\vsc}{i}}
  \]
  which contradicts \cref{ax:mem2-nowrbetween} by $\rep{\wrfbB{f'''}}$
  occurring in between $\rep{\wrfbB{f} \robs \scb{\vsc}{i}}$. In the
  second case, if we have $\rep{\wrfbB{f'''} \hb \wrfbB{f}}$ then by
  \cref{lem:llsc-seq}, and $\rep{\wrfbB{f}}$ and $\rep{\wrfbB{f'''}}$
  being successful, we have $\rep{\wrfbB{f'''} \hb \wrfbA{f}}$, and
  this also follows in the case of $\rep{\wrfbB{f'''} = \wrfbB{f}}$.
  By \cref{ax:interval} over $\rep{\wra{w'_i} \rb \wrfbA{f'}}$ and
  $\rep{\wrfbA{f'''} \rb \wrfbB{f'''}}$ we have
  $\rep{\wra{w'_i} \rb \wrfbB{f'''}}$ since we cannot have
  $\rep{\wrfbA{f'''} \rb \wrfbA{f'}}$ by
  $\rep{\wrfbA{f'''} \not\hb \wrfbA{f'}}$. Thus, we have
  \[
    \rep{\wra{w_i} \hb \wra{w'_i} \rb \wrfbB{f'''} \hb \wrfbA{f} \rb \wrfa{f}}
  \]
  which contradicts \cref{ax:mem-nowrbetween} by $\rep{\wra{w'_i}}$
  occurring in between $\rep{\wra{w_i} \robs \wrfa{f}}$.
\end{proof}

\begin{lemma}\label{lem:jay2-fwd2}
  Every execution of \cref{alg:jay2} satisfies the $\mwFwdSig$ signature
  (\cref{fig:forward2}).
\end{lemma}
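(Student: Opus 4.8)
The plan is to read off all the data postulated by $\mwFwdSig$ directly from the code of \cref{alg:jay2} and then discharge the axioms one by one, with the structural ones falling out of the sequential shape of the three procedures and only the $\resX$-related ones needing genuine $\LLins/\SCins$ reasoning. Since \cref{alg:jay2} is single-scanner, I would take $\vrt{\Vsc} = \abs{S}$ with $\SMap{[-]}$ the identity, each $\vrt{\vsc}.\resB$ the global forwarding array, and $\rep{\sconobs{\vsc}} = \rep{\scon{\vsc}}$, $\rep{\scoffobs{\vsc}} = \rep{\scoff{\vsc}}$; the rep events of a scan, a write or a forwarding are those marked at the corresponding code lines, and the subevent obligations \cref{ax:evsig-subev,ax:evsig-uniq,ax:evsig-partial} hold because each such rep event lies in the interval of its enclosing invocation, each cell is mutated at one syntactic position per invocation, and $\rep{\wra{w_i}}$ etc.\ are undefined exactly when control has not reached that line. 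As for register typing: $\resA[i]$ is touched only by plain reads and writes, so it is an $\aregSig$ instance, while $\resX$ and $\resB[i]$ also use $\LLins/\SCins/\VLins$, so they are $\llregSig$ instances; their axioms are granted by the memory model. Finally I would set $\abs{[\vrt{f}]_w}$ to be the $\writeProc_i$ event that invoked $\vrt{f}$ and $\vrt{[f]_\vsc}$ to be the unique $\vrt{\vsc}$ with $\rep{\scon{\vsc} \robs \wrx{w_i}}$ for $\abs{w_i} = \abs{[\vrt{f}]_w}$; this is well-defined because a forwarding runs only when $\rep{\wrx{w_i}} = \LLins(\resX)$ returns $\true$, and by \cref{ax:mem2-io} the write-like event it observes then wrote $\true$ into $\resX$, which (since \cref{alg:jay2} never does $\SCins(\resX,\cdot)$) can only be some $\rep{\scon{\vsc}}$.

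Most axioms are then immediate. \cref{ax:fwd-vrtinscan} is $\abs{s} \subev \abs{s}$; \cref{ax:fwd-sctotal} is single-scanner-ness; \cref{ax:fwd2-scstruct,ax:fwd2-wrstruct,ax:fwd2-fwdstruct} just record the intra-thread execution order of the rep events of $\scanProc$, $\writeProc$ and $\forwardProc$, the $\rep{\llobs}$ clause of the last because $\rep{\wrfbA{f}} = \LLins(\resB[i])$ and $\rep{\wrfbB{f}} = \SCins(\resB[i],\cdot)$ are the $\LLins$/$\SCins$ pair of a single $\forwardProc$ call. \cref{ax:fwd-wrauniq} holds since $\resA[i]$ is written only by the assignment $\resA[i] := v$ in $\writeProc(i,\cdot)$; \cref{ax:fwd-scruniq,ax:fwd2-fbBuniq} hold since the writes into $\resB[i]$ are exactly the $\bot$-writing assignment $\resB[i] := \bot$ in $\scanProc$ (a $\rep{\scr{\vsc}{i}}$, writing only $\bot$) and the successful $\SCins(\resB[i],v)$ in $\forwardProc$ (a $\rep{\wrfbB{f}}$, writing $v \ne \bot$ just read from $\resA[i]$). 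For \cref{ax:fwd-io} I would trace the data flow of a terminated $\abs{s}$: for each $i$, $\abs{s}.\evOut[i]$ is the value of $\rep{\sca{s}{i}}$ if $\rep{\scb{s}{i}}$ read $\bot$, and the value of $\rep{\scb{s}{i}}$ otherwise; in the first case single-scanner-ness makes $\rep{\scr{s}{i}}$ the only $\bot$-writer of $\resB[i]$ before $\rep{\scb{s}{i}}$, so \cref{ax:mem2-nowrbetween} forces $\rep{\scr{\vsc}{i} \robs \scb{\vsc}{i}}$ and, with $\abs{w_i}$ the writer observed by $\rep{\sca{s}{i}}$ (via \cref{ax:mem-io} and \cref{ax:fwd-wrauniq}), the direct disjunct of $\abs{\robs}$ gives $\abs{w_i \robs \SMap{\abs{s}}}$ with $\abs{w_i}.\evIn = \abs{s}.\evOut[i]$; in the second case \cref{ax:fwd2-fbBuniq} gives a $\rep{\wrfbB{f}}$ writing that value, read from $\resA[i]$ via $\rep{\wrfa{f}}$ from some $\rep{\wra{w_i}}$, hence $\abs{w_i \fobs \SMap{\abs{s}}}$ and again $\abs{w_i}.\evIn = \abs{s}.\evOut[i]$.

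The two substantive axioms concern $\resX$. For \cref{ax:fwd2-sconuniq}: by \cref{ax:mem2-wrtotal} the write-like events into $\resX$ are $\rep{\hb}$-totally ordered, and since writers never write $\resX$ and scans do not overlap this order is the alternation $\rep{\scon{s_1}} \hb \rep{\scoff{s_1}} \hb \rep{\scon{s_2}} \hb \cdots$ in which $\rep{\scoff{\vsc}}$ immediately follows $\rep{\scon{\vsc}}$; since a read-like event observes the $\rep{\hb}$-latest write-like event below it (\cref{ax:mem2-robspop,ax:mem2-nowrbetween}), $\rep{\scon{\vsc} \robs e_r}$ holds exactly when $\rep{\scon{\vsc} \hb e_r}$ but $\rep{\scoff{\vsc} \not\hb e_r}$. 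For \cref{ax:fwd2-fwdprecond}: $\rep{\scon{\vsc} \robs \wrx{w_i}}$ is how $\vrt{[f]_\vsc}$ was chosen, $\rep{\wrx{w_i}} \rb \vrt{f}$ because the $\LLins(\resX)$ finishes before the $\forwardProc$ call starts, and $\vrt{f}.\resBcell = \vrt{\vsc}.\resB[i]$ as both name the global $\resB[i]$. For \cref{ax:fwd2-fwdsccond}: if $\rep{\wrfbB{f}}$ is defined then its guard forced $\rep{\wrfxB{f}} = \VLins(\resX)$ to succeed (\cref{ax:vl-io}), and since $\forwardProc$ contains no $\LLins(\resX)$, the most recent $\LLins(\resX)$ of that thread is $\rep{\wrx{w_i}}$, so $\rep{\wrx{w_i} \llobs \wrfxB{f}}$; then \cref{ax:llsc-success} says the successful $\rep{\wrfxB{f}}$ observes the same write-like event as $\rep{\wrx{w_i}}$, namely $\rep{\scon{\vsc}}$ with $\vrt{\vsc} = \vrt{[f]_\vsc}$, giving $\rep{\scon{\vsc} \robs \wrfxB{f}}$.

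I expect the main obstacle to be the cross-procedure $\LLins/\SCins$ bookkeeping: making it fully precise that the $\VLins(\resX)$ inside a $\forwardProc$ invocation is $\rep{\llobs}$-linked to the $\LLins(\resX)$ of the \emph{enclosing} $\writeProc$ (needed for \cref{ax:fwd2-fwdsccond} and for pinning down $\vrt{[f]_\vsc}$), together with the alternation argument for \cref{ax:fwd2-sconuniq} and the careful treatment of partially-executed writes and forwardings under \cref{ax:evsig-partial}. The data-flow argument for \cref{ax:fwd-io} is routine but fiddly, and leans throughout on single-scanner-ness to know that no foreign $\rep{\scr{\vsc'}{i}}$ disturbs $\resB[i]$ during a scan.
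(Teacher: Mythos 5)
Your proposal is correct and follows essentially the same route as the paper: instantiate $\vrt{\Vsc}=\abs{S}$ with the identity $\SMap{[-]}$, read the rep events and the maps $\abs{[\vrt{-}]_w}$, $\vrt{[-]_\vsc}$ off the code, discharge the structural axioms from the sequential shape of the procedures, and isolate \cref{ax:fwd2-sconuniq} as the one property needing a genuine argument. Your inline alternation argument for \cref{ax:fwd2-sconuniq} is exactly the content of the paper's Lemma~\apndxLemJayIObsX{}, which the paper simply cites here; the extra detail you supply for \cref{ax:fwd-io}, \cref{ax:fwd2-fwdprecond} and \cref{ax:fwd2-fwdsccond} correctly fills in what the paper compresses into ``by the structure of the algorithm''.
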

\begin{proof}
  Since this algorithm is single scanner, we can simply define the set
  of virtual scans to be the same as the set of abs scans and map each
  abs scan to itself, i.e., $\vrt{\Vsc} = \abs{S}$ and
  $\SMap{\abs{s}} = \abs{s}$. Each rep event directly corresponds to
  their equivalent variant in \cref{alg:jay2}, except for
  $\rep{\sconobs{\vsc}}$ and $\rep{\scoffobs{\vsc}}$, which are set as
  $\rep{\sconobs{\vsc}} = \rep{\scon{\vsc}} = \rep{\scon{s}}$ and
  $\rep{\scoffobs{\vsc}} = \rep{\scoff{\vsc}} = \rep{\scoff{s}}$.
  \cref{ax:fwd-vrtinscan} holds by each event being a
  subevent of itself ($\abs{e \subev e}$) and \cref{ax:fwd-sctotal}
  holds since the algorithm is single-scanner.
  Properties~\eqref{ax:fwd-io} to \eqref{ax:fwd2-fbBuniq} and
  \eqref{ax:fwd2-scstruct} to \eqref{ax:fwd2-fwdsccond} holds directly
  by the structure of the algorithm and \cref{ax:fwd2-sconuniq} can be
  proven by Lemma~\apndxLemJayIObsX{}\ifNotAppendix{
    from~\cite{extended}}, the same lemma that established this
  property for \cref{alg:jay1}, since the relative structure of
  $\rep{\scon{s}}$ and $\rep{\scoff{s}}$ is the same.
\end{proof}

\subsection{Multi-Writer, Multi-Scanner}\label{sec:jay3}

\renewcommand{\AComment}[1]{\Comment{\makebox[8mm][l]{\abs{#1}}}}
\renewcommand{\RComment}[1]{\Comment{\makebox[8mm][l]{\rep{#1}}}}

\begin{algorithm}[t]
  \setlength\multicolsep{0pt}
  \begin{multicols}{2}
  \begin{algorithmic}[1]
    \Resource $\resA : \arrayType{n}{\valType}$
    \Resource $\resA_p : \arrayType{n}{\valType}$
    \Resource $\resB_p : \arrayType{n}{\valType \cup \left\{\bot\right\}}$
    \Resource $\resX :\scaleleftright[1.75ex]{<}
      {\,\begin{aligned}
        \phase &: \left\{1,2,3\right\} \set 1\\
        \procA\ \procB &: \PID\\
        \toggle &: \boolType \set \false
      \end{aligned}\,}{>}$
    \Resource $\resSS :\scaleleftright[1.75ex]{<}
    {\,\begin{aligned}
        \snapshot &: \arrayType{n}{\valType}\\
        \toggle &: \boolType \set \false
      \end{aligned}\,}{>}$
    \Statex

    \Procedure{\writeProc}{$i : \natType, v : \valType$}{} \AComment{$w_i$}
    \miniskip
      \State $\resA[i] \set v$ \RComment{$\wra{w_i}$}
      \State $x \gets \LLins(\resX)$ \RComment{$\wrx{w_i}$}
      \If{$\xPhase = 2$}
        \State \Call{\forwardProc}{$i,\xB$} \RComment{$\vrt{\wrfA{w_i}}$}
        \State \Call{\forwardProc}{$i,\xB$} \RComment{$\vrt{\wrfB{w_i}}$}
      \EndIf
    \EndProcedure
    \Statex

    \Procedure{\forwardProc}{$i : \natType, p : \PID$}{} \AComment{$\vrt{f}$}
    \miniskip
      \State $\LLins(\resB_p[i])$ \RComment{$\wrfbA{f}$}
      \State $v \gets \resA[i]$ \RComment{$\wrfa{f}$}
      \State $x' \gets \VLins(\resX)$ \RComment{$\wrfxB{f}$}
      \If{$x'$}
        $\SCins(\resB_p[i],v)$ \RComment{$\wrfbB{f}$}
      \EndIf
    \EndProcedure
    \Statex

    \Procedure{\scanProc}{}{$\arrayType{n}{\valType}$} \AComment{$s$}
    \miniskip
      \State $p \gets \getpid$
      \State \Call{\pushLSProc}{$p$} \RComment{$\vrt{\scvA}$}
      \State \Call{\pushLSProc}{$p$} \RComment{$\vrt{\scvB}$}
      \State $\mathit{ss} \gets \resSS$ \RComment{$\scx$}
      \State \textbf{return} $\ssSnapshot$
    \EndProcedure
    \columnbreak
    
    \Procedure{\pushLSProc}{$p : \PID$}{} \AComment{$\vrt{\scv}$}
    \miniskip
      \State $x \gets \LLins(\resX)$ \RComment{$\scvx$}
    
      \If{$\xPhase = 1$}
        \For{$i \in \left\{ 0 \dots n-1 \right\}$}
          \State $\resB_p[i] \set \bot$ \RComment{$\scvr_i$}
        \EndFor
        \State $\SCins( \resX,\left\langle 2,\xA, p,\xToggle\right\rangle )$ \RComment{$\scvon$}
        \State $x \gets \LLins(\resX)$ \RComment{$\scvx$}
      \EndIf

      \If{$\xPhase = 2$}
        \For{$i \in \left\{ 0 \dots n-1 \right\}$}
          \State $a \gets \resA[i]$ \RComment{$\scva_i$}
          \State $\resA_p[i] \set a$ \RComment{$\scvaA_i$}
        \EndFor
        \State $\SCins( \resX,\left\langle 3,p,\xB,\xToggle\right\rangle )$ \RComment{$\scvoff$}
        \State $x \gets \LLins(\resX)$ \RComment{$\scvx$}
    \EndIf

      \If{$\xPhase = 3$}
        \For{$i \in \left\{ 0 \dots n-1 \right\}$}
          \State $b \gets \resB_{\xB}[i]$ \RComment{$\scvb_i$}
          \If{$b \neq \bot$}
            \State $V[i] \set b$
          \Else
            \State $a \gets \resA_{\xA}[i]$ \RComment{$\scvaB_i$}
            \State $V[i] \set a$
          \EndIf
        \EndFor
        \State $\mathit{ss} \gets \LLins(\resSS)$ \RComment{$\scvssA$}
        \State $x' \gets \VLins(\resX)$ \RComment{$\scvxB$}
        \If{$\ssToggle = \xToggle \land x'$}
          \State $\SCins(\resSS, \left\langle V, \neg\ssToggle \right\rangle )$ \RComment{$\scvss$}
          \EndIf
        \State $\SCins( \resX,\left\langle 1,\xA,\xB, \neg\xToggle \right\rangle )$ \RComment{$\scvend$}
      \EndIf
    \EndProcedure
  \end{algorithmic}
  \end{multicols}
  \caption{\label{alg:jay3} Jayanti's multi-writer, multi-scanner snapshot algorithm.\vspace{-4mm}}
\end{algorithm}

\cref{alg:jay3} is the final of the Jayanti's snapshot algorithms,
allowing unconstrained write and scan operations. It differs from
\cref{alg:jay2} in the implementation of \scanProc{}, while
\writeProc{} is mostly unchanged.
While \cref{alg:jay3} allows for multiple concurrently running
physical scanners, under the hood, at most one \emph{virtual} scanner
can be logically viewed as running at any point in time. 
The physical scans maintain their own local snapshot and forwarding
arrays, but the shared resource $\resX$, via the fields $\resX.\procA$
and $\resX.\procB$, keeps track which local snapshot array and local
forwarding array, respectively, should be considered as the current
data of the virtual scan. The physical scanners race to modify these
fields in order to promote their own local arrays as the arrays of the
virtual scan. However, $\resX$ can be modified only by $\SCins$; thus,
by \cref{lem:llsc-seq}, mutations to $\resX$ occur sequentially, and
by \cref{lem:llsc-interference}, at least one mutation exists. The
sequential nature of these mutations justifies viewing them as
belonging to a single, albeit virtual, scan. Once the virtual scan
terminates, a snapshot is copied into $\resSS$ ($\rep{\scvss}$), which
scanners read ($\rep{\scx}$) and return. We follow Jayanti in assuming
that $\resSS$, even though physically a compound object, can still be
considered an $\LLins$/$\SCins$ register; this is not a loss of
generality, as multi-word $\LLins$/$\SCins$ registers
exist~\cite{jayanti:multi-llsc}.

The \pushLSProc{} procedure (short for ``push virtual scan'')
propels the virtual scanner to its completion, producing a snapshot in
$\resSS$.
The procedure is divided into three phases, tracked by the
field $\resX.\phase$, corresponding to the three phases of the
previous algorithms: the first phase sets each memory cell in $\resB$
to $\bot$ with $\rep{\scvr_i}$; the second phase reads the main memory
$\resA$ with $\rep{\scva_i}$; and the third phase reads the forwarding
memory $\resB$ with $\rep{\scvb_i}$. However, unlike before when
$\resB$ was a global array, here we have one $\resB_p$ for each scan
process $p$. A scanner that successfully writes with $\rep{\scvon}$ in
the first phase will have its $\resB_p$ used for forwarding in the
current virtual scan. For the second phase, the algorithm uses
$\resA_p$ to store the results of reading $\resA$, and similarly the
process $p$ which successfully writes with $\rep{\scvoff}$ will have
its $\resA_p$ used for the next phase. The third phase uses $\resA_p$
and $\resB_q$ from the processes stored in $\resX$ to construct the
snapshot, which is then stored in $\resSS$ with $\rep{\scvss}$.

Each phase starts with an $\LLins(\resX)$ (three lines labeled
$\rep{\scvx}$) and ends with an $\SCins(\resX)$. This ensures that if
multiple physical scanners run a phase simultaneously, only a single
$\SCins$ operation succeeds writing to $\resX$, committing the results
of the phase. Thus, considered across all physical scanners,
successful phases cannot overlap, by \cref{lem:llsc-seq}. Also, phases
must run consecutively, because a phase in some scanner can only start
if the previous phase, maybe executed by some other scanner, has been
terminated, incrementing $\resX.\phase$ to enable the next phase
($\rep{\scvon}$, $\rep{\scvoff}$, and $\rep{\scvend}$). 
Because the phases cannot overlap, must be executed in order, and
correspond to the scanner phases in the previous algorithms, a trace of
a virtual scan in \cref{alg:jay3} has essentially the same structure
as a trace of a physical scan in the \cref{alg:jay1,alg:jay2}.

To ensure that $\resSS$ is only mutated once per virtual scan, we have
two toggle bits $\resX.\toggle$ and $\resSS.\toggle$ as part of
$\resX$ and $\resSS$ respectively. These function such that every time
$\rep{\scvend}$ successfully writes, $\resX.\toggle$ is negated.
$\resSS$ can only be written to by $\rep{\scvss}$, which can only be
executed if $\resX.\toggle$ and $\resSS.\toggle$ are
equal\footnote{Jayanti's original presentation is dual, as the sync
  bits need to be distinct for writing to $\resSS$, but this is
  equivalent.} and $\resX$ did not change during the phase. Once
$\rep{\scvss}$ successfully writes, $\resSS.\toggle$ will be negated,
meaning no more writes into $\resSS$ can occur in this phase since
$\resX.\toggle$ and $\resSS.\toggle$ are now distinct. We also know
that there has to be exactly one successful write with $\rep{\scvss}$
before $\rep{\scvend}$, since $\rep{\scvss}$ only fails if some other
$\rep{\scvss}$ succeeded, or some other $\rep{\scvend}$ ended the
phase.

It is also vital for linearizability that the virtual scan is in the
time interval of the abs \scanProc{} returning it. To ensure the
virtual scan ends before the abs scanner ends, the latest virtual scan
is finished before \pushLSProc{} terminates. Dually, to ensure that
the virtual scan was not started before the abs scanner starts,
\pushLSProc{} is executed twice; once to finish the currently running
virtual scan, and once to generate a virtual scan that can be used by
the abs scanner. Executing \pushLSProc{} twice is similar to the idea of
executing \forwardProc{} twice in \cref{alg:jay2}.

\begin{lemma}\label{lem:jay3-fwd2}
  Every execution of \cref{alg:jay3} satisfies the $\mwFwdSig$ signature
  (\cref{fig:forward2}).
\end{lemma}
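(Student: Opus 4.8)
The plan is to exhibit, for an arbitrary execution of \cref{alg:jay3}, the abstract components required by $\mwFwdSig$ — the set $\vrt{\Vsc}$ of virtual scans, the partial map $\SMap{[-]}$, and the two functions $\abs{[\vrt{-}]_w}$ and $\vrt{[-]_\vsc}$ on forwarding events — and then verify the axioms \eqref{ax:fwd-vrtinscan}--\eqref{ax:fwd2-fwdsccond} one by one. The conceptual heart of the construction is the definition of a virtual scan: since every phase boundary of $\scanProc$ is a successful $\SCins$ into $\resX$, \cref{lem:llsc-seq} tells us these commits are totally ordered by $\rep{\hb}$ and never overlap, so a maximal run of three consecutive successful commits ($\rep{\scvon}$-commit, then $\rep{\scvoff}$-commit, then $\rep{\scvend}$-commit, possibly performed by different physical $\pushLSProc$ invocations) can be bundled into a single virtual scan $\vrt{\vsc}$. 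Concretely I would define $\vrt{\vsc}.\resB$ from the $\procB$ field committed by $\rep{\scvon}$, $\vrt{\vsc}.\resA$ from the $\procA$ committed by $\rep{\scvoff}$, take $\rep{\scr{\vsc}{i}},\rep{\sca{\vsc}{i}},\rep{\scb{\vsc}{i}}$ to be the $\rep{\scvr_i}$, $\rep{\scva_i}$, $\rep{\scvb_i}$ events performed by the \emph{winning} physical scanners of the respective phases, set $\rep{\scon{\vsc}}$ and $\rep{\scoff{\vsc}}$ to the $\rep{\scvon}$ and $\rep{\scvoff}$ commit events, and set $\rep{\sconobs{\vsc}}$/$\rep{\scoffobs{\vsc}}$ to the re-reading $\rep{\scvx}$ (the $\LLins(\resX)$ at the head of the next phase, which observes the just-committed value). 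The virtual scan $\vrt{\vsc}$ spans from its $\rep{\scr{\vsc}{0}}$ through its $\rep{\scvend}$-commit, i.e.\ through the $\rep{\scvss}$ write to $\resSS$ it induces. For $\SMap{\abs{s}}$, map the abs scan $\abs{s}$ to the virtual scan whose $\rep{\scvss}$-write is the one read by $\abs{s}$'s $\rep{\scx}$ (if $\abs{s}$ has terminated); for $\abs{[\vrt{f}]_w}$ take the abs write that invoked $\vrt{f}$, and for $\vrt{[f]_\vsc}$ the virtual scan that committed the $\resX.\phase=2$ state whose $\procB$ is the PID argument $p$ of $\forwardProc$.

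With the construction fixed, most axioms are routine structural bookkeeping: \eqref{ax:fwd-wrauniq}, \eqref{ax:fwd-scruniq}, \eqref{ax:fwd2-fbBuniq} follow because the only code writing $\resA[i]$ is $\rep{\wra{w_i}}$, the only code writing $\bot$ into $\resB_p[i]$ is $\rep{\scvr_i}$, and the only code writing non-$\bot$ into $\resB_p[i]$ is $\rep{\wrfbB{f}}$ (each readable off the algorithm text plus \cref{ax:evsig-uniq}); \eqref{ax:fwd2-wrstruct} and \eqref{ax:fwd2-fwdstruct} are immediate from program order in $\writeProc$/$\forwardProc$, with the $\rep{\wrfbA{f} \llobs \wrfbB{f}}$ conjunct coming from \cref{ax:mem2-llobsparent} since both sit in the same $\forwardProc$ instance. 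Axiom \eqref{ax:fwd2-scstruct} is the key structural claim: $\rep{\scr{\vsc}{i} \rb \scon{\vsc}}$ holds because the winning $\rep{\scvon}$-commit happens after its own $\rep{\scvr_i}$ in program order, and since phases are totally ordered by \cref{lem:llsc-seq} with the winning $\rep{\scvr_i}$ of $\vrt\vsc$ living before \emph{that} commit; $\rep{\scon{\vsc} \robseq \sconobs{\vsc}}$ because $\rep{\sconobs{\vsc}}$ is the $\LLins(\resX)$ that reads exactly the value $\rep{\scon{\vsc}}$ wrote (this uses \cref{ax:mem2-nowrbetween}/\cref{lem:llsc-seq} to rule out intervening writes); $\rep{\sconobs{\vsc} \rb \sca{\vsc}{i}}$ because the winning second-phase scanner reads $\resA[i]$ only after its own head $\LLins(\resX)$; and symmetrically for the $\scoff$ side. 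Axiom \eqref{ax:fwd-sctotal} — virtual scans are $\rb$-totally-ordered — is genuinely nontrivial in the multi-scanner setting and is where \cref{lem:llsc-seq} does the real work: two distinct virtual scans have disjoint triples of $\resX$-commits, and since the $\rep{\scvend}$-commit of the earlier one $\rep{\hb}$-precedes the $\rep{\scvon}$-commit of the later one, an $\rb$-separation argument (using that the $\rep{\scvend}$-commit is the last event of the first virtual scan and the first-phase $\rep{\scvr}$'s of the second all return before the second scan's own $\rep{\scvon}$-commit, hence after the first's $\rep{\scvend}$-commit) yields $\vrt{\vsc \rb \vsc'}$. Axioms \eqref{ax:fwd-vrtinscan} and \eqref{ax:fwd-io} are the ``virtual scan lies inside its abs scan and returns the right values'' claims: \eqref{ax:fwd-io} follows since $\abs{s}.\evOut$ is $\ssSnapshot$ copied by $\rep{\scvss}$, which was assembled in the third phase of $\SMap{\abs{s}}$ from $\resB_{\procB}$ and $\resA_{\procA}$ values, each of which is exactly the value of some $\abs{w_i}$ observed by $\vrt{\vsc}$'s $\rep{\scb{\vsc}{i}}$ (non-$\bot$ case) or $\rep{\sca{\vsc}{i}}$ (the $\bot$ case, routed through $\resA_p$); and \eqref{ax:fwd-vrtinscan} needs the ``$\pushLSProc$ executed twice'' argument sketched in the text — the first $\pushLSProc$ call of $\abs{s}$ finishes whatever virtual scan was in progress, so the virtual scan $\abs{s}$ ends up reading was started no earlier than $\abs{s}$'s first $\rep{\scvx}$, and it terminates before $\abs{s}$'s $\rep{\scx}$, hence before $\abs{s}$ ends. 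Finally \eqref{ax:fwd2-sconuniq}, \eqref{ax:fwd2-fwdprecond}, \eqref{ax:fwd2-fwdsccond} track the $\resX$-protocol directly: \eqref{ax:fwd2-sconuniq} is the analogue of Lemma~\apndxLemJayIObsX{} for the phase field and follows because between the $\rep{\scvon}$-commit and the next successful $\resX$-write (which is $\rep{\scvoff}$ of the same virtual scan) there is no other write to $\resX$, by \cref{lem:llsc-seq}; \eqref{ax:fwd2-fwdprecond} holds because $\forwardProc$ is invoked from $\writeProc$ only when the $\LLins(\resX)$ at $\rep{\wrx{w_i}}$ saw $\xPhase = 2$, and that phase-2 value is exactly the one written by some $\rep{\scvon}$ whose associated virtual scan is $\vrt{[f]_\vsc}$, with $\vrt{f}.\resBcell = \resB_{p}[i] = \vrt{\vsc}.\resB[i]$ since $p = \xB$ is the committed $\procB$; \eqref{ax:fwd2-fwdsccond} holds because $\rep{\wrfbB{f}}$ executes only after $\rep{\wrfxB{f}} = \VLins(\resX)$ succeeds, and a successful $\VLins$ means $\resX$ is unchanged since the $\forwardProc$'s preceding $\LLins$ — but that $\LLins$ is not in $\forwardProc$; rather $\rep{\wrfxB{f}}$ observes (via \cref{ax:mem2-io} plus the $\llobs$ chain to $\rep{\wrx{w_i}}$) the same phase-2 $\resX$ value, i.e.\ $\rep{\scon{\vsc} \robs \wrfxB{f}}$.

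The main obstacle I anticipate is \eqref{ax:fwd2-scstruct} together with the well-definedness of the virtual-scan construction itself: one must show that the three successful $\resX$-commits of phases $1,2,3$ genuinely come in alternation with no ``skipped'' or ``doubled'' phase, that the winning $\rep{\scvr_i}$/$\rep{\scva_i}$/$\rep{\scvb_i}$ events are uniquely determined and indeed subevents of the correct virtual scan (satisfying \cref{ax:evsig-subev}), and that the interleaving of distinct physical $\pushLSProc$ calls across a single virtual scan cannot produce an event ordering violating \eqref{ax:fwd2-scstruct}. This requires carefully chaining \cref{lem:llsc-seq}, \cref{ax:mem2-llobsparent}, and \cref{ax:mem2-nowrbetween} to pin down exactly which physical operations ``belong'' to the virtual scan — in effect reconstructing Jayanti's informal ``the phases of the virtual scan run consecutively'' claim as a rigorous statement about $\rep{\hb}$. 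Once this skeleton is in place, the remaining axioms, and in particular the delicate toggle-bit reasoning guaranteeing a unique $\rep{\scvss}$-write per virtual scan (needed for \eqref{ax:fwd-io}), follow by localized arguments over $\resX.\toggle$ and $\resSS.\toggle$ using \cref{ax:llsc-success} and \cref{ax:mem2-nowrbetween}. Full details are deferred to Appendix~\apndxJayIII{}.
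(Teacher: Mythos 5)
Your construction is essentially the paper's: you bundle three consecutive successful $\SCins$ commits into $\resX$ into one virtual scan, draw $\rep{\scr{\vsc}{i}}$, $\rep{\sca{\vsc}{i}}$, $\rep{\scb{\vsc}{i}}$ from the winning physical \pushLSProc{} invocations, take $\rep{\sconobs{\vsc}}$/$\rep{\scoffobs{\vsc}}$ to be the next phase's head $\LLins(\resX)$, define $\SMap{\abs{s}}$ by which $\resSS$-write $\rep{\scx}$ observes, and you correctly isolate the same hard points the paper does: \cref{ax:fwd-sctotal}, \cref{ax:fwd-vrtinscan}, and the toggle-bit argument for a unique $\rep{\scvss}$ per virtual scan (the paper's Lemma~\apndxLemJayIIISSEnd{}).

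One concrete problem: you declare that $\vrt{\vsc}$ ``spans from its $\rep{\scr{\vsc}{0}}$ through its $\rep{\scvend}$-commit, i.e.\ through the $\rep{\scvss}$ write.'' With that interval, \cref{ax:fwd-vrtinscan} is in jeopardy. The only link between $\abs{s}$ and the $\resSS$-write is $\rep{\scss{\vsc} \robs \scx}$, and \cref{ax:wfobs} yields only $\rep{\scx \not\rbeq \scss{\vsc}}$, i.e., a bound on $\rep{\scss{\vsc}}.\evStart$; it bounds neither $\rep{\scss{\vsc}}.\evEnd$ nor the end-commit's $\evEnd$. In this event model a successful $\SCins$ issued by a slow competing scanner can take effect, and be observed, well before the operation returns, so the $\rep{\scvss}$-write and the $\rep{\scvend}$-commit may outlive $\abs{s}$. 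The same gap affects your phrase ``it terminates before $\abs{s}$'s $\rep{\scx}$.'' The paper avoids this by keeping $\rep{\scss{\vsc}}$ and $\rep{\scend{\vsc}}$ outside the virtual scan's interval (they are not fields of the $\vscanProc$+ event signature) and proving containment only for the extremal signature events, using $\rep{\scb{\vsc}{i} \rb \scss{\vsc} \robs \scx}$ for the right-hand bound and the double-\pushLSProc{} argument for the left. Once you shrink the interval accordingly, the rest of your outline, including the $\rb$-separation of virtual scans via the chain $\rep{\scb{\vsc}{i} \rb \scss{\vsc} \hb \scend{\vsc} \hb \scxinit{\vsc'} \rb \scr{\vsc'}{i}}$, matches the appendix proof.
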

\begin{proof}[Proof sketch]
  We instantiate the write and forward rep events for the event
  structures of \cref{fig:struct2} with rep events of the same name.
  The structure of \writeProc{} and \forwardProc{} ensures that
  Properties~\eqref{ax:fwd-wrauniq}, \eqref{ax:fwd2-fbBuniq} and
  \eqref{ax:fwd2-wrstruct} to \eqref{ax:fwd2-fwdsccond} are satisfied.

  We instantiate virtual scans as a logical object corresponding to
  the rep events executed by \pushLSProc{} constructing a complete
  virtual scan. The interval of a virtual scan $\vrt{\vsc}$ is
  instantiated as the smallest interval containing all its rep events.
  More formally, if we have $\vrt{\scv}$, $\vrt{\scv'}$ and
  $\vrt{\scv''}$ such that $\rep{\scvon \robs \scvx'}$ and
  $\rep{\scvoff' \robs \scvx''}$ and $\rep{\scvss''}$ wrote to
  $\resSS$, then there exists a virtual scan $\vrt{\vsc}$. Its rep
  events are instantiated using the rep events of $\vrt{\scv}$,
  $\vrt{\scv'}$ and $\vrt{\scv''}$ as follows and instantiate
  $\SMap{\abs{s}} = \vrt{\vsc}$ iff $\rep{\scvss'' \robs \scx}$.
  \begin{align*}
    \rep{\scr{\vsc}{i}} &\defeq \rep{\scvr_i}
    &\rep{\scon{\vsc}} &\defeq \rep{\scvon}
    &\rep{\sconobs{\vsc}} &\defeq \rep{\scvx'}
    &\rep{\sca{\vsc}{i}} &\defeq \rep{\scva'_i}
    &\rep{\scoff{\vsc}} &\defeq \rep{\scvoff'}
    &\rep{\scoffobs{\vsc}} &\defeq \rep{\scvx''}
    &\rep{\scb{\vsc}{i}} &\defeq \rep{\scvb''_i}
  \end{align*}
  We also let $\rep{\scxinit{\vsc} \defeq \rep{\scvx}}$ and
  $\rep{\scss{\vsc} \defeq \rep{\scvss''}}$ and
  $\rep{\scend{\vsc} \defeq \rep{\scvend'''}}$ where
  $\rep{\scvend'''}$ was successful and
  $\rep{\scvoff' \robs \scvend'''}$, however these do not need to be
  in the interval of $\vrt{\vsc}$.
  \cref{ax:fwd-io,ax:fwd-scruniq,ax:fwd2-sconuniq,ax:fwd2-scstruct}
  follows from this structure, which we show in the
  Appendix~\apndxJayIII{}\ifNotAppendix{ of~\cite{extended}}.

  The most involved part is proving
  \cref{ax:fwd-vrtinscan,ax:fwd-sctotal}, corresponding to
  $\SMap{\abs{s}} \subev \abs{s}$ and
  ($\vrt{\vsc \rb \vsc'} \lor \vrt{\vsc' \rb \vsc}$) respectively,
  which require us to establish the following helper lemmas, presented
  in Appendix~\apndxJayIII{}\ifNotAppendix{ of~\cite{extended}}: By
  Lemma~\apndxLemJayIIISSEnd{}, we have exactly one
  $\rep{\scss{\vsc}}$ per $\vrt{\vsc}$, satisfying
  $\rep{\scss{\vsc} \hb \scend{\vsc}}$, and by
  Lemma~\apndxLemJayIIILSEnd{}, for any terminated $\vrt{\scv}$ and
  $\vrt{\scv'}$ of \pushLSProc{} with $\vrt{\scv \rb \scv'}$, there
  exists $\vrt{\vsc}$ and $\vrt{\vsc'}$ such that
  $\rep{\scend{\vsc} \hb \scend{\vsc'}}$ and
  $\rep{\scvx \not\hb \scend{\vsc}}$.
  With Lemma~\apndxLemJayIIISSEnd{}, we can that show for every $i$ we either have
  $\rep{\scb{\vsc}{i} \rb \scss{\vsc} \hb \scend{\vsc} \hb
    \scxinit{\vsc'} \rb \scr{\vsc'}{i}}$ or
  $\rep{\scb{\vsc'}{i} \rb \scss{\vsc'} \hb \scend{\vsc'} \hb
    \scxinit{\vsc} \rb \scr{\vsc}{i}}$, implying
  \cref{ax:fwd-sctotal}.
  With Lemma~\apndxLemJayIIILSEnd{}, we can show for $\abs{s}$ that
  there exists some $\vrt{\vsc'}$ and $\vrt{\vsc''}$ with
  $\rep{\scend{\vsc'} \hb \scend{\vsc''}}$ and
  $\rep{\scvx \not\hb \scend{\vsc'}}$ where
  $\rep{\scvx} \subev \vrt{\scvA}$. This implies that if we have
  $\vrt{\vsc} = \SMap{\abs{s}}$, then
  $\rep{\scend{\vsc'} \hb \scxinit{\vsc}}$, since some new virtual
  scan must have started after $\rep{\scend{\vsc'}}$, and by the
  instantiation of $\SMap{[-]}$ we have
  $\rep{\scss{\vsc} \robs \scx}$, thus for all $i$ we have
  $ \rep{\scend{\vsc'} \hb \scxinit{\vsc} \rb \scr{\vsc}{i} \rb
    \scb{\vsc}{i} \rb \scss{\vsc} \robs \scx} $, implying
  $\vrt{\vsc} \subev \abs{s}$ and \cref{ax:fwd-vrtinscan}, the last
  step is further explained in Appendix~\apndxJayIII{}\ifNotAppendix{ of~\cite{extended}}.
\end{proof}

\section{Applying to Other Snapshot Algorithms}\label{sec:afek}

\axiomset{A}

To show the generality of our approach, we next apply it to the
snapshot algorithm of \citet{afek:acm93}, presented as
\cref{alg:afek}. It is a single-writer, multi-scanner algorithm.
At its core, $\scanProc$ of \cref{alg:afek} collects snapshots by
reading the memory array $\resA$ twice. We denote the first read of
index $i$ by $\rep{\sca{s_k}{i}}$, and the second by
$\rep{\scb{s_k}{i}}$. The algorithm looks for changes in the array. If
a change is detected, the reading is restarted in the next
iteration. The index $k$ on the reading events identifies the
iteration in which the event occurs. If no change to the array is
detected between the first and second read events, then what was read
is a valid snapshot, reflecting what was in the array at the moment
the last index was first read in the current iteration.
To ensure that changes to the array are properly recognized, each
memory cell holds a version number ($\resA[i].\seqField$) of the
latest write, which is a number that is incremented each time a new
value is written to the cell.

\renewcommand{\AComment}[1]{\Comment{\makebox[8mm][l]{\abs{#1}}}}
\renewcommand{\RComment}[1]{\Comment{\makebox[8mm][l]{\rep{#1}}}}

\begin{algorithm}[t]
  \setlength\multicolsep{0pt}
  \begin{multicols}{2}
  \begin{algorithmic}[1]
    \Resource $\resA : \arrayType{n}{}$
    \Statex \hspace{6em}$\scaleleftright[1.75ex]{<}
      {\,\begin{aligned}
        \dataField &: \valType\\
        \seqField &: \natType\\
        \viewField &: \arrayType{n}{\valType}
      \end{aligned}\,}{>}$
    \columnbreak
    \Procedure{\writeProc}{$i : \natType, v : \valType$}{} \AComment{$w_i$}
    \miniskip
      \State $s \set \scanProc()$ \RComment{$\wrs{w_i}$}
      \State $\resA[i] \set \left\langle v, \resA[i].\seqField + 1, s \right\rangle$ \RComment{$\wra{w_i}$}
        \label{alg:afek-wra}
    \EndProcedure
  \end{algorithmic}
  \end{multicols}
  \noindent\rule{0.94\textwidth}{0.4pt}
  \begin{multicols}{2}
  \begin{algorithmic}[1]
    \setcounterref{ALG@line}{alg:afek-wra}
    \Procedure{\scanProc}{}{$\arrayType{n}{\valType}$} \AComment{$s$}
    \miniskip
    \For{$i \in \left\{ 0 \dots n-1 \right\}$}
    \State $\movedVar[i] \set \false$
    \EndFor
      \While{$\true$} \hspace{3.6em} {\color{gray}$k$-th iteration of loop}
        \For{$i \in \left\{ 0 \dots n-1 \right\}$}
        \State $a[i] \set \resA[i]$ \RComment{$\scanth{s}{i}{k}$}
        \EndFor
        \For{$i \in \left\{ 0 \dots n-1 \right\}$}
        \State $b[i] \set \resA[i]$ \RComment{$\scbnth{s}{i}{k}$}
        \EndFor
        \State $\changedVar \set \false$
        \For{$i \in \left\{ 0 \dots n-1 \right\}$}
          \If{$a[i].\seqField \neq b[i].\seqField$}
            \If{$\movedVar[i]$}
              \State \textbf{return} $b[i].\viewField$ \label{alg:afek-ret2}
            \Else
              \State $\changedVar \set \true$
              \State $\movedVar[i] \set \true$  
            \EndIf
          \EndIf
        \EndFor
        \If{$\neg \changedVar$}
          \State \textbf{return} $(b[0].\dataField, \dots, b[n-1].\dataField)$ \label{alg:afek-ret1}
        \EndIf
      \EndWhile
      \State \textbf{end while}
    \EndProcedure
  \end{algorithmic}
  \end{multicols}
  \caption{\label{alg:afek} Single-writer, multi-scanner snapshot algorithm of
    \citet{afek:acm93}.\vspace{-4mm}
  }
\end{algorithm}

If $\scanProc$ detects that some index $i$ has changed twice during
scanning, it can immediately terminate, returning the \emph{view} of
the latest write into $i$ ($\resA[i].\viewField$). The view itself is
a snapshot each $\writeProc$ collects by calling $\scanProc$ before
writing its value. Keeping views ensures that the snapshot methods are
wait-free. $\scanProc$ tracks how many times index $i$ has changed by
updating the local variable $\movedVar[i]$ each time it detects a
change. Each $\writeProc$ commits its value simultaneously with
incrementing the version number and updating the view.

We next use visibility relations to show that \cref{alg:afek}
satisfies the $\snapshotSig$ signature, and is thus linearizable by
\cref{lem:ss-lin}.

\begin{lemma}
  Every execution of \cref{alg:afek} satisfies the $\snapshotSig$
  signature (\cref{fig:snapshot}).
\end{lemma}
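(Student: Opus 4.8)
The plan is to instantiate the abstract components of the $\snapshotSig$ signature (\cref{fig:snapshot}) from the code of \cref{alg:afek} and then discharge the seven axioms, treating uniformly the two ways a $\scanProc$ returns: after a stable double collect (line~\ref{alg:afek-ret1}), or by reporting the stored $\viewField$ of some write (line~\ref{alg:afek-ret2}). For the instantiation, put $\abs{w_i}\in\abs{\WrEff_i}$ exactly when $\abs{w_i}$ has performed its write $\rep{\wra{w_i}}$ into $\resA[i]$; as this is the last step of $\writeProc$, effectful coincides with terminated, so \cref{ax:ss-wrterm} is immediate, and since \cref{alg:afek} is single-writer the writes into a fixed cell $i$ are issued by one thread and hence $\abs{\rb}$-totally ordered, giving \cref{ax:ss-wrtotal} via $\abs{\rb}\subseteq\abs{\hb}$. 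Define the \emph{source} $\mathsf{src}(\abs{s},i)$ of a terminated scan $\abs{s}$ at index $i$ by: if $\abs{s}$ returns at line~\ref{alg:afek-ret1} in its stable iteration $k$, let $\mathsf{src}(\abs{s},i)$ be the write observed by $\rep{\scbnth{s}{i}{k}}$; if $\abs{s}$ returns at line~\ref{alg:afek-ret2} the $\viewField$ stored by the write $\abs{w_j}$ read by $\rep{\scbnth{s}{j}{k}}$, set $\mathsf{src}(\abs{s},i)\defeq\mathsf{src}(\rep{\wrs{w_j}},i)$, where $\rep{\wrs{w_j}}$ is the (abstract) scan nested inside $\abs{w_j}$ and preceding $\rep{\wra{w_j}}$; this recursion is well founded because every unfolding passes to a scan that terminated strictly earlier. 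Finally set $\abs{w_i\robs s}\defeq(\abs{w_i}=\mathsf{src}(\abs{s},i))$ and $\abs{\obs}\defeq\abs{\robs}$; since $\abs{\robs}$ only relates effectful writes to scans, $\abs{\obs}^{+}=\abs{\robs}$, and \cref{ax:ss-robsuniq} holds because $\mathsf{src}(\abs{s},\cdot)$ is a function.

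The first technical ingredient is a monotonicity lemma: along the $\abs{\rb}$-chain of writes into cell $i$ the field $\seqField$ strictly increases (each write stores the previous version plus one, line~\ref{alg:afek-wra}, using single-writer), so a version number identifies its write uniquely; in particular $\rep{\scanth{s}{i}{k}}$ and $\rep{\scbnth{s}{i}{k}}$ observe the \emph{same} write in a stable iteration, so $\mathsf{src}$ is well defined. For \cref{ax:wfobs} it then suffices, as $\abs{\obs}^{+}=\abs{\robs}$ and a scan and a write differ in $\evOp$, to show $\abs{w_i\robs s}\implies\abs{s\not\rb w_i}$: I exhibit a rep event of $\abs{w_i}$ that $\rep{\hb}$-precedes a rep event of $\abs{s}$ — directly via $\rep{\wra{w_i}}\robs\rep{\scanth{s}{i}{k}}$, and in the view case by composing $\rep{\wrs{w_j}}\rb\rep{\wra{w_j}}\robs\rep{\scbnth{s}{j}{k}}$ with the inductively obtained chain inside $\rep{\wrs{w_j}}$ — so that $\abs{s\rb w_i}$ would, by \cref{ax:rb-absrep}, reverse it into a $\rep{\hb}$-cycle, contradicting \cref{lem:hb-irrefl} for the top-level happens-before (union of all atomic registers). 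Along the way I will need a push-down lemma in the style of \cref{lem:fwd-hbabsrep}, here strengthened to the form: if $\abs{w'_i\hb s}$ then $\rep{\wra{w'_i}}\rep{\hb}\rep{e'}$ for some rep event $\rep{e'}$ of $\abs{s}$ with $\rep{e'}\rb\rep{\scbnth{s}{i}{k}}$ — the point being that a chain entering $\abs{s}$ by $\abs{\rb}$ lands before every rep event of $\abs{s}$, while a chain entering by $\abs{\robs}$ lands on a first-collect read $\rep{\scanth{s}{m}{k}}$, and in $\scanProc$ the entire first collect $\abs{\rb}$-precedes the entire second collect (and the rep events of a single-threaded event are themselves $\abs{\rb}$-linear). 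Given this, \cref{ax:ss-io} is a routine trace of the returned value through the clauses of $\mathsf{src}$: a stable collect returns $\rep{\scbnth{s}{i}{k}}.\dataField$, which equals $\mathsf{src}(\abs{s},i).\evIn$ by \cref{ax:mem-io} (the only writer of $\resA[i]$ being line~\ref{alg:afek-wra}); a view-return reduces to the inner scan by induction.

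The main work is \cref{ax:ss-nowrbetween} and \cref{ax:ss-mono}. For the stable-iteration case of \cref{ax:ss-nowrbetween}, assume $\abs{w_i\hb w'_i\hb s}$ with $\abs{w_i}=\mathsf{src}(\abs{s},i)$: from $\abs{w_i\hb w'_i}$ and $\abs{\rb}$-totality of cell-$i$ writes one gets $\abs{w_i\rb w'_i}$ (else an $\abs{\hb}$-cycle), hence $\rep{\wra{w_i}}\rep{\hb}\rep{\wra{w'_i}}$ by \cref{ax:rb-absrep}; the push-down lemma turns $\abs{w'_i\hb s}$ into $\rep{\wra{w'_i}}\rep{\hb}\rep{\scbnth{s}{i}{k}}$; and $\rep{\wra{w_i}}\robs\rep{\scbnth{s}{i}{k}}$ then contradicts \cref{ax:mem-nowrbetween}. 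The view-return case is handled by induction on the nesting, strengthening the hypothesis to bound the cell-$i$ writes $\rep{\hb}$-between $\rep{\wra{w_i}}$ (where $\abs{w_i}=\mathsf{src}(\abs{s},i)$) and the memory operations of $\abs{s}$ that the nested scan reaches. \cref{ax:ss-mono} is proved in the same spirit — per-cell $\abs{\rb}$-totality, \cref{ax:mem-nowrbetween}, and the first-collect-before-second-collect structure of $\scanProc$, now applied across the two scans $\abs{s},\abs{s'}$ — so that the crossed orientation $\abs{w_i\hb w'_i}$, $\abs{w'_j\hb w_j}$ forces a cell-$j$ write interposed between $\abs{w'_j}$ and the cell-$j$ read of $\abs{s'}$, contradicting \cref{ax:mem-nowrbetween}; the nested-view case again reduces by the induction. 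With all seven axioms established, linearizability of \cref{alg:afek} follows from \cref{lem:ss-lin}. I expect the delicate part to be the inductive bookkeeping for the view-return case — finding an invariant strong enough to feed \cref{ax:wfobs}, \cref{ax:ss-io}, \cref{ax:ss-nowrbetween} and the cross-scan form of \cref{ax:ss-mono} through the nested scans — with \cref{ax:mem-nowrbetween}, the single-threadedness of events, and \cref{ax:interval} as the workhorses.
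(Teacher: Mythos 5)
Your instantiation is essentially the paper's: your $\mathsf{src}(\abs{s},i)$ is the per-index reading of the paper's recursively defined virtual scan $\SMap{\abs{s}}$ (unwind view-returns until a stable double collect is reached; same well-foundedness argument), and your discharging of \cref{ax:ss-wrterm}, \cref{ax:ss-wrtotal}, \cref{ax:ss-robsuniq} and \cref{ax:ss-io} matches. But there is a genuine gap exactly where you flag uncertainty: the view-return cases of \cref{ax:wfobs}, \cref{ax:ss-nowrbetween} and \cref{ax:ss-mono}. Your push-down lemma lands $\rep{\wra{w'_i}}$ $\rep{\hb}$-before \emph{a rep event of $\abs{s}$}, but in the view-return case the source write $\abs{w_i}$ is observed by a read belonging to the \emph{nested} scan $\rep{\wrs{w_j}}$ (possibly several levels deep), so that conclusion does not feed \cref{ax:mem-nowrbetween}: you need $\rep{\wra{w'_i}}$ to precede the nested scan's cell-$i$ reads, and an $\abs{\rb}$- or $\abs{\hb}$-step into $\abs{s}$ gives you no a priori relation to the interval of $\rep{\wrs{w_j}}$. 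The ``strengthened induction hypothesis'' you defer to is the whole difficulty, and the natural candidates do not close without an extra structural fact.

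The missing idea is a containment lemma: $\SMap{\abs{s}} \subev \abs{s}$, i.e., the innermost stable double collect reached by unwinding views lies within the time interval of the outer scan. For the upper end this is easy ($\rep{\wra{w_j}}$ is observed by a read of $\abs{s}$, so $\rep{\wrs{w_j}}$ ends before $\abs{s}$ does, else \cref{ax:wfobs} for registers is violated); the non-trivial half is the lower end, and it is where the algorithm's ``return only after $\movedVar[i]$ was already set'' test enters the proof: to return at line~\ref{alg:afek-ret2}, $\abs{s}$ must have observed two distinct writes to $\resA[j]$ strictly earlier than $\abs{w_j}$, which forces $\abs{s}.\evStart < \rep{\wrs{w_j}}.\evStart$. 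Once you have containment, no nested induction is needed at all: every $\abs{\rb}$ or $\abs{\hb}$ step into $\abs{s}$ transfers (via \cref{ax:rb-absrep}, as in \cref{lem:fwd-hbabsrep}) to the virtual scan's own rep events, and the stable-case arguments you already have for \cref{ax:ss-nowrbetween} and \cref{ax:ss-mono} — per-cell totality, the first-collect-before-second-collect ordering, and \cref{ax:mem-nowrbetween} — apply verbatim to $\SMap{\abs{s}}$. Without it, your plan for the view-return case is not yet a proof.
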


\begin{proof}[Proof sketch]
  To prove that \cref{alg:afek} satisfies the $\snapshotSig$
  signature, we employ virtual scans as a simplification mechanism
  that allows us to elide $\writeProc$ views from consideration when
  establishing $\snapshotSig$.
  A virtual scan will denote a scan that detected no change in the
  array, i.e., it returned at line~\ref{alg:afek-ret1}. To each abs
  scan, we can associate a virtual scan as follows. If the abs scan
  terminated because it detected no change, than it immediately is a
  virtual scan. If the abs scan terminated by returning a view from a
  $\writeProc$, i.e., returned at line~\ref{alg:afek-ret2}, then that
  view itself is an abs scan which can, recursively, be associated
  with a virtual scan. Importantly, given abs scan $\abs{s}$, the
  associated virtual scan $\SMap{\abs{s}}$ satisfies $\SMap{\abs{s}}
  \subev \abs{s}$, from which we shall derive the $\snapshotSig$
  axioms.

  Formally, a virtual scan $\vrt{\vsc}$ consists of rep events
  $\rep{\sca{\vsc}{i}}$ and $\rep{\scb{\vsc}{i}}$ for each $i$, where
  both $\rep{\sca{\vsc}{i}}$ and $\rep{\scb{\vsc}{i}}$ observe the
  same write (i.e., no change detected), and every
  $\rep{\sca{\vsc}{i}}$ occurs before any $\rep{\scb{\vsc}{j}}$:

  \smallskip
  {
    \renewcommand{\arraystretch}{\eqSpacing}%
    \hfill\begin{tabular*}{\textwidth}{@{\hskip 1em}l@{\hskip 10em}c@{\extracolsep{\fill}}r@{\hskip 1.5em}}
      $\forall \vrt{\vsc}, i.$ &$\exists \abs{w_i}.\ \rep{\wra{w_i} \robs
       \sca{\vsc}{i}} \land \rep{\wra{w_i} \robs \scb{\vsc}{i}}$ &\insertEq\label[prop]{ax:afek-success}\\
      $\forall \vrt{\vsc}, i, j.$ &$\rep{\sca{\vsc}{i} \rb \scb{\vsc}{j}}$ &\insertEq\label[prop]{ax:afek-scrb}
    \end{tabular*}
  }

  \noindent
  If scan $\abs{s}$ returned at line~\ref{alg:afek-ret1}, we simply
  define the virtual scan $\SMap{\abs{s}}$ as the set of rep events
  $\rep{\scanth{s}{i}{k}}$ and $\rep{\scbnth{s}{i}{k}}$, where $k$ is
  the last reading iteration of $\abs{s}$. If scan $\abs{s}$ returned
  at line~\ref{alg:afek-ret2}, then $\abs{s}$ returns the view of some
  $\abs{w_i}$, and we define $\SMap{\abs{s}} =
  \SMap{(\rep{\wrs{w_i}})}$. This is a well-founded recursive
  definition, because the ending time of the scan $\rep{\wrs{w_i}}$ on
  the right is smaller than the ending time of $\abs{s}$ on the left,
  and the ending times are bounded from below by $0$. To see that the
  ending time of $\rep{\wrs{w_i}}$ is below that of $\abs{s}$, suppose
  otherwise. Then it must be $\abs{s} \rb \rep{\wra{w_i}}$, because
  from the code of $\writeProc$ we have that $\rep{\wrs{w_i} \rb
    \wra{w_i}}$. In particular, $\rep{\scbnth{s}{i}{k}} \rb
  \rep{\wra{w_i}}$ for every $k$. But, because $\rep{\wra{w_i}}$ is
  observed by some rep read in $\abs{s}$, we also have $\rep{\wra{w_i}
    \robs \scbnth{s}{i}{k}}$ for some $k$.  Thus, we have an event
  $\rep{\scbnth{s}{i}{k}}$ that terminated before the event
  $\rep{\wra{w_i}}$ that it observes, which
  contradicts~\cref{ax:wfobs}.

  Next, we show that $\SMap{\abs{s}} \subev \abs{s}$.
  If $\abs{s}$ terminated with no changes detected, this is trivial,
  since $\SMap{\abs{s}}$ consists of the selected rep events of
  $\abs{s}$. Otherwise, $\SMap{\abs{s}} = \SMap{(\rep{\wrs{w_i}})}$,
  for some write view $\rep{\wrs{w_i}}$. By recursion on the
  definition of $\SMap{[-]}$ it must be
  $\SMap{\abs{s}} = \SMap{\rep{\wrs{w_i}}} \subev \rep{\wrs{w_i}}$, so
  it suffices to show $\rep{\wrs{w_i}} \subev \abs{s}$.
  This holds because, to reach the return at line~\ref{alg:afek-ret2},
  two changes of $\resA[i]$ must have occurred, thus there are two
  different writes to $\resA[i]$ before $\abs{w_i}$ that where
  observed by some rep event in $\abs{s}$, implying $\abs{s}$ starts
  before $\abs{w_i}$ and thus also before $\rep{\wrs{w_i}}$.
  Additionally, since some $\rep{\scbnth{s}{i}{k}}$ must have read
  $\rep{\wra{w_i}}$, i.e., $\rep{\wra{w_i} \robs \scbnth{s}{i}{k}}$,
  it must be that $\rep{\wrs{w_i}}$ ends before $\abs{s}$, as
  otherwise $\rep{\scbnth{s}{i}{k} \rb \wra{w_i}}$ contradicts
  \cref{ax:wfobs}.

  We next proceed to establish the $\snapshotSig$ signature. First, we
  instantiate $\abs{\WrEff_i}$ to be the set of all writes $\abs{w_i}
  \in \abs{W_i}$ where $\rep{\wra{w_i}}$ is defined; these are the
  writes that executed their effect. Second, we instantiate the
  visibility relations $\abs{\obs}$ and $\abs{\robs}$ as follows,
  using the helper relation $\abs{\hlrobs}$.
  \begin{align*}
    \abs{w_i \hlrobs \vrt{\vsc}} &\wideDefeq \rep{\wra{w_i} \robs \sca{\vsc}{i}} &
    \abs{w_i \robs s} &\wideDefeq \abs{w_i \hlrobs \SMap{\abs{s}}} &
    \abs{\obs} &\wideDefeq \abs{\robs}
  \end{align*}
  In English, the scan $\abs{s}$ reads from, and also observes,
  $\abs{w_i}$ iff there is an appropriate rep event in the virtual
  scan $\SMap{\abs{s}}$ that reads from $\rep{\wra{w_i}}$ at the level
  of rep events.
  We now argue that the definitions of the visibility relations
  satisfy the $\snapshotSig$ properties. \cref{ax:wfobs} holds since
  if $\abs{e \obs^+ e'}$ and $\abs{e' \rbeq e}$ then $\abs{e \obs^+
    e'}$ can only hold if $\abs{e \robs e'}$, i.e., $\rep{\wra{w_i}
    \robs \sca{\vsc}{i}}$. But then, by \cref{ax:rb-absrep}, from
  $\abs{e' \rbeq e}$ we derive $\rep{\sca{\vsc}{i} \rb \wra{w_i}}$,
  which contradicts \cref{ax:wfobs} for registers. \cref{ax:ss-io}
  holds by the structure of the algorithm. \cref{ax:ss-robsuniq} holds
  since if $\abs{w_i \robs s}$ and $\abs{w'_i \robs s}$, then
  $\rep{\wra{w_i}}$ and $\rep{\wra{w'_i}}$ are observed by the same
  read. Thus, by \cref{ax:mem-robsuniq}, $\rep{\wra{w_i} =
    \wra{w'_i}}$, and since each write executes $\rep{\wra{w_i}}$ only
  once by Property~\ref{ax:evsig-uniq}, it must be $\abs{w_i =
    w'_i}$. \cref{ax:ss-wrtotal} holds because the algorithm is
  assumed to be single-writer. \cref{ax:ss-wrterm} holds since each
  terminated write must have $\rep{\wra{w_i}}$ defined, which is how
  we define the set $\abs{\WrEff_i}$.

  Next, \cref{ax:ss-nowrbetween} says that given $\abs{w_i \robs s}$,
  there exists no $\abs{w'_i}$ such that $\abs{w_i \hb w'_i \hb s}$.
  To prove it, we assume that $\abs{w'_i}$ exists, and derive a
  contradiction. By assumption, we have $\abs{w_i \hlrobs \vrt{\vsc}}$
  for $\vrt{\vsc} = \SMap{\abs{s}}$. From
  $\vrt{\vsc} = \SMap{\abs{s}} \subev \abs{s}$, and
  $\abs{w_i \hb w'_i \hb s}$, we can derive
  $\abs{w_i \hb w'_i \hb \vrt{\vsc}}$, and then by
  \cref{ax:afek-scrb}, also
  $\rep{\wra{w_i} \hb \wra{w'_i} \hb \scb{\vsc}{i}}$. We elide the
  proof of the last derivation; it is similar to
  \cref{lem:fwd-hbabsrep} and is in
  Appendix~\apndxAfek{}\ifNotAppendix{ of~\cite{extended}}. By the
  definition of $\abs{\hlrobs}$ and \cref{ax:afek-success},
  $\abs{w_i \hlrobs \vrt{\vsc}}$ implies
  $\rep{\wra{w_i} \robs \sca{\vsc}{i}, \scb{\vsc}{i}}$. But then
  $\rep{\wra{w'_i}}$ occurrs between $\rep{\wra{w_i}}$ and
  $\rep{\scb{\vsc}{i}}$, contradicting \cref{ax:mem-nowrbetween} and
  $\rep{\wra{w_i} \robs \scb{\vsc}{i}}$.

  Lastly, we prove \cref{ax:ss-mono}: $\abs{w_i, w_j \robs s}$ and
  $\abs{w'_i, w'_j \robs s'}$ with $\abs{w_i \hb w'_i}$ and $\abs{w'_j
    \hb w_j}$ lead to contradiction. We unfold the assumptions
  $\abs{w_i, w_j \robs s}$ and $\abs{w'_i, w'_j \robs s'}$ into
  $\abs{w_i, w_j \hlrobs \vrt{\vsc}}$ and $\abs{w'_i, w'_j \hlrobs
    \vrt{\vsc'}}$, where $\vrt{\vsc} = \SMap{\abs{s}}$ and
  $\vrt{\vsc'} = \SMap{\abs{s'}}$.  By definition of
  $\abs{\hlrobs}$ and \cref{ax:afek-success}, from $\abs{w_j \hlrobs
    \vrt{\vsc}}$ we have $\rep{\wra{w_j} \robs \sca{\vsc}{j},
    \scb{\vsc}{j}}$, and similarly for $\abs{w'_j \hlrobs
    \vrt{\vsc'}}$. By \cref{ax:afek-scrb} we have $\rep{\sca{\vsc}{j}
    \rb \scb{\vsc}{i}}$ and $\rep{\sca{\vsc'}{i} \rb \scb{\vsc'}{j}}$, and
  by \cref{ax:interval}, it is either $\rep{\sca{\vsc}{j} \rb
    \scb{\vsc'}{j}}$ or $\rep{\sca{\vsc'}{i} \rb \scb{\vsc}{i}}$. In
  the first case (the second is symmetric) we can construct
  $\rep{\wra{w'_j} \hb \wra{w_j} \robs \sca{\vsc}{j} \rb
    \scb{\vsc'}{j}}$, meaning that we have $\rep{\wra{w'_j} \hb
    \wra{w_j} \hb \scb{\vsc'}{j}}$ contradicting
  \cref{ax:mem-nowrbetween} for $\rep{\wra{w'_j} \robs
    \scb{\vsc'}{j}}$ by $\rep{\wra{w_j}}$ occurring between
  $\rep{\wra{w'_j}}$ and $\rep{\scb{\vsc'}{j}}$.
\end{proof}

We conclude this section by observing that the inverse of
\cref{lem:ss-lin} also holds; that is, $\snapshotSig$ signature is
actually satisfiable by every linearizable snapshot algorithm. This
follows by instantiating $\abs{\obs}$ with the linearization order,
defining $\abs{\robs}$ to relate each read with the latest write
before that read in the linearization, and defining $\abs{\WrEff_i}$
as the set of all writes in the linearization. The axioms of the
signature then simply state straightforward properties of the
linearization order and of sequential execution in that order.
That said, we have established $\snapshotSig$ signature for
\cref{alg:afek} \emph{without} assuming linearizability, so that we
can derive linearizability by \cref{lem:ss-lin}.

\section{Related Work}

The idea to use visibility relations for proving linearizability has
first been proposed by \citet{henzinger:concur13} and applied to
concurrent queue algorithms. The explicit motivation of this approach
was to modularize the linearizability proofs. 
The work on time-stamped stack algorithm of~\citet{dodds:popl15} builds
on this by introducing a hybrid approach, combining the visibility
method together with the linearization points method. Visibility has
also been applied on algorithms over weak memory, where it is
typically called communication order~\cite{raad:popl19}. Another
related approach of visibility by \citet{emm+ene:popl19,enea:esop20}
formulates weakened specifications for linearizability, where all
operations do not have to linearly occur one after another in a
simulation, but occur in partial order, allowing for certain
operations to miss another.

Concerning Jayanti's algorithms specifically, \citet{jayanti:stoc05}
sketched linearizability of the first algorithm by describing its
linearization points. He also argued informally, based on forwarding
principles, that the changes between the algorithms preserve
linearizability.
\citet{delbianco:ecoop17} gave a proof formalized in Coq of the first
algorithm, by constructing a linearization order along the execution
of the algorithm. The proof is specific to the implementation, making
it is unclear how to lift it to the other two algorithms. Based on
that development, \citet{jacobs:jayanti} developed a mechanized proof,
also of the first algorithm, in VeriFast using prophecy variables.
\citet{petrank:disc13} and \citet{timnat:thesis} present an algorithm
based on the forwarding idea of Jayanti, which implements a set
interface with insert, remove and contains operations, along with an
iterator which is a generalized form of scanner. The set operations
can be forwarded to the scanner, generalizing Jayanti's forwarding
which applies only to write operations. In the future, we will
consider how to generalize our proof to apply to this algorithm as
well. This would involve axiomatizing the set data-structure, but also
what it means abstractly to be a forwarding structure, so that both
Jayanti's snapshot and the set structure of Petrank and Timnat are
instances.

\citet{afek:acm93} proved their algorithm linearizable by using the
linearization point method. While the use of the linearization point
method is fairly straightforward for this algorithm, we showed that we
can reuse our snapshot axiomatization for its proof.

\section{Conclusion and Future Work}

We presented proofs of linearizability of Jayanti's three snapshot
algorithms developed in a modular fashion, using the method of
visibility relations. More concretely, the linearizability proofs are
decomposed into proof modules, with many of the modules being shared
between the three algorithms; they are developed once and reused three
times to reduce proof complexity.

Importantly, the module interfaces are signatures consisting of
relations and axioms on them that encode the key idea of ``forwarding
principles'' underpinning Jayanti's design. We thus show that a
formalism based on visibility relations is powerful enough to
mathematically capture these principles; previously, Jayanti has only
presented them informally in English.

In the future, we plan to apply the visibility methods to other
algorithms, and potentially also use the developed modules and
signatures as guides in designing new and more efficient algorithms,
much like Jayanti's three algorithms start with the
single-writer/single-scanner variant and build to the ultimately
desired multi-writer/multi-scanner variant. We shall also study how
the visibility method applies to non-linearizable algorithms, and
how to mechanize our proof in a proof assistant.

\begin{acks}
  We thank the anonymous reviewers from the POPL’22 PC for their
  feedback. This research was partially supported by the Spanish
  MICINN projects BOSCO (PGC2018-102210-B-I00) and the European
  Research Council project Mathador (ERC2016-COG-724464). 
\end{acks}

\bibliographystyle{ACM-Reference-Format}
\bibliography{references}

\ifappendix
\newpage
\appendix

\section{Snapshot Signature Implies Linearizability}
\label{apx:ss2lin}

Here we prove that every algorithm with histories satisfying the
$\snapshotSig$ signature is linearizable. This proof is based on a
similar proof for queues by~\citet{henzinger:lmcs15}.

Following \cref{def:lin} of linearizability, we start by choosing a
visibility relation $\abs{\obs}$ that constructs the set
$\abs{E_c} = \{ \abs{e} \mid \exists \abs{e'} \in \term(\abs{E}).\
\abs{e \obs^* e'} \}$, and then extend it to a linearization
$\abs{<}$. Let $\abs{\obs}$ be the visibility relation obtained by
restricting the visibility relation postulated by $\snapshotSig$ to
$\term(\abs{S}) \cup \bigcup_{i} \abs{\WrEff_i}$. Events outside of
the restriction are non-terminated scans or writes that have not
executed their effect yet; they do not modify the abstract state, and
are hence not necessary for linearization.
Additionally, by \cref{ax:ss-wrterm} we know that every terminated
write is effectful. Thus going forward, we consider $\abs{E_c} =
\term(\abs{S}) \cup \bigcup_{i} \abs{\WrEff_i}$.

To move towards the construction of a linearization, we will first
construct a partial order $\abs{\whb}$ over all write events that
contains all logically implied orderings. We will in steps expand this
order, until we have a total order.

\begin{definition}
  Let $\abs{w_i \whbI w'_j}$ be defined if $i\neq j$ and there exists
  a write $\abs{w_j}$ and a scan $\abs{s}$ such that
  $\abs{w_i,w_j \robs s}$ and $\abs{w_j \hb w'_j}$. Let
  $\abs{\whb} \defeq (\abs{\hb}\ \cup\ \abs{\whbI})^+$.
\end{definition}

\begin{lemma}
  \label{lem:whb-irrefl}
  The relation $\abs{\whb}$ is a strict partial order.
\end{lemma}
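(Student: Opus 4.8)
The relation $\abs{\whb}$ is transitive by construction, being a transitive closure, so it suffices to prove irreflexivity, i.e.\ that there is no $\abs{\whb}$-cycle. The plan is to assume such a cycle exists, pick one $\abs{e_0 \to e_1 \to \cdots \to e_k} = \abs{e_0}$ of minimal length (each step being $\abs{\hb}$ or $\abs{\whbI}$), and derive a contradiction. Minimality buys three normalizations: every $\abs{e_m}$ is a write (a scan is neither source nor target of a $\abs{\whbI}$-step, so it would be flanked by two $\abs{\hb}$-steps, which transitivity of $\abs{\hb}$ collapses); no two consecutive steps are both $\abs{\hb}$ (again transitivity of $\abs{\hb}$); and at least one step is $\abs{\whbI}$ (a pure-$\abs{\hb}$ cycle contradicts \cref{lem:hb-irrefl}).

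I would then reduce such a cycle to one of three irreducible base cases. A cycle of length $1$ must be $\abs{w \hb w}$ (it cannot be a $\abs{\whbI}$-self-loop, since $\abs{\whbI}$ relates writes into distinct cells), contradicting \cref{lem:hb-irrefl}. A $2$-cycle $\abs{w_i \whbI w'_j \hb w_i}$: unfolding the $\abs{\whbI}$-edge produces a scan $\abs{s}$ with $\abs{w_i \robs s}$ and a cell-$j$ write $\abs{v_j \robs s}$ with $\abs{v_j \hb w'_j}$; since $\abs{w_i \robs s}$ gives $\abs{w_i \hb s}$, the chain $\abs{v_j \hb w'_j \hb s}$ puts a cell-$j$ write strictly between $\abs{v_j}$ and $\abs{s}$, contradicting \cref{ax:ss-nowrbetween}. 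A $2$-cycle $\abs{w_i \whbI w'_j \whbI w_i}$: unfolding both edges to scans $\abs{s}, \abs{s'}$ and applying \cref{ax:ss-mono} to the pair $(\abs{s'}, \abs{s})$ contradicts $\abs{v_j \hb w'_j}$ directly. Throughout, \cref{ax:ss-wrtotal} provides the linear order on the writes into each fixed cell, \cref{ax:ss-robsuniq} lets me speak of \emph{the} write a scan reads from a cell, and \cref{ax:ss-io} guarantees that a terminated witnessing scan has a read-from for every cell the argument touches.

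The reduction step attacks a configuration in which a $\abs{\whbI}$-edge $\abs{w_i \whbI v}$ is followed, possibly across a single $\abs{\hb}$-step $\abs{v \hb v'}$ (or directly, with $\abs{v} = \abs{v'}$), by a $\abs{\whbI}$-edge $\abs{v' \whbI w''_k}$, and collapses the two $\abs{\whbI}$-edges into one shorter step. Unfolding gives a witnessing scan $\abs{s_1}$ that reads $\abs{w_i}$ and some cell-$j$ write $\abs{v_j \hb v}$ (with $j$ the cell of $\abs{v}$), and a witnessing scan $\abs{s_2}$ that reads $\abs{v'}$ and some cell-$k$ write $\abs{v_k \hb w''_k}$. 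In the clean sub-case where $\abs{v}$ and $\abs{v'}$ share a cell, $\abs{v_j \hb v \hbeq v'}$ exhibits the cell-$j$ write read by $\abs{s_1}$ as $\abs{\hb}$-before the cell-$j$ write read by $\abs{s_2}$; feeding this into \cref{ax:ss-mono} (with cell $k$ as the other coordinate, and \cref{ax:ss-io} supplying a cell-$k$ read-from for $\abs{s_1}$) and then \cref{ax:ss-wrtotal} yields that the cell-$k$ write read by $\abs{s_1}$ is $\abs{\hbeq}$-before $\abs{w''_k}$, hence $\abs{w_i \whbI w''_k}$ (or $\abs{w_i \hb w''_k}$ when $\abs{w_i}$ and $\abs{w''_k}$ happen to be in the same cell) — a strictly shorter cycle, contradicting minimality. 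Iterating (re-normalizing as needed) collapses any minimal cycle to a base case.

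The main obstacle I anticipate is the case analysis the reduction step hides, rather than any single clever idea. When $\abs{v}$ and $\abs{v'}$ lie in unrelated cells, the $\abs{\hb}$-ordering needed to invoke \cref{ax:ss-mono} must be reconstructed by passing through the cell-$j$ read-from of $\abs{s_2}$ and applying \cref{ax:ss-nowrbetween} twice. When two of the scans in play read the \emph{same} write from some cell, \cref{ax:ss-mono} — which requires a strict $\abs{\hb}$ — no longer applies as stated, and the cycle must be re-attacked from a different $\abs{\whbI}$-edge by rotation. And one must be careful that a $\abs{\whbI}$-edge may be witnessed by a scan that is not yet terminated and hence lacks a read-from for some cell the argument needs; this is precisely why \cref{ax:ss-io} enters. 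Each individual case nonetheless closes by one of \cref{ax:ss-nowrbetween,ax:ss-mono} together with \cref{ax:ss-wrtotal}, so the proof is long but routine once the cases are laid out.
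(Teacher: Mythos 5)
Your proposal is correct and follows essentially the same route as the paper's proof: show acyclicity of $\abs{\hb} \cup \abs{\whbI}$ by shrinking any cycle, using \cref{ax:ss-io} to populate the witnessing scan's read in the relevant cell, \cref{ax:ss-wrtotal} to compare it with the target write, and \cref{ax:ss-nowrbetween}/\cref{ax:ss-mono} to close each case. The only difference is organizational — you normalize first and then merge adjacent $\abs{\whbI}$-edges, whereas the paper case-splits on the first two edges of the cycle (with a rotation case) — and your worry about \cref{ax:ss-mono} failing when two scans read the same cell-$j$ write does not actually arise, since $\abs{v_j \hb v \hbeq r_j}$ forces the premise of \cref{ax:ss-mono} to be strict on pain of $\abs{\hb}$-reflexivity.
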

\begin{proof}
  For a relation to be a strict partial order, it needs to be
  transitive and irreflexive. The relation $\abs{\whb}$ is transitive
  by definition since it is a transitive closure. We prove
  irreflexivity by proving acyclicity of
  $\abs{\whbS} = \abs{\hb}\ \cup\ \abs{\whbI}$, i.e., we prove there
  exists no cycle of the following form:
  \[
    \abs{w_1 \whbS w_2 \whbS \cdots \whbS w_n \whbS w_1}
  \]
  We show this by showing that there exists no cycle of size $n = 1$
  and that we can shrink any cycle of size $n \ge 2$, meaning that we
  can recursively shrink a cycle until it is of size $n = 1$. For a
  cycle with $n = 1$, we have $\abs{w_1 \whbS w_1}$, which can only be
  the case $\abs{w_1 \hb w_1}$, since two writes joined by
  $\abs{\whbI}$ must come from distinct memory cells. However, we
  cannot have $\abs{w_1 \hb w_1}$ either, since it contradicts
  $\abs{\hb}$ being irreflexive (\cref{lem:hb-irrefl}).
  We show that we can shrink any cycle of $n \ge 2$ by doing a case
  analysis on the first two instances of $\abs{\whbS}$ in the cycle:
  \begin{enumerate}
  \item We have $\abs{w_1 \hb w_2 \hb w_3 \whbS^* w_1}$. By
    transitivity of $\abs{\hb}$ we can shrink the cycle to the smaller
    cycle $\abs{w_1 \hb w_3 \whbS^* w_1}$.
  \item We have $\abs{w_1 \whbI w_2 \hb w_3 \whbS^* w_1}$, i.e.,
    $\abs{w_1} = \abs{w_i}$ and $\abs{w_2} = \abs{w'_j}$
    with $i \neq j$ and there exists $\abs{w_j}$ such that
    $\abs{w_i, w_j \robs s}$ and $\abs{w_j \hb w'_j}$ and
    $\abs{w_2 \hb w_3 \whbS^* w_1}$. Let $\abs{w_3} = \abs{w''_k}$
    where $k$ can either be equal to $i$ or $j$ or some other distinct
    index. Since each scan is terminated and thus observes the entire
    memory by \cref{ax:ss-io}, there must exist some $\abs{w_k}$ such
    that $\abs{w_k \robs s}$. By comparing $\abs{w_k}$ and
    $\abs{w''_k}$ with \cref{ax:ss-wrtotal}, we have either
    $\abs{w''_k \hbeq w_k}$ or $\abs{w_k \hb w''_k}$. If we have
    $\abs{w''_k \hbeq w_k}$ then we also have $\abs{w'_j \hb s}$ by
    $\abs{w'_j \hb w''_k \hbeq w_k \robs s}$, which implies that
    $\abs{w'_j}$ happens between $\abs{w_j}$ and $\abs{s}$,
    contradicting \cref{ax:ss-nowrbetween}. Thus we must have
    $\abs{w_k \hb w''_k}$, which lets us construct
    $\abs{w_1 \whbI w_3 \whbS w_1}$, giving us a shorter cycle.
  \item We have $\abs{w_1 \whbI w_2 \whbI w_3 \whbS^* w_1}$, i.e.,
    $\abs{w_1} = \abs{w_i}$ and $\abs{w_2} = \abs{w'_j}$ and
    $\abs{w_3} = \abs{w''_k}$ with $\abs{w_i, w_j \robs s}$ and
    $\abs{w'_j, w'_k \robs s'}$ and $\abs{w_j \hb w'_j}$ and
    $\abs{w'_k \hb w''_k}$ and $\abs{w_3 \whbS^* w_1}$ where
    $i \neq j$ and $j \neq k$. Similar to the previous case, by
    \cref{ax:ss-io}, there must exist a $\abs{w_k}$ such that
    $\abs{w_k \robs s}$. By \cref{ax:ss-mono} with
    $\abs{w_j, w_k \robs s}$ and $\abs{w'_j, w'_k \robs s'}$ and
    $\abs{w_j \hb w'_j}$, it must follow that
    $\abs{w'_k \not\hb w_k}$, which by \cref{ax:ss-wrtotal} implies
    $\abs{w_k \hbeq w'_k}$, giving us $\abs{w_k \hb w''_k}$ by
    $\abs{w_k \hbeq w'_k \hb w''_k}$. This lets us construct
    $\abs{w_1 \whbI w_3 \whbS^* w_1}$ giving us a shorter cycle.
  \item We have $\abs{w_1 \hb w_2 \whbI w_3 \whbS^* w_1}$. For this
    case we simply \emph{rotate} the cycle to move the first instance
    to the end of the cycle, i.e.,
    $\abs{w_2 \whbI w_3 \whbS^* w_1 \hb w_2}$, and case match on the
    next two $\abs{\whbS}$ instances to create a scenario where we
    either use case (2) or case (3). \qedhere
  \end{enumerate}
\end{proof}

By Zorn's Lemma, we can arbitrarily choose a total order $\abs{\wlt}$
extending $\abs{\whb}$, for which we define the concept of maximal
candidate, which we use to construct the complete total order.

\begin{definition}
  An event $\abs{e}$ is called a maximal candidate of $\abs{\wlt}$
  when $\abs{e}$ is $\abs{\hb}$-maximal (i.e., there exists no greater
  event in $\abs{\hb}$, which may not necessarily be unique) and one
  of the following holds:
  \begin{enumerate}
  \item $\abs{e} = \abs{w_i}$ is a write with no scans observing it
    and it is the greatest write in $\abs{\wlt}$,
  \item $\abs{e} = \abs{s}$ is a scan with for all $i$, there exists
    $\abs{w_i}$ such that $\abs{w_i \robs s}$ and $\abs{w_i}$ is the
    greatest event of order $\abs{\wlt}$ in $\abs{\WrEff_i}$.
  \end{enumerate}
\end{definition}

\begin{lemma}\label{lem:maxcand-populated}
  For any non-empty history satisfying the $\snapshotSig$ signature,
  there exists a maximal candidate of $\abs{\wlt}$.
\end{lemma}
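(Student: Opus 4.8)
The plan is to name a maximal candidate explicitly, driven by the $\abs{\wlt}$-greatest write. Since $\abs{E_c}$ is finite and non-empty it contains a write: if it contains a terminated scan, \cref{ax:ss-io} supplies one, and otherwise $\abs{E_c} = \bigcup_i \abs{\WrEff_i}$ already is a non-empty set of writes. Let $\abs{w} = \abs{w_k} \in \abs{\WrEff_k}$ be $\abs{\wlt}$-greatest among writes of $\abs{E_c}$, and split on whether some scan reads from $\abs{w}$.

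If no scan observes $\abs{w}$, then $\abs{w}$ itself is a maximal candidate of type~(1): it is $\abs{\wlt}$-greatest by choice and unobserved by assumption, so it only remains to check $\abs{\hb}$-maximality. If $\abs{w \hb e}$ for a write $\abs{e}$, then $\abs{w \wlt e}$ (as $\abs{\hb} \subseteq \abs{\whb} \subseteq \abs{\wlt}$ on writes), contradicting maximality; if $\abs{w \hb s}$ for a (necessarily terminated) scan $\abs{s}$, take $\abs{w'_k}$ with $\abs{w'_k \robs s}$ from \cref{ax:ss-io} and compare it with $\abs{w}$ via \cref{ax:ss-wrtotal}: $\abs{w}=\abs{w'_k}$ contradicts ``$\abs{w}$ unobserved'', $\abs{w \hb w'_k}$ contradicts maximality, and $\abs{w'_k \hb w \hb s}$ contradicts \cref{ax:ss-nowrbetween}.

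Otherwise some scan reads from $\abs{w}$. Call a terminated scan $\abs{s}$ \emph{good} if for every $i$ the write $\abs{w_i}$ with $\abs{w_i \robs s}$ (which exists by \cref{ax:ss-io}, and is unique by \cref{ax:ss-robsuniq}) is $\abs{\wlt}$-greatest in $\abs{\WrEff_i}$. The crucial step is that \emph{any} scan $\abs{s}$ reading from $\abs{w}$ is good: it reads $\abs{w}$ at cell $k$, and if at some $j\neq k$ it read a non-greatest $\abs{w'_j}$, then with $\abs{w''_j}$ the $\abs{\wlt}$-greatest in $\abs{\WrEff_j}$ we get $\abs{w'_j \wlt w''_j}$, hence $\abs{w'_j \hb w''_j}$ by \cref{ax:ss-wrtotal} (the other orientation would give $\abs{w''_j \wlt w'_j}$); but $\abs{w, w'_j \robs s}$ together with $\abs{w'_j \hb w''_j}$ gives $\abs{w \whbI w''_j}$ straight from the definition, so $\abs{w \wlt w''_j}$, contradicting maximality of $\abs{w}$. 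Two further routine comparisons (for each cell $l$, of the $\abs{\wlt}$-greatest $\abs{m_l}\in\abs{\WrEff_l}$ read by a good scan with the relevant write, using \cref{ax:ss-wrtotal,ax:ss-nowrbetween}, \cref{lem:hb-irrefl}, and maximality of $\abs{m_l}$) show that (i) no write is $\abs{\hb}$-above a good scan, and (ii) if $\abs{s}$ is good and $\abs{s \hb s'}$ for a scan $\abs{s'}$, then $\abs{s'}$ is good. The set of good scans is therefore non-empty and finite; an $\abs{\hb}$-maximal element $\abs{s^*}$ of it is, by (i) and (ii), $\abs{\hb}$-maximal in all of $\abs{E_c}$, hence a maximal candidate of type~(2).

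The main obstacle is the crucial step above: that observing the single globally $\abs{\wlt}$-last write already forces a scan to observe the $\abs{\wlt}$-last write in \emph{every} cell. This is precisely where the auxiliary relation $\abs{\whbI}$ earns its keep, and, through the choice of $\abs{\wlt}$ as an extension of $\abs{\whb}$, where \cref{ax:ss-mono} (via \cref{lem:whb-irrefl}) is silently used.
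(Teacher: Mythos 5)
Your proof is correct and follows essentially the same route as the paper's: take the $\abs{\wlt}$-greatest write, use $\abs{\whbI}$ (and the fact that $\abs{\wlt}$ extends $\abs{\whb}$) to show that any scan observing it must observe the $\abs{\wlt}$-greatest write of \emph{every} cell, and then establish $\abs{\hb}$-maximality of the chosen event via \cref{ax:ss-wrtotal,ax:ss-nowrbetween}. Your packaging of the final step through the upward-closed set of ``good'' scans is a slight (and arguably cleaner, since $\abs{\hb}$ is only partial) reorganization of the paper's choice of an $\abs{\hb}$-greatest scan among those observing the greatest write, but the underlying case analysis is the same.
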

\begin{proof}
  Since $\abs{\wlt}$ is a total order over a finite set, there exists
  a unique greatest write $\abs{w_i}$, consider the following cases:
  \begin{enumerate}
  \item If nothing reads $\abs{w_i}$ and $\abs{w_i}$ is
    $\abs{\hb}$-maximal, then $\abs{w_i}$ is trivially a maximal
    candidate.
  \item If nothing reads $\abs{w_i}$ and $\abs{w_i}$ is not
    $\abs{\hb}$-maximal, then there exists an event $\abs{e}$ greater
    than $\abs{w_i}$, such that $\abs{w_i \hb e}$. The event cannot be
    a write event $\abs{e} = \abs{w'_j}$, since that would imply
    $\abs{w_i \wlt w'_j}$, contradicting $\abs{w_i}$ being
    $\abs{\wlt}$-maximal. Thus it must be a scan event
    $\abs{e} = \abs{s}$, by \cref{ax:ss-io} there exists $\abs{w'_i}$
    such that $\abs{w'_i \robs s}$ with $\abs{w_i} \neq \abs{w'_i}$,
    since nothing reads $\abs{w_i}$. Since we have
    $\abs{w'_i \robs s}$ and $\abs{w_i \hb s}$, it must be the case
    that $\abs{w_i}$ does not happen in-between $\abs{w'_i}$ and
    $\abs{s}$ as per \cref{ax:ss-nowrbetween}, i.e., we have
    $\abs{w'_i \not\hb w_i}$, thus by \cref{ax:ss-wrtotal} we have
    $\abs{w_i \hb w'_i}$. With $\abs{w_i}$ being
    $\abs{\wlt}$-maximal, we also have $\abs{w'_i \wlt w_i}$, however
    this contradicts $\abs{w_i \hb w'_i}$.
  \item If there exists $\abs{s}$ such that $\abs{w_i \robs s}$ with
    $\abs{s}$ being $\abs{\hb}$-maximal, then we show that $\abs{s}$
    is a maximal candidate For $\abs{s}$ to be a maximal candidate, we
    need for all $j$, there exists $\abs{w_j \robs s}$ such that
    $\abs{w_j}$ is $\abs{\wlt}$-maximal in $\abs{\WrEff_j}$. By
    \cref{ax:ss-io}, we know that there exists some $\abs{w_j}$ such
    that $\abs{w_j \robs s}$. Suppose for each $j \neq i$, $\abs{w_j}$
    is not $\abs{\wlt}$-maximal in $\abs{\WrEff_j}$, then there exists
    $\abs{w'_j}$ such that $\abs{w_j \wlt w'_j}$. This implies
    $\abs{w_j \hb w'_j}$ by \cref{ax:ss-wrtotal} and since
    $\abs{\wlt}$ extends $\abs{\hb}$, which by
    $\abs{w_i, w_j \robs s}$ and $\abs{w_j \hb w'_j}$ lets us
    construct $\abs{w_i \whbI w'_j}$, which implies
    $\abs{w_i \wlt w'_j}$, however this contradicts maximality of
    $\abs{w_i}$. Thus each $\abs{w_j}$ is $\abs{\wlt}$-maximal in
    $\abs{\WrEff_j}$, meaning $\abs{s}$ is a maximal candidate.
  \item If there exists $\abs{s}$ such that $\abs{w_i \robs s}$ with
    $\abs{s}$ not being $\abs{\hb}$-maximal, then there must exist
    some other event $\abs{e}$ such that $\abs{s \hb e}$. If there
    exists multiple scans observing $\abs{w_i}$ we let $\abs{s}$ be
    the $\abs{\hb}$-greatest among them. If $\abs{e}$ is a write event
    $\abs{e} = \abs{w'_j}$ then we have $\abs{w_i \hb w'_j}$ since
    $\abs{w_i \robs s \hb w'_j}$, which contradicts $\abs{w_i}$ being
    the greatest write in $\abs{\wlt}$. If $\abs{e}$ is a scan
    $\abs{e} = \abs{s'}$ then we either have $\abs{w_i \robs s'}$ or
    $\abs{w'_i \robs s'}$ for some write $\abs{w'_i} \neq \abs{w_i}$,
    however the former contradicts our selection of $\abs{s}$ since
    both $\abs{s}$ and $\abs{s'}$ observe $\abs{w_i}$, but
    $\abs{s \hb s'}$ contradicts that $\abs{s}$ be the
    $\abs{\hb}$-greatest among scans observing $\abs{w_i}$. We thus
    have that $\abs{w'_i \robs s'}$ and since we also have
    $\abs{w_i \hb s'}$ by $\abs{w_i \robs s \hb s'}$, if $\abs{w_i}$
    is in-between $\abs{w'_i}$ and $\abs{s'}$ we contradict
    \cref{ax:ss-nowrbetween}, thus we have $\abs{w'_i \not\hb w_i}$
    which by \cref{ax:ss-wrtotal} we must have $\abs{w_i \hb w'_i}$,
    which contradicts that $\abs{w_i}$ is $\abs{\wlt}$-maximal.
    \qedhere
\end{enumerate}
\end{proof}

To construct the total order, we recursively select a maximal
candidate, which we make the last event of the total order and remove
from consecutive steps.

\begin{theorem}
  Every history satisfying our snapshot specifications is
  linearizable, i.e., there exists a total order $\abs{<}$ over each
  event which forms a legal sequential snapshot behavior and extends
  $\abs{\wlt}$ and $\abs{\hb}$.
\end{theorem}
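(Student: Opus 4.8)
The plan is to construct the linearization order $\abs{<}$ on $\abs{E_c} = \term(\abs{S}) \cup \bigcup_i \abs{\WrEff_i}$ (as set up above) by finite recursion: while the not-yet-ordered events are non-empty, choose a maximal candidate of $\abs{\wlt}$ among them — which exists by \cref{lem:maxcand-populated} — and remove it; then declare $\abs{e} \abs{<} \abs{e'}$ iff $\abs{e}$ was chosen strictly later than $\abs{e'}$. Equivalently I would argue by induction on $|\abs{E_c}|$: the empty case is trivial, and for the inductive step I extract a maximal candidate $\abs{e}$, make it the $\abs{<}$-greatest element, and apply the induction hypothesis to $\abs{E_c} \setminus \{\abs{e}\}$.

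The first thing to check is that the recursion is legitimate, i.e.\ that $\abs{E_c}\setminus\{\abs{e}\}$ with all relations restricted still satisfies the $\snapshotSig$ signature and that the restriction of $\abs{\wlt}$ is still a total order extending $\abs{\whb}$. The only delicate axiom is \cref{ax:ss-io}: a surviving terminated scan must still observe a write into every cell. This holds because a maximal candidate is never a write observed by a surviving scan — when $\abs{e}$ is a write it is chosen precisely because no scan observes it — so no surviving scan is orphaned. Moreover, since $\abs{e}$ is $\abs{\hb}$-maximal, no $\abs{\rb}\cup\abs{\obs}$-path between two surviving events passes through $\abs{e}$; hence the restricted $\abs{\hb}$ (and $\abs{\whb}$) agrees with the original on surviving events, which preserves \cref{ax:ss-wrtotal}, \cref{ax:ss-mono} and the remaining universally quantified axioms.

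With the recursion justified I verify the three properties of $\abs{<}$. It extends $\abs{\hb}$: between surviving events this is the induction hypothesis together with the observation above that no $\abs{\hb}$-relation between survivors is lost; for pairs involving $\abs{e}$, $\abs{e \hb e'}$ is impossible by maximality, and $\abs{e' \hb e}$ is consistent with $\abs{e'}\abs{<}\abs{e}$. It extends $\abs{\wlt}$: a write becomes a maximal candidate only via clause~(1) of the definition, i.e.\ only when it is the $\abs{\wlt}$-greatest surviving write, and choosing a scan removes no write; so the writes are chosen in strictly decreasing $\abs{\wlt}$-order, and therefore $\abs{<}$ restricted to writes coincides with $\abs{\wlt}$.

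Finally, legality of the induced sequential behaviour. Writes behave legally trivially. For a scan $\abs{s}$, at the step it was chosen clause~(2) yields, for each $i$, a write $\abs{w_i}$ with $\abs{w_i \robs s}$ that is $\abs{\wlt}$-greatest among the surviving writes into cell $i$; by \cref{ax:ss-robsuniq} this is the unique $\abs{\robs}$-predecessor of $\abs{s}$ at $i$, and by \cref{ax:ss-io} $\abs{w_i}.\evIn = \abs{s}.\evOut[i]$. Since $\abs{w_i}$ is still present when $\abs{s}$ is chosen, it is chosen later, so $\abs{w_i} \abs{<} \abs{s}$; and any write $\abs{w'_i}$ into cell $i$ with $\abs{w_i} \abs{<} \abs{w'_i}$ has $\abs{w_i} \abs{\wlt} \abs{w'_i}$ by the previous paragraph, hence was already removed before $\abs{s}$ was chosen, i.e.\ $\abs{s} \abs{<} \abs{w'_i}$. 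So $\abs{w_i}$ is the $\abs{<}$-last write into cell $i$ preceding $\abs{s}$, cell $i$ holds $\abs{w_i}.\evIn = \abs{s}.\evOut[i]$ at the moment $\abs{s}$ runs, and $\abs{s}$ returns the current array, as \cref{def:lin} demands. I expect this last identification — reading off, from the maximal-candidate clauses and the coincidence of $\abs{<}$ with $\abs{\wlt}$ on writes, exactly which write each scan sees in the constructed order — to be the main obstacle; the rest is bookkeeping on top of \cref{lem:whb-irrefl} and \cref{lem:maxcand-populated}.
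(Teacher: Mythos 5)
Your proposal is correct and follows essentially the same route as the paper's proof: induct on the size of the event set, repeatedly extract a maximal candidate of $\abs{\wlt}$ (whose existence is Lemma~\ref{lem:maxcand-populated}), place it last, and recurse on the remainder. You are somewhat more explicit than the paper about why the restricted history still satisfies the signature and why $\abs{<}$ coincides with $\abs{\wlt}$ on writes, but these are elaborations of the same argument, not a different one.
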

\begin{proof}
  By induction on the number of events in the history. The minimal
  history, the one that only consists of the initial writes, is
  clearly linearizable, no matter what order $\abs{\wlt}$ we have.

  Our inductive hypothesis is that for some history
  $\abs{E \subseteq E_c}$ satisfying the axioms under $\abs{\hb}$,
  there exists a total order $\abs{<}$ over $\abs{E}$ that forms a
  legal sequential snapshot behavior and extends $\abs{\wlt}$ and
  $\abs{\hb}$ when only considering events in $\abs{E}$. For the
  inductive step, we need to prove for a there exists such a total
  order $\abs{<'}$ of history $\abs{E'}$, which extends $\abs{E}$ by
  the single event $\abs{e}$, with $\abs{e}$ being an arbitrary
  maximal candidate of $\abs{\wlt}$ under history $\abs{E'}$. By
  \cref{lem:maxcand-populated}, since $\abs{E'}$ is non-empty, there
  must exist a maximal candidate. Consider the two following cases for
  $\abs{e}$.
  \begin{itemize}
  \item $\abs{e} = \abs{w_i}$ is a write. By the definition of being a
    maximal candidate, we know that there exists no scan in $\abs{E'}$
    observing $\abs{w_i}$. If we remove $\abs{w_i}$ from $\abs{E'}$
    then the new history $\abs{E}$ still satisfies the snapshot
    properties since it is safe to remove any event that is not read
    visible. By the inductive hypothesis we have a total order
    $\abs{<}$ over $\abs{E}$ that forms a legal snapshot behavior and
    extends $\abs{\wlt}$ and $\abs{\hb}$. To construct the total order
    $\abs{<'}$ over $\abs{E'}$, we extend $\abs{<}$ with write
    $\abs{w_i}$ being the last event in the order. Since any write
    event added to the end of $\abs{<'}$ cannot influence the result
    of previous scan events, it means that since $\abs{<}$ forms a
    legal snapshot behavior, so must $\abs{<'}$.
  \item $\abs{e} = \abs{s}$ is a scan. By the definition of being a
    maximal candidate, we know that all the writes $\abs{s}$ observe
    are maximal under $\abs{\wlt}$ in their respective memory cell. If
    we remove $\abs{s}$ from $\abs{E'}$ then the new history $\abs{E}$
    still satisfies the snapshot properties since no other events
    depend on scans. By the inductive hypothesis we have a total order
    $\abs{<}$ over $\abs{E}$ that forms legal snapshot behavior and
    extends $\abs{\wlt}$ and $\abs{\hb}$. To construct the total order
    $\abs{<'}$ over $\abs{E'}$, we extend $\abs{<}$ with scan
    $\abs{s}$ being the last event in the order. The $\abs{<'}$ order
    forms a legal snapshot behavior, since any write $\abs{w_i}$ that
    $\abs{s}$ observes is maximal in $\abs{\wlt}$, it is impossible
    for another write to be in-between $\abs{w_i}$ and $\abs{s}$, thus
    since $\abs{<}$ forms a legal snapshot behavior, so does
    $\abs{<'}$. \qedhere
  \end{itemize}
\end{proof}

\section{Forwarding Signature Implies Snapshot Signature}\label{apx:fwd2ss}

We here give the full proof that any history satisfying the $\fwdSig$
signature (\cref{fig:forward}) also satisfies the $\snapshotSig$
signature (\cref{fig:snapshot}). We start by determining the
instantiations of $\abs{\WrEff_i}$ and $\abs{\robs}$ and $\abs{\obs}$
for the $\snapshotSig$ signature. Let $\abs{w_i} \in \abs{\WrEff_i}$
iff $\abs{w_i}$ has executed $\rep{\wra{w_i}}$, since the effect of
$\abs{w_i}$ is only observable after this execution. We define a
helper notion of visibility over scans $\abs{\sobs}$, which we use to
define $\abs{\obs}$. The highlighted relations $\abs{\hlrobs}$ and
$\abs{\hlwobs}$ originate from the $\fwdSig$ signature.
\begin{align*}
  \abs{w_i \robs s} &\wideDefeq \abs{w_i \hlrobs \SMap{\abs{s}}}\\
  \abs{s \sobs s'} &\wideDefeq \vrt{\SMap{\abs{s}} \rb \SMap{\abs{s'}}}\\
  \abs{\obs} &\wideDefeq \abs{\robs} \cup \abs{\hlwobs} \cup \abs{\sobs}
\end{align*}
We prove the $\snapshotSig$ properties by unfolding the definition of
$\abs{\robs}$ in the properties so that we can apply the $\fwdSig$
properties. 

\begin{genthm}{\cref{ax:ss-io}}
  For every terminated scan $\abs{s}$ and index $i$, there exists some
  write $\abs{w_i}$ such that $\abs{w_i \robs s}$ and
  $\abs{w_i}.\evIn = \abs{s}.\evOut[i]$.
\end{genthm}

\begin{proof}
  By \cref{ax:fwd-io}, there exists $\abs{w_i}$ such as
  $\abs{w_i \hlrobs \SMap{\abs{s}}}$ and
  $\abs{w_i}.\evIn = \abs{s}.\evOut[i]$ for $\abs{s}$, where the
  latter immediately corresponds to one of our goals and the former
  corresponds to the remaining goal by the instantiation of
  $\abs{\robs}$ giving us
  $\abs{w_i \hlrobs \SMap{\abs{s}}} = \abs{w_i \robs s}$.
\end{proof}

\begin{genthm}{\cref{ax:ss-wrterm}}
  Every terminated write is effectful, i.e.,
  $\term(\abs{W_i}) \subseteq \abs{\WrEff_i}$.
\end{genthm}

\begin{proof}
  Since every terminated event must have executed all their rep
  events, it follows that a terminated write $\abs{w_i}$ must have
  executed $\rep{\wra{w_i}}$, which matches the instantiation of
  $\abs{\WrEff_i}$.
\end{proof}

\begin{genthm}{\cref{ax:ss-wrtotal}}
  For two distinct writes $\abs{w_i}$ and $\abs{w'_i}$, we have either
  $\abs{w_i \hb w'_i}$ or $\abs{w'_i \hb w_i}$.
\end{genthm}

\begin{proof}
  For each write, we either have $\rep{\wra{w_i} \hb \wra{w'_i}}$ or
  $\rep{\wra{w'_i} \hb \wra{w_i}}$ by \cref{ax:mem-wrtotal}. By the
  definition of $\abs{\hlwobs}$, if we have
  $\rep{\wra{w_i} \hb \wra{w'_i}}$ then we have $\abs{w_i \hlwobs w'_i}$
  (and thus also $\abs{w_i \hb w'_i}$) and if we have
  $\rep{\wra{w'_i} \hb \wra{w_i}}$ then we have $\abs{w'_i \hlwobs w_i}$
  (and thus also $\abs{w'_i \hb w_i}$).
\end{proof}

\begin{genthm}{\cref{ax:wfobs}}\label{lem:fwd-wfobs}
  If $\abs{e \obs^+ e'}$ then we cannot have
  $\abs{e' \rbeq e}$.
\end{genthm}

\begin{proof}
  Assume we have $\abs{e' \rbeq e}$, we derive a contradiction.
  Consider first that we have $\abs{e = e'}$ such that we have
  $\abs{e \obs^+ e}$. There is no $\abs{w_i \robs s}$ in
  $\abs{e \obs^+ e}$ since if there is, there is no way to relate from
  a scan back to a write, i.e., there is no observation from scan to
  write. Thus we must have either $\abs{e \hlwobs^+ e}$ or
  $\abs{e \sobs^+ e}$, which we cannot have by total ordering of
  $\rep{\wra{w_i}}$ events by \cref{ax:mem-wrtotal} or by virtual scan
  total order by \cref{ax:fwd-sctotal}. Thus $\abs{e}$ and $\abs{e'}$
  cannot be equal.
  
  Assuming instead we have $\abs{e' \rb e}$, we derive a
  contradiction. By \cref{lem:fwd-hbabsrep} from $\abs{e \obs^+ e'}$
  we have $\rep{e_r \hb e'_r}$ for some $\rep{e_r} \subev \abs{e}$ and
  $\rep{e'_r} \subev \abs{e'}$. By \cref{ax:rb-absrep} we can derive
  $\rep{e'_r \rb e_r}$ from $\abs{e' \rb e}$, we can construct the
  cycle $\rep{e_r \hb e'_r \rb e_r}$ which contradicts irreflexivity
  of $\rep{\hb}$ (\cref{lem:hb-irrefl}).
\end{proof}

By \cref{lem:hb-irrefl} and Virtual \cref{ax:wfobs}, we have that
$\abs{\hb}$ is irreflexive, and thus it is a strict partial order. We
will use this to prove following properties.

\begin{genthm}{\cref{ax:ss-robsuniq}}
  \label{lem:fwd-robsuniq}
  If $\abs{w_i \robs s}$ and $\abs{w'_i \robs s}$ then we must have
  $\abs{w_i = w'_i}$.
\end{genthm}

\begin{proof}
  Let $\SMap{\abs{s}} = \vrt{\vsc}$, by unfolding $\abs{w_i \robs s}$
  we have $\abs{w_i \hlrobs \vrt{\vsc}}$, and dually for
  $\abs{w'_i \robs s}$. By case splitting
  $\abs{w_i \hlrobs \vrt{\vsc}}$ and $\abs{w'_i \hlrobs \vrt{\vsc}}$.
  Remember that $\abs{w_i \hlrobs \vrt{\vsc}}$ splits into either
  $\rep{\wra{w_i} \robs \sca{\vsc}{i}}$ and
  $\rep{\scr{\vsc}{i} \robs \scb{\vsc}{i}}$, or
  $\abs{w_i \fobs \vrt{\vsc}}$ and similarly for
  $\abs{w'_i \hlrobs \vrt{\vsc}}$.
  \begin{itemize}
  \item If $\abs{w_i \hlrobs \vrt{\vsc}}$ and
    $\abs{w'_i \hlrobs \vrt{\vsc}}$ splits into different cases, e.g.,
    we have $\abs{w_i \fobs \vrt{\vsc}}$ and
    $\rep{\scr{\vsc}{i} \robs \scb{\vsc}{i}}$, then by
    \cref{ax:fobsbot} we derive a contradiction.
  \item If both split into the first case, i.e., we have
    $\rep{\wra{w_i} \robs \sca{\vsc}{i}}$ and
    $\rep{\wra{w'_i} \robs \sca{\vsc}{i}}$, then by
    \cref{ax:mem-robsuniq} we have $\rep{\wra{w_i} = \wra{w'_i}}$.
    Since each abstract write has a distinct representative write, we
    must have $\abs{w_i = w'_i}$.
  \item If both split into the second case, i.e., we have
    $\abs{w_i \fobs \vrt{\vsc}}$ and $\abs{w'_i \fobs \vrt{\vsc}}$ then by
    \cref{ax:fobsuniq} we have $\abs{w_i = w'_i}$. \qedhere
  \end{itemize}
\end{proof}

\begin{lemma}\label{lem:hb-ss2fwd}
  If $\abs{w_i \hb s}$ then $\abs{w_i \hb \SMap{\abs{s}}}$.
\end{lemma}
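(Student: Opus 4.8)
The plan is to prove the lemma by ``lifting'' the $\abs{\hb}$-chain witnessing $\abs{w_i \hb s}$ so that it ends at $\SMap{\abs{s}}$ rather than at $\abs{s}$. Since in this appendix $\abs{\obs}$ has been restricted to $\abs{E_c} = \term(\abs{S}) \cup \bigcup_i \abs{\WrEff_i}$, every scan occurring in such a chain is terminated and hence in $\mathsf{dom}(\SMap{[-]})$. Define the lifting $\phi$ on the events in play by $\phi(\abs{w_j}) = \abs{w_j}$ on writes and $\phi(\abs{s'}) = \SMap{\abs{s'}}$ on scans. The core of the argument is the single-step claim: if $\abs{e \hb_1 e'}$ holds — where, as defined in this appendix, $\abs{\hb_1} = \abs{\rb} \cup \abs{\robs} \cup \abs{\hlwobs} \cup \abs{\sobs}$ — then $\phi(\abs{e})$ and $\phi(\abs{e'})$ are related by one step of the $\fwdSig$-level happens-before, the one generated from $\abs{\rb}$, $\abs{\hlrobs}$ and $\abs{\hlwobs}$.

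First I would settle the single-step claim by a case split on the constituent realizing $\abs{e \hb_1 e'}$. If it is $\abs{\robs}$, then $\abs{e} = \abs{w'_j}$, $\abs{e'} = \abs{s'}$, and by the very definition of $\abs{\robs}$ used here we get $\abs{w'_j \hlrobs \SMap{\abs{s'}}}$, i.e.\ an $\abs{\hlrobs}$-edge from $\phi(\abs{e})$ to $\phi(\abs{e'})$. If it is $\abs{\sobs}$, both are scans and the definition yields $\vrt{\SMap{\abs{s'}} \rb \SMap{\abs{s''}}}$, an $\abs{\rb}$-edge from $\phi(\abs{e})$ to $\phi(\abs{e'})$. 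The $\abs{\hlwobs}$ case is immediate, since $\abs{\hlwobs}$ is the common relation of both signatures and $\phi$ fixes writes. Finally, if it is $\abs{\rb}$, I would use $\phi(\abs{e}) \subev \abs{e}$ and $\phi(\abs{e'}) \subev \abs{e'}$ — reflexivity of $\subev$ for writes, and \cref{ax:fwd-vrtinscan} (which gives $\SMap{\abs{s'}} \subev \abs{s'}$) for scans — together with \cref{ax:rb-absrep} to get $\phi(\abs{e}) \rb \phi(\abs{e'})$. In all four cases $\phi(\abs{e})$ and $\phi(\abs{e'})$ are related by one $\fwdSig$-level $\hb$-step.

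The lemma then follows: take an $\abs{\hb_1}$-chain realizing $\abs{w_i \hb s}$, apply the single-step claim to each link, and concatenate the resulting $\fwdSig$-level $\hb$-steps using transitivity; since $\phi(\abs{w_i}) = \abs{w_i}$ and $\phi(\abs{s}) = \SMap{\abs{s}}$, this gives $\abs{w_i \hb \SMap{\abs{s}}}$. I do not expect a genuine obstacle here; the only thing to be careful about is bookkeeping — confirming that every intermediate scan of the chain lies in $\mathsf{dom}(\SMap{[-]})$ (which is precisely why the restriction of $\abs{\obs}$ to terminated events matters), and that the edges produced by $\phi$, in particular $\abs{\rb}$-edges between a write and a virtual scan or between two virtual scans, are legitimate edges of the $\fwdSig$-level happens-before.
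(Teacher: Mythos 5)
Your per-edge lifting $\phi$ (identity on writes, $\SMap{[-]}$ on scans) with the four-way case split on $\abs{\rb}$, $\abs{\robs}$, $\abs{\hlwobs}$, $\abs{\sobs}$ is exactly the paper's proof, which runs the same argument as an induction on the length of the $\abs{\hb_1}$-chain, handling the $\abs{\rb}$ case via \cref{ax:rb-absrep} and \cref{ax:fwd-vrtinscan} and the $\abs{\robs}$/$\abs{\sobs}$ cases by unfolding their definitions; so the proposal is correct and essentially identical. The one inaccuracy is your justification that intermediate scans lie in $\mathsf{dom}(\SMap{[-]})$: the restriction of $\abs{\obs}$ to $\abs{E_c}$ belongs to the linearizability proof of \cref{lem:ss-lin}, not to this stage of the development, and what actually guarantees definedness here is that any scan carrying an $\abs{\robs}$- or $\abs{\sobs}$-edge has $\SMap{[-]}$ defined by the very definition of those relations, while a scan with an outgoing $\abs{\rb}$-edge is terminated and hence covered by \cref{ax:fwd-io}.
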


\begin{proof}
  By viewing $\abs{w_i \hb s}$ as a chain of $\abs{\hb_1}$, we have
  $\abs{w_i \hb_1 \dots \hb_1 s}$. By induction on the length of the
  chain. In our base case we have $\abs{w_i \hb_1 s}$, which splits
  into cases $\abs{w_i \rb s}$ and $\abs{w_i \obs s}$, which can only
  be $\abs{w_i \robs s}$. For $\abs{w_i \rb s}$, by
  \cref{ax:rb-absrep,ax:fwd-vrtinscan}, we have
  $\abs{w_i \rb \SMap{\abs{s}}}$. For $\abs{w_i \robs s}$, we have
  $\abs{w_i \hlrobs \SMap{\abs{s}}}$ directly from the definition of
  $\abs{\robs}$. The inductive step is similar, with the exception
  that we need to also consider for
  $\abs{w_i \hb_1 \dots \hb_1 s' \hb_1 s}$ the $\abs{s' \sobs s}$
  case, which by the definition of $\abs{\sobs}$ gives us
  $\vrt{\SMap{\abs{s'}} \rb \SMap{\abs{s}}}$, which implies
  $\abs{\SMap{\abs{s'}} \hb \SMap{\abs{s}}}$, allowing us to construct
  $\abs{w_i \hb \SMap{\abs{s}}}$.
\end{proof}

\begin{genthm}{\cref{ax:ss-nowrbetween}}
  \label{lem:fwd-nowrbetween}
  If $\abs{w_i \robs s}$ then there does not exist a
  write $\abs{w'_i}$ such that $\abs{w_i \hb w'_i \hb s}$ holds.
\end{genthm}

\begin{figure}[t]
\centering
\tikzset{
mystyle/.style={
  align=center
  }
}

\tikzset{
mystyleC/.style={
  minimum width=5em,
  align=center
  }
}

\tikzset{
mystyleD/.style={
  minimum width=4.5em,
  minimum height=1.8em,
  align=center
  }
}

\tikzset{
mystyleE/.style={
  minimum width=6em,
  align=center
  }
}

\tikzset{
mystyleF/.style={
  minimum width=6em,
  minimum height=1.8em,
  align=center
  }
}

\begin{tikzpicture}
  \node[mystyle] (A) at (0,0) {$\abs{w'_i \hb \vrt{\vsc}}$};
  \node[mystyle] (B) at (1.7,0) {$\abs{w'_i \hb_1\hbeq \vrt{\vsc}}$};

  \node[mystyleC] (C1) at (4,1) {$\abs{w'_i \rb\hbeq \vrt{\vsc}}$};
  \node[mystyleC,draw=black] (C2) at (4,0)  {$\abs{w'_i \hlwobs\hbeq \vrt{\vsc}}$};
  \node[mystyleC] (C3) at (4,-1) {$\abs{w'_i \hlrobs\hbeq \vrt{\vsc}}$};

  \node[mystyleD] (D1) at (6,1.5) {$\abs{w'_i \rb\hb \vrt{\vsc}}$};
  \node[mystyleD] (D2) at (6,0.5) {$\abs{w'_i \rb \vrt{\vsc}}$};

  \node[mystyleD,draw=black] (D3) at (6,-0.5) {$\abs{w'_i \hlrobs \vrt{\vsc}}$};
  \node[mystyleD] (D4) at (6,-1.5) {$\abs{w'_i \hlrobs\hb \vrt{\vsc}}$};

  \node[mystyleE] (E1) at (8,1.5) {$\abs{w'_i \rb\hbeq\hb_1 \vrt{\vsc}}$};
  \node[mystyleE] (E2) at (8,-1.5) {$\abs{w'_i \hlrobs\hbeq\hb_1 \vrt{\vsc}}$};

  \node[mystyleF,draw=black] (F1) at (10.5, 2) {$\abs{w'_i \rb\hbeq\hlrobs \vrt{\vsc}}$};
  \node[mystyleF] (F3) at (10.5, 1) {$\abs{w'_i \rb\hbeq\rb \vrt{\vsc}}$};

  \node[mystyleF] (F4) at (10.5, -1) {$\abs{w'_i \hlrobs\hbeq\rb \vrt{\vsc}}$};
  \node[mystyleF] (F6) at (10.5, -2) {$\abs{w'_i \hlrobs\hbeq\hlrobs \vrt{\vsc}}$};

  \node[mystyle,draw=black] (G) at (12.5,0) {$\rep{\wra{w'_i} \hb \scr{\vsc}{i}}$};

  \draw[-implies,double equal sign distance] (A.east) -- (B.west);
  
  \draw[->] (B.east) -- (C1.west);
  \draw[->] (B.east) -- (C2.west);
  \draw[->] (B.east) -- (C3.west);

  \draw[->] (C1.east) -- (D1.west);
  \draw[->] (C1.east) -- (D2.west);

  \draw[->] (C3.east) -- (D3.west);
  \draw[->] (C3.east) -- (D4.west);

  \draw[-implies,double equal sign distance] (D1.30) to [out=90,in=90] (E1.150);
  \draw[-implies,double equal sign distance] (D2.east) to [out=-20,in=180] (G.west);
  \draw[-implies,double equal sign distance] (D4.330) to [out=-90,in=-90] (E2.210);

  \draw[->] (E1.east) -- (F1.west);
  \draw[->] (E1.east) -- (F3.west);

  \draw[->] (E2.east) -- (F4.west);
  \draw[->] (E2.east) -- (F6.west);

  \draw[-implies,double equal sign distance] (F3.east) -- (G);
  \draw[-implies,double equal sign distance] (F4.east) -- (G);
  \draw[-implies,double equal sign distance] (F6.east) to [out=0,in=0] (F4.south east);
\end{tikzpicture} 
\caption{\label{fig:nowrbetween-cases} Directed acyclic graph
  representing the proof of splitting $\abs{w'_i \hb \vrt{\vsc}}$ into
  four final (boxed) cases. Forks in the graph with $\to$ arrows
  represents case splitting, splitting either $\abs{\hb_1}$ or
  $\abs{\hbeq}$. Implication arrows ($\Rightarrow$) represents an
  implication.}
\end{figure}

\begin{proof}
  Assume we have such a write $\abs{w'_i}$, we derive a contradiction.
  Let $\vrt{\vsc} = \SMap{\abs{s}}$, we unfold $\abs{w_i \robs s}$
  into $\abs{w_i \hlrobs \vrt{\vsc}}$. By \cref{lem:hb-ss2fwd} we have
  $\abs{w'_i \hb \vrt{\vsc}}$ from $\abs{w'_i \hb s}$. We first show
  that there exists a write $\abs{w''_i}$ with $\abs{w_i \hb w''_i}$
  such that either $\rep{\wra{w''_i} \hb \scr{\vsc}{i}}$ or
  $\abs{w''_i \rb\hbeq\hlrobs \vrt{\vsc}}$ hold. We split
  $\abs{w'_i \hb \vrt{\vsc}}$ into multiple cases, multiple times, as
  shown in \cref{fig:nowrbetween-cases}. We justify each of the
  implications:
  \begin{itemize}
  \item Any implication from something with $\abs{\hb}$ into either
    $\abs{\hb_1\hbeq}$ or $\abs{\hbeq\hb_1}$ simply holds by
    definition.
  \item The implication $\abs{w'_i \hlrobs\hbeq\hlrobs \vrt{\vsc}}$ into
    $\abs{w'_i \hlrobs\hbeq\rb \vrt{\vsc}}$ holds since there has to
    exist a write $\abs{w_j}$ and a scan $\vrt{\vsc'}$ such that
    $\abs{w'_i \hlrobs \vrt{\vsc'} \hb w_j \hlrobs \vrt{\vsc}}$. By
    \cref{ax:fwd-sctotal}, we either have $\vrt{\vsc \rbeq \vsc'}$
    or $\vrt{\vsc' \rb \vsc}$, where in the former case we
    have $\abs{\vrt{\vsc'} \hb w_j \hlrobs \vrt{\vsc \rbeq \vsc'}}$,
    contradicting irreflexivity of $\abs{\hb}$, thus we must have
    the latter. Thus we have $\abs{w'_i \hlrobs \vrt{\vsc' \rb \vsc}}$.
  \item All the implications going into $\rep{\wra{w'_i} \hb \scr{\vsc}{i}}$
    holds by \cref{lem:fwd-hbabsrep,ax:rb-absrep}.
  \end{itemize}
 
  We now consider each of the possible final cases of
  \cref{fig:nowrbetween-cases}:
  \begin{itemize}
  \item If we have $\abs{w'_i \hlwobs\hbeq \vrt{\vsc}}$, then we must
    have some $\abs{w''_i}$ such that
    $\abs{w'_i \hlwobs w''_i \hb \vrt{\vsc}}$. We recursively consider
    the cases of \cref{fig:nowrbetween-cases} with
    $\abs{w_i \hb w''_i \hb \vrt{\vsc}}$. We cannot have an infinite
    recursive chain of this case, since the set of writes is finite
    and thus, the only way to have such an infinite chain is by having
    a cycle in $\abs{\hlwobs}$, which is impossible since such a cycle
    implies we also have a cycle of rep events.
  \item If we have $\abs{w'_i \hlrobs \vrt{\vsc}}$ then we contradict
    the result of proving \cref{ax:ss-robsuniq} and
    $\abs{w_i \hlrobs \vrt{\vsc}}$ since $\abs{w_i}$ and $\abs{w'_i}$
    cannot be equal by $\abs{w_i \hb w'_i}$.
  \end{itemize}
  Thus the only remaining cases are
  $\rep{\wra{w''_i} \hb \scr{\vsc}{i}}$ and
  $\abs{w''_i \rb\hbeq\hlrobs \vrt{\vsc}}$. We case split
  $\abs{w_i \hlrobs \vrt{\vsc}}$ and consider all possible combined
  cases:
  \begin{itemize}
  \item If $\rep{\wra{w_i} \robs \sca{\vsc}{i}}$ and
    $\rep{\wra{w''_i} \hb \scr{\vsc}{i}}$, then $\rep{\wra{w''_i}}$
    happens in-between $\rep{\wra{w_i}}$ and $\rep{\sca{\vsc}{i}}$,
    thus we contradict \cref{ax:mem-nowrbetween}.
  \item If $\rep{\wra{w_i} \robs \sca{\vsc}{i}}$ and
    $\rep{\scr{\vsc}{i} \robs \scb{\vsc}{i}}$ and
    $\abs{w''_i \rb\hbeq\hlrobs \vrt{\vsc}}$, then from \cref{ax:forward1} we
    have $\rep{\wra{w''_i} \hb \sca{\vsc}{i}}$, implying
    $\rep{\wra{w''_i}}$ happened in-between $\rep{\wra{w_i}}$ and
    $\rep{\sca{\vsc}{i}}$, thus we contradict
    \cref{ax:mem-nowrbetween}.
  \item If $\abs{w_i \fobs \vrt{\vsc}}$ and
    $\rep{\wra{w''_i} \hb \scr{\vsc}{i}}$, then \cref{ax:forward2a}
    contradicts $\abs{w_i \hb w''_i}$.
  \item If $\abs{w_i \fobs \vrt{\vsc}}$ and
    $\abs{w''_i \rb\hbeq\hlrobs \vrt{\vsc}}$, then \cref{ax:forward2b}
    contradicts $\abs{w_i \hb w''_i}$. \qedhere
  \end{itemize}
\end{proof}

\begin{genthm}{\cref{ax:ss-mono}}\label{lem:fwd-mono}
  If $\abs{w_i, w_j \robs s}$ and $\abs{w'_i, w'_j \robs s'}$ with
  $\abs{w_i \hb w'_i}$ then we cannot have $\abs{w'_j \hb w_j}$.
\end{genthm}

\begin{proof}
  Assume we have $\abs{w'_j \hb w_j}$, we derive a contradiction. Let
  $\vrt{\vsc} = \SMap{\abs{s}}$ and $\vrt{\vsc'} = \SMap{\abs{s'}}$,
  we unfold $\abs{w_i, w_j \robs s}$ into
  $\abs{w_i, w_j \hlrobs \SMap{\abs{s}}}$ and similarly for
  $\abs{w'_i, w'_j \robs s'}$. By \cref{ax:fwd-sctotal}, we either
  have $\vrt{\vsc = \vsc'}$ or $\vrt{\vsc \rb \vsc'}$ or
  $\vrt{\vsc' \rb \vsc}$. If we have $\vrt{\vsc = \vsc'}$, we
  contradict the result of proving \cref{ax:ss-robsuniq}, since we
  would have $\abs{w_i \hlrobs \vrt{\vsc}}$ and
  $\abs{w'_i \hlrobs \vrt{\vsc}}$, but $\abs{w_i}$ and $\abs{w'_i}$
  cannot be equal since we have $\abs{w_i \hb w'_i}$. If we have
  $\vrt{\vsc \rb \vsc'}$, then we have
  $\abs{w'_j \hb w_j \hlrobs \vrt{\vsc \rb \vsc'}}$, contradicting the
  result of proving \cref{ax:ss-nowrbetween} by $\abs{w_j}$ occurring
  in between $\abs{w'_j \hlrobs \vrt{\vsc'}}$. Similarly, if we have
  $\vrt{\vsc' \rb \vsc}$, then we have
  $\abs{w_i \hb w'_i \hlrobs \vrt{\vsc' \rb \vsc}}$, contradicting the
  result of proving \cref{ax:ss-nowrbetween} by $\abs{w'_i}$ occurring
  in between $\abs{w_i \hlrobs \vrt{\vsc}}$.
\end{proof}

\section{Algorithm \ref{alg:jay1} Satisfies the Forwarding Signature}\label{apx:jay1}

We here present the complete proof for Jayanti's
single-writer/single-scanner algorithm (\cref{alg:jay1}) satisfying
the $\fwdSig$ signature (\cref{fig:forward}).

Since this algorithm is single-scanner, we instantiate the set of
virtual scans to be equal to the set of abs scans,
$\vrt{\Vsc} = \abs{S}$, and let $\SMap{[-]}$ be the identity mapping,
$\SMap{\abs{s}} = \abs{s}$. We let $\rep{\resA}$ correspond to the
array of the same name and $\rep{\resB}$ for each virtual scan will
only map to $\rep{\resB}$. The rep events of the event signatures for
\writeProc{} and \vscanProc{} (\cref{fig:struct}) corresponds to the
rep events of the algorithm of the same name, i.e., $\rep{\wra{w_i}}$
and $\rep{\scr{s}{i}}$, $\rep{\sca{s}{i}}$ and $\rep{\scb{s}{i}}$. We
instantiate forwarding visibility to the following:
\[
  \abs{w_i \fobs s} \wideDefeq \rep{\wrb{w_i} \robs \scb{s}{i}}
\]
I.e., a write $\abs{w_i}$ was forwarded to scan $\abs{s}$ if
$\rep{\scb{s}{i}}$ read the value written by $\rep{\wrb{w_i}}$.

By $\SMap{[-]}$ being instantiated as the identity mapping,
\cref{ax:fwd-vrtinscan} holds, since for all $\abs{e}$,
$\abs{e} \subev \abs{e}$. By the algorithm being single-scanner implies
that virtual scan total order \cref{ax:fwd-sctotal} holds. The
structure of the algorithm and the instantiation of $\abs{\fobs}$
implies that \cref{ax:fwd-io,ax:fwd-scstruct} hold. Additionally,
since the algorithm only writes to $\resA$ in writers and writes
$\bot$ to $\resB$ in scanners, \cref{ax:fwd-wrauniq,ax:fwd-scruniq}.
This leaves Properties~\eqref{ax:fobsuniq} to \eqref{ax:forward2b} to
prove.

\begin{genthm}{\cref{ax:fobsuniq}}
  If we have $\abs{w_i \fobs s}$ and $\abs{w'_i \fobs s}$ then
  $\abs{w_i}$ and $\abs{w'_i}$ must be equal.
\end{genthm}

\begin{proof}
  By our instantiation we have $\rep{\wrb{w_i} \robs \scb{s}{i}}$ and
  $\rep{\wrb{w'_i} \robs \scb{s}{i}}$, thus by \cref{ax:mem-robsuniq} we have
  $\rep{\wrb{w_i} = \wrb{w'_i}}$. Since all writes only performs their
  operations at most once, this means that the writes must be the
  same, i.e., $\abs{w_i = w'_i}$.
\end{proof}

\begin{genthm}{\cref{ax:fobsbot}}
  If $\rep{\scb{s}{i}}$ terminated without
  $\rep{\scr{s}{i} \robs \scb{s}{i}}$ then there exists $\abs{w_i}$
  such that $\abs{w_i \fobs \vrt{s}}$.
\end{genthm}

\begin{proof}
  Since we do not have $\rep{\scr{s}{i} \robs \scb{s}{i}}$, it is the
  case that $\rep{\scb{s}{i}}$ could not have observed any other
  $\rep{\scr{s'}{i}}$ since such $\abs{s'}$ must have occurred before
  $\abs{s}$ by single-scanner, implying
  $\rep{\scr{s'}{i} \rb \scr{s}{i}}$, meaning we would contradict
  \cref{ax:mem-nowrbetween} if we had
  $\rep{\scr{s'}{i} \robs \scb{s}{i}}$ by $\rep{\scr{s}{i}}$ occurring
  in between them. Thus, $\rep{\scb{s}{i}}$ must have observed some
  $\rep{\wrb{w_i}}$, i.e., $\rep{\wrb{w_i} \robs \scb{s}{i}}$,
  implying our goal $\abs{w_i \fobs s}$.
\end{proof}

\begin{genthm}{\cref{ax:fobshb}}
  If we have $\abs{w_i \fobs s}$ then we have
  $\rep{\wra{w_i} \hb \scb{s}{i}}$ without
  $\rep{\scr{s}{i} \robs \scb{s}{i}}$.
\end{genthm}

\begin{proof}
  Since $\abs{w_i \fobs s}$ is instantiated as
  $\rep{\wrb{w_i} \robs \scb{s}{i}}$, by
  $\rep{\wra{w_i} \rb \wrb{w_i} \robs \scb{s}{i}}$ we have
  $\rep{\wra{w_i} \hb \scb{s}{i}}$. For proving absence of
  $\rep{\scr{s}{i} \robs \scb{s}{i}}$, assume we have it, we derive a
  contradiction. By $\abs{w_i \fobs s}$ we have
  $\rep{\wrb{w_i} \robs \scb{s}{i}}$, which by \cref{ax:mem-robsuniq}
  we have $\rep{\wrb{w_i} = \scr{s}{i}}$, however that is impossible,
  since $\rep{\wrb{w_i}}$ is a rep event of $\abs{w_i}$, while
  $\rep{\scr{s}{i}}$ is a rep event of $\abs{s}$, thus a
  contradiction.
\end{proof}

Note that we now satisfies \cref{ax:fwd-scstruct,ax:fobshb}, meaning we
satisfy the preconditions to use \cref{lem:fwd-hbabsrep}. Before we
prove \cref{ax:forward1,ax:forward2a,ax:forward2b}, we prove some
helper lemmas.
\begin{lemma}%
  \label{lem:jay1obsX}
  For any execution, we have $\rep{\scon{s} \robs \wrx{w_i}}$ iff we
  have $\rep{\scon{s} \hb \wrx{w_i}}$ and
  $\rep{\scoff{s} \not\hb \wrx{w_i}}$.
\end{lemma}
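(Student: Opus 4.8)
The plan is to prove the two implications of the equivalence separately, in both cases reasoning about the atomic register $\resX$, on which all three events act: $\rep{\scon{s}},\rep{\scoff{s}} \in \rep{\resX.W}$ and $\rep{\wrx{w_i}} \in \rep{\resX.R}$. Two facts about \cref{alg:jay1} will be used repeatedly: the code of $\scanProc$ runs $\rep{\scon{s}}$ strictly before $\rep{\scoff{s}}$, so $\rep{\scon{s} \rb \scoff{s}}$ and hence $\rep{\scon{s} \hb \scoff{s}}$; and, inspecting the code, $\resX$ is written only by scanners, so every element of $\rep{\resX.W}$ is the initial write $\rep{\resXinit}$, or $\rep{\scon{s'}}$, or $\rep{\scoff{s'}}$, for some scan $\abs{s'}$. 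For the forward direction, if $\rep{\scon{s} \robs \wrx{w_i}}$ then $\rep{\scon{s} \hb \wrx{w_i}}$ since $\rep{\robs}\subseteq\rep{\obs}\subseteq\rep{\hb}$, and if additionally $\rep{\scoff{s} \hb \wrx{w_i}}$ then $\rep{\scon{s}\hb\scoff{s}\hb\wrx{w_i}}$ puts the $\resX$-write $\rep{\scoff{s}}$ strictly between $\rep{\scon{s}}$ and $\rep{\wrx{w_i}}$, contradicting \cref{ax:mem-nowrbetween} for $\rep{\scon{s}\robs\wrx{w_i}}$; so $\rep{\scoff{s}\not\hb\wrx{w_i}}$.

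For the backward direction, assume $\rep{\scon{s}\hb\wrx{w_i}}$ and $\rep{\scoff{s}\not\hb\wrx{w_i}}$. Since $\rep{\wrx{w_i}}$ is a terminated register read, \cref{ax:mem-io} produces $\rep{w}\in\rep{\resX.W}$ with $\rep{w\robs\wrx{w_i}}$, and it suffices to show $\rep{w}=\rep{\scon{s}}$. Using \cref{ax:mem-wrtotal} (the writes of $\resX$ are totally $\rep{\hb}$-ordered), I compare $\rep{w}$ first with $\rep{\scon{s}}$: the case $\rep{w\hb\scon{s}}$ (with $\rep{w}\neq\rep{\scon{s}}$) gives $\rep{w\hb\scon{s}\hb\wrx{w_i}}$, contradicting \cref{ax:mem-nowrbetween} for $\rep{w\robs\wrx{w_i}}$; so either $\rep{w}=\rep{\scon{s}}$ (done) or $\rep{\scon{s}\hb w}$. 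In the latter case I compare $\rep{w}$ with $\rep{\scoff{s}}$: if $\rep{\scoff{s}\hbeq w}$ then $\rep{\scoff{s}\hbeq w\hb\wrx{w_i}}$ gives $\rep{\scoff{s}\hb\wrx{w_i}}$, contradicting the hypothesis; hence the only surviving possibility is $\rep{\scon{s}\hb w\hb\scoff{s}}$ with both steps strict.

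To rule out that last possibility I need the key structural fact that no $\resX$-write lies strictly $\rep{\hb}$-between $\rep{\scon{s}}$ and $\rep{\scoff{s}}$; this is where the single-scanner restriction of \cref{alg:jay1} is genuinely used, and it is the step I expect to be the main obstacle (everything else being routine bookkeeping with \cref{ax:mem-io,ax:mem-nowrbetween,ax:mem-wrtotal}). I would prove it by cases on a candidate $\rep{w'}\in\rep{\resX.W}$: if $\rep{w'}=\rep{\resXinit}$ then $\rep{\resXinit\rb\scon{s}}$ (initialization precedes all operations), so $\rep{\scon{s}\hb\resXinit}$ is impossible by irreflexivity of $\rep{\hb}$ (\cref{lem:hb-irrefl}); if $\rep{w'}$ belongs to $\abs{s}$ itself it is one of the two endpoints; and if $\rep{w'}=\rep{\scon{s'}}$ or $\rep{\scoff{s'}}$ for $\abs{s'}\neq\abs{s}$, then since two scans never overlap we have $\abs{s\rb s'}$ or $\abs{s'\rb s}$, and \cref{ax:rb-absrep} promotes this to $\rep{\scoff{s}\rb\scon{s'}},\rep{\scoff{s}\rb\scoff{s'}}$ or to $\rep{\scon{s'}\rb\scon{s}},\rep{\scoff{s'}\rb\scon{s}}$ respectively — either way both $\resX$-writes of $\abs{s'}$ sit entirely on one side of the interval, so none is strictly between, again by irreflexivity of $\rep{\hb}$. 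Thus $\rep{w}=\rep{\scon{s}}$, i.e., $\rep{\scon{s}\robs\wrx{w_i}}$, closing the equivalence.
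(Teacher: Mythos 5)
Your proof is correct and follows essentially the same route as the paper's: the forward direction via \cref{ax:mem-nowrbetween}, and the backward direction by taking the write observed by $\rep{\wrx{w_i}}$ and using \cref{ax:mem-wrtotal} together with the single-scanner structure of $\resX$-writes to force it to be $\rep{\scon{s}}$. The only difference is that you spell out explicitly (via the case analysis on $\rep{\resXinit}$, $\abs{s}$'s own writes, and other scans' writes) the structural fact that the paper compresses into ``by the algorithm structure $\rep{e} = \rep{\scoff{s}}$ or some later write''.
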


\begin{proof}\leavevmode
  \begin{itemize}
  \item[$(\mathord{\implies})$] $\rep{\scon{s} \robs \wrx{w_i}}$
    trivially implies $\rep{\scon{s} \hb \wrx{w_i}}$, and if we were
    to have $\rep{\scoff{s} \hb \wrx{w_i}}$ we contradict
    \cref{ax:mem-nowrbetween} by $\rep{\scoff{s}}$ occurring in
    between $\rep{\scon{s} \robs \wrx{w_i}}$.
  \item[$(\mathord{\impliedby})$] By the fact that only scans change
    $\resX$ and there is a single-scanner constraint, we know that any
    write to $\resX$ other than $\rep{\scon{s}}$ and $\rep{\scoff{s}}$
    either happens before or after the scan $\abs{s}$. Since
    $\rep{\wrx{w_i}}$ has to observe something when it finishes, it
    must observe $\rep{\scon{s}}$, since if $\rep{e \robs \wrx{w_i}}$
    for some write event $\rep{e} \neq \rep{\scon{s}}$, then by
    \cref{ax:mem-wrtotal} we either have $\rep{e \hb \scon{s}}$, which
    contradicts \cref{ax:mem-nowrbetween} by $\rep{\scon{s}}$
    occurring in between $\rep{e \robs \wrx{w_i}}$; or
    $\rep{\scon{s} \hb e}$, which by the algorithm structure
    $\rep{e} = \rep{\scoff{s}}$ or some later write, however this
    contradicts $\rep{\scoff{s} \not\hb \wrx{w_i}}$. \qedhere
  \end{itemize}
\end{proof}

\begin{lemma}
  \label{lem:jay1rblinpoint}
  If $\abs{w'_i \rbhbeqrobs s}$ then
  $\rep{\scoff{s} \not\hb \wrx{w'_i}}$ and if $\rep{\wrb{w'_i}}$ was
  executed, then $\rep{\wrb{w'_i} \hb \scb{s}{i}}$.
\end{lemma}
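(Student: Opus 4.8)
The statement to prove is \cref{lem:jay1rblinpoint}: assuming $\abs{w'_i \rbhbeqrobs s}$, we must show (a) $\rep{\scoff{s} \not\hb \wrx{w'_i}}$, and (b) if $\rep{\wrb{w'_i}}$ was executed then $\rep{\wrb{w'_i} \hb \scb{s}{i}}$. The plan is to first unfold the composition $\abs{w'_i \rbhbeqrobs s}$: by definition there exists a writer $\abs{w_j}$ (possibly on a different cell $j$) and a scan — but since this is single-scanner and $\SMap{[-]}$ is the identity, that scan is $\abs{s}$ itself — such that $\abs{w'_i \rb w_j}$, $\abs{w_j \hbeq w_k}$ for the write $\abs{w_k}$ with $\abs{w_k \robs s}$, and $\abs{w_k \robs s}$ holds via one of the two disjuncts of $\abs{\robs}$. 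Using \cref{lem:fwd-hbabsrep} (whose preconditions \cref{ax:fwd-scstruct,ax:fobshb} we have just established above), the $\abs{\hbeq}$ part lifts to a $\rep{\hb}$ relation on rep events, and combined with \cref{ax:rb-absrep} applied to $\abs{w'_i \rb w_j}$, I get that $\rep{\wra{w'_i}}$ (equivalently, any rep event of $\abs{w'_i}$, in particular $\rep{\wrx{w'_i}}$) is $\rep{\hb}$-before some rep event of $\abs{s}$ that itself $\rep{\hb}$-precedes or equals $\rep{\sca{s}{k}}$ or $\rep{\scb{s}{k}}$.

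For part (a): I want to rule out $\rep{\scoff{s} \hb \wrx{w'_i}}$. Suppose it held. The key fact is that $\rep{\scoff{s}}$ is the \emph{last} write to $\resX$ during the scan (single-scanner, and only scanners write $\resX$), and it occurs strictly after $\rep{\sca{s}{k}}$ and $\rep{\scon{s}}$ in program order (\cref{ax:fwd-scstruct} plus the code). So $\rep{\scoff{s} \hb \wrx{w'_i}}$ together with the rep-level chain from the previous paragraph — which places $\rep{\wrx{w'_i}}$ $\rep{\hb}$-before some event of $\abs{s}$ that is at or before $\rep{\scb{s}{k}}$ — must be analyzed by cases on which disjunct of $\abs{w_k \robs s}$ holds. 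If $\rep{\wra{w_k} \robs \sca{s}{k}}$ (direct read case), the witnessed rep event of $\abs{s}$ is $\rep{\sca{s}{k}}$, giving $\rep{\scoff{s} \hb \wrx{w'_i} \hb \sca{s}{k}}$; but $\rep{\sca{s}{k} \rb \scoff{s}}$ by the scan's structure, yielding a $\rep{\hb}$-cycle, contradicting irreflexivity of $\rep{\hb}$ (\cref{lem:hb-irrefl}). If instead $\abs{w_k \fobs s}$, i.e.\ $\rep{\wrb{w_k} \robs \scb{s}{k}}$, then $\rep{\wrb{w_k} \hb \scb{s}{k}}$, and I still get $\rep{\scoff{s}}$ lying $\rep{\hb}$-before $\rep{\scb{s}{k}}$ (true anyway by structure), so I need the cycle on the $\abs{w'_i}$ side: $\rep{\scoff{s} \hb \wrx{w'_i}}$ forces the rep-level chain to loop back into $\abs{s}$ at $\rep{\scb{s}{k}}$, and since $\rep{\scoff{s} \rb \scb{s}{k}}$ is not itself a contradiction, I instead observe that $\rep{w'_i}$'s rep events all $\rep{\rb}$-precede $\abs{s}$'s final events is \emph{false} in general — so the argument must really route through the fact that $\abs{w'_i \rb w_j}$ and $\abs{w_j \hbeq w_k \robs s}$ gives $\rep{\wrx{w'_i}} \rep{\hb}$-before a rep event $\rep{e_r} \subev \abs{s}$, and combining with $\rep{\scoff{s} \hb \wrx{w'_i}}$ gives $\rep{\scoff{s} \hb e_r}$ with $\rep{e_r}$ being $\rep{\sca{s}{k}}$ or upstream of it, hence a cycle.

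For part (b): assume $\rep{\wrb{w'_i}}$ was executed. Then in the code of $\writeProc$, $\rep{\wrx{w'_i}}$ must have read $\true$ from $\resX$, so it read from some $\rep{\scon{s'}}$. By part (a) and \cref{lem:jay1obsX}, from $\rep{\scon{?} \robs \wrx{w'_i}}$ and the contrapositive direction, I can pin down that the scan whose $\rep{\scon{}}$ was observed is $\abs{s}$ itself: $\rep{\scoff{s} \not\hb \wrx{w'_i}}$ from (a), plus I still need $\rep{\scon{s} \hb \wrx{w'_i}}$ — this I extract from the composition $\abs{w'_i \rbhbeqrobs s}$ together with \cref{ax:interval}, roughly as in the proof sketch of \cref{lem:jay1-fwd}: $\abs{w'_i}$ overlaps $\abs{s}$ in the right way so that $\rep{\scon{s}}$ precedes $\rep{\wrx{w'_i}}$ (if not, $\rep{\wrx{w'_i}}$ would read $\false$ or an earlier value, contradicting $\rep{\wrb{w'_i}}$ having executed, since a writer only forwards when it reads $\true$, and that $\true$ must come from \emph{this} scan by single-scanner reasoning). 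Then \cref{lem:jay1obsX} gives $\rep{\scon{s} \robs \wrx{w'_i}}$, hence $\rep{\wrx{w'_i} \rb \scoff{s} \rb \scr{s}{i} \rb \sca{s}{i} \rb \scb{s}{i}}$ — wait, actually the order is $\rep{\scon{s}}$ then $\rep{\scr{s}{i}}$ then $\rep{\sca{s}{i}}$ then $\rep{\scoff{s}}$ then $\rep{\scb{s}{i}}$ for \cref{alg:jay1}; so $\rep{\scon{s} \robs \wrx{w'_i}}$ means $\rep{\wrx{w'_i}}$ finishes after $\rep{\scon{s}}$, and $\rep{\wrb{w'_i}}$ comes right after $\rep{\wrx{w'_i}}$. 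To show $\rep{\wrb{w'_i} \hb \scb{s}{i}}$, I use \cref{ax:mem-wrtotal} to compare $\rep{\wrb{w'_i}}$ with the write observed by $\rep{\scb{s}{i}}$: if $\rep{\wrb{w'_i}}$ were $\rep{\hb}$-after $\rep{\scb{s}{i}}$ then $\rep{\wrx{w'_i}}$ reading $\true$ from this scan yet $\rep{\wrb{w'_i}}$ happening after $\rep{\scb{s}{i}}$ is fine time-wise only if $\rep{\scb{s}{i}}$ already passed; but then nothing observed $\rep{\wrb{w'_i}}$ among this scan's $\resB$-reads, which is consistent — so instead I bound it via $\rep{\wrx{w'_i} \rb \wrb{w'_i}}$ and the interval axiom against $\rep{\scon{s} \rb \scb{s}{i}}$: since $\rep{\scon{s} \not\hb \wrx{w'_i}}$ is false (we have the reverse, $\rep{\scon{s} \robs \wrx{w'_i}}$), the interval order forces $\rep{\wrb{w'_i}} $ to not be $\rb$-after $\rep{\scb{s}{i}}$, and comparing with \cref{ax:mem-nowrbetween}/\cref{ax:mem-wrtotal} on cell $i$ upgrades this to $\rep{\wrb{w'_i} \hbeq \scb{s}{i}}$ — and equality is impossible (different parents), so $\rep{\wrb{w'_i} \hb \scb{s}{i}}$. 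The main obstacle is the case analysis bookkeeping in part (a): correctly tracking which rep event of $\abs{s}$ the composition chain lands on in each disjunct of $\abs{\robs}$, and making sure the $\rep{\hb}$-cycle genuinely closes (rather than producing a merely $\rb$-inconsistent but $\hb$-acyclic configuration). I expect to lean heavily on \cref{lem:fwd-hbabsrep}, \cref{ax:rb-absrep}, \cref{ax:interval}, and irreflexivity of $\rep{\hb}$ throughout.
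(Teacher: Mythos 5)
Your setup is right—unfold $\abs{w'_i \rbhbeqrobs s}$ into $\abs{w'_i \rb e \hbeq w_k \robs s}$, lift to rep events via \cref{ax:rb-absrep} and \cref{lem:fwd-hbabsrep}, and case-split on the two disjuncts of $\abs{w_k \robs s}$—and your direct-read case of part (a) closes exactly as in the paper (the chain lands on $\rep{\sca{s}{k}}$, which returns before $\rep{\scoff{s}}$, so $\rep{\scoff{s} \hb \wrx{w'_i}}$ yields a $\rep{\hb}$-cycle). But the forwarding case of part (a) has a genuine gap. When $\abs{w_k \fobs s}$, the chain lands on $\rep{\scb{s}{k}}$, which comes \emph{after} $\rep{\scoff{s}}$, so there is no cycle to close on that side; your attempt to reroute the chain so that it lands ``at $\rep{\sca{s}{k}}$ or upstream of it'' is not justified by anything in the hypotheses. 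The missing idea is to exploit that $\rep{\wrb{w_k}}$ was executed: then $\rep{\wrx{w_k}}$ read $\true$, i.e.\ $\rep{\scon{s'} \robs \wrx{w_k}}$ for some scan $\abs{s'}$; single-scanner plus acyclicity forces $\abs{s' \rbeq s}$, and \cref{lem:jay1obsX} then gives $\rep{\scoff{s} \not\hb \wrx{w_k}}$. Since the rep-level chain gives $\rep{\wrx{w'_i} \hb \wrx{w_k}}$, assuming $\rep{\scoff{s} \hb \wrx{w'_i}}$ would give $\rep{\scoff{s} \hb \wrx{w_k}}$, the desired contradiction.

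Part (b) also does not close as written. Your plan is to show $\neg(\rep{\scb{s}{i} \rb \wrb{w'_i}})$ and then ``upgrade'' this to $\rep{\wrb{w'_i} \hb \scb{s}{i}}$ via \cref{ax:mem-wrtotal} and \cref{ax:mem-nowrbetween}; but \cref{ax:mem-wrtotal} only totally orders \emph{writes} to a cell, not a write against a read, and two temporally overlapping events need not be $\rep{\hb}$-related at all, so the negation of one $\rb$ direction does not yield the other direction of $\hb$. The paper instead obtains an honest $\rb$ edge directly from \cref{ax:interval} applied to $\rep{\wrb{w'_i} \rb e_r}$ and $\rep{\scoff{s} \rb \scb{s}{i}}$: the disjunct $\rep{\scoff{s} \rb e_r}$ is excluded because $\rep{e_r \hbeq \wrx{w_k}}$ would contradict $\rep{\scoff{s} \not\hb \wrx{w_k}}$ (the fact established above), leaving $\rep{\wrb{w'_i} \rb \scb{s}{i}}$. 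Note also that your route for (b) tries to first establish $\rep{\scon{s} \robs \wrx{w'_i}}$; this is neither derivable from the lemma's hypotheses alone (the argument you cite from the proof of \cref{lem:jay1-fwd} needs the extra assumption $\rep{\scr{s}{i} \robs \scb{s}{i}}$, which is absent here) nor needed for the conclusion.
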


\begin{proof}
  For $\abs{w'_i \rbhbeqrobs s}$, there must exist some $\abs{e}$
  and $\abs{w_j}$ such that $\abs{w'_i \rb e \hbeq w_j \robs s}$. Each
  of these events are populated except for $\abs{s}$, since each of
  the events are to the left of some $\abs{\hb}$. For
  $\abs{w'_i \rb e \hbeq w_j}$, let $\rep{e_r}$ be a rep event
  executed by $\abs{e}$, then by \cref{ax:rb-absrep,lem:fwd-hbabsrep}
  we have $\rep{\wrb{w'_i} \rb e_r \hbeq \wra{w_j}}$. We case split
  $\abs{w_j \robs s}$ into either $\rep{\wra{w_j} \robs \sca{s}{j}}$
  or $\rep{\wrb{w_j} \robs \scb{s}{j}}$.
  \begin{itemize}
  \item If we have $\rep{\wra{w_j} \robs \sca{s}{j}}$ then we have the chain
    \[
      \rep{\wrx{w'_i} \rb \wrb{w'_i} \hb \wra{w_j} \robs \sca{s}{j} \rb \scoff{s} \rb \scb{s}{i}}
    \]
    which trivially implies $\rep{\wrb{w'_i} \hb \scb{s}{i}}$ and
    $\rep{\scoff{s} \not\hb \wrx{w'_i}}$.
  \item If we have $\rep{\wrb{w_j} \robs \scb{s}{j}}$ then since
    $\rep{\wrb{w_j}}$ wrote, we know that $\rep{\wrx{w_j}}$ read
    $\true$, i.e., $\rep{\scon{s'} \robs \wrx{w_j}}$ for some
    $\abs{s'}$. By single-scanner, we have either $\abs{s \rb s'}$ or
    $\abs{s' \rbeq s}$, which in the former case we can form the cycle
    \[
      \rep{\scb{s}{j} \rb \scon{s'} \robs \wrx{w_j} \rb \wrb{w_j}
        \robs \scb{s}{j}}
    \]
    contradicting irreflexivity of $\rep{\hb}$, thus we must have
    $\abs{s' \rbeq s}$. By \cref{lem:jay1obsX} we have
    $\rep{\scoff{s'} \not\hb \wrx{w_j}}$, which by $\abs{s' \rbeq s}$
    implies $\rep{\scoff{s} \not\hb \wrx{w_j}}$. By \cref{ax:interval}
    over $\rep{\wrb{w'_i} \rb e_r}$ and
    $\rep{\scoff{s} \rb \scb{s}{i}}$ we have
    $\rep{\wrb{w'_i} \rb \scb{s}{i}}$ since if we were to have
    $\rep{\scoff{s} \rb e_r}$ then we contradict
    $\rep{\scoff{s} \not\hb \wrx{w_j}}$ by
    $\rep{\scoff{s} \rb e_r \hbeq \wrx{w_j}}$. For the
    $\rep{\scoff{s} \not\hb \wrx{w'_i}}$ goal, assume we have
    $\rep{\scoff{s} \hb \wrx{w'_i}}$, then we derive a contradiction
    by having $\rep{\scoff{s} \hb \wrx{w'_i} \hb \wrx{w_j}}$ which
    contradicts $\rep{\scoff{s} \not\hb \wrx{w_j}}$. \qedhere
  \end{itemize}
\end{proof}

\begin{genthm}{\cref{ax:forward1}}
  If $\abs{w'_i \rbhbeqrobs s}$ and $\rep{\scr{s}{i} \robs \scb{s}{i}}$
  then $\rep{\wra{w'_i} \hb \sca{s}{i}}$.
\end{genthm}

\begin{proof}
  By \cref{lem:jay1rblinpoint} we have
  $\rep{\scoff{s} \not\hb \wrb{w'_i}}$ and
  $\rep{\wrb{w'_i} \hb \scb{s}{i}}$ if $\rep{\wrb{w'_i}}$ was
  executed. Consider if we have $\rep{\scon{s} \hb \wrx{w'_i}}$, then
  by \cref{lem:jay1obsX} we have $\rep{\scon{s} \robs \wrx{w'_i}}$
  which in turn implies that $\rep{\wrb{w'_i}}$ must have executed,
  however that contradicts \cref{ax:mem-nowrbetween} by
  $\rep{\wrb{w'_i}}$ occurring in between
  $\rep{\scr{s}{i} \robs \scb{s}{i}}$, thus we must have
  $\rep{\scon{s} \not\hb \wrx{w'_i}}$.

  By \cref{ax:interval} over $\rep{\wra{w'_i} \rb \wrx{w'_i}}$ and
  $\rep{\scon{s} \rb \sca{s}{i}}$, since $\rep{\scon{s} \not\hb \wrx{w'_i}}$
  contradicts $\rep{\scon{s} \rb \wrx{w'_i}}$, we must have
  $\rep{\wra{w'_i} \rb \sca{s}{i}}$, which is our goal, thus we are done.
\end{proof}

\begin{genthm}{\cref{ax:forward2a,ax:forward2b}}
  If $\abs{w_i \fobs s}$ and either $\abs{w'_i \rbhbeqrobs s}$ or
  $\rep{\wra{w'_i} \hb \scr{s}{i}}$ then $\abs{w_i \not\hb w'_i}$.
\end{genthm}

\begin{proof}
  By our instantiation we have $\rep{\wrb{w_i} \robs \scb{s}{i}}$ from
  $\abs{w_i \fobs s}$. Assuming we have $\abs{w_i \hb w'_i}$ (thus we
  also have $\rep{\wra{w_i} \hb \wra{w'_i}}$ by
  \cref{lem:fwd-hbabsrep}), we derive a contradiction.

  Let us start with the case where we have
  $\rep{\wra{w'_i} \hb \scr{s}{i}}$. By the single-writer constraint
  we have
  $\rep{\wrb{w_i} \hb \wra{w'_i} \hb \scr{s}{i} \rb \scb{s}{i}}$ which
  contradicts \cref{ax:mem-nowrbetween} by $\rep{\scr{s}{i}}$
  occurring in between $\rep{\wrb{w_i} \robs \scb{s}{i}}$, thus this
  case is not possible. For the next case we assume
  $\rep{\wra{w'_i} \not\hb \scr{s}{i}}$.

  For the $\abs{w'_i \rbhbeqrobs s}$ case, by
  \cref{lem:jay1rblinpoint} we have
  $\rep{\scoff{s} \not\hb \wrb{w'_i}}$ and
  $\rep{\wrb{w'_i} \hb \scb{s}{i}}$ if $\rep{\wrb{w'_i}}$ was
  executed. By \cref{ax:interval} over
  $\rep{\wra{w'_i} \rb \wrx{w'_i}}$ and
  $\rep{\scon{s} \rb \scr{s}{i}}$, since we have
  $\rep{\wra{w'_i} \not\hb \scr{s}{i}}$ contradicting
  $\rep{\wra{w'_i} \rb \scr{s}{i}}$ we must have
  $\rep{\scon{s} \rb \wrx{w'_i}}$. By \cref{lem:jay1obsX} we thus have
  $\rep{\scon{s} \robs \wrx{w'_i}}$, which means that
  $\rep{\wrx{w'_i}}$ reads $\true$ and thus $\rep{\wrb{w'_i}}$ is
  executed. By the single-writer constraint we have
  $\rep{\wrb{w_i} \hb \wrb{w'_i} \hb \scb{s}{i}}$ which contradicts
  \cref{ax:mem-nowrbetween} by $\rep{\wrb{w'_i}}$ occurring in between
  $\rep{\wrb{w_i} \robs \scb{s}{i}}$.
\end{proof}

\section{Multi-writer Forwarding satisfies the Forwarding signature}
\label{apx:mwfwd2fwd}

We here give the complete proof that any history satisfying the
$\mwFwdSig$ signature (\cref{fig:forward2}) also satisfies the
$\fwdSig$ signature (\cref{fig:forward}). We instantiate $\vrt{\Vsc}$
and $\SMap{[-]}$ of $\fwdSig$ to be the same as the corresponding
instantiations of $\mwFwdSig$.
\cref{ax:fwd-vrtinscan,ax:fwd-io,ax:fwd-wrauniq,ax:fwd-scruniq,ax:fwd-sctotal}
are shared between the signatures, therefore they trivially hold.

\begin{genthm}{\cref{ax:fwd-scstruct}}
  Every virtual scan satisfies
  $\rep{\scr{\vsc}{i} \rb \sca{\vsc}{i} \rb \scb{\vsc}{i}}$.
\end{genthm}

\begin{proof}
  By \cref{ax:fwd2-scstruct}, we have
  \[
    \rep{\scr{\vsc}{i} \rb \scon{\vsc} \robseq \sconobs{\vsc} \rb
      \sca{\vsc}{i} \rb \scoff{\vsc} \robseq \scoffobs{\vsc} \rb
      \scb{\vsc}{i}}
  \]
  by \cref{ax:interval} over $\rep{\scr{\vsc}{i} \rb \scon{\vsc}}$ and
  $\rep{\sconobs{\vsc} \rb \sca{\vsc}{i}}$, we either have
  $\rep{\scr{\vsc}{i} \rb \sca{\vsc}{i}}$ or
  $\rep{\sconobs{\vsc} \rb \scon{\vsc}}$, however the latter
  contradicts $\rep{\scon{\vsc} \robseq \sconobs{\vsc}}$, thus we must
  have $\rep{\scr{\vsc}{i} \rb \sca{\vsc}{i}}$. Similarly, by
  \cref{ax:interval} over $\rep{\sca{\vsc}{i} \rb \scoff{\vsc}}$ and
  $\rep{\scoffobs{\vsc} \rb \scb{\vsc}{i}}$, we either have
  $\rep{\sca{\vsc}{i} \rb \scb{\vsc}{i}}$ or
  $\rep{\scoffobs{\vsc} \rb \scoff{\vsc}}$, where the latter
  contradicts $\rep{\scoff{\vsc} \robseq \scoffobs{\vsc}}$, thus we
  must have $\rep{\sca{\vsc}{i} \rb \scb{\vsc}{i}}$.
\end{proof}

\begin{genthm}{\cref{ax:fobsuniq}}
  If we have $\abs{w_i \fobs \vrt{\vsc}}$ and
  $\abs{w'_i \fobs \vrt{\vsc}}$ then $\abs{w_i}$ and $\abs{w'_i}$ must
  be equal.
\end{genthm}

\begin{proof}
  By the instantiation we have $\rep{\wra{w_i} \robs \wrfa{f}}$ and
  $\rep{\wrfbB{f} \robs \scb{\vsc}{i}}$ and
  $\rep{\wra{w'_i} \robs \wrfa{f'}}$ and
  $\rep{\wrfbB{f'} \robs \scb{\vsc}{i}}$ from
  $\abs{w_i \fobs \vrt{\vsc}}$ and $\abs{w'_i \fobs \vrt{\vsc}}$
  respectively. By \cref{ax:mem2-robsuniq} over
  $\rep{\wrfbB{f} \robs \scb{\vsc}{i}}$ and
  $\rep{\wrfbB{f'} \robs \scb{\vsc}{i}}$, we have
  $\rep{\wrfbB{f} = \wrfbB{f'}}$, implying $\vrt{f = f'}$ since each
  $\vrt{f}$ performs $\rep{\wrfbB{f}}$ at most once. Thus we have
  $\rep{\wra{w_i} \robs \wrfa{f}}$ and
  $\rep{\wra{w'_i} \robs \wrfa{f}}$, which by \cref{ax:mem-robsuniq}
  we have $\rep{\wra{w_i} = \wra{w'_i}}$, which again since we only
  have one write to $\resA[i]$ per abs write we have
  $\abs{w_i = w'_i}$.
\end{proof}

\begin{genthm}{\cref{ax:fobsbot}}
  If $\rep{\scb{\vsc}{i}}$ terminated without
  $\rep{\scr{\vsc}{i} \robs \scb{\vsc}{i}}$ then there exists
  $\abs{w_i}$ such that $\abs{w_i \fobs \vrt{\vsc}}$.
\end{genthm}

\begin{proof}
  Since we do not have $\rep{\scr{\vsc}{i} \robs \scb{\vsc}{i}}$, it is
  the case that $\rep{\scb{\vsc}{i}}$ could not have observed any
  other $\rep{\scr{\vsc'}{i}}$ since such $\abs{\vsc'}$ must have
  occurred before $\abs{\vsc}$ by virtual scanner total order
  \cref{ax:fwd-sctotal}, implying
  $\rep{\scr{\vsc'}{i} \rb \scr{\vsc}{i}}$, and
  $\rep{\scr{\vsc}{i} \hb \scb{\vsc}{i}}$ by \cref{ax:fwd2-scstruct},
  meaning we would contradict \cref{ax:mem2-nowrbetween} if we had
  $\rep{\scr{\vsc'}{i} \robs \scb{\vsc}{i}}$ by $\rep{\scr{\vsc}{i}}$
  occurring in between them.
  By \cref{ax:fwd-scruniq} this means that whatever
  $\rep{\scb{\vsc}{i}}$ read, it cannot have been $\bot$. By
  \cref{ax:fwd2-fbBuniq}, $\rep{\scb{\vsc}{i}}$ must have observed
  some $\rep{\wrfbB{f}}$, i.e., $\rep{\wrfbB{f} \robs \scb{\vsc}{i}}$.
  This forwarding $\vrt{f}$ must have read something with
  $\rep{\wrfa{f}}$ by the structure enforced by
  \cref{ax:fwd2-fwdstruct}, which by \cref{ax:fwd-wrauniq} must be
  some $\rep{\wra{w_i}}$. Thus we have
  $\rep{\wra{w_i} \robs \wrfa{f}}$ and
  $\rep{\wrfbB{f} \robs \scb{\vsc}{i}}$, matching the instantiation of
  $\abs{\fobs}$.
\end{proof}

\begin{genthm}{\cref{ax:fobshb}}
  If we have $\abs{w_i \fobs \vrt{\vsc}}$ then we have
  $\rep{\wra{w_i} \hb \scb{\vsc}{i}}$ without
  $\rep{\scr{\vsc}{i} \robs \scb{\vsc}{i}}$.
\end{genthm}

\begin{proof}
  Since $\abs{w_i \fobs \vrt{\vsc}}$ is instantiated as
  $\rep{\wra{w_i} \robs \wrfa{f}}$ and
  $\rep{\wrfbB{f} \robs \scb{\vsc}{i}}$, and since we have
  $\rep{\wrfa{f} \rb \wrfbB{f}}$ we have
  $\rep{\wra{w_i} \robs \wrfa{f} \rb \wrfbB{f} \robs \scb{\vsc}{i}}$,
  implying $\rep{\wra{w_i} \hb \scb{\vsc}{i}}$. For proving absence of
  $\rep{\scr{\vsc}{i} \robs \scb{\vsc}{i}}$, assume we have it, we
  derive a contradiction. By the instantiation of
  $\abs{w_i \fobs \vrt{\vsc}}$ we have
  $\rep{\wrfbB{f} \robs \scb{\vsc}{i}}$, which by
  \cref{ax:mem-robsuniq} gives us $\rep{\wrfbB{f} = \scr{\vsc}{i}}$,
  however that is impossible, since $\rep{\wrfbB{f}}$ is a rep event
  of $\vrt{f}$, while $\rep{\scr{\vsc}{i}}$ is a rep event of
  $\vrt{\vsc}$, thus a contradiction.
\end{proof}

We have now satisfied \cref{ax:fwd-scstruct,ax:fobshb}, so we satisfy
the preconditions to use \cref{lem:fwd-hbabsrep}. Before we prove
Properties~\eqref{ax:forward1} to \eqref{ax:forward2b}, we prove some
helper lemmas. First, we prove that if a virtual scan $\vrt{\vsc}$
observes a forwarding $\vrt{f}$, then $\vrt{f}$ must have observed
$\rep{\scon{\vsc}}$ both in $\rep{\wrx{w_i}}$ and $\rep{\wrfxB{f}}$,
meaning that the forward was concurrent to the scan.

\begin{lemma}
  \label{lem:jay2robsforward}
  If $\rep{\wrfbB{f} \robs \scb{\vsc}{i}}$ then
  $\rep{\scon{\vsc} \robs \wrx{w_i}}$ and
  $\rep{\scon{\vsc} \robs \wrfxB{f}}$.
\end{lemma}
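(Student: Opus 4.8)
The plan is to reduce the whole statement to the claim that the virtual scan $\vrt{\vsc}$ that reads the forward coincides with the virtual scan the forward was directed at. Write $\vrt{\vsc'}$ for $\vrt{[f]_\vsc}$ and $\abs{w_i}$ for $\abs{[\vrt{f}]_w}$, so that the hypotheses of \cref{ax:fwd2-fwdprecond} and \cref{ax:fwd2-fwdsccond} are met. Since $\rep{\wrfbB{f}}$ appears as the source of a $\rep{\robs}$-edge it is defined, and since it is a store-conditional lying in $\rep{W_c}$ of its register it is successful, i.e.\ in $\rep{\sucSet}$. Then \cref{ax:fwd2-fwdsccond} gives $\rep{\scon{\vsc'} \robs \wrfxB{f}}$ and \cref{ax:fwd2-fwdprecond} gives $\rep{\scon{\vsc'} \robs \wrx{w_i}}$ together with $\vrt{f}.\resBcell = \vrt{\vsc'}.\resB[i]$; hence once $\vrt{\vsc} = \vrt{\vsc'}$ is known, the two required facts about $\vrt{\vsc}$ follow. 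Moreover, as the top-level $\rep{\robs}$ relates only events of a common register, $\rep{\wrfbB{f} \robs \scb{\vsc}{i}}$ forces $\vrt{f}.\resBcell = \vrt{\vsc'}.\resB[i]$ to equal $\vrt{\vsc}.\resB[i]$; call this common register $R$, so $\rep{\wrfbA{f}}, \rep{\wrfbB{f}}, \rep{\scr{\vsc}{i}}$ are all events of $R$. I would then suppose $\vrt{\vsc} \neq \vrt{\vsc'}$ and use \cref{ax:fwd-sctotal} to split into $\vrt{\vsc \rb \vsc'}$ and $\vrt{\vsc' \rb \vsc}$.

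The case $\vrt{\vsc \rb \vsc'}$ is a quick $\rep{\hb}$-cycle. By \cref{ax:rb-absrep} (and \cref{ax:evsig-subev} for $\rep{\scb{\vsc}{i}} \subev \vrt{\vsc}$, $\rep{\scon{\vsc'}} \subev \vrt{\vsc'}$) we get $\rep{\scb{\vsc}{i} \rb \scon{\vsc'}}$; chaining with $\rep{\scon{\vsc'} \hb \wrfxB{f}}$ (from $\rep{\scon{\vsc'} \robs \wrfxB{f}}$), $\rep{\wrfxB{f} \rb \wrfbB{f}}$ (\cref{ax:fwd2-fwdstruct}) and the hypothesis $\rep{\wrfbB{f} \hb \scb{\vsc}{i}}$ yields $\rep{\scb{\vsc}{i} \hb \scb{\vsc}{i}}$, contradicting \cref{lem:hb-irrefl}.

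The case $\vrt{\vsc' \rb \vsc}$ is the crux: I would show $\rep{\scr{\vsc}{i}}$ would have to be an interfering write inside the successful load-link/store-conditional pair $\rep{\wrfbA{f} \llobs \wrfbB{f}}$. From \cref{ax:fwd2-scstruct} we have $\rep{\scr{\vsc}{i} \hb \scb{\vsc}{i}}$; then since $\rep{\scr{\vsc}{i}} \in \rep{R.W_c}$ and $\rep{\wrfbB{f} \robs \scb{\vsc}{i}}$, \cref{ax:mem2-nowrbetween} excludes $\rep{\wrfbB{f} \hb \scr{\vsc}{i}}$, so by \cref{ax:mem2-wrtotal} (distinct $R$-writes) $\rep{\scr{\vsc}{i} \hb \wrfbB{f}}$. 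Since $\vrt{\vsc' \rb \vsc}$ makes $\vrt{\vsc'}$ terminated, \cref{ax:rb-absrep} gives $\rep{\scoff{\vsc'} \rb \scr{\vsc}{i}}$, while $\rep{\scon{\vsc'} \robs \wrfxB{f}}$ gives $\rep{\scoff{\vsc'} \not\hb \wrfxB{f}}$ by \cref{ax:fwd2-sconuniq}; applying \cref{ax:interval} to $\rep{\wrfbA{f} \rb \wrfxB{f}}$ (\cref{ax:fwd2-fwdstruct}) and $\rep{\scoff{\vsc'} \rb \scr{\vsc}{i}}$ rules out $\rep{\scoff{\vsc'} \rb \wrfxB{f}}$ and leaves $\rep{\wrfbA{f} \rb \scr{\vsc}{i}}$. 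Hence $\rep{\wrfbA{f} \hb \scr{\vsc}{i} \hb \wrfbB{f}}$ with $\rep{\scr{\vsc}{i}}$ an $R$-write. Finally, $\rep{\wrfbA{f}}$ and $\rep{\wrfbB{f}}$ each observe some $R$-write by \cref{ax:mem2-robspop}, and as $\rep{\wrfbB{f}} \in \rep{\sucSet}$ with $\rep{\wrfbA{f} \llobs \wrfbB{f}}$, \cref{ax:llsc-success} forces the two to observe the \emph{same} write $\rep{u}$; then $\rep{u \hb \wrfbA{f} \hb \scr{\vsc}{i} \hb \wrfbB{f}}$ puts the $R$-write $\rep{\scr{\vsc}{i}}$ strictly between $\rep{u}$ and $\rep{\wrfbB{f}}$, contradicting \cref{ax:mem2-nowrbetween} for $\rep{u \robs \wrfbB{f}}$. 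Therefore $\vrt{\vsc} = \vrt{\vsc'}$, and the lemma follows from the first paragraph.

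The main obstacle is the case $\vrt{\vsc' \rb \vsc}$: it is precisely where the store-conditional discipline of the forwarding matters — a later virtual scan's reset of $\resB[i]$ cannot slip inside a successful forwarding's load-link/store-conditional window — and it is the only place that needs the interval order together with the $\llregSig$ success axiom. The one routine point there is that $\rep{\wrfbA{f}}$ and the successful $\rep{\wrfbB{f}}$ are terminated (so \cref{ax:mem2-robspop} applies), which follows from \cref{ax:fwd2-fwdstruct} and the fact that a successful store-conditional whose written value is read has completed.
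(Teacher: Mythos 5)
Your proof is correct and takes essentially the same approach as the paper's: both reduce the claim to $\vrt{\vsc} = \vrt{[f]_\vsc}$ via \cref{ax:fwd-sctotal}, kill the case $\vrt{\vsc \rb \vsc'}$ with a $\rep{\hb}$-cycle through $\rep{\wrfxB{f} \rb \wrfbB{f} \robs \scb{\vsc}{i}}$, and kill the case $\vrt{\vsc' \rb \vsc}$ by forcing the reset $\rep{\scr{\vsc}{i}}$ to sit between the common write that $\rep{\wrfbA{f}}$ and the successful $\rep{\wrfbB{f}}$ observe (via \cref{ax:llsc-success}) and $\rep{\wrfbB{f}}$ itself, contradicting \cref{ax:mem2-nowrbetween}. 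The only differences are cosmetic: you route the \cref{ax:interval} step through $\rep{\scoff{\vsc'} \rb \scr{\vsc}{i}}$ where the paper uses $\rep{\scoffobs{\vsc'} \rb \scb{\vsc'}{i}}$, and you invoke \cref{ax:fwd2-sconuniq} where the paper applies \cref{ax:mem2-nowrbetween} directly, but both land on the same final contradiction.
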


\begin{proof}
  By $\rep{\wrfbB{f} \robs \scb{\vsc}{i}}$, we know that
  $\rep{\wrfbB{f}}$ successfully wrote to $\resBs[i]$, thus we know by
  \cref{ax:llsc-success} that there exists some $\rep{e_r}$ that wrote
  to $\resBs[i]$ such that $\rep{e_r \robs \wrfbA{f}}$ and
  $\rep{e_r \robs \wrfbB{f}}$. Also by the execution of $\vrt{f}$ we
  know that both $\rep{\wrx{w_i}}$ and $\rep{\wrfxB{f}}$ observed some
  $\rep{\scon{\vsc'}}$ of some $\vrt{\vsc'}$, i.e.,
  $\rep{\scon{\vsc'} \robs \wrx{w_i}}$ and
  $\rep{\scon{\vsc'} \robs \wrfxB{f}}$. To reach our goal, we prove by
  \cref{ax:fwd-sctotal} that we have $\vrt{\vsc = \vsc'}$ by showing
  that we can neither have $\vrt{\vsc \rb \vsc'}$ nor
  $\vrt{\vsc' \rb \vsc}$.
  \begin{itemize}
  \item For $\vrt{\vsc \rb \vsc'}$, we have
    $\rep{\scb{\vsc}{i} \rb \scr{\vsc'}{i}}$ by \cref{ax:rb-absrep}
    which leads to the cycle
    \[
      \rep{\scb{\vsc}{i} \rb \scr{\vsc'}{i} \rb \scon{\vsc'} \robs
        \wrfxB{f} \rb \wrfbB{f} \robs \scb{\vsc}{i}}
    \]
    which contradicts irreflexivity of $\rep{\hb}$ (\cref{lem:hb-irrefl}).
  \item For $\vrt{\vsc' \rb \vsc}$, we have
    $\rep{\scb{\vsc'}{i} \rb \scr{\vsc}{i}}$ by \cref{ax:rb-absrep}.
    By \cref{ax:mem2-wrtotal}, we have
    $\rep{\scr{\vsc}{i} \hb \wrfbB{f}}$ since if we instead had
    $\rep{\wrfbB{f} \hb \scr{\vsc}{i}}$ we have $\rep{\scr{\vsc}{i}}$
    occurring in between $\rep{\wrfbB{f} \robs \scb{\vsc}{i}}$,
    contradicting \cref{ax:mem2-nowrbetween}.
    Similarly by \cref{ax:mem2-wrtotal}, we have
    $\rep{e_r \hb \scr{\vsc}{i}}$, since consider if we have
    $\rep{\scr{\vsc}{i} \hbeq e_r}$: By \cref{ax:interval} over
    $\rep{\wrfbA{f} \rb \wrfxB{f}}$ and
    $\rep{\scoffobs{\vsc'} \rb \scb{\vsc'}{i}}$, we either have
    $\rep{\wrfbA{f} \rb \scb{\vsc'}{i}}$, which lets us construct the
    following cycle contradicting irreflexivity of $\rep{\hb}$
    (\cref{lem:hb-irrefl})
    \[
      \rep{\scr{\vsc}{i} \hbeq e_r \robs \wrfbA{f} \rb \scb{\vsc'}{i}
        \hb \scr{\vsc}{i}}
    \]
    or we have $\rep{\scoffobs{\vsc'} \rb \wrfxB{f}}$ which implies
    $\rep{\scoff{\vsc'} \hb \wrfxB{f}}$ by
    $\rep{\scoff{\vsc'} \hbeq \scoffobs{\vsc'} \rb \wrfxB{f}}$, thus
    $\rep{\scoff{\vsc'}}$ occurs in between
    $\rep{\scon{\vsc'} \robs \wrfxB{f}}$, contradicting
    \cref{ax:mem2-nowrbetween}.
    Thus we have $\rep{e_r \hb \scr{\vsc}{i}}$ and
    $\rep{\scr{\vsc}{i} \hb \wrfbB{f}}$, however this implies that
    $\rep{\scr{\vsc}{i}}$ occurs in between
    $\rep{e_r \robs \wrfbB{f}}$, contradicting
    \cref{ax:mem2-nowrbetween}. \qedhere
  \end{itemize}
\end{proof}

\begin{lemma}
  \label{lem:jay2rblinpoint}
  If $\abs{w'_i \rbhbeqrobs \vrt{\vsc}}$ then
  $\rep{\scoff{\vsc} \not\hb \wrx{w'_i}}$, and for any $\vrt{f'}$
  executed by $\abs{w'_i}$, we have
  $\rep{\scoff{\vsc} \not\hb \wrfxB{f'}}$ and if $\rep{\wrfbB{f'}}$
  was executed, then $\rep{\wrfbB{f'} \hb \scb{\vsc}{i}}$.
\end{lemma}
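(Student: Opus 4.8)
The plan is to mirror the proof of \cref{lem:jay1rblinpoint} for \cref{alg:jay1}, replacing the single-scanner argument used there with \cref{lem:jay2robsforward}. We work under the $\mwFwdSig$ instantiations of $\abs{\fobs}$, $\abs{\robs}$, $\abs{\wobs}$, and assume \cref{ax:fwd-scstruct,ax:fobshb} have already been established (as they are immediately before \cref{lem:jay2robsforward}), so that \cref{lem:fwd-hbabsrep} is available.

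First I would unfold the hypothesis into $\abs{w'_i \rb e}$, $\abs{e \hbeq w_j}$, $\abs{w_j \robs \vrt{\vsc}}$ for some event $\abs{e}$ and write $\abs{w_j}$. Since $\abs{\rb}$ is transitive and the only non-$\abs{\rb}$ way to leave $\abs{e}$ along an $\abs{\hb}$-chain is by $\abs{\obs}$ (whose domain consists of writes), any leading $\abs{\rb}$-steps may be absorbed into $\abs{w'_i \rb e}$, so I may take $\abs{e} \in \bigcup_i \abs{W_i} \cup \vrt{\Vsc}$ (with $\abs{e}=\abs{w_j}$ when the $\abs{\hbeq}$ is trivial). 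As in the proof of \cref{lem:jay1rblinpoint}, \cref{ax:rb-absrep} applied to $\abs{w'_i \rb e}$ together with \cref{lem:fwd-hbabsrep} applied to $\abs{e \hbeq w_j}$ (its conclusion on the write side being taken as $\rep{\wra{w_j}}$, which is legitimate since the final $\abs{\hb}$-step into a write is via $\abs{\rb}$ or $\abs{\wobs}$) yields a rep event $\rep{e_r}$ of $\abs{e}$ with $\rep{z \rb e_r \hbeq \wra{w_j}}$ for every rep event $\rep{z}$ of $\abs{w'_i}$; in particular this holds for $\rep{z}=\rep{\wrx{w'_i}}$ and, for any forwarding $\vrt{f'}$ executed by $\abs{w'_i}$, for $\rep{z}=\rep{\wrfxB{f'}}$ and $\rep{z}=\rep{\wrfbB{f'}}$. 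I then split on the definition of $\abs{w_j \robs \vrt{\vsc}}$.

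In the direct-read disjunct $\rep{\wra{w_j} \robs \sca{\vsc}{j}}$ holds, so the chain gives $\rep{z \hb \sca{\vsc}{j}}$, and \cref{ax:fwd2-scstruct} provides $\rep{\sca{\vsc}{j} \rb \scoff{\vsc} \robseq \scoffobs{\vsc} \rb \scb{\vsc}{i}}$; hence $\rep{z \hb \scoff{\vsc}}$, which by irreflexivity of $\rep{\hb}$ (\cref{lem:hb-irrefl}) yields $\rep{\scoff{\vsc} \not\hb \wrx{w'_i}}$ and $\rep{\scoff{\vsc} \not\hb \wrfxB{f'}}$, and with $\rep{z}=\rep{\wrfbB{f'}}$ continuing the chain through $\rep{\scoff{\vsc}}$ gives $\rep{\wrfbB{f'} \hb \scb{\vsc}{i}}$. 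In the forwarding disjunct $\abs{w_j \fobs \vrt{\vsc}}$ there is a forwarding $\vrt{f}\in\vrt{F_j}$ with $\rep{\wra{w_j} \robs \wrfa{f}}$ and $\rep{\wrfbB{f} \robs \scb{\vsc}{j}}$; \cref{lem:jay2robsforward} then gives $\rep{\scon{\vsc} \robs \wrfxB{f}}$, and \cref{ax:fwd2-sconuniq} turns this into $\rep{\scoff{\vsc} \not\hb \wrfxB{f}}$. From $\rep{z \rb e_r \hbeq \wra{w_j} \robs \wrfa{f}}$ and $\rep{\wrfa{f} \rb \wrfxB{f}}$ (\cref{ax:fwd2-fwdstruct}) I get $\rep{z \hb \wrfxB{f}}$; therefore $\rep{\scoff{\vsc} \hb \wrx{w'_i}}$ or $\rep{\scoff{\vsc} \hb \wrfxB{f'}}$ would force $\rep{\scoff{\vsc} \hb \wrfxB{f}}$, a contradiction, giving the first two conclusions. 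For the third I would apply \cref{ax:interval} to $\rep{\wrfbB{f'} \rb e_r}$ and $\rep{\scoffobs{\vsc} \rb \scb{\vsc}{i}}$ (the latter from \cref{ax:fwd2-scstruct}): the alternative $\rep{\scoffobs{\vsc} \rb e_r}$ is impossible, since it gives $\rep{\scoff{\vsc} \hbeq \scoffobs{\vsc} \rb e_r \hb \wrfxB{f}}$ against $\rep{\scoff{\vsc} \not\hb \wrfxB{f}}$, so $\rep{\wrfbB{f'} \rb \scb{\vsc}{i}}$.

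The routine-but-delicate part will be the very first step, unfolding $\abs{\rbhbeqrobs}$ and pinning down which rep events the rep-level chain passes through. The real obstacle is the forwarding disjunct: unlike in \cref{alg:jay1}, the observed write $\abs{w_j}$ may write a different cell $j\neq i$ through a forwarder, so the rep chain cannot be routed through $\rep{\scoff{\vsc}}$ towards $\rep{\scb{\vsc}{i}}$; instead one must use \cref{lem:jay2robsforward} to learn that the forwarder's $\VLins$ still observed $\rep{\scon{\vsc}}$ and then close the argument via the interval order (\cref{ax:interval}) rather than by plain transitivity.
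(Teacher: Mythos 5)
Your proposal is correct and follows essentially the same route as the paper's proof: the same unfolding of $\abs{w'_i \rb e \hbeq w_j \robs \vrt{\vsc}}$ into a rep-level chain via \cref{ax:rb-absrep} and \cref{lem:fwd-hbabsrep}, the same case split on $\abs{w_j \robs \vrt{\vsc}}$, the same use of \cref{lem:jay2robsforward} together with \cref{ax:fwd2-sconuniq} to obtain $\rep{\scoff{\vsc} \not\hb \wrfxB{f}}$ in the forwarding case, and the same closing application of \cref{ax:interval} to $\rep{\wrfbB{f'} \rb e_r}$ and $\rep{\scoffobs{\vsc} \rb \scb{\vsc}{i}}$. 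The only difference is presentational: you spell out more carefully why the rep event on the $\abs{w_j}$ side of the chain may be taken to be $\rep{\wra{w_j}}$, which the paper leaves implicit.
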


\begin{proof}
  For $\abs{w'_i \rbhbeqrobs \vrt{\vsc}}$, there exists some
  $\abs{e}$ and $\abs{w_j}$ such that
  $\abs{w'_i \rb e \hbeq w_j \robs \vrt{\vsc}}$. We know all of these
  events are populated except for $\vrt{\vsc}$ since they occur to the
  left of some $\abs{\hb}$. By \cref{lem:fwd-hbabsrep,ax:rb-absrep},
  we have every rep event of $\abs{w'_i}$ returns before some
  $\rep{e_r}$ in $\abs{e}$, and $\rep{e_r \hbeq \wra{w_j}}$. We case
  split $\abs{w_j \robs \vrt{\vsc}}$ into either
  $\rep{\wra{w_j} \robs \sca{\vsc}{j}}$, or
  $\rep{\wra{w_j} \robs \wrfa{f}}$ and
  $\rep{\wrfbB{f} \robs \scb{\vsc}{j}}$.
  \begin{itemize}
  \item If $\rep{\wra{w_j} \robs \sca{\vsc}{j}}$ then every rep event
    of $\abs{w'_i}$ happens before $\rep{\scoff{\vsc}}$ since we have
    \[
      \rep{e_r \hbeq \wra{w_j} \robs \sca{\vsc}{j} \rb \scoff{\vsc}}
    \]
    thus we have $\rep{\scoff{\vsc} \not\hb \wrx{w'_i}}$ and
    $\rep{\scoff{\vsc} \not\hb \wrfxB{f'}}$ and
    $\rep{\wrfbB{f'} \hb \scoff{\vsc} \hb \scb{\vsc}{i}}$ for all
    $\vrt{f'}$ executed by $\abs{w'_i}$.
  \item If $\rep{\wra{w_j} \robs \wrfa{f}}$ and
    $\rep{\wrfbB{f} \robs \scb{\vsc}{j}}$ then by
    \cref{lem:jay2robsforward} we have
    $\rep{\scon{\vsc} \robs \wrx{w_j}}$ and
    $\rep{\scon{\vsc} \robs \wrfxB{f}}$. By \cref{ax:fwd2-sconuniq} we
    have $\rep{\scoff{\vsc} \not\hb \wrfxB{f}}$, which in turn implies
    $\rep{\scoff{\vsc} \not\hb \wrx{w'_i}}$ and
    $\rep{\scoff{\vsc} \not\hb \wrfbB{f'}}$, since every rep event in
    $\abs{w'_i}$ occurs before $\rep{e_r}$ and
    $\rep{e_r \hbeq \wra{w_j} \robs \wrfa{f} \rb \wrfxB{f}}$. By
    \cref{ax:interval} over $\rep{\wrfbB{f'} \rb e_r}$ and
    $\rep{\scoffobs{\vsc} \rb \scb{\vsc}{i}}$, we have
    $\rep{\wrfbB{f'} \rb \scb{\vsc}{i}}$, since
    $\rep{\scoffobs{\vsc} \rb e_r}$ contradicts
    $\rep{\scoff{\vsc} \not\hb \wrfxB{f}}$ by
    \[
      \rep{\scoff{\vsc} \hbeq \scoffobs{\vsc} \rb e_r \hbeq \wra{w_j} \hb
        \wrfxB{f}}
    \]
    thus we satisfy all our goals. \qedhere
  \end{itemize}
\end{proof}

\begin{lemma}
  \label{lem:jay2fwd}
  If $\rep{\scon{\vsc} \robs \wrx{w_i}}$ with $\vrt{f}$ and $\vrt{f'}$
  executed by $\abs{w_i}$, where $\vrt{f \rb f'}$ and
  $\rep{\scon{\vsc} \robs \wrfxB{f}}$ and
  $\rep{\scon{\vsc} \robs \wrfxB{f'}}$ and
  $\rep{\wrfbB{f'} \hb \scb{\vsc}{i}}$, then there exists some
  $\vrt{f''}$ such that $\rep{\scr{\vsc}{i} \hb \wrfbB{f''}}$ and
  $\rep{\wrfbA{f''} \not\hb \wrfbA{f}}$ and
  $\rep{\wrfbB{f''} \hbeq \wrfbB{f'}}$ where $\rep{\wrfbB{f''}}$
  successfully wrote to $\resBs[i]$.
\end{lemma}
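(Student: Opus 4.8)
The plan is to obtain this lemma as a packaged special case of \cref{lem:llsc-double}. The two forwardings $\vrt{f}$ and $\vrt{f'}$ of $\abs{w_i}$ each perform an $\LLins/\SCins$ pair on the forwarding cell $\vrt{\vsc}.\resB[i]$, and the hypotheses say that both pairs are ``active'' with respect to $\vrt{\vsc}$ (their $\resX$-checks see $\rep{\scon{\vsc}}$) and that the second pair's $\SCins$ finishes before $\rep{\scb{\vsc}{i}}$. \cref{lem:llsc-double} will then hand us a successful $\LLins/\SCins$ pair nested between the two, which I will identify as the required $\vrt{f''}$.

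First I would set up the two pairs over a single register. Since $\rep{\wrx{w_i}}$ is the nearest preceding $\LLins(\resX)$ in the writer's thread (by \cref{ax:fwd2-wrstruct,ax:fwd2-fwdstruct} and \cref{ax:mem2-llobspop,ax:mem2-llobsparent}), we get $\rep{\wrx{w_i}} \llobs \rep{\wrfxB{f}}$ and $\rep{\wrx{w_i}} \llobs \rep{\wrfxB{f'}}$; combining with the hypotheses $\rep{\scon{\vsc} \robs \wrx{w_i}}$, $\rep{\scon{\vsc} \robs \wrfxB{f}}$, $\rep{\scon{\vsc} \robs \wrfxB{f'}}$, \cref{ax:llsc-success} yields $\rep{\wrfxB{f}}, \rep{\wrfxB{f'}} \in \rep{\sucSet}$, so both $\rep{\wrfbB{f}}$ and $\rep{\wrfbB{f'}}$ are executed ($\rep{\wrfbB{f}}$ is moreover terminated since $\vrt{f \rb f'}$ makes $\vrt{f}$ terminated, and $\rep{\wrfbB{f'}}$ is defined by the hypothesis $\rep{\wrfbB{f'} \hb \scb{\vsc}{i}}$). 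Then \cref{ax:fwd2-fwdsccond} applied to $\vrt{f}$ and to $\vrt{f'}$ produces virtual scans whose $\rep{\scon{-}}$ event is read by $\rep{\wrfxB{f}}$ resp.\ $\rep{\wrfxB{f'}}$; by reads-from uniqueness \cref{ax:mem2-robsuniq} this event is $\rep{\scon{\vsc}}$ in both cases, and by event-signature uniqueness \cref{ax:evsig-uniq} the scans are $\vrt{\vsc}$, i.e.\ $\vrt{[f]_\vsc} = \vrt{[f']_\vsc} = \vrt{\vsc}$. As $\abs{[\vrt{f}]_w} = \abs{[\vrt{f'}]_w} = \abs{w_i}$, \cref{ax:fwd2-fwdprecond} then gives $\vrt{f}.\resBcell = \vrt{f'}.\resBcell = \vrt{\vsc}.\resB[i]$ and $\rep{\wrx{w_i}} \rb \vrt{f}$. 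With \cref{ax:fwd2-fwdstruct} (for $\rep{\wrfbA{-} \llobs \wrfbB{-}}$ and the internal $\rb$-order of each forwarding) and \cref{ax:rb-absrep} applied to $\vrt{f \rb f'}$, this assembles $\rep{\wrfbA{f}} \llobs \rep{\wrfbB{f}} \rb \rep{\wrfbA{f'}} \llobs \rep{\wrfbB{f'}}$ over $\vrt{\vsc}.\resB[i]$, with terminated $\SCins$'s $\rep{\wrfbB{f}}, \rep{\wrfbB{f'}}$.

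Next I would discharge the side condition of \cref{lem:llsc-double}: no mutation other than an $\SCins$ touches $\vrt{\vsc}.\resB[i]$ within the time frame of the four events (i.e.\ there is no write-like $\rep{w}$ with $\rep{w \not\hb \wrfbA{f}}$ and $\rep{w \hbeq \wrfbB{f'}}$ that is not an $\SCins$). By \cref{ax:fwd-scruniq} and \cref{ax:fwd2-fbBuniq}, every write-like event into $\vrt{\vsc}.\resB[i]$ is either a non-$\SCins$ reset $\rep{\scr{\vsc'}{i}}$ (writing $\bot$) or a forwarding $\rep{\wrfbB{f''}}$ (an $\SCins$); so it suffices to exclude every reset $\rep{\scr{\vsc'}{i}}$ with $\vrt{\vsc'}.\resB[i] = \vrt{\vsc}.\resB[i]$. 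Such $\vrt{\vsc'}$ is $\rb$-comparable to $\vrt{\vsc}$ by \cref{ax:fwd-sctotal}. If $\vrt{\vsc'} \rb \vrt{\vsc}$, then by \cref{ax:rb-absrep,ax:fwd2-scstruct} $\rep{\scr{\vsc'}{i}} \rb \rep{\scon{\vsc}}$, and since $\rep{\scon{\vsc} \robs \wrx{w_i}} \rb \vrt{f}$ puts $\rep{\scon{\vsc}} \hb \rep{\wrfbA{f}}$, we get $\rep{\scr{\vsc'}{i}} \hb \rep{\wrfbA{f}}$, so this reset is strictly before the time frame; if $\vrt{\vsc} \rb \vrt{\vsc'}$, then \cref{ax:rb-absrep} gives $\rep{\scb{\vsc}{i}} \rb \rep{\scr{\vsc'}{i}}$, whence $\rep{\wrfbB{f'}} \hb \rep{\scb{\vsc}{i}} \rb \rep{\scr{\vsc'}{i}}$, so (as $\rep{\scr{\vsc'}{i}} \neq \rep{\wrfbB{f'}}$) we cannot have $\rep{\scr{\vsc'}{i}} \hbeq \rep{\wrfbB{f'}}$, so this reset is not within the time frame either. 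This is the step I expect to be the main obstacle, since it requires pinning down exactly what ``inside the time frame'' means in \cref{lem:llsc-double} and chasing the $\rb$/$\hb$ bookkeeping through \cref{ax:fwd2-scstruct} and \cref{ax:rb-absrep}; everything else is essentially unfolding definitions.

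Finally, \cref{lem:llsc-double} produces $\rep{l''} \llobs \rep{c''}$ over $\vrt{\vsc}.\resB[i]$ with $\rep{c''} \in \rep{\SC} \cap \rep{\sucSet}$, $\rep{l'' \not\hb \wrfbA{f}}$, and $\rep{c'' \hbeq \wrfbB{f'}}$. Since $\rep{c''}$ is a successful $\SCins$, it is not a reset, so by \cref{ax:fwd-scruniq} it writes a non-$\bot$ value, and by \cref{ax:fwd2-fbBuniq} $\rep{c''} = \rep{\wrfbB{f''}}$ for some forwarding $\vrt{f''}$; and $\rep{l''} = \rep{\wrfbA{f''}}$, because $\rep{\wrfbA{f''}} \llobs \rep{\wrfbB{f''}}$ by \cref{ax:fwd2-fwdstruct} while $\rep{l'' \llobs \wrfbB{f''}}$, and two distinct such predecessors would contradict the ``no intervening $\LLins$'' clause of \cref{ax:mem2-llobsparent}. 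This already gives $\rep{\wrfbA{f''} \not\hb \wrfbA{f}}$ and $\rep{\wrfbB{f''} \hbeq \wrfbB{f'}}$. For the remaining goal $\rep{\scr{\vsc}{i} \hb \wrfbB{f''}}$: by \cref{ax:mem2-wrtotal} the distinct write-like events $\rep{\scr{\vsc}{i}}$ and $\rep{\wrfbB{f''}}$ are $\hb$-comparable, and the alternative $\rep{\wrfbB{f''}} \hb \rep{\scr{\vsc}{i}}$ would yield, via \cref{ax:fwd2-fwdstruct}, \cref{ax:fwd2-scstruct}, the hypothesis $\rep{\scon{\vsc} \robs \wrx{w_i}}$ and \cref{ax:fwd2-fwdprecond}, the chain $\rep{\wrfbA{f''}} \hb \rep{\wrfbB{f''}} \hb \rep{\scr{\vsc}{i}} \rb \rep{\scon{\vsc}} \hb \rep{\wrx{w_i}} \rb \rep{\wrfbA{f}}$, contradicting $\rep{\wrfbA{f''} \not\hb \wrfbA{f}}$; hence $\rep{\scr{\vsc}{i} \hb \wrfbB{f''}}$ and $\vrt{f''}$ (with $\rep{\wrfbB{f''}} = \rep{c''}$ successfully writing $\vrt{\vsc}.\resB[i]$) is the desired forwarding.
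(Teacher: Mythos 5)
Your proposal is correct and follows essentially the same route as the paper's proof: reduce to \cref{lem:llsc-double} after checking that the only non-$\SCins$ mutations of $\vrt{\vsc}.\resB[i]$ are resets lying outside the relevant time frame, then extract $\vrt{f''}$ via \cref{ax:fwd2-fbBuniq} and derive $\rep{\scr{\vsc}{i} \hb \wrfbB{f''}}$ from $\rep{\wrfbA{f''} \not\hb \wrfbA{f}}$. The only (harmless) divergence is in that last step, where the paper applies \cref{ax:interval} to $\rep{\wrx{w_i} \rb \wrfbA{f}}$ and $\rep{\wrfbA{f''} \rb \wrfbB{f''}}$ to get $\rep{\wrx{w_i} \rb \wrfbB{f''}}$ directly, whereas you argue by \cref{ax:mem2-wrtotal} and a contradiction chain; your write-up is otherwise more explicit than the paper's about the side condition of \cref{lem:llsc-double} and the identification of $\rep{l''}$ with $\rep{\wrfbA{f''}}$.
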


\begin{proof}
  Since we have $\rep{\scon{\vsc} \hb \wrx{w_i}}$ and
  $\rep{\wrfbB{f'} \hb \scb{\vsc}{i}}$, it is not possible for any
  $\rep{\scr{\vsc'}{i}}$ to occur in the intervals of $\vrt{f}$ and
  $\vrt{f}$, therefore we can apply \cref{lem:llsc-double}, thus by
  \cref{ax:fwd2-fbBuniq} there exists some $\vrt{f''}$ such that
  $\rep{\wrfbA{f''} \not\hb \wrfbA{f}}$ and
  $\rep{\wrfbB{f''} \hbeq \wrfbB{f'}}$ where $\rep{\wrfbB{f''}}$
  successfully wrote to $\resBs[i]$. By \cref{ax:interval} over
  $\rep{\wrx{w_i} \rb \wrfbA{f}}$ and
  $\rep{\wrfbA{f''} \rb \wrfbB{f''}}$, since
  $\rep{\wrfbA{f''} \not\hb \wrfbA{f}}$ contradicts
  $\rep{\wrfbA{f''} \rb \wrfbA{f}}$, we have
  $\rep{\wrx{w_i} \rb \wrfbB{f''}}$, which lets us construct
  $\rep{\scr{\vsc}{i} \hb \scon{\vsc} \robs \wrx{w_i} \rb
    \wrfbB{f''}}$ implying $\rep{\scr{\vsc}{i} \hb \wrfbB{f''}}$, thus
  we satisfy all the goals.
\end{proof}

\begin{genthm}{\cref{ax:forward1}}
  If $\abs{w'_i \rbhbeqrobs \vrt{\vsc}}$ and
  $\rep{\scr{\vsc}{i} \robs \scb{\vsc}{i}}$ then
  $\rep{\wra{w'_i} \hb \sca{\vsc}{i}}$.
\end{genthm}

\begin{proof}
  By \cref{lem:jay2rblinpoint} we have
  $\rep{\scoff{\vsc} \not\hb \wrx{w'_i}}$ and
  $\rep{\scoff{\vsc} \not\hb \wrfxB{f'}}$ and
  $\rep{\scoff{\vsc} \not\hb \wrfxB{f''}}$ and
  $\rep{\wrfbB{f''} \hb \scb{\vsc}{i}}$ where $\vrt{f'}$ and
  $\vrt{f''}$ where executed by $\abs{w'_i}$. Consider if we have
  $\rep{\scon{\vsc} \hb \wrx{w'_i}}$ (and thus also
  $\rep{\scon{\vsc} \hb \wrfxB{f'}}$ and
  $\rep{\scon{\vsc} \hb \wrfxB{f''}}$), then by
  \cref{ax:fwd2-sconuniq} we have $\rep{\scon{\vsc} \robs \wrx{w'_i}}$
  and $\rep{\scon{\vsc} \robs \wrfxB{f'}}$ and
  $\rep{\scon{\vsc} \robs \wrfxB{f''}}$. By \cref{lem:jay2fwd} we know
  there exists some $\vrt{f}$ such that
  $\rep{\scr{\vsc}{i} \hb \wrfbB{f}}$ and
  $\rep{\wrfbB{f} \hbeq \wrfbB{f''}}$ with $\rep{\wrfbB{f}}$
  successfully writing to $\resBs[i]$. However, this means we have
  $\rep{\scr{\vsc}{i} \hb \wrfbB{f} \hbeq \wrfbB{f''} \hb
    \scb{\vsc}{i}}$, contradicting \cref{ax:mem-nowrbetween} by
  $\rep{\wrfbB{f}}$ occurring in between
  $\rep{\scr{\vsc}{i} \robs \scb{\vsc}{i}}$.
  
  Thus we must have $\rep{\scon{\vsc} \not\hb \wrx{w'_i}}$. By
  \cref{ax:interval} over $\rep{\wra{w'_i} \rb \wrx{w'_i}}$ and
  $\rep{\sconobs{\vsc} \rb \sca{\vsc}{i}}$, since having
  $\rep{\scon{\vsc} \not\hb \wrx{w'_i}}$ contradicts
  $\rep{\scon{\vsc} \hbeq \sconobs{\vsc} \rb \wrx{w'_i}}$, we must have
  $\rep{\wra{w'_i} \rb \sca{\vsc}{i}}$, which is our goal, thus we are
  done.
\end{proof}

\begin{genthm}{\cref{ax:forward2a,ax:forward2b}}
  If $\abs{w_i \fobs \vrt{\vsc}}$ and either
  $\abs{w'_i \rbhbeqrobs \vrt{\vsc}}$ or
  $\rep{\wra{w'_i} \hb \scr{\vsc}{i}}$ then $\abs{w_i \not\hb w'_i}$.
\end{genthm}

\begin{proof}
  Assume we have $\abs{w_i \hb w'_i}$, we derive a contradiction. By
  \cref{lem:fwd-hbabsrep} we have $\rep{\wra{w_i} \hb \wra{w'_i}}$
  from $\abs{w_i \hb w'_i}$, since both writes are populated by
  occurring to the left of some $\abs{\hb}$. By the instantiation of
  $\abs{\fobs}$, there exists some $\vrt{f}$ such that
  $\rep{\wra{w_i} \robs \wrfa{f}}$ and
  $\rep{\wrfbB{f} \robs \scb{\vsc}{i}}$. By
  \cref{lem:jay2robsforward}, we have
  $\rep{\scon{\vsc} \robs \wrx{w_i}}$.
  
  We start with the $\rep{\wra{w'_i} \hb \scr{\vsc}{i}}$ case, where
  we can derive that $\rep{\wra{w'_i} \hb \wrfa{f}}$:
  \[
    \rep{\wra{w'_i} \hb \scr{\vsc}{i} \rb \scon{\vsc} \robs \wrx{w_i}
      \rb \wrfa{f}}
  \]
  which with $\rep{\wra{w_i} \hb \wra{w'_i}}$, contradicts
  \cref{ax:mem-nowrbetween} by $\rep{\wra{w'_i}}$ occurring in between
  $\rep{\wra{w_i} \robs \wrfa{f}}$.

  For the $\abs{w'_i \rbhbeqrobs \vrt{\vsc}}$ case, we start by
  deriving $\rep{\wrfbA{f} \rb \wrx{w'_i}}$ by \cref{ax:interval} over
  $\rep{\wrfbA{f} \rb \wrfa{f}}$ and
  $\rep{\wra{w'_i} \rb \wrx{w'_i}}$, since we cannot have
  $\rep{\wra{w'_i} \rb \wrfa{f}}$ since that contradicts
  \cref{ax:mem-nowrbetween} by $\rep{\wra{w'_i}}$ occurring in between
  $\rep{\wra{w_i} \robs \wrfa{f}}$. From this, we can derive that we
  have $\rep{\scon{\vsc} \hb \wrx{w'_i}}$ by
  $\rep{\scon{\vsc} \robs \wrx{w_i} \rb \wrfbA{f} \rb \wrx{w'_i}}$. By
  \cref{lem:jay2rblinpoint} we have
  $\rep{\scoff{\vsc} \not\hb \wrx{w'_i}}$, which by
  \cref{ax:fwd2-sconuniq} means we have
  $\rep{\scon{\vsc} \robs \wrx{w'_i}}$, thus we know that
  $\vrt{\wrfA{w'_i}}$ and $\vrt{\wrfB{w'_i}}$ will be executed, let us
  refer to these as $\vrt{f'}$ and $\vrt{f''}$ respectively. Thus
  again by \cref{lem:jay2rblinpoint,ax:fwd2-sconuniq} we have
  $\rep{\scon{\vsc} \robs \wrx{w'_i}}$ and
  $\rep{\scon{\vsc} \robs \wrfxB{f'}}$ and
  $\rep{\scon{\vsc} \robs \wrfxB{f''}}$.

  By \cref{lem:jay2fwd} over $\vrt{f'}$ and $\vrt{f''}$, there exists
  some $\vrt{f'''}$ such that $\rep{\wrfbA{f'''} \not\hb \wrfbA{f'}}$
  and $\rep{\wrfbB{f'''} \hbeq \wrfbB{f''}}$. By
  \cref{ax:mem2-wrtotal}, we either have
  $\rep{\wrfbB{f} \hb \wrfbB{f'''}}$ or
  $\rep{\wrfbB{f'''} \hbeq \wrfbB{f}}$. In the first case, we have
  \[
    \rep{\wrfbB{f} \hb \wrfbB{f'''} \hbeq \wrfbB{f''} \hb
      \scb{\vsc}{i}}
  \]
  which contradicts \cref{ax:mem2-nowrbetween} by $\rep{\wrfbB{f'''}}$
  occurring in between $\rep{\wrfbB{f} \robs \scb{\vsc}{i}}$. In the
  second case, if we have $\rep{\wrfbB{f'''} \hb \wrfbB{f}}$ then by
  \cref{lem:llsc-seq}, and $\rep{\wrfbB{f}}$ and $\rep{\wrfbB{f'''}}$
  being successful, we have $\rep{\wrfbB{f'''} \hb \wrfbA{f}}$, and
  this also follows in the case of $\rep{\wrfbB{f'''} = \wrfbB{f}}$.
  By \cref{ax:interval} over $\rep{\wra{w'_i} \rb \wrfbA{f'}}$ and
  $\rep{\wrfbA{f'''} \rb \wrfbB{f'''}}$ we have
  $\rep{\wra{w'_i} \rb \wrfbB{f'''}}$ since we cannot have
  $\rep{\wrfbA{f'''} \rb \wrfbA{f'}}$ by
  $\rep{\wrfbA{f'''} \not\hb \wrfbA{f'}}$. Thus we have
  \[
    \rep{\wra{w_i} \hb \wra{w'_i} \rb \wrfbB{f'''} \hb \wrfbA{f} \rb \wrfa{f}}
  \]
  which contradicts \cref{ax:mem-nowrbetween} by $\rep{\wra{w'_i}}$
  occurring in between $\rep{\wra{w_i} \robs \wrfa{f}}$.
\end{proof}

\section{Algorithms \ref{alg:jay2} and \ref{alg:jay3} Satisfy the
  Multi-Writer Forwarding Signature}\label{apx:jay3}

We will show the complete proofs that \cref{alg:jay2,alg:jay3} satisfy
the multi-writer forwarding signature (\cref{fig:forward2}). We start
by repeating the proof that \cref{alg:jay2} satisfies the signature.

\begin{genthm}{\cref{lem:jay2-fwd2}}
  Every execution of \cref{alg:jay2} satisfies the $\mwFwdSig$ signature
  (\cref{fig:forward2}).
\end{genthm}
\begin{proof}
  Since this algorithm is single scanner, we can simply define the set
  of virtual scans to be the same as the set of abs scans and map each
  abs scan to itself, i.e., $\vrt{\Vsc} = \abs{S}$ and
  $\SMap{\abs{s}} = \abs{s}$. Each rep event directly corresponds to
  their equivalent variant in \cref{alg:jay2}, except for
  $\rep{\sconobs{\vsc}}$ and $\rep{\scoffobs{\vsc}}$ which are set as
  $\rep{\sconobs{\vsc}} = \rep{\scon{\vsc}} = \rep{\scon{s}}$ and
  $\rep{\scoffobs{\vsc}} = \rep{\scoff{\vsc}} = \rep{\scoff{s}}$
  respectively. \cref{ax:fwd-vrtinscan} holds by each event being a
  subevent of itself ($\abs{e \subev e}$) and \cref{ax:fwd-sctotal} holds
  since the algorithm is single-scanner. Properties~\eqref{ax:fwd-io}
  to \eqref{ax:fwd2-fbBuniq} and \eqref{ax:fwd2-scstruct} to
  \eqref{ax:fwd2-fwdsccond} holds directly by the structure of the
  algorithm and \cref{ax:fwd2-sconuniq} can be proven by
  Lemma~\apndxLemJayIObsX{}, the same lemma that established this property
  for \cref{alg:jay1}, since the relative structure of
  $\rep{\scon{s}}$ and $\rep{\scoff{s}}$ is the same.
\end{proof}

We next prove that \cref{alg:jay3} satisfies signature. We first prove
that the phases of logical scans can only happen in sequence.

\begin{lemma}\label{lem:jay3xord}
  Each $\rep{\scvon}$ and $\rep{\scvoff}$ and $\rep{\scvend}$ that
  successfully wrote to $\resX$ are ordered such that there exists
  either $\rep{\scvon'}$ or $\rep{\scvoff'}$ or $\rep{\scvend'}$,
  where we have either $\rep{\scvon' \robs \scvoff}$ or
  $\rep{\scvoff' \robs \scvend}$ or $\rep{\scvend' \robs \scvon}$,
  with the exception of the first $\rep{\scvon}$, which observes the
  initial value of $\resX$.
\end{lemma}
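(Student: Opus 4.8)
\textbf{Proof proposal for \cref{lem:jay3xord}.}

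The plan is to reason about the sequence of successful writes to the compound register $\resX$, using the $\LLins/\SCins$ machinery. All writes to $\resX$ happen via $\SCins$ (inside $\pushLSProc$: the events $\rep{\scvon}$, $\rep{\scvoff}$, $\rep{\scvend}$), so by \cref{lem:llsc-seq} the intervals of successful $\LLins/\SCins$ pairs into $\resX$ are totally ordered and non-overlapping; thus the successful writes $\rep{e_0 \hb e_1 \hb e_2 \hb \cdots}$ into $\resX$ form a $\rep{\hb}$-chain, and by \cref{ax:mem2-wrtotal} this chain is exhaustive. First I would fix a successful write $\rep{c}$ into $\resX$ that is not the first one, let $\rep{l \llobs c}$ be its load-link (which exists and is terminated by \cref{ax:mem2-llobspop}), and let $\rep{w}$ be the write-like event $\rep{l}$ observes (\cref{ax:mem2-robspop}); by \cref{ax:llsc-success} $\rep{c}$ observes the same $\rep{w}$, so $\rep{w}$ is the immediately preceding successful write to $\resX$ (it is the unique write with $\rep{w \robs c}$, and $\rep{w \hb c}$ with nothing in between by \cref{ax:mem2-nowrbetween}). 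Since $\rep{w}$ is itself a successful $\SCins$ into $\resX$, it is one of $\rep{\scvon'}$, $\rep{\scvoff'}$, $\rep{\scvend'}$ for some $\pushLSProc$-instance, giving $\rep{w \robs c}$ of the required form.

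The second half of the argument is to show the \emph{phase} constraint: that the predecessor's kind is forced to match the expected cyclic order $\rep{\scvend} \to \rep{\scvon} \to \rep{\scvoff} \to \rep{\scvend}$. This I would get from the control flow of $\pushLSProc$ together with the fact that $\rep{c}$'s own load-link $\rep{l}$ read a value of $\resX$ whose $\phase$ field determines which branch $\pushLSProc$ takes before issuing $\rep{c}$: a $\rep{\scvon}$-style write is only issued when the observed $\xPhase = 1$; a $\rep{\scvoff}$-style write only when $\xPhase = 2$; a $\rep{\scvend}$-style write only when $\xPhase = 3$. Combined with the observation that each of $\rep{\scvon}$, $\rep{\scvoff}$, $\rep{\scvend}$ writes a value whose $\phase$ field is respectively $2$, $3$, $1$, and that $\rep{l}$ (hence $\rep{c}$) observes the value written by its predecessor $\rep{w}$, this pins down the kind of $\rep{w}$: if $\rep{c}$ is a $\rep{\scvon}$, then $\rep{l}$ saw $\phase = 1$, so $\rep{w}$ wrote $\phase = 1$, so $\rep{w} = \rep{\scvend'}$; similarly for the other two cases. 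The base case of the chain is the very first successful write; its load-link observes the initializer of $\resX$ (which has $\phase = 1$), and this initializer is not of any of the three forms — hence the stated exception.

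The main obstacle I expect is bookkeeping rather than mathematical depth: carefully matching the informal ``which branch does $\pushLSProc$ take'' reasoning against the formal event signature of $\vscanProc{+}$ and the structural axioms \cref{ax:fwd2-wrstruct,ax:fwd2-fwdstruct} (and the analogous per-phase ordering one would extract from the code), and making sure that the $\LLins$ whose observed $\phase$ controls the branch is exactly the one $\rep{\llobs}$-linked to the subsequent $\SCins$ — this is where \cref{ax:mem2-llobsparent} (same parent, no intervening $\LLins$) does the work, since a single $\pushLSProc$ invocation issues several $\LLins(\resX)$'s and we must attribute the right one to $\rep{c}$. Once that linkage is nailed down, the rest is a short case analysis on the three kinds of successful write, each case closing by reading off the $\phase$ field written versus the $\phase$ field that must have been observed.
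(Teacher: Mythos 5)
Your proposal is correct and follows essentially the same route as the paper's proof: identify that all mutations of $\resX$ are successful $\SCins$'s of the three kinds, use \cref{ax:llsc-success} to force the successful $\SCins$ to observe the same write as its gating $\LLins$, and then read off the $\phase$ field of the observed write to pin down its kind (with the initializer covering the exceptional first $\rep{\scvon}$). Your extra care about which $\LLins(\resX)$ is $\rep{\llobs}$-linked to the $\SCins$ (via \cref{ax:mem2-llobsparent}) and about the total ordering of successful writes is sound bookkeeping that the paper leaves implicit.
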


\begin{proof}
  Looking at \cref{alg:jay3}, we see that the only operations writing
  to $\resX$ are successful $\SCins$ operations, which are only events
  $\rep{\scvon}$ and $\rep{\scvoff}$ and $\rep{\scvend}$. For either
  of these events to be successful, they need to observe the same
  write as the latest $\rep{\scvx}$ observed as per
  \cref{ax:llsc-success}.

  For $\rep{\scvon}$ to be successful, we must have
  $\rep{e_r \robs \scvx}$ and $\rep{e_r \robs \scvon}$. For the
  algorithm to execute $\rep{\scvon}$, the write $\rep{e_r}$ must have
  written 1 to $\xPhase$, which can only be either the original
  initialization of $\resX$ or $\rep{\scvend}$. Similarly, for
  $\rep{\scvoff}$, it can only be executed if the write $\rep{e_r}$
  wrote 2 to $\xPhase$, thus $\rep{e_r}$ can only be $\rep{\scvon}$,
  and for $\rep{\scvend}$, it can only be executed if the write
  $\rep{e_r}$ wrote 3 to $\xPhase$, thus $\rep{e_r}$ can only be
  $\rep{\scvoff}$.
\end{proof}

  We instantiate the write and forward rep events for the event
  structures of \cref{fig:struct2} with rep events of the same name.
  The structure of \writeProc{} and \forwardProc{} ensures that
  Properties~\eqref{ax:fwd-wrauniq}, \eqref{ax:fwd2-fbBuniq} and
  \eqref{ax:fwd2-wrstruct} to \eqref{ax:fwd2-fwdsccond} are satisfied.

  We instantiate virtual scans as a logical object corresponding to
  the rep events executed by \pushLSProc{} constructing a complete
  logical scan. The interval of a virtual scan $\vrt{\vsc}$ is
  instantiated as the smallest interval containing all its rep events.
  More formally, if we have $\vrt{\scv}$, $\vrt{\scv'}$ and
  $\vrt{\scv''}$ such that $\rep{\scvon \robs \scvx'}$ and
  $\rep{\scvoff' \robs \scvx''}$ and $\rep{\scvss''}$ wrote to
  $\resSS$, then there exists a virtual scan $\vrt{\vsc}$. Its rep
  events are instantiated using the rep events of $\vrt{\scv}$,
  $\vrt{\scv'}$ and $\vrt{\scv''}$ as follows and instantiate
  $\SMap{\abs{s}} = \vrt{\vsc}$ iff $\rep{\scvss'' \robs \scx}$.
  \begin{align*}
    \rep{\scr{\vsc}{i}} &\defeq \rep{\scvr_i}
    &\rep{\scon{\vsc}} &\defeq \rep{\scvon}
    &\rep{\sconobs{\vsc}} &\defeq \rep{\scvx'}
    &\rep{\sca{\vsc}{i}} &\defeq \rep{\scva'_i}
    &\rep{\scoff{\vsc}} &\defeq \rep{\scvoff'}
    &\rep{\scoffobs{\vsc}} &\defeq \rep{\scvx''}
    &\rep{\scb{\vsc}{i}} &\defeq \rep{\scvb''_i}
  \end{align*}
  We also let $\rep{\scxinit{\vsc} \defeq \rep{\scvx}}$ and
  $\rep{\scss{\vsc} \defeq \rep{\scvss''}}$ and
  $\rep{\scend{\vsc} \defeq \rep{\scvend'''}}$ where
  $\rep{\scvend'''}$ was successful and
  $\rep{\scvoff' \robs \scvend'''}$, however these do not need to be
  in the interval of $\vrt{\vsc}$. We can view the events as forming
  the following structure.
  \[\arraycolsep=0pt
    \begin{array}[c]{ccccccc}
      \multicolumn{2}{c}{\vrt{\scv}}&&\multicolumn{1}{c}{\vrt{\scv'}}&&\multicolumn{2}{c}{\vrt{\scv''}}\\[-5pt]
      \multicolumn{2}{c}{${\downbracefill}$}&&\multicolumn{1}{c}{${\downbracefill}$}&&\multicolumn{2}{c}{${\downbracefill}$}\\
      \rep{\scvx \rb {}} & \rep{\scvr_0 \cdots \scvr_{n-1} \rb \scvon} &
    \rep{{} \robs {}} &
    \rep{\scvx' \rb \scva'_0 \cdots \scva'_{n-1} \rb \scvoff'} &
    \rep{{} \robs {}} & \rep{\scvx'' \rb \scvb''_0 \cdots \scvb''_{n-1}} &
    \rep{{} \rb \scvss''}\\
    & \multicolumn{5}{c}{${\upbracefill}$}\\[-2pt]
    & \multicolumn{5}{c}{\vrt{\vsc}}
    \end{array}
  \]

  \cref{ax:fwd-io,ax:fwd-scruniq,ax:fwd2-sconuniq,ax:fwd2-scstruct}
  follows from the structure of virtual scans and the algorithm.

\begin{lemma}\label{lem:jay3scssend}
  For every virtual scan $\vrt{\vsc}$ such that $\rep{\scend{\vsc}}$
  have written to $\resX$, there exists exactly one
  $\rep{\scss{\vsc}}$ such that
  $\rep{\scoff{\vsc} \hb \scss{\vsc} \hb \scend{\vsc}}$.
\end{lemma}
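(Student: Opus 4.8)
\textbf{Proof plan for \cref{lem:jay3scssend}.}

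The plan is to analyze the third phase of a virtual scan, which is executed by some physical \pushLSProc{} instance $\vrt{\scv''}$, and to use the toggle-bit discipline to pin down exactly one successful $\rep{\scvss''}$. First I would recall from \cref{lem:jay3xord} that the successful $\SCins$'s into $\resX$ occur in the cyclic phase order $\rep{\scvon}\to\rep{\scvoff}\to\rep{\scvend}\to\rep{\scvon}\to\cdots$, so that the segment of the $\resX$ history between the $\rep{\scoff{\vsc}}=\rep{\scvoff'}$ that opens phase~3 and the first subsequent successful $\rep{\scvend}$ contains no other successful write into $\resX$. Call this $\rep{\scvend'''}$ (this is exactly $\rep{\scend{\vsc}}$ by the instantiation). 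During this window $\resX.\toggle$ is constant, hence any $\rep{\scvxB}$ (the $\VLins(\resX)$ inside a phase-3 body) that happens in the window and observes the same write that its preceding $\rep{\scvssA}$ ($\LLins(\resSS)$) correlates to will be able to fire its $\rep{\scvss}$, provided the guard $\ssToggle=\xToggle$ holds; and once some $\rep{\scvss}$ fires it negates $\resSS.\toggle$, disabling the guard for every later $\rep{\scvss}$ until the next $\rep{\scvend}$ resets $\resX.\toggle$. This gives \emph{at most one} successful $\rep{\scvss}$ in the window.

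For \emph{existence}, I would argue that $\rep{\scvend'''}$ itself can only succeed if, during phase~3, $\resX$ was not mutated between its own $\LLins(\resX)$ and the $\SCins(\resX)$, i.e. by \cref{lem:llsc-interference}/\cref{lem:llsc-seq} applied to $\resX$, the phase-3 body that executed $\rep{\scvend'''}$ ran its $\rep{\scvssA}$, $\rep{\scvxB}$ and then $\rep{\scvss}$ with $\rep{\scvxB}$ successful (since no $\resX$-write intervened). At that moment $\resX.\toggle=\resSS.\toggle$ still holds --- because no earlier $\rep{\scvss}$ in this window fired (if one had, we would already be in the existence case for that one) and $\resX.\toggle$ has not changed since $\rep{\scvoff'}$. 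Hence that $\rep{\scvss}$ writes $\resSS$, and it is the unique $\rep{\scss{\vsc}}$. Reconciling the two halves: either some $\rep{\scvss}$ fires earlier in the window and is unique by the toggle argument, or none does before $\rep{\scvend'''}$, in which case the $\rep{\scvss}$ immediately preceding $\rep{\scvend'''}$ in its own \pushLSProc{} body necessarily fires. In all cases there is exactly one, and I would package the formal manipulations with \cref{ax:mem2-io}, \cref{ax:llsc-success}, \cref{ax:mem2-nowrbetween} and \cref{ax:mem2-wrtotal} for $\resX$ and $\resSS$.

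Finally, for the ordering $\rep{\scoff{\vsc}\hb\scss{\vsc}\hb\scend{\vsc}}$: $\rep{\scoff{\vsc}}=\rep{\scvoff'}$ is the phase-2-closing $\SCins$ and $\rep{\scvss''}$ lies in a phase-3 body, whose opening $\rep{\scvx''}=\rep{\sconobs{\vsc}}$... rather $\rep{\scoffobs{\vsc}}$ observes $\rep{\scvoff'}$ by the instantiation, so $\rep{\scvoff'}\robs\scvx''\rb\scvssA''\rb\scvss''$ gives $\rep{\scoff{\vsc}\hb\scss{\vsc}}$; and $\rep{\scss{\vsc}}$ precedes $\rep{\scvend}$ in the same window (either in the same body, where $\rep{\scvss}\rb\scvend$ by \cref{ax:fwd2-fwdstruct}-style sequencing of the \pushLSProc{} code, or in an earlier body, where \cref{lem:llsc-seq} on $\resX$ forces $\rep{\scvss}$'s body to complete its $\SCins(\resX)$ before $\rep{\scvend'''}$, hence $\rep{\scvss}\hb\scvend'''$), giving $\rep{\scss{\vsc}\hb\scend{\vsc}}$. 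The main obstacle I anticipate is the existence direction: carefully showing that the very last phase-3 body (the one carrying $\rep{\scvend'''}$) must itself see the guard open unless some earlier body already wrote $\resSS$ --- this requires a clean invariant that $\resX.\toggle\oplus\resSS.\toggle$ is $0$ at the start of phase~3 and flips to $1$ exactly at the unique successful $\rep{\scvss}$, which in turn leans on \cref{lem:jay3xord} to rule out stray $\resX$-writes inside the window.
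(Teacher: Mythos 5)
Your overall strategy---delimiting the window between $\rep{\scoff{\vsc}}$ and $\rep{\scend{\vsc}}$ via \cref{lem:jay3xord}, using the toggle discipline for uniqueness, and using LL/SC interference for existence---is the same as the paper's. But there is a genuine gap at exactly the point you flag as ``the main obstacle'': the invariant that $\resX.\toggle = \resSS.\toggle$ at the start of phase~3 does \emph{not} follow from \cref{lem:jay3xord} alone. That lemma only orders the successful writes to $\resX$; it says nothing about how many successful $\rep{\scvss}$'s occurred during \emph{earlier} virtual scans, and the toggles re-synchronize after $\rep{\scend{\vsc'}}$ only if the previous virtual scan $\vrt{\vsc'}$ negated $\resSS.\toggle$ \emph{exactly once}---which is precisely the statement of this lemma for $\vrt{\vsc'}$. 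The paper therefore proves the lemma by induction on the sequence of virtual scans, with the lemma itself as the inductive hypothesis for all earlier scans. Without setting up that induction, your existence argument is circular at the step ``at that moment $\resX.\toggle=\resSS.\toggle$ still holds.''

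The same missing induction hypothesis is needed in a second place. When the phase-3 body carrying $\rep{\scend{\vsc}}$ executes its $\rep{\scvss}$ and that $\SCins(\resSS)$ \emph{fails}, \cref{lem:llsc-interference} gives you some successful write $\rep{\scvss'}$ to $\resSS$ with $\rep{\scvss' \not\hb \scvssA}$; but to conclude existence you must show this $\rep{\scvss'}$ satisfies $\rep{\scoff{\vsc} \hb \scvss'}$, i.e., that it is not a stale write belonging to an earlier virtual scan. The paper rules that out by observing that, by the inductive hypothesis, any $\rep{\scvss'}$ of an earlier scan $\vrt{\vsc'}$ satisfies $\rep{\scvss' \hb \scend{\vsc'} \hb \scoff{\vsc} \hb \scvssA}$, contradicting $\rep{\scvss' \not\hb \scvssA}$. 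Your phrase ``if one had, we would already be in the existence case for that one'' does not cover this, since the interfering write could in principle predate the current window. Your ordering argument ($\rep{\scoff{\vsc} \hb \scss{\vsc} \hb \scend{\vsc}}$) and the at-most-one direction are fine; to complete the proof you should restructure the whole argument as an induction over virtual scans and discharge both of the above points from the inductive hypothesis.
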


\begin{proof}
  By induction on the number virtual scans performed. The base case
  corresponds to where $\vrt{\vsc}$ is the very first virtual scan. We
  simultaneously consider the inductive step case, where the inductive
  hypothesis states that all virtual scans
  $\vrt{\vsc'} \in \vrt{\Vsc'}$ has exactly one $\rep{\scss{\vsc'}}$
  such that $\rep{\scoff{\vsc'} \hb \scss{\vsc'} \hb \scend{\vsc'}}$,
  the inductive step follows that for $\vrt{\vsc}$ proceeding the
  virtual scans of $\vrt{\Vsc'}$, there exists exactly one
  $\rep{\scss{\vsc}}$ such that
  $\rep{\scoff{\vsc} \hb \scss{\vsc} \hb \scend{\vsc}}$.

  For the base case, since no other virtual scans has finished prior,
  we must have $\xToggle = \ssToggle$ before $\rep{\scoffobs{\vsc}}$
  has been executed by initialization of $\resX$ and $\resSS$. For the
  inductive step, we have $\xToggle = \ssToggle$ before
  $\rep{\scoffobs{\vsc}}$ executes, since it follows from each
  $\vrt{\vsc'} \in \vrt{\Vsc'}$ having
  $\rep{\scss{\vsc'} \hb \scend{\vsc'}}$ and there only existing one
  $\rep{\scss{\vsc'}}$ per $\vrt{\vsc'}$, since after
  $\rep{\scend{\vsc'}}$ has executed, $\xToggle = \ssToggle$ by
  $\rep{\scss{\vsc'}}$ negating $\ssToggle$ and $\rep{\scend{\vsc'}}$
  negating $\xToggle$, both exactly once.

  With $\rep{\scend{\vsc}}$ successfully writing, we know that the
  code of phase 3 (lines 31-43 of \cref{alg:jay3}) must have been
  executed. Let $\rep{\scend{\vsc}} = \rep{\scvend}$, we also know
  that $\rep{\scvxB}$ returned $\true$, since otherwise
  $\rep{\scend{\vsc}}$ must have failed to write. If the value of
  $\ssToggle$ read by $\rep{\scvssA}$ is different from the value of
  $\xToggle$ read by $\rep{\scoffobs{\vsc}}$, then there must exist
  some $\rep{\scvss'}$ that negated $\ssToggle$ before
  $\rep{\scvssA}$ (and thus before $\rep{\scend{\vsc}}$), but after
  $\rep{\scoff{\vsc}}$, which fits our goal.

  Otherwise, if $\ssToggle = \xToggle$, then $\rep{\scvss}$ will be
  executed. If $\rep{\scvss}$ successfully wrote, then we trivially
  satisfy our goal. Otherwise, by \cref{lem:llsc-interference}, there
  has to exist some $\rep{\scvss'}$ such that
  $\rep{\scvss' \not\hb \scvssA}$ and $\rep{\scvss' \hb \scvss}$.
  Since we must have $\rep{\scvoff'' \robs \scvx' \rb \scvss'}$ for
  some $\rep{\scvoff''}$, consider if we have
  $\rep{\scoff{\vsc} = \scvoff''}$, then we have
  $\rep{\scoff{\vsc} \hb \scvss' \hb \scend{\vsc}}$, satisfying our goal.
  Consider instead we have $\rep{\scoff{\vsc} \neq \scvoff''}$, by
  \cref{ax:mem2-wrtotal}, we either have
  $\rep{\scoff{\vsc} \hb \scvoff''}$ or
  $\rep{\scvoff'' \hb \scoff{\vsc}}$, where if we have the former then
  we must have
  $\rep{\scoff{\vsc} \hb \scend{\vsc} \hb \scvoff'' \hb \scvss'}$ by
  \cref{lem:jay3xord}, inducing the cycle
  $\rep{\scvss \rb \scend{\vsc} \hb \scvss' \hb \scvss}$ contradicting
  irreflexivity of $\rep{\hb}$, thus we must have the latter. By
  $\rep{\scvoff'' \hb \scoff{\vsc}}$, it must be the case that
  $\rep{\scvss'}$ correspond to some earlier executed virtual scan,
  which in the base case contradicts $\vrt{\vsc}$ being the first
  virtual scan, while in the inductive step $\rep{\scvoff''}$
  corresponds to some $\vrt{\vsc'} \in \vrt{\Vsc'}$, it must be the
  case that
  $\rep{\scvss' \hb \scend{\vsc'} \hb \scoff{\vsc} \hb \scvssA}$ by
  the inductive hypothesis and \cref{lem:jay3xord}, which contradicts
  $\rep{\scvss' \not\hb \scvssA}$.
\end{proof}

\newcommand{\PstrEv}{\mathit{se}}
\newcommand{\PendEv}{\mathit{ee}}
\newcommand{\PadvEv}{\mathit{ae}}

To efficiently reason about \pushLSProc{}, we define the intervals of
the phases. For an execution $\vrt{\scv}$ of \pushLSProc{},
$\vrt{\scv.p_1}$, and $\vrt{\scv.p_2}$, and $\vrt{\scv.p_3}$ are the
intervals of phases 1, 2, and 3. Each phase interval $\vrt{p}$ of
$\vrt{\scv}$ has a start event $\vrt{p}.\rep{\PstrEv}$ which is some
$\rep{\scvx}$ and an end event $\vrt{p}.\rep{\PendEv}$ which is a
$\SCins$ event of $\resX$, where start and end events need to satisfy
$\rep{\vrt{p}.\rep{\PstrEv} \llobs \vrt{p}.\rep{\PendEv}}$.
For $\vrt{p_1}$ we have $\vrt{p_1}.\rep{\PendEv} = \rep{\scvon}$ with
either $\vrt{p_1}.\rep{\PstrEv}$ observing the initial value of
$\resX$ or $\rep{\scend{\vsc} \robs \vrt{p_1}.\rep{\PstrEv}}$; for
$\vrt{p_2}$ we have $\vrt{p_2}.\rep{\PendEv} = \rep{\scvoff}$ with
$\rep{\scon{\vsc} \robs \vrt{p_2}.\rep{\PstrEv}}$; and for $\vrt{p_3}$
we have $\vrt{p_3}.\rep{\PendEv} = \rep{\scvend}$ with
$\rep{\scoff{\vsc} \robs \vrt{p_3}.\rep{\PstrEv}}$. Each phase
$\vrt{p}$ has an \emph{advancement event}, which we encode with
$\vrt{p}.\rep{\PadvEv}$, which is the rep event successfully writing
to $\resX$ which advances the logical scanner to the next phase. For
some $\vrt{\vsc}$, we have
$\vrt{p_1}.\rep{\PadvEv} = \rep{\scon{\vsc}}$ and
$\vrt{p_2}.\rep{\PadvEv} = \rep{\scoff{\vsc}}$ and
$\vrt{p_3}.\rep{\PadvEv} = \rep{\scend{\vsc}}$.

\begin{lemma}\label{lem:jay3phaseev}
  For each phase $\vrt{p}$, there exists a $\vrt{p}.\rep{\PadvEv}$
  such that
  $\rep{\vrt{p}.\rep{\PadvEv} \not\hb \vrt{p}.\rep{\PstrEv}}$ and
  $\rep{\vrt{p}.\rep{\PadvEv} \hbeq \vrt{p}.\rep{\PendEv}}$.
\end{lemma}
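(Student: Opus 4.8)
The statement to prove, \cref{lem:jay3phaseev}, asserts that each phase interval $\vrt{p}$ of a \pushLSProc{} execution has an advancement event $\vrt{p}.\rep{\PadvEv}$ that writes to $\resX$ and satisfies $\rep{\vrt{p}.\rep{\PadvEv} \not\hb \vrt{p}.\rep{\PstrEv}}$ and $\rep{\vrt{p}.\rep{\PadvEv} \hbeq \vrt{p}.\rep{\PendEv}}$. The plan is to observe that each phase interval starts with an $\LLins(\resX)$ event (namely $\vrt{p}.\rep{\PstrEv}$, some $\rep{\scvx}$) and ends with an $\SCins(\resX)$ event ($\vrt{p}.\rep{\PendEv}$, one of $\rep{\scvon}$, $\rep{\scvoff}$, or $\rep{\scvend}$), and by construction $\rep{\vrt{p}.\rep{\PstrEv} \llobs \vrt{p}.\rep{\PendEv}}$ holds. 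The case split is on whether $\vrt{p}.\rep{\PendEv}$ is itself successful, i.e., whether it lies in $\rep{\sucSet}$.

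\textbf{Key steps.} First, if $\vrt{p}.\rep{\PendEv}$ is a successful $\SCins$ into $\resX$, we simply take $\vrt{p}.\rep{\PadvEv} = \vrt{p}.\rep{\PendEv}$. Then $\rep{\vrt{p}.\rep{\PadvEv} \hbeq \vrt{p}.\rep{\PendEv}}$ holds trivially (reflexively), and $\rep{\vrt{p}.\rep{\PadvEv} \not\hb \vrt{p}.\rep{\PstrEv}}$ follows exactly as in the proof of \cref{lem:llsc-interference}: if we had $\rep{\vrt{p}.\rep{\PendEv} \hb \vrt{p}.\rep{\PstrEv}}$, then combined with $\rep{\vrt{p}.\rep{\PstrEv} \llobs \vrt{p}.\rep{\PendEv}}$ (hence $\rep{\vrt{p}.\rep{\PstrEv} \hbeq \vrt{p}.\rep{\PendEv}}$) we would obtain $\rep{\vrt{p}.\rep{\PendEv} \hb \vrt{p}.\rep{\PendEv}}$, contradicting irreflexivity of $\rep{\hb}$ (\cref{lem:hb-irrefl}). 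Second, if $\vrt{p}.\rep{\PendEv}$ is an $\SCins$ into $\resX$ that fails (i.e. $\vrt{p}.\rep{\PendEv} \notin \rep{\sucSet}$), we invoke \cref{lem:llsc-interference} with $\rep{l} = \vrt{p}.\rep{\PstrEv}$ and $\rep{c} = \vrt{p}.\rep{\PendEv}$: since $\vrt{p}.\rep{\PstrEv} \llobs \vrt{p}.\rep{\PendEv}$ and $\vrt{p}.\rep{\PendEv} \in \term(\rep{\SC} \setminus \rep{\sucSet})$, there exists $\rep{w} \in \rep{W_c}$ over $\resX$ with $\rep{w \not\hb \vrt{p}.\rep{\PstrEv}}$ and $\rep{w \hbeq \vrt{p}.\rep{\PendEv}}$. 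But the only write-like events of $\resX$ in \cref{alg:jay3} are successful $\SCins$ operations, which (by the structure of the algorithm) must be instances of $\rep{\scvon}$, $\rep{\scvoff}$, or $\rep{\scvend}$. We take this $\rep{w}$ as $\vrt{p}.\rep{\PadvEv}$; it directly satisfies the two required $\rep{\hb}$-conditions. (For the cleanest exposition, one can merge the two cases: run \cref{lem:llsc-interference} uniformly, which already handles the successful-$\SCins$ case internally.)

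\textbf{Main obstacle.} The main subtlety is not the $\rep{\hb}$-bookkeeping but justifying that the event produced by \cref{lem:llsc-interference} is genuinely an \emph{advancement} event in the intended sense — i.e., that it is a successful $\SCins$ committing the right phase transition ($1\!\to\!2$, $2\!\to\!3$, or $3\!\to\!1$), so that the designations $\vrt{p_1}.\rep{\PadvEv} = \rep{\scon{\vsc}}$ etc. are coherent with the virtual-scan instantiation. This requires knowing that the only mutations to $\resX$ between $\vrt{p}.\rep{\PstrEv}$ and $\vrt{p}.\rep{\PendEv}$ are $\SCins$'s (which is immediate since $\resX$ is an $\LLins/\SCins$ register written only via $\SCins$ in \cref{alg:jay3}) and that such a successful $\SCins$ in the window between the start of phase $k$ and the (failed) attempted close of phase $k$ can only be the phase-$k$-closing $\SCins$, which follows from \cref{lem:jay3xord}: the successful $\resX$-writes are linearly chained $\rep{\scvend' \robs \scvon}$, $\rep{\scvon' \robs \scvoff}$, $\rep{\scvoff' \robs \scvend}$, so no out-of-phase successful write can fall strictly inside a single phase interval without forcing an $\rep{\hb}$-cycle. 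This identification step, while routine given the earlier lemmas, is where the proof's content genuinely lives.
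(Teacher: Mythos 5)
Your overall shape is right—\cref{lem:llsc-interference} is indeed the engine, and your successful-$\SCins$ case is fine—but there is a genuine gap in the failing case, precisely at the point you flag as the ``main obstacle.'' Your claim that a successful $\SCins$ lying in the window between $\vrt{p}.\rep{\PstrEv}$ and a failing $\vrt{p}.\rep{\PendEv}$ ``can only be the phase-$k$-closing $\SCins$'' is false, and the $\rep{\hb}$-cycle argument you sketch for it does not go through. Consider a process that executes $\vrt{p_2}.\rep{\PstrEv}$ (observing $\rep{\scon{\vsc}}$) and then stalls: other processes may complete phase 2 via $\rep{\scoff{\vsc}}$, complete phase 3 via $\rep{\scend{\vsc}}$, and even open a new virtual scan via $\rep{\scon{\vsc'}}$, all before the stalled process finally attempts its $\rep{\scvoff}$, which then fails. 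All of these successful writes sit in the window, and none of this creates an $\rep{\hb}$-cycle. Worse, the witness that \cref{lem:llsc-interference} actually constructs in the failing case is the write observed by the failing $\SCins$, i.e.\ the \emph{latest} such write ($\rep{\scon{\vsc'}}$ in this scenario)—which is not a legitimate $\vrt{p_2}.\rep{\PadvEv}$, since the advancement event of a phase-2 interval must by definition be some $\rep{\scoff{\vsc}}$. Downstream uses of the lemma (e.g.\ \cref{lem:jay3pushLSend}) rely on the advancement event being of the correct kind, so you cannot simply take the witness of \cref{lem:llsc-interference} as $\vrt{p}.\rep{\PadvEv}$.

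The repair is the paper's extra step. Let $\rep{e_r}$ be the arbitrary successful write given by \cref{lem:llsc-interference}. If it is the advancement event you are done; otherwise (taking $\vrt{p}=\vrt{p_2}$ as the representative case) you know $\rep{\scon{\vsc} \robs \vrt{p_2}.\rep{\PstrEv}}$, and by \cref{ax:mem2-wrtotal} either $\rep{\scon{\vsc} \hb e_r}$ or $\rep{e_r \hb \scon{\vsc}}$; the latter contradicts $\rep{e_r \not\hb \vrt{p}.\rep{\PstrEv}}$. Given $\rep{\scon{\vsc} \hb e_r}$, \cref{lem:jay3xord} yields a $\rep{\scoff{\vsc}}$ strictly between them; this is the required $\vrt{p_2}.\rep{\PadvEv}$. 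It satisfies $\rep{\scoff{\vsc} \hbeq \vrt{p}.\rep{\PendEv}}$ via $\rep{\scoff{\vsc} \hb e_r \hbeq \vrt{p}.\rep{\PendEv}}$, and $\rep{\scoff{\vsc} \not\hb \vrt{p}.\rep{\PstrEv}}$ because otherwise $\rep{\scoff{\vsc}}$ would intervene in $\rep{\scon{\vsc} \robs \vrt{p_2}.\rep{\PstrEv}}$, contradicting \cref{ax:mem2-nowrbetween}. Your proposal is missing exactly this localisation argument.
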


\begin{proof}
  By \cref{lem:llsc-interference}, there exists some successful
  writing event $\rep{e_r}$ of $\resX$ such that
  $\rep{e_r \not\hb \vrt{p}.\rep{\PstrEv}}$ and
  $\rep{e_r \hbeq \vrt{p}.\rep{\PendEv}}$. If
  $\rep{e_r} = \vrt{p}.\rep{\PadvEv}$, we are done, otherwise by
  \cref{lem:jay3xord} we know that $\rep{\scon{\vsc}}$ and
  $\rep{\scoff{\vsc}}$ and $\rep{\scend{\vsc}}$ occur in order, so for
  example in the case of $\vrt{p} = \vrt{p_2}$, we have
  $\rep{\scon{\vsc} \robs \vrt{p_2}.\rep{\PstrEv}}$, by
  \cref{ax:mem2-wrtotal} we must have $\rep{\scon{\vsc} \hb e_r}$,
  since $\rep{e_r \hb \scon{\vsc}}$ contradicts
  $\rep{e_r \not\hb \vrt{p}.\rep{\PstrEv}}$. Thus by
  \cref{lem:jay3xord}, there exists some
  $\rep{\scoff{\vsc} = \vrt{p_2}.\rep{\PadvEv}}$ in between
  $\rep{\scon{\vsc}}$ and $\rep{e_r}$, which satisfies
  $\rep{\scoff{\vsc} \hbeq \vrt{p}.\rep{\PendEv}}$ by
  $\rep{\scoff{\vsc} \hb e_r \hbeq \vrt{p}.\rep{\PendEv}}$ and
  $\rep{\scoff{\vsc} \not\hb \vrt{p}.\rep{\PstrEv}}$ since
  $\rep{\scoff{\vsc} \hb \vrt{p}.\rep{\PstrEv}}$ contradicts
  \cref{ax:mem2-nowrbetween} by $\rep{\scoff{\vsc}}$ occurring in
  between $\rep{\scon{\vsc} \robs \vrt{p_2}.\rep{\PstrEv}}$. The other
  phases follow by similar arguments.
\end{proof}

\begin{lemma}\label{lem:jay3pushLSend}
  For any terminated $\pushLSProc$ execution $\vrt{\scv}$, there
  exists some $\vrt{\vsc}$ such that for $\rep{\scvx}$ being the first
  rep event of $\vrt{\scv}$ and $\rep{e_r}$ being the last, then and
  $\rep{\scend{\vsc} \not\hb \rep{\scvx}}$ and
  $\rep{\scend{\vsc} \hbeq \rep{e_r}}$.
\end{lemma}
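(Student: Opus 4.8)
The plan is to mirror the structure of \cref{lem:jay3scssend}, but now tracking the chain of phase intervals rather than the $\rep{\scvss}$ events. Recall that an execution $\vrt{\scv}$ of $\pushLSProc$ begins with a $\rep{\scvx}$ event and, if it terminates, ends with a successful $\SCins$ into $\resX$ (one of $\rep{\scvon}$, $\rep{\scvoff}$, or $\rep{\scvend}$). The goal is to exhibit a full virtual scan $\vrt{\vsc}$ whose advancement event $\rep{\scend{\vsc}}$ satisfies $\rep{\scend{\vsc} \not\hb \scvx}$ (the virtual scan was not already finished before $\vrt{\scv}$ started) and $\rep{\scend{\vsc} \hbeq e_r}$ (it finished no later than $\vrt{\scv}$).

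First I would observe that by \cref{ax:llsc-success} and \cref{ax:mem2-llobspop}, the terminating event $\rep{e_r}$ of $\vrt{\scv}$ observes the same write into $\resX$ as the final $\rep{\scvx}$ preceding it, so $\rep{e_r}$ is the advancement event $\vrt{p}.\rep{\PadvEv}$ of some phase $\vrt{p}$ belonging to some partially- or fully-instantiated virtual scan; in the case where $\rep{e_r} = \rep{\scvend}$ for a complete phase-3 run, we are essentially done and just need to package $\vrt{\vsc}$ around it. The substantive case is when $\vrt{\scv}$ ends in phase 1 or 2. Here I would use \cref{lem:jay3phaseev} to locate, for the phase in which $\vrt{\scv}$ ends, an advancement event $\vrt{p}.\rep{\PadvEv}$ with $\vrt{p}.\rep{\PadvEv} \not\hb \vrt{p}.\rep{\PstrEv}$ and $\vrt{p}.\rep{\PadvEv} \hbeq \vrt{p}.\rep{\PendEv} \hbeq e_r$. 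Then I would walk backwards through the phase chain via \cref{lem:jay3xord}: each $\rep{\scvon}$ observes either the initial $\resX$ value or a prior $\rep{\scvend}$, so either the chain of phases bottoms out at initialization (and the very first $\rep{\scvon}$ must have started after $\vrt{\scv}$'s $\rep{\scvx}$, since by \cref{lem:jay3phaseev} applied to phase 1 the advancement event $\rep{\scvon}$ is $\not\hb$ the phase start, and the phase start is $\not\hb$ the preceding initialization), or it reaches some $\rep{\scend{\vsc}}$ of an earlier complete virtual scan $\vrt{\vsc}$. In the latter case, $\vrt{\vsc}$ is the virtual scan we want: $\rep{\scend{\vsc}}$ feeds (via $\rep{\robs}$) the $\rep{\scvx}$ opening the next phase-1 run, so $\rep{\scend{\vsc}} \hb \vrt{p}.\rep{\PendEv} \hbeq e_r$, giving $\rep{\scend{\vsc}} \hbeq e_r$; and $\rep{\scend{\vsc}} \not\hb \scvx$ because otherwise, combining $\rep{\scvx} \hb \rep{\scend{\vsc}}$ — wait, that is backwards. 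I would instead argue $\rep{\scend{\vsc}} \not\hb \scvx$ by noting that $\vrt{\scv}$ is still running (it hasn't executed $e_r$ yet) when $\rep{\scend{\vsc}}$ fires, because the phase chain from $\rep{\scend{\vsc}}$ to $\vrt{p}.\rep{\PendEv}$ lies entirely within $\vrt{\scv}$'s active interval — concretely, if $\rep{\scend{\vsc}} \hb \scvx$ then $\rep{\scend{\vsc}}$ would be $\rb$-before $\vrt{\scv}$'s whole execution, contradicting that $\rep{\scend{\vsc}}$ advances a phase whose start event is not $\hb$-before it yet sits inside $\vrt{\scv}$; the irreflexivity of $\rep{\hb}$ (\cref{lem:hb-irrefl}) closes this.

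The main obstacle I anticipate is the careful bookkeeping in the boundary case: precisely matching up which $\rep{\scvx}$ of $\vrt{\scv}$ is the ``first'' one referenced in the statement versus the $\rep{\scvx}$ that opens the phase $\vrt{\scv}$ happens to be executing, and handling the fact that $\vrt{\scv}$ may execute several phases (its code runs through phases 1, 2, 3 in sequence via the three \texttt{if} blocks, each re-reading $\resX$). I would need to be disciplined about the fact that the phase whose advancement event I extract via \cref{lem:jay3phaseev} is the \emph{last} phase $\vrt{\scv}$ participates in, and then the backward walk via \cref{lem:jay3xord} both (i) produces the $\vrt{\vsc}$ and (ii) certifies $\rep{\scend{\vsc}} \not\hb \scvx$ by showing $\rep{\scend{\vsc}}$ occurs no earlier than the start of $\vrt{\scv}$. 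A secondary subtlety is confirming that the $\vrt{\vsc}$ so produced is genuinely a \emph{complete} virtual scan in the sense of the instantiation (phases 1, 2, 3 all present and chained by $\rep{\robs}$), which follows because $\rep{\scend{\vsc}}$ existing forces, via \cref{lem:jay3xord} run forward from its $\rep{\scvoff}$, the existence of the matching $\rep{\scvon}$ and $\rep{\scvoff}$; I would cite \cref{lem:jay3scssend} if needed to pin down the $\rep{\scvss}$ component. Once these pieces are assembled the conclusion is immediate.
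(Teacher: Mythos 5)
There is a genuine gap, and it is directional. In your main case (where $\vrt{\scv}$ does not execute the phase-3 body), you take the advancement event of the last phase $\vrt{\scv}$ participates in and then walk \emph{backwards} along the $\rep{\robs}$ chain of \cref{lem:jay3xord} until you hit a $\rep{\scend{\vsc}}$. But that $\rep{\scend{\vsc}}$ is the event that terminated the \emph{previous} virtual scan --- the one whose completion allowed the current virtual scan to be in phase 1 or 2 in the first place --- and it typically fires long before $\vrt{\scv}$ begins. Concretely: if another scanner completed $\vrt{\vsc_0}$ and then performed $\rep{\scon{\vsc_1}}$ well before $\vrt{\scv}$ starts, and $\vrt{\scv}$'s first $\rep{\scvx}$ reads $\xPhase = 2$, your backward walk from $\rep{\scoff{\vsc_1}}$ lands on $\rep{\scend{\vsc_0}}$, for which $\rep{\scend{\vsc_0}} \hb \rep{\scvx}$ holds --- so it violates the required $\rep{\scend{\vsc}} \not\hb \rep{\scvx}$. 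Your attempted repair (``the phase chain from $\rep{\scend{\vsc}}$ to $\vrt{p}.\rep{\PendEv}$ lies entirely within $\vrt{\scv}$'s active interval'') is simply false in this scenario; the hesitation visible in your own text at that point is the symptom. A secondary error: $\rep{e_r}$ need not be an advancement event at all --- it can be a failed $\SCins$ or the trailing $\LLins$ re-read --- so the opening claim via \cref{ax:llsc-success} does not go through.

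The correct witness is found by looking \emph{forward}, which is what the paper does. By \cref{lem:jay3phaseev}, the phase body that $\vrt{\scv}$ does execute contains an advancement event $\vrt{p}.\rep{\PadvEv}$ with $\vrt{p}.\rep{\PadvEv} \not\hb \vrt{p}.\rep{\PstrEv}$ (hence $\not\hb \rep{\scvx}$, since $\rep{\scvx} \rbeq \vrt{p}.\rep{\PstrEv}$). If that phase is $\vrt{p_3}$, the advancement is already some $\rep{\scend{\vsc}}$ and you are done. If it is $\vrt{p_1}$ or $\vrt{p_2}$, then the reason $\vrt{\scv}$ never reaches the phase-3 body is that a subsequent re-read of $\resX$ observed a value other than the expected next phase; by \cref{lem:jay3xord} the writes to $\resX$ cycle through $\rep{\scvon}, \rep{\scvoff}, \rep{\scvend}$, so skipping past phase 3 forces some $\rep{\scend{\vsc''}}$ to occur \emph{after} the witnessed advancement and at or before the last event of $\vrt{\scv}$. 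That later $\rep{\scend{\vsc''}}$ inherits $\not\hb \rep{\scvx}$ from the witnessed advancement (if $\rep{\scend{\vsc''}} \hb \rep{\scvx}$ then the advancement event would be $\hb \rep{\scvx}$ too, a contradiction) and satisfies $\rep{\scend{\vsc''}} \hbeq \rep{e_r}$. So the ingredients you identified (\cref{lem:jay3phaseev}, \cref{lem:jay3xord}, the case split on phase 3) are the right ones, but the chase must go forward from the in-interval advancement to the next $\rep{\scend{}}$, not backward to the prior one.
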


\begin{proof}
  Consider whether $\vrt{p_3}$ was executed or not by $\vrt{\scv}$. If
  $\vrt{p_3}$ was executed, then by \cref{lem:jay3phaseev} we satisfy
  our goal. Otherwise, if it was not executed, this means either
  $\vrt{p_1}$ or $\vrt{p_2}$ were executed, i.e., there is some
  $\rep{\scend{\vsc}}$ or $\rep{\scon{s}}$ setting $\xPhase$, thus
  there is another chance for $\vrt{p_3}$ to run. If again it was not
  run, we are either in $\vrt{p_1}$ or $\vrt{p_2}$ by
  $\rep{\scend{\vsc'}/\scon{\vsc'}}$, by \cref{lem:jay3phaseev} the
  execution of virtual scans must have progressed, and by
  \cref{lem:jay3xord} there must be some $\rep{\scend{\vsc''}}$
  between $\rep{\scend{\vsc}/\scon{\vsc}}$ and
  $\rep{\scend{\vsc'}/\scon{\vsc'}}$, where $\rep{\scend{\vsc''}}$ may
  be equal to $\rep{\scend{\vsc'}}$. Thus $\rep{\scend{\vsc''}}$
  satisfies our goal.
\end{proof}

\begin{lemma}\label{lem:jay3pushLSend2}
  For any terminated $\pushLSProc$ executions $\vrt{\scv}$ and
  $\vrt{\scv'}$ where $\vrt{\scv \rb \scv'}$, there exists some
  $\vrt{\vsc}$ and $\vrt{\vsc'}$ such that for $\rep{\scvx}$ being the
  first rep event of $\vrt{\scv}$, then
  $\rep{\scend{\vsc} \not\hb \rep{\scvx}}$ and
  $\rep{\scend{\vsc} \hb \scend{\vsc'}}$.
\end{lemma}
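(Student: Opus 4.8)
\textbf{Proof plan for \cref{lem:jay3pushLSend2}.}
The plan is to derive this result as a strengthening of \cref{lem:jay3pushLSend}, exploiting the hypothesis $\vrt{\scv \rb \scv'}$ to produce two \emph{distinct} virtual scans whose end events are $\rep{\hb}$-ordered. First I would apply \cref{lem:jay3pushLSend} to $\vrt{\scv}$, obtaining some $\vrt{\vsc}$ such that, writing $\rep{\scvx}$ for the first rep event of $\vrt{\scv}$ and $\rep{e_r}$ for its last, we have $\rep{\scend{\vsc} \not\hb \scvx}$ and $\rep{\scend{\vsc} \hbeq e_r}$. Since $\vrt{\scv \rb \scv'}$, the event $\rep{e_r}$ returns before the first rep event of $\vrt{\scv'}$, call it $\rep{\scvx'}$; hence $\rep{\scend{\vsc} \hbeq e_r \rb \scvx'}$, i.e.\ $\rep{\scend{\vsc} \hb \scvx'}$. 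Then I would apply \cref{lem:jay3pushLSend} a second time, now to $\vrt{\scv'}$, getting some $\vrt{\vsc'}$ with $\rep{\scend{\vsc'} \not\hb \scvx'}$ and $\rep{\scend{\vsc'} \hbeq e_r'}$ where $\rep{e_r'}$ is the last rep event of $\vrt{\scv'}$.

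The remaining task is to show $\rep{\scend{\vsc} \hb \scend{\vsc'}}$ and that $\vrt{\vsc}$ and $\vrt{\vsc'}$ can be taken distinct (implicitly needed so that the two end events are genuinely different successful writes to $\resX$). For the ordering, I would combine $\rep{\scend{\vsc} \hb \scvx'}$ with the fact that $\rep{\scvx'}$ precedes $\rep{\scend{\vsc'}}$: by \cref{ax:mem2-wrtotal} the two successful writes $\rep{\scend{\vsc}}$ and $\rep{\scend{\vsc'}}$ to $\resX$ are $\rep{\hb}$-comparable, and if $\rep{\scend{\vsc'} \hbeq \scend{\vsc}}$ held we would get $\rep{\scend{\vsc'} \hbeq \scend{\vsc} \hb \scvx'}$, contradicting $\rep{\scend{\vsc'} \not\hb \scvx'}$. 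Hence $\rep{\scend{\vsc} \hb \scend{\vsc'}}$, which in particular forces $\vrt{\vsc} \neq \vrt{\vsc'}$ by irreflexivity of $\rep{\hb}$ (\cref{lem:hb-irrefl}). Finally, $\rep{\scend{\vsc} \not\hb \scvx}$ is already part of what the first application of \cref{lem:jay3pushLSend} gave us, so all three conjuncts of the goal are established.

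I expect the main obstacle to be bookkeeping around \emph{which} $\rep{\scvx}$ the statement refers to and ensuring the two invocations of \cref{lem:jay3pushLSend} are about the right executions: the lemma as stated produces a $\vrt{\vsc}$ whose $\rep{\scend{\vsc}}$ need not lie inside $\vrt{\scv}$, so I must be careful that the inequality $\rep{\scend{\vsc} \hbeq e_r}$ is enough to push past the $\rb$-gap to $\vrt{\scv'}$. The one subtlety worth double-checking is the step $\rep{\scend{\vsc} \hbeq e_r \rb \scvx'}$: this uses $\vrt{\scv \rb \scv'}$ together with \cref{ax:rb-absrep} (since $\rep{e_r} \subev \vrt{\scv}$ and $\rep{\scvx'} \subev \vrt{\scv'}$), yielding $\rep{e_r \rb \scvx'}$ and hence, composed with $\rep{\scend{\vsc} \hbeq e_r}$, the desired $\rep{\scend{\vsc} \hb \scvx'}$. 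Everything else is routine transitivity of $\rep{\hb}$ and an appeal to \cref{ax:mem2-wrtotal} and \cref{lem:hb-irrefl}.
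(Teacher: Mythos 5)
Your proposal is correct and follows essentially the same route as the paper's proof: two applications of \cref{lem:jay3pushLSend} (to $\vrt{\scv}$ and $\vrt{\scv'}$), then \cref{ax:mem2-wrtotal} to compare the two end events, refuting $\rep{\scend{\vsc'} \hbeq \scend{\vsc}}$ via the chain $\rep{\scend{\vsc'} \hbeq \scend{\vsc} \hbeq e_r \rb \scvx'}$ against $\rep{\scend{\vsc'} \not\hb \scvx'}$. Your explicit appeal to \cref{ax:rb-absrep} for the step $\rep{e_r \rb \scvx'}$ and the remark that $\vrt{\vsc} \neq \vrt{\vsc'}$ are just minor elaborations of what the paper leaves implicit.
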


\begin{proof}
  By \cref{lem:jay3pushLSend} over $\vrt{\scv}$ and $\vrt{\scv'}$, we
  have $\rep{\scend{\vsc} \not\hb \rep{\scvx}}$ and
  $\rep{\scend{\vsc} \hbeq \rep{e_r}}$ and
  $\rep{\scend{\vsc'} \not\hb \rep{\scvx'}}$ and
  $\rep{\scend{\vsc'} \hbeq \rep{e'_r}}$ where $\rep{e_r}$ and
  $\rep{e'_r}$ are the last events of $\vrt{\scv}$ and $\vrt{\scv'}$
  respectively. By \cref{ax:mem2-wrtotal}, we must have
  $\rep{\scend{\vsc} \hb \scend{\vsc'}}$, since if we instead had
  $\rep{\scend{\vsc'} \hbeq \scend{\vsc}}$, we contradict
  $\rep{\scend{\vsc'} \not\hb \rep{\scvx'}}$ by having
  $ \rep{\scend{\vsc'} \hbeq \scend{\vsc} \hbeq \rep{e_r} \rb
    \rep{\scvx'}} $.
\end{proof}

\begin{genthm}{\cref{ax:fwd-vrtinscan}}
  For any abs scan $\abs{s}$, we have
  $\SMap{\abs{s}} \subev \abs{s}$.
\end{genthm}

\begin{proof}
  With \cref{lem:jay3pushLSend2}, we can show for $\abs{s}$ that there
  exists some $\rep{\scend{\vsc'} \hb \scend{\vsc''}}$ in between
  $\vrt{\scvA}$ and $\vrt{\scvB}$ such that
  $\rep{\scend{\vsc'} \not\hb \rep{\scvx}}$ where $\rep{\scvx}$ is the
  first rep event of $\vrt{\scvA}$. For $\vrt{\vsc} = \SMap{\abs{s}}$,
  we have $\rep{\scend{\vsc'} \hb \scxinit{\vsc}}$ since some new
  logical scan must have started after $\rep{\scend{\vsc'}}$ by
  $\rep{\scend{\vsc''}}$ occurring after it, and by the instantiation
  of $\SMap{[-]}$ we have $\rep{\scss{\vsc} \robs \scx}$, thus for all
  $i$ we have
  $ \rep{\scend{\vsc'} \hb \scxinit{\vsc} \rb \scr{\vsc}{i} \rb
    \scb{\vsc}{i} \rb \scss{\vsc} \robs \scx} $.
  Our goal is to establish
  $\abs{s}.\evStart \leq \rep{\scr{\vsc}{i}}.\evStart$ and
  $\rep{\scb{\vsc}{i}}.\evEnd \leq \abs{s}.\evEnd$.
  We show this by showing that we neither have
  $\abs{s}.\evStart > \rep{\scr{\vsc}{i}}.\evStart$ nor
  $\rep{\scb{\vsc}{i}}.\evEnd > \abs{s}.\evEnd$.
  \begin{itemize}
  \item Assume we have
    $\abs{s}.\evStart > \rep{\scr{\vsc}{i}}.\evStart$, we derive a
    contradiction. By definition of
    $\rep{\scxinit{\vsc} \rb \scr{\vsc}{i}}$, we have
    $\rep{\scxinit{\vsc}}.\evEnd < \rep{\scr{\vsc}{i}}.\evStart$,
    which implies $\rep{\scxinit{\vsc}}.\evEnd < \abs{s}.\evStart$ and
    thus $\rep{\scb{\vsc}{i}} \rb \abs{s}$. This in turn implies
    $\rep{\scb{\vsc}{i} \rb \scvx}$ since $\rep{\scvx} \subev \abs{s}$,
    which contradicts $\rep{\scend{\vsc'} \not\hb \rep{\scvx}}$ by
    $\rep{\scend{\vsc'} \hb \scxinit{\vsc} \rb \scvx}$.
  \item Assume we have $\rep{\scb{\vsc}{i}}.\evEnd > \abs{s}.\evEnd$,
    we derive a contradiction. By definition of
    $\rep{\scb{\vsc}{i} \rb \scss{\vsc}}$, we have
    $\rep{\scb{\vsc}{i}}.\evEnd < \rep{\scss{\vsc}}.\evStart$, which
    implies $\abs{s}.\evEnd < \rep{\scss{\vsc}}.\evStart$ and thus
    $\abs{s} \rb \rep{\scss{\vsc}}$. This in turn implies
    $\rep{\scx \rb \scss{\vsc}}$ since $\rep{\scx} \subev \abs{s}$, which
    contradict irreflexivity of $\rep{\hb}$ by
    $\rep{\scx \rb \scss{\vsc} \robs \scx}$. \qedhere
  \end{itemize}
\end{proof}

\begin{genthm}{\cref{ax:fwd2-sconuniq}}
  For any $\rep{e_r}$ reading $\resX$, we have $\rep{\scon{s} \robs e_r}$
  iff we have $\rep{\scon{s} \hb e_r}$ and $\rep{\scoff{s} \not\hb e_r}$.
\end{genthm}

\begin{proof}\leavevmode
  \begin{itemize}
  \item[$(\mathord{\implies})$] $\rep{\scon{s} \robs \wrx{w_i}}$ trivially
    implies $\rep{\scon{s} \hb \wrx{w_i}}$, and if we were to have
    $\rep{\scoff{s} \hb \wrx{w_i}}$ we contradict \cref{ax:mem-nowrbetween}
    since $\rep{\scoff{s}}$ would happen in-between $\rep{\scon{s}}$ and
    $\rep{e_r}$.
  \item[$(\mathord{\impliedby})$] By the fact that only virtual scans
    change $\resX$ and by \cref{lem:jay3xord} forcing these changes to
    be of an order of form
    $\rep{\scon{s} \robs \scoff{s} \robs \scend{\vsc} \robs \scon{s'} \dots}\ $, we
    know that any write to $\resX$ other than $\rep{\scon{s}}$ and
    $\rep{\scoff{s}}$ either happens before $\rep{\scon{s}}$ or after
    $\rep{\scoff{s}}$. Since $\rep{\wrx{w_i}}$ has to observe something when
    it finishes, it must observe $\rep{\scon{s}}$, since any other
    observation leads to a contradiction. \qedhere
  \end{itemize}

\end{proof}

\begin{genthm}{\cref{ax:fwd-sctotal}}
  For two distinct virtual scans $\vrt{\vsc}$ and $\vrt{\vsc'}$, we either
  have $\vrt{\vsc \rb \vsc'}$ or $\vrt{\vsc' \rb \vsc}$.
\end{genthm}

\begin{proof}
  By the instantiation of $\vrt{\vsc}$ and \cref{lem:jay3xord}, we
  have that each virtual scan must be separated by some
  $\rep{\scend{\vsc}}$, i.e., we either have
  $\rep{\scend{\vsc} \hb \scxinit{\vsc'}}$ or
  $\rep{\scend{\vsc'} \hb \scxinit{\vsc}}$. By \cref{lem:jay3scssend},
  we know that for each $\vrt{\vsc}$ there is exactly one
  $\rep{\scss{\vsc}}$ occurring before $\rep{\scend{\vsc}}$, thus we
  have either $\rep{\scss{\vsc} \hb \scend{\vsc} \hb \scxinit{\vsc'}}$
  or $\rep{\scss{\vsc'} \hb \scend{\vsc'} \hb \scxinit{\vsc}}$. If we
  have $\rep{\scss{\vsc} \hb \scend{\vsc} \hb \scxinit{\vsc'}}$ then
  we have
  $\rep{\scb{\vsc}{i} \rb \scss{\vsc} \hb \scend{\vsc} \hb
    \scxinit{\vsc'} \rb \scr{\vsc'}{i}}$, and dually for
  $\rep{\scss{\vsc'} \hb \scend{\vsc'} \hb \scxinit{\vsc}}$, thus we
  either have $\rep{\scb{\vsc}{i} \rb \scr{\vsc'}{i}}$ or
  $\rep{\scb{\vsc'}{i} \rb \scr{\vsc}{i}}$ for each $i$, which matches
  the defined bounds of $\vrt{\vsc}$, i.e., we either have
  $\vrt{\vsc \rb \vsc'}$ or $\vrt{\vsc' \rb \vsc}$, which is our goal.
\end{proof}

\section{Algorithm \ref{alg:afek} Satisfies the Snapshot Signature}\label{apx:afek}

We show that \cref{alg:afek} satisfies the $\snapshotSig$ signatures
(\cref{fig:snapshot}). We employ virtual scans as a simplification
mechanism that allows us to elide $\writeProc$ views from
consideration when establishing $\snapshotSig$. A virtual scan will
denote a scan that detected no change in the array, i.e., it returned
at line~\ref{alg:afek-ret1}. Formally, a virtual scan $\vrt{\vsc}$
consists of rep events $\rep{\sca{\vsc}{i}}$ and $\rep{\scb{\vsc}{i}}$
for each $i$, where both $\rep{\sca{\vsc}{i}}$ and
$\rep{\scb{\vsc}{i}}$ observe the same write (i.e., no change
detected), and every $\rep{\sca{\vsc}{i}}$ occurs before any
$\rep{\scb{\vsc}{j}}$:
\[
  \renewcommand{\arraystretch}{\eqSpacing}%
  \begin{tabular*}{\textwidth}{@{\hskip 2em}l@{\hskip 10em}c@{\extracolsep{\fill}}r}
    $\forall \vrt{\vsc}, i.$ &$\exists \abs{w_i}.\ \rep{\wra{w_i} \robs
                               \sca{\vsc}{i}} \land \rep{\wra{w_i} \robs \scb{\vsc}{i}}$ &(\ref{ax:afek-success} revisited)\\
    $\forall \vrt{\vsc}, i, j.$ &$\rep{\sca{\vsc}{i} \rb \scb{\vsc}{j}}$ &(\ref{ax:afek-scrb} revisited)
  \end{tabular*}
\]
To each abs scan, we can associate a virtual scan as follows. If the
abs scan terminated because it detected no change, than it immediately
is a virtual scan. If the abs scan terminated by returning a view from
a $\writeProc$, i.e., returned at line~\ref{alg:afek-ret2}, then that
view itself is an abs scan which can, recursively, be associated with
a virtual scan. We formally define this by $\SMap{\abs{[-]}}$ as
follows:
\[
  \SMap{\abs{s}} \defeq
  \begin{cases}
    \vrt{\vsc} & \text{where}\quad \exists k.\ \forall i.\ \rep{\sca{\vsc}{i} =
      \scanth{s}{i}{k}} \land \rep{\scb{\vsc}{i} =
      \scbnth{s}{i}{k}}\\
    \hline\\[-1em]
    \SMap{\rep{\wrs{w_i}}} & \text{where}\quad
    \begin{aligned}
      \exists & 
      \abs{w'_i}, \abs{w''_i}, j, k.\ j < k
      \land \abs{w''_i \rb w'_i \rb w_i} \land {}\\
      &\rep{\wra{w''_i} \robs \scanth{s}{i}{j}} \land
      \rep{\wra{w'_i} \robs \scanth{s}{i}{k}} \land
      \rep{\wra{w_i} \robs \scbnth{s}{i}{k}}
    \end{aligned}
  \end{cases}
\]
We instantiate $\abs{\WrEff_i}$ to be the set of all writes
$\abs{w_i} \in \abs{W_i}$ where $\rep{\wra{w_i}}$ is defined; these
are the writes that executed their effect. We also instantiate the
visibility relations $\abs{\obs}$ and $\abs{\robs}$ as follows, using
the helper relation $\abs{\hlrobs}$.
\begin{align*}
  \abs{w_i \hlrobs \vrt{\vsc}} &\wideDefeq \rep{\wra{w_i} \robs \sca{\vsc}{i}} &
  \abs{w_i \robs s} &\wideDefeq \abs{w_i \hlrobs \SMap{\abs{s}}} &
  \abs{\obs} &\wideDefeq \abs{\robs}
\end{align*}
In English, the scan $\abs{s}$ reads from, and also observes,
$\abs{w_i}$ iff there is an appropriate rep event in the virtual scan
$\SMap{\abs{s}}$ that reads from $\rep{\wra{w_i}}$ at the level of rep
events.

\begin{lemma}
  $\SMap{\abs{[-]}}$ is a well-founded recursive definition.
\end{lemma}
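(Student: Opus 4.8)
The plan is to show that the recursive definition of $\SMap{\abs{[-]}}$ terminates, i.e., that the recursion bottoms out after finitely many steps. Concretely, I would argue that each recursive call $\SMap{\abs{s}} = \SMap{\rep{\wrs{w_i}}}$ replaces the argument $\abs{s}$ by an argument $\rep{\wrs{w_i}}$ with strictly smaller ending time ($\rep{\wrs{w_i}}.\evEnd < \abs{s}.\evEnd$), while ending times are natural numbers bounded below by $0$; hence only finitely many recursive unfoldings are possible, so the definition is well-founded.

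The key step is to establish the strict decrease of ending times. First I would recall from the code of $\writeProc$ that $\rep{\wrs{w_i}}$, the internal scan called by the writer, satisfies $\rep{\wrs{w_i} \rb \wra{w_i}}$, since in $\writeProc$ the scan on line~1 completes before the write on line~\ref{alg:afek-wra}. Next, suppose for contradiction that $\rep{\wrs{w_i}}.\evEnd \geq \abs{s}.\evEnd$. Combining with $\rep{\wrs{w_i} \rb \wra{w_i}}$ we would get $\abs{s}.\evEnd \le \rep{\wrs{w_i}}.\evEnd < \rep{\wra{w_i}}.\evStart$, i.e., $\abs{s} \rb \rep{\wra{w_i}}$, and in particular every second-read event $\rep{\scbnth{s}{i}{k}} \subev \abs{s}$ satisfies $\rep{\scbnth{s}{i}{k}} \rb \rep{\wra{w_i}}$ by \cref{ax:rb-absrep}. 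On the other hand, the defining condition for the recursive branch of $\SMap{\abs{[-]}}$ includes $\rep{\wra{w_i} \robs \scbnth{s}{i}{k}}$ for the relevant $k$, so $\rep{\wra{w_i}}$ is observed by a rep read inside $\abs{s}$; but then $\rep{\scbnth{s}{i}{k}} \rb \rep{\wra{w_i}}$ together with $\rep{\wra{w_i} \robs \scbnth{s}{i}{k}}$ contradicts \cref{ax:wfobs} (read contrapositively: if $e' \rb e$ then $e'$ cannot observe $e$). Hence $\rep{\wrs{w_i}}.\evEnd < \abs{s}.\evEnd$.

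Having the strict decrease, I would conclude well-foundedness: the map $\abs{s} \mapsto \abs{s}.\evEnd$ sends each recursion step to a strictly smaller natural number, and any strictly decreasing sequence of natural numbers is finite, so the recursion terminates and $\SMap{\abs{[-]}}$ is well-defined on every abs scan. I would also note that the non-recursive branch is applicable precisely when $\abs{s}$ returns at line~\ref{alg:afek-ret1} and the recursive branch when it returns at line~\ref{alg:afek-ret2}, and these are the only two ways $\scanProc$ can return, so exactly one branch applies to each terminated abs scan; for non-terminated scans the definition is irrelevant since such scans do not enter $\abs{E_c}$.

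The main obstacle is the step showing $\rep{\wrs{w_i}}.\evEnd < \abs{s}.\evEnd$: it requires carefully extracting from the recursive branch's side condition that some $\rep{\scbnth{s}{i}{k}}$ actually observes $\rep{\wra{w_i}}$ (rather than merely that a change was detected), and then chaining $\rep{\wrs{w_i} \rb \wra{w_i}}$ with the hypothetical $\rep{\wrs{w_i}}.\evEnd \ge \abs{s}.\evEnd$ through \cref{ax:rb-absrep} to land a contradiction with \cref{ax:wfobs}. Everything else is bookkeeping about natural-number well-foundedness and a case split on which return line $\scanProc$ exits through.
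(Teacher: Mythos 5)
Your proof is correct and follows essentially the same route as the paper's: the recursion is measured by ending times, and the strict decrease $\rep{\wrs{w_i}}.\evEnd < \abs{s}.\evEnd$ is established by assuming the contrary, deriving $\abs{s} \rb \rep{\wra{w_i}}$ from $\rep{\wrs{w_i} \rb \wra{w_i}}$, and contradicting \cref{ax:wfobs} via the rep read $\rep{\scbnth{s}{i}{k}}$ that observes $\rep{\wra{w_i}}$ in the recursive branch's side condition. Your explicit appeal to \cref{ax:rb-absrep} and the remark about the two return lines being exhaustive are harmless additions to the same argument.
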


\begin{proof}
  The first case of $\SMap{\abs{[-]}}$ is immediately well-founded.
  For the second case it is well-founded because the ending time of
  the scan $\rep{\wrs{w_i}}$ on the right is smaller than the ending
  time of $\abs{s}$ on the left, and the ending times are bounded from
  below by $0$. To see that the ending time of $\rep{\wrs{w_i}}$ is
  below that of $\abs{s}$, suppose otherwise. Then it must be
  $\abs{s} \rb \rep{\wra{w_i}}$, because from the code of $\writeProc$
  we have that $\rep{\wrs{w_i} \rb \wra{w_i}}$. In particular,
  $\rep{\scbnth{s}{i}{k}} \rb \rep{\wra{w_i}}$ for every $k$. But,
  because $\rep{\wra{w_i}}$ is observed by some rep read in $\abs{s}$,
  we also have $\rep{\wra{w_i} \robs \scbnth{s}{i}{k}}$ for some $k$.
  Thus, we have an event $\rep{\scbnth{s}{i}{k}}$ that terminated
  before the event $\rep{\wra{w_i}}$ that it observes, which
  contradicts~\cref{ax:wfobs}.
\end{proof}

\begin{lemma}\label{lem:afek-vrtinscan}
  For each scan $\abs{s}$, we have $\SMap{\abs{s}} \subev \abs{s}$.
\end{lemma}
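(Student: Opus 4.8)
The plan is to prove $\SMap{\abs{s}} \subev \abs{s}$ by recursion on the definition of $\SMap{\abs{[-]}}$, mirroring the structure of that definition and using the well-foundedness already established. The statement $\SMap{\abs{s}} \subev \abs{s}$ unfolds to showing $\abs{s}.\evStart \leq \rep{\sca{\vsc}{i}}.\evStart$ and $\rep{\scb{\vsc}{i}}.\evEnd \leq \abs{s}.\evEnd$ for the rep events of $\vrt{\vsc} = \SMap{\abs{s}}$ (since a virtual scan's interval is the smallest one containing all its rep events). First I would handle the base case of the definition: if $\abs{s}$ terminated at line~\ref{alg:afek-ret1}, then by definition $\SMap{\abs{s}}$ consists exactly of the rep events $\rep{\scanth{s}{i}{k}}$ and $\rep{\scbnth{s}{i}{k}}$ of $\abs{s}$ itself (for $k$ the final iteration), so these are literally subevents of $\abs{s}$ by Property~\ref{ax:evsig-subev} of event signatures, giving $\SMap{\abs{s}} \subev \abs{s}$ immediately.

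For the recursive case, suppose $\abs{s}$ returned at line~\ref{alg:afek-ret2}, so $\SMap{\abs{s}} = \SMap{(\rep{\wrs{w_i}})}$ for the write $\abs{w_i}$ whose view was returned. By the recursion (applying the lemma at the structurally smaller argument $\rep{\wrs{w_i}}$, justified by the well-foundedness lemma), we get $\SMap{\abs{s}} = \SMap{(\rep{\wrs{w_i}})} \subev \rep{\wrs{w_i}}$. So it suffices to show $\rep{\wrs{w_i}} \subev \abs{s}$, i.e., $\abs{s}.\evStart \leq \rep{\wrs{w_i}}.\evStart$ and $\rep{\wrs{w_i}}.\evEnd \leq \abs{s}.\evEnd$; transitivity of $\subev$ then closes the case. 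The first inequality I would get as in the proof sketch: reaching line~\ref{alg:afek-ret2} requires detecting two changes of $\resA[i]$, so there are two distinct writes into $\resA[i]$ strictly before $\abs{w_i}$ in version-number order, each observed by some rep read event of $\abs{s}$; since $\rep{\wrs{w_i} \rb \wra{w_i}}$ from the code of $\writeProc$, and a write before $\abs{w_i}$ was observed by $\abs{s}$, we get $\abs{s}$ must have started before $\rep{\wra{w_i}}$ and hence before $\rep{\wrs{w_i}}$ (otherwise an observed write would return after the observing read, contradicting~\cref{ax:wfobs}). The second inequality, $\rep{\wrs{w_i}}.\evEnd \leq \abs{s}.\evEnd$, is exactly the argument from the well-foundedness proof: some $\rep{\scbnth{s}{i}{k}}$ observes $\rep{\wra{w_i}}$, so if $\rep{\wrs{w_i}}$ ended after $\abs{s}$ then $\rep{\scbnth{s}{i}{k}} \rb \rep{\wra{w_i}}$ (using $\rep{\wrs{w_i} \rb \wra{w_i}}$ and $\rep{\scbnth{s}{i}{k}} \subev \abs{s}$), contradicting~\cref{ax:wfobs}.

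The main obstacle I expect is bookkeeping around the recursive invocation: carefully matching the case in the definition of $\SMap{\abs{[-]}}$ to the existentials it introduces (the witnesses $\abs{w''_i}, \abs{w'_i}, j, k$), and making sure the "two changes detected" reasoning is derived cleanly from those existentials rather than re-argued from scratch. In particular I would want to be precise that $\rep{\wra{w''_i} \robs \scanth{s}{i}{j}}$ and $\rep{\wra{w'_i} \robs \scanth{s}{i}{k}}$ with $\abs{w''_i \rb w'_i \rb w_i}$ force $\abs{s}.\evStart < \rep{\wra{w''_i}}.\evStart \leq \rep{\wra{w_i}}.\evStart$, hence $\abs{s}.\evStart < \rep{\wrs{w_i}}.\evStart$, using that $\rep{\wra{w''_i}}$ (observed by a rep event of $\abs{s}$) cannot end before $\abs{s}$ starts by~\cref{ax:wfobs}, combined with $\rep{\wrs{w''_i} \rb \wra{w''_i}}$ and the returns-before ordering of $\abs{w''_i}$ relative to $\abs{w_i}$. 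Everything else is routine application of interval arithmetic on $\evStart$/$\evEnd$ and the already-proven well-foundedness.
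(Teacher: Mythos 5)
Your overall structure---recursion on the definition of $\SMap{\abs{[-]}}$, the trivial base case, reducing the recursive case to $\rep{\wrs{w_i}} \subev \abs{s}$, and the argument for $\rep{\wrs{w_i}}.\evEnd \leq \abs{s}.\evEnd$ via $\rep{\wra{w_i} \robs \scbnth{s}{i}{k}}$ and \cref{ax:wfobs}---matches the paper and is correct. The gap is in your argument for the other half, $\abs{s}.\evStart \leq \rep{\wrs{w_i}}.\evStart$, which fails for two reasons. First, you invoke \cref{ax:wfobs} to claim that $\rep{\wra{w''_i}}$, being observed by a rep event of $\abs{s}$, ``cannot end before $\abs{s}$ starts.'' But \cref{ax:wfobs} says the opposite thing: the \emph{observing} read cannot return before the \emph{observed} write starts. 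An observed write can perfectly well have completed long before the scan began (that is the ordinary case of a read returning an old value), so the axiom gives you no lower bound on $\rep{\wra{w''_i}}.\evStart$, and your claimed $\abs{s}.\evStart < \rep{\wra{w''_i}}.\evStart$ does not follow. Second, even granting $\abs{s}.\evStart < \rep{\wra{w_i}}.\evStart$, the conclusion ``hence $\abs{s}.\evStart < \rep{\wrs{w_i}}.\evStart$'' is directionally wrong: $\rep{\wrs{w_i}}$ \emph{precedes} $\rep{\wra{w_i}}$ inside $\abs{w_i}$, so starting before the later rep event says nothing about starting before the earlier one.

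The missing ingredient is \cref{ax:mem-nowrbetween}, which your proposal never uses. The paper's chain is $\abs{s}.\evStart < \rep{\scanth{s}{i}{j}}.\evStart \leq \rep{\wra{w'_i}}.\evEnd < \rep{\wrs{w_i}}.\evStart$, and the middle inequality is the crux: if $\rep{\wra{w'_i} \rb \scanth{s}{i}{j}}$ held, then $\rep{\wra{w''_i} \rb \wra{w'_i} \rb \scanth{s}{i}{j}}$ would interpose a write between $\rep{\wra{w''_i}}$ and its observer $\rep{\scanth{s}{i}{j}}$, contradicting \cref{ax:mem-nowrbetween}; hence $\rep{\scanth{s}{i}{j}}.\evStart \leq \rep{\wra{w'_i}}.\evEnd$. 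The final inequality then comes from $\abs{w'_i \rb w_i}$ together with \cref{ax:rb-absrep}, giving $\rep{\wra{w'_i}} \rb \rep{\wrs{w_i}}$. In short, the lower bound on $\rep{\wrs{w_i}}.\evStart$ must be obtained by bounding through the \emph{end} of the second-to-last write $\abs{w'_i}$ (forced late by the no-write-in-between axiom applied to the third-to-last write $\abs{w''_i}$ and its reader), not through the start of any write; that is the step you need to add.
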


\begin{proof}
  If $\SMap{\abs{s}} = \vrt{\vsc}$ corresponds to the first case, then
  this is trivial, since $\vrt{\vsc}$ consists of the selected rep
  events of $\abs{s}$, i.e., $\rep{\sca{\vsc}{i} = \scanth{s}{i}{k}}$
  and $\rep{\scb{\vsc}{i} = \scbnth{s}{i}{k}}$.
  Otherwise, $\SMap{\abs{s}} = \SMap{(\rep{\wrs{w_i}})}$ corresponds
  to the second case for some write view $\rep{\wrs{w_i}}$. By
  recursion on the definition of $\SMap{[-]}$ it must be
  $\SMap{\abs{s}} = \SMap{(\rep{\wrs{w_i}})} \subev \rep{\wrs{w_i}}$,
  so it suffices to show $\rep{\wrs{w_i}} \subev \abs{s}$.
  By definition, if we have
  $\abs{s}.\evStart \leq \rep{\wrs{w_i}}.\evStart$ and
  $\rep{\wrs{w_i}}.\evEnd \leq \abs{s}.\evEnd$ then we have
  $\rep{\wrs{w_i}} \subev \abs{s}$. During the scan, at least three
  different writes must have been observed:
  $\rep{\wra{w_i} \robs \scbnth{s}{i}{k}}$ and
  $\rep{\wra{w'_i} \robs \scanth{s}{i}{k}}$ and
  $\rep{\wra{w''_i} \robs \scanth{s}{i}{j}}$ where we have
  $\abs{w''_i \rb w'_i \rb w_i}$ by the algorithm being single-writer.
  If we have $\rep{\wra{w'_i} \rb \scanth{s}{i}{j}}$ we
  contradict \cref{ax:mem-nowrbetween} for
  $\rep{\wra{w''_i} \robs \scanth{s}{i}{j}}$ since we have
  $\rep{\wra{w''_i} \rb \wra{w'_i} \rb \scanth{s}{i}{j}}$, thus
  we have $\rep{\wra{w'_i} \not\rb \scanth{s}{i}{j}}$, which by
  definition gives us
  $\rep{\scanth{s}{i}{j}}.\evStart \leq \rep{\wra{w'_i}}.\evEnd$.
  Similarly,
  if we have $\rep{\wra{w'_i} \rb \scanth{s}{i}{j}}$ we
  contradict \cref{ax:mem-nowrbetween} for
  $\rep{\wra{w''_i} \robs \scanth{s}{i}{j}}$ since we have
  $\rep{\wra{w''_i} \rb \wra{w'_i} \rb \scanth{s}{i}{j}}$, thus
  we have $\rep{\scbnth{s}{i}{k}} \not\rb \rep{\wra{w_i}}$, which by
  definition gives us
  $\rep{\wra{w_i}}.\evStart \leq \rep{\scbnth{s}{i}{k}}.\evEnd$.
  Together, we have
  \[
    \abs{s}.\evStart < \rep{\scanth{s}{i}{j}}.\evStart \leq
    \rep{\wra{w'_i}}.\evEnd < \rep{\wrs{w_i}}.\evStart\ ,
  \]
  and
  \[
    \rep{\wrs{w_i}}.\evEnd < \rep{\wra{w_i}}.\evStart \leq
    \rep{\scbnth{s}{i}{k}}.\evEnd < \abs{s}.\evEnd\ ,
  \]
  which gives us
  $\abs{s}.\evStart \leq \rep{\wrs{w_i}}.\evStart$ and
  $\rep{\wrs{w_i}}.\evEnd \leq \abs{s}.\evEnd$, i.e.,
  $\rep{\wrs{w_i}} \subev \abs{s}$.
\end{proof}

\begin{genthm}{\cref{ax:wfobs}}\label{lem:afek-wfobs}
  If $\abs{e \obs^+ e'}$ then we cannot have $\abs{e' \rbeq e}$.
\end{genthm}

\begin{proof}
  If $\abs{e \obs^+ e'}$ and $\abs{e' \rbeq e}$ then
  $\abs{e \obs^+ e'}$ can only hold if $\abs{e \robs e'}$, i.e., we
  have $\abs{e} = \abs{w_i}$ and $\abs{e'} = \abs{s}$ and
  $\vrt{\vsc} = \SMap{\abs{s}}$ such that
  $\rep{\wra{w_i} \robs \sca{\vsc}{i}}$. We have that
  $\abs{e' \rbeq e}$ can only be $\abs{e' \rb e}$, since $\abs{e}$ and
  $\abs{e'}$ cannot be equal events by $\abs{e}$ being a write and
  $\abs{e'}$ being a scan. But then, by
  \cref{ax:rb-absrep,lem:afek-vrtinscan}, from $\abs{e' \rb e}$ we
  derive $\rep{\sca{\vsc}{i} \rb \wra{w_i}}$, which contradicts
  \cref{ax:wfobs} for registers.
\end{proof}

\begin{genthm}{\cref{ax:ss-io}}
  For every terminated scan $\abs{s}$ and index $i$, there exists some
  write $\abs{w_i}$ such that $\abs{w_i \robs s}$ and
  $\abs{w_i}.\evIn = \abs{s}.\evOut[i]$.
\end{genthm}

\begin{genthm}{\cref{ax:ss-robsuniq}}
  \label{lem:afek-robsuniq}
  If $\abs{w_i \robs s}$ and $\abs{w'_i \robs s}$ then we must have
  $\abs{w_i = w'_i}$.
\end{genthm}

\begin{proof}
  If $\abs{w_i \robs s}$ and $\abs{w'_i \robs s}$, then
  $\rep{\wra{w_i}}$ and $\rep{\wra{w'_i}}$ are both observed by some
  $\rep{\sca{\vsc}{i}}$, i.e., the same read. Thus, by
  \cref{ax:mem-robsuniq}, $\rep{\wra{w_i} = \wra{w'_i}}$, and since
  each write executes $\rep{\wra{w_i}}$ only once by
  Property~\ref{ax:evsig-uniq}, it must be $\abs{w_i = w'_i}$.
\end{proof}

\begin{proof}
  The structure of the algorithm ensures that at every return of
  $\scanProc$, writes of every memory cell are observed and returned.
\end{proof}

\begin{genthm}{\cref{ax:ss-wrtotal}}
  For two distinct writes $\abs{w_i}$ and $\abs{w'_i}$, we have either
  $\abs{w_i \hb w'_i}$ or $\abs{w'_i \hb w_i}$.
\end{genthm}

\begin{proof}
  Holds by the assumption that the algorithm is single-writer.
\end{proof}

\begin{genthm}{\cref{ax:ss-wrterm}}
  Every terminated write is effectful, i.e.,
  $\term(\abs{W_i}) \subseteq \abs{\WrEff_i}$.
\end{genthm}

\begin{proof}
  Since each terminated write must have $\rep{\wra{w_i}}$ defined,
  which is how we define the set $\abs{\WrEff_i}$, thus every
  terminated write must be in some $\abs{\WrEff_i}$.
\end{proof}

\begin{lemma}\label{lem:hb-afek}
  If $\abs{w_i \hb s}$ then $\abs{w_i \hb \SMap{\abs{s}}}$.
\end{lemma}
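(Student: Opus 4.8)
\textbf{Proof plan for \cref{lem:hb-afek}.}
The plan is to mirror the structure of the analogous \cref{lem:hb-ss2fwd} and \cref{lem:fwd-hbabsrep}, proceeding by induction on the length of the $\abs{\hb_1}$-chain realizing $\abs{w_i \hb s}$. Writing $\abs{w_i \hb_1 e_1 \hb_1 \cdots \hb_1 e_{n-1} \hb_1 s}$, I would observe that in this setting the visibility relation on abs/virtual events is only $\abs{\obs} = \abs{\robs}$ (there is no $\abs{\wobs}$ or $\abs{\sobs}$ here, since the algorithm is single-writer and the helper relation $\abs{\hlrobs}$ is the only nontrivial piece), so each step of the chain is either $\abs{\rb}$ or $\abs{\robs}$. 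Moreover, $\abs{\robs}$ only goes from a write into a scan, and there is no observation out of a scan, so any internal event $e_k$ that is the source of a $\abs{\robs}$-step must be a write, and the only event that can be a target is a scan; in particular the chain can touch a scan only at its very end. Hence effectively the chain has the form $\abs{w_i \rb \cdots \rb w'_i} $ followed possibly by a single final step into $\abs{s}$.

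\textbf{Key steps.}
First I would handle the base case $\abs{w_i \hb_1 s}$, which splits into $\abs{w_i \rb s}$ and $\abs{w_i \robs s}$. For $\abs{w_i \robs s}$ we have $\abs{w_i \hlrobs \SMap{\abs{s}}}$ directly by the definition of $\abs{\robs}$, giving $\abs{w_i \hb \SMap{\abs{s}}}$. For $\abs{w_i \rb s}$, I would invoke \cref{lem:afek-vrtinscan} ($\SMap{\abs{s}} \subev \abs{s}$) together with \cref{ax:rb-absrep}: since $\abs{w_i} \subev \abs{w_i}$ and $\SMap{\abs{s}} \subev \abs{s}$ and $\abs{w_i \rb s}$, we get $\abs{w_i \rb \SMap{\abs{s}}}$, hence $\abs{w_i \hb \SMap{\abs{s}}}$. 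For the inductive step, the chain is $\abs{w_i \hb_1 e_1 \hb \cdots \hb s}$; since $e_1 \neq s$ (the scan can only appear at the end), and $\abs{w_i \hb_1 e_1}$ is either $\abs{w_i \rb e_1}$ or $\abs{w_i \robs e_1}$ — but the latter is impossible because $e_1$ is not a scan and $\abs{\robs}$ targets only scans — we must have $\abs{w_i \rb e_1}$ with $e_1$ some write, and then $\abs{e_1 \hb s}$ is a shorter chain. The induction hypothesis gives $\abs{e_1 \hb \SMap{\abs{s}}}$, and composing with $\abs{w_i \rb e_1}$ yields $\abs{w_i \hb \SMap{\abs{s}}}$.

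\textbf{Main obstacle.}
The delicate point is justifying that the chain can meet a scan only at its endpoint, i.e., that once we leave the write events we land in a scan and cannot continue; this rests on the fact that $\abs{\obs}$ for this algorithm is exactly $\abs{\robs} \subseteq \bigcup_i \abs{\WrEff_i} \times \abs{S}$ and that the chain's intermediate events lie in $\bigcup_i \abs{W_i} \cup \abs{S}$ — there is no relation out of a scan in $\abs{\obs}$, and $\abs{\rb}$ out of a scan would only reach a later event which, if it were a scan, would make the chain unable to re-enter a write and still reach $s$; but actually I would argue more simply that since we only need to reach $s$ and the last step into $s$ is $\abs{\hb_1}$, the case analysis at the \emph{last} step (rather than the first) is cleanest: split on whether the penultimate event relates to $s$ by $\abs{\rb}$ or $\abs{\robs}$, apply \cref{lem:afek-vrtinscan} and \cref{ax:rb-absrep} in the $\abs{\rb}$ case and the definition of $\abs{\robs}$ in the $\abs{\robs}$ case, and use transitivity of $\abs{\hb}$ to absorb the initial segment. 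The argument is routine once this structural observation is in place, and essentially identical in spirit to \cref{lem:hb-ss2fwd}.
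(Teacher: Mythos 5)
Your final argument---case analysis on the \emph{last} step of the $\abs{\hb_1}$-chain, using \cref{lem:afek-vrtinscan} with \cref{ax:rb-absrep} in the $\abs{\rb}$ case and the definition of $\abs{\robs}$ in the $\abs{\robs}$ case, then absorbing the initial segment by transitivity---is correct and is exactly the route the paper takes (its proof mirrors \cref{lem:hb-ss2fwd} and leaves the inductive step as ``similar'', which amounts to precisely this last-step analysis; in the Afek setting there is no $\abs{\sobs}$ case to worry about).

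However, the structural claim you lean on in the first version of the inductive step is false, and that version of the argument would not go through. You assert that ``the chain can touch a scan only at its very end'' and hence that $e_1$ must be a write reached by $\abs{w_i \rb e_1}$. But $\abs{\rb}$ relates arbitrary events, so intermediate scans are perfectly possible: a chain such as $\abs{w_i \robs s' \rb w'_j \robs s}$ is a legitimate witness of $\abs{w_i \hb s}$, with a scan $s'$ strictly in the interior. In that situation your first-step case split breaks ($\abs{w_i \robs e_1}$ with $e_1 = s'$ a scan is possible), and the induction hypothesis---which is stated only for a \emph{write} on the left---cannot be applied to the tail $\abs{s' \hb s}$. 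Your own remark in the ``main obstacle'' paragraph (``$\rb$ out of a scan \ldots would make the chain unable to re-enter a write'') is also not right, since $\abs{\rb}$ out of a scan can land on a write. None of this matters for the last-step formulation you settle on, which is insensitive to what the interior of the chain looks like; I would simply drop the structural claim and the first-step induction entirely and present only the last-step analysis.
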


\begin{proof}
  This proof follows similarly to the proof of \cref{lem:hb-ss2fwd}.
  By viewing $\abs{w_i \hb s}$ as a chain of $\abs{\hb_1}$, we have
  $\abs{w_i \hb_1 \dots \hb_1 s}$. By induction on the length of the
  chain. In our base case we have $\abs{w_i \hb_1 s}$, which splits
  into cases $\abs{w_i \rb s}$ and $\abs{w_i \obs s}$, where the
  latter can only be $\abs{w_i \robs s}$. For $\abs{w_i \rb s}$, by
  \cref{ax:rb-absrep,lem:afek-vrtinscan}, we have
  $\abs{w_i \rb \SMap{\abs{s}}}$. For $\abs{w_i \robs s}$, we have
  $\abs{w_i \hlrobs \SMap{\abs{s}}}$ directly from the definition of
  $\abs{\robs}$. The inductive step is similar.
\end{proof}

\begin{lemma}\label{ax:afek-hbabsrep}
  Choose some $j$. For events
  $\abs{e}, \abs{e'} \in \bigcup_i \abs{W_i} \cup \vrt{\Vsc}$ where
  $\abs{e'}$ is either a write $\abs{e'} = \abs{w'_i}$ such that
  $\rep{\wra{w'_i}}$ is defined or a virtual scan
  $\abs{e'} = \vrt{\vsc'}$, if $\abs{e \hb e'}$ holds then there
  exists $\rep{e_r}$ and $\rep{e'_r}$ such that $\rep{e_r \hb e'_r}$,
  where if $\abs{e} = \abs{w_i} \in \abs{W_i}$ then
  $\rep{e_r} = \rep{\wra{w_i}}$ and if
  $\abs{e} = \vrt{\vsc} \in \vrt{\Vsc}$ then
  $\rep{e_r} = \rep{\scb{\vsc}{j}}$, and dually for $\abs{e'}$ and
  $\rep{e'_r}$.
\end{lemma}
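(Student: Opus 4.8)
\textbf{Proof plan for Lemma~\ref{ax:afek-hbabsrep}.}
The statement is the analogue of \cref{lem:fwd-hbabsrep} (``Happens-before of subevents'') specialized to the setting of \cref{alg:afek}: it converts a $\abs{\hb}$ relation between writes and virtual scans into a $\rep{\hb}$ relation between designated rep events, where the designated rep event of a write $\abs{w_i}$ is $\rep{\wra{w_i}}$ and the designated rep event of a virtual scan $\vrt{\vsc}$ is $\rep{\scb{\vsc}{j}}$ for the fixed index $j$ chosen at the start. The proof will follow the same skeleton as \cref{lem:fwd-hbabsrep}: view $\abs{e \hb e'}$ as a chain $\abs{e \hb_1 \cdots \hb_1 e'}$ of single-step relations, each of which is $\abs{\rb}$ or $\abs{\obs} = \abs{\robs}$ (there is no $\abs{\wobs}$ here, and the helper $\abs{\sobs}$ relating scans only appears when we have already folded $\SMap{[-]}$, which we handle via \cref{lem:hb-afek}; but the statement here is phrased directly in terms of writes and virtual scans, so the only single steps are $\abs{\rb}$ between any two of these events, and $\abs{\robs}$ from a write $\abs{w_i}$ to a virtual scan $\vrt{\vsc}$ via $\rep{\wra{w_i} \robs \sca{\vsc}{i}}$).

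First I would argue that every event occurring strictly inside the chain (i.e., with a successor) is populated with its designated rep event: if an event's outgoing $\abs{\hb_1}$ edge is $\abs{\rb}$, the event is terminated and hence all of its rep events are defined (a virtual scan being a custom collection of already-executed rep events); if the outgoing edge is $\abs{\robs}$, then by definition of $\abs{\hlrobs}$ the source write has executed $\rep{\wra{w_i}}$. The target $\abs{e'}$ is populated by the hypothesis of the lemma (either $\rep{\wra{w'_i}}$ is assumed defined, or $\abs{e'}$ is a virtual scan). Then I proceed by induction on the chain length. In the base case $\abs{e \hb_1 e'}$: if $\abs{e \rb e'}$, then by \cref{ax:rb-absrep} any subevent of $\abs{e}$ returns before any subevent of $\abs{e'}$; in particular the designated rep event of $\abs{e}$ returns before the designated rep event of $\abs{e'}$, which (using $\abs{e'} = \vrt{\vsc'}$ and $\rep{\scb{\vsc'}{j}} \subev \vrt{\vsc'}$, or $\rep{\wra{w'_i}} \subev \abs{w'_i}$) gives the desired $\rep{\rb} \subseteq \rep{\hb}$ relation. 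If $\abs{e \robs e'}$, then $\abs{e} = \abs{w_i}$, $\abs{e'} = \vrt{\vsc'}$, and $\rep{\wra{w_i} \robs \sca{\vsc'}{i}}$; combined with \cref{ax:afek-scrb} ($\rep{\sca{\vsc'}{i} \rb \scb{\vsc'}{j}}$) this yields $\rep{\wra{w_i} \hb \scb{\vsc'}{j}}$, as required. For the inductive step, I would peel off the first edge $\abs{e \hb_1 e_1}$, obtain by the base-case reasoning a $\rep{\hb}$ relation from the designated rep event of $\abs{e}$ to \emph{some} rep event of $\abs{e_1}$, then use the fact that within $\abs{e_1}$ all the relevant rep events are ordered (for a virtual scan, \cref{ax:afek-scrb} gives $\rep{\sca{\vsc}{i} \rb \scb{\vsc}{k}}$ for all $i,k$, so from any $\rep{\sca{\vsc}{i}}$ or $\rep{\scb{\vsc}{i}}$ we can reach the outgoing rep event used for the next edge; for a write there is only $\rep{\wra{w_i}}$) and splice with the inductive hypothesis applied to the tail $\abs{e_1 \hb e'}$, using transitivity of $\rep{\hb}$.

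The main obstacle, as in \cref{lem:fwd-hbabsrep}, is the bookkeeping in the inductive step: when the chain passes \emph{through} a virtual scan $\vrt{\vsc}$, the rep event that the incoming edge lands on (some $\rep{\sca{\vsc}{i}}$ or $\rep{\scb{\vsc}{i}}$, or an arbitrary subevent if the incoming edge was $\abs{\rb}$) need not be the rep event from which the outgoing edge leaves. I resolve this exactly as the excerpt does for $\fwdSig$: a virtual scan in \cref{alg:afek} has the rigid internal structure $\rep{\sca{\vsc}{0}, \dots, \sca{\vsc}{n-1}} \rb \rep{\scb{\vsc}{0}, \dots, \scb{\vsc}{n-1}}$ guaranteed by \cref{ax:afek-scrb}, so any of its rep events $\rep{\rb}$-precedes $\rep{\scb{\vsc}{k}}$ for every $k$, and in particular precedes whichever $\rep{\wra{\cdot}} \robs \rep{\sca{\vsc}{k}}$-source the next edge needs; transitivity of $\rep{\hb}$ then closes the gap. (The excerpt explicitly notes the proof ``is similar to \cref{lem:fwd-hbabsrep}'', so a full write-up would mirror that lemma's inductive step, substituting \cref{ax:afek-scrb} for \cref{ax:fwd-scstruct} and noting the absence of the $\abs{\wobs}$ and $\abs{\fobs}$ cases.)
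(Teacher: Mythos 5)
Your proposal is correct and follows essentially the same route as the paper's proof: decompose $\abs{e \hb e'}$ into a chain of $\abs{\hb_1}$ steps (here only $\abs{\rb}$ and $\abs{\hlrobs}$), argue each event in the chain is populated, and induct on the chain length, handling the $\abs{\rb}$ case via \cref{ax:rb-absrep}, the $\abs{\hlrobs}$ case via $\rep{\wra{w_i} \robs \sca{\vsc'}{i}}$ composed with \cref{ax:afek-scrb}, and splicing the inductive step by transitivity of $\rep{\hb}$. Your extra discussion of routing through a virtual scan's internal $\rep{\sca{\vsc}{i}} \rb \rep{\scb{\vsc}{j}}$ structure is exactly the bookkeeping the paper compresses into ``the inductive step is similar.''
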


\begin{proof}
  This proof follows similarly to the proof of
  \cref{lem:fwd-hbabsrep}. We can view $\abs{e \hb e'}$ as a chain of
  $\abs{\hb_1}$ relations connecting $\abs{e}$ and $\abs{e'}$; that
  is, $\abs{e \hb_1 \cdots \hb_1 e'}$, where by definition
  $\abs{\hb_1} = (\abs{\rb} \cup \abs{\obs}) = (\abs{\rb} \cup
  \abs{\hlrobs})$. It is easy to see that each abs event in this chain
  is populated with a rep event. Indeed, if an abs event's relation
  $\abs{\hb_1}$ to its successor is realized by $\abs{\rb}$, then the
  event is terminated and all of its rep events are executed (and by
  the structure of the algorithm, each abs event must have at least
  one rep event). Alternatively, if the abs event's relation
  $\abs{\hb_1}$ to its successor is realized by $\abs{\obs}$, then the
  event must be populated as per the definitions of reads-from. This
  leaves out $\abs{e'}$, which has no successor, but $\abs{e'}$ is
  populated by either being a write, thus having $\rep{\wra{w_i}}$
  defined by assumption, or being a virtual scan, which are first
  defined when a scan returns.
  
  Now the proof is by induction on the length of the above chain. The
  base case is when $\abs{e \hb_1 e'}$. If $\abs{e \rb e'}$, the proof
  is by \cref{ax:rb-absrep} giving us $\rep{e_r \rb e'_r}$ where if
  $\abs{e} = \abs{w_i} \in \abs{W_i}$ then
  $\rep{e_r} = \rep{\wra{w_i}}$ and if
  $\abs{e} = \vrt{\vsc} \in \vrt{\Vsc}$ then
  $\rep{e_r} = \rep{\scb{\vsc}{j}}$, and dually for $\abs{e'}$ and
  $\rep{e'_r}$. For $\abs{e \hlrobs e'}$, we have
  $\abs{e } = \abs{w_i}$ and $\vrt{\vsc'} = \abs{e'}$ with
  $\rep{\wra{w_i} \robs \sca{\vsc'}{i}}$. By \cref{ax:afek-scrb} we
  construct
  $\rep{\wra{w_i} \robs \sca{\vsc'}{i} \rb \scb{\vsc'}{j}}$ which
  gives us our goal $\rep{\wra{w_i} \hb \scb{\vsc'}{j}}$. The
  inductive step is similar, with the addition that we need
  transitivity of $\rep{\hb}$ to join events.
\end{proof}

\begin{genthm}{\cref{ax:ss-nowrbetween}}
  \label{lem:afek-nowrbetween}
  If $\abs{w_i \robs s}$ then there does not exist a write
  $\abs{w'_i}$ such that $\abs{w_i \hb w'_i \hb s}$ holds.
\end{genthm}

\begin{proof}
  Given $\abs{w_i \robs s}$, we assume there exists $\abs{w'_i}$ such
  that $\abs{w_i \hb w'_i \hb s}$, we derive a contradiction. By
  assumption, we have $\abs{w_i \hlrobs \vrt{\vsc}}$ for
  $\vrt{\vsc} = \SMap{\abs{s}}$. By \cref{lem:hb-afek} and
  $\abs{w_i \hb w'_i \hb s}$, we can derive
  $\abs{w_i \hb w'_i \hb \vrt{\vsc}}$, and then by
  \cref{ax:afek-hbabsrep}, also
  $\rep{\wra{w_i} \hb \wra{w'_i} \hb \scb{\vsc}{i}}$. By the
  definition of $\abs{\hlrobs}$ and \cref{ax:afek-success},
  $\abs{w_i \hlrobs \vrt{\vsc}}$ implies
  $\rep{\wra{w_i} \robs \sca{\vsc}{i}, \scb{\vsc}{i}}$. But then
  $\rep{\wra{w'_i}}$ occurrs between $\rep{\wra{w_i}}$ and
  $\rep{\scb{\vsc}{i}}$, contradicting \cref{ax:mem-nowrbetween} and
  $\rep{\wra{w_i} \robs \scb{\vsc}{i}}$.
\end{proof}

\begin{genthm}{\cref{ax:ss-mono}}\label{lem:afek-mono}
  If $\abs{w_i, w_j \robs s}$ and $\abs{w'_i, w'_j \robs s'}$
  with $\abs{w_i \hb w'_i}$ then we cannot have $\abs{w'_j \hb w_j}$.
\end{genthm}

\begin{proof}
  Given $\abs{w_i, w_j \robs s}$ and $\abs{w'_i, w'_j \robs s'}$ with
  $\abs{w_i \hb w'_i}$, we show that also having $\abs{w'_j \hb w_j}$
  lead to a contradiction. We unfold the assumptions
  $\abs{w_i, w_j \robs s}$ and $\abs{w'_i, w'_j \robs s'}$ into
  $\abs{w_i, w_j \hlrobs \vrt{\vsc}}$ and
  $\abs{w'_i, w'_j \hlrobs \vrt{\vsc'}}$, where
  $\vrt{\vsc} = \SMap{\abs{s}}$ and $\vrt{\vsc'} = \SMap{\abs{s'}}$.
  By the definition of $\abs{\hlrobs}$ and \cref{ax:afek-success},
  from $\abs{w_j \hlrobs \vrt{\vsc}}$ we have
  $\rep{\wra{w_j} \robs \sca{\vsc}{j}, \scb{\vsc}{j}}$, and similarly
  for $\abs{w'_j \hlrobs \vrt{\vsc'}}$. By \cref{ax:afek-scrb} we have
  $\rep{\sca{\vsc}{j} \rb \scb{\vsc}{i}}$ and
  $\rep{\sca{\vsc'}{i} \rb \scb{\vsc'}{j}}$. By \cref{ax:interval}, it
  is either $\rep{\sca{\vsc}{j} \rb \scb{\vsc'}{j}}$ or
  $\rep{\sca{\vsc'}{i} \rb \scb{\vsc}{i}}$. In the first case (the
  second is symmetric) we can construct
  $\rep{\wra{w'_j} \hb \wra{w_j} \robs \sca{\vsc}{j} \rb
    \scb{\vsc'}{j}}$, meaning that we have
  $\rep{\wra{w'_j} \hb \wra{w_j} \hb \scb{\vsc'}{j}}$ contradicting
  \cref{ax:mem-nowrbetween} by $\rep{\wra{w_j}}$ occurring between
  $\rep{\wra{w'_j} \robs \scb{\vsc'}{j}}$.
\end{proof}

\fi

\end{document}